\tikzset{
  half circle/.style={
      semicircle,
      shape border rotate=90,
      minimum size=1mm
      }
}
\newcommand{\spacingset}[1]{\renewcommand{\baselinestretch}%
{#1}\small\normalsize}
\newcommand{\E}{\mathrm{E}}
\newcommand{\pr}{\mathrm{P}}
\newcommand*{\nindep}{%
	\mathbin{
		\mathpalette{\@indep}{\not}
	}%
}
\def\bSig\mathbf{\Sigma}
\newcommand*{\indep}{%
	\rotatebox[origin=c]{90}{$\models$}
}
\newcommand{\deriv}{\big|_{\theta=0}}
\newcommand{\jtr}{\text{J2R}}
\newtheorem{theorem}{Theorem}
\newtheorem{lemma}{Lemma}
\newtheorem{corollary}{Corollary}
\newtheorem{Assumption}{Assumption}
\theoremstyle{definition}
\newtheorem{example}{Example}
\newenvironment{assumptionp}[1]{
  
  \assumptionalt
}{\endassumptionalt}
\begin{document}


\title{\textbf{Multiply robust estimators in longitudinal studies with missing data under control-based imputation}}

\author{Siyi Liu$^{1}$, Shu Yang$^{1}$
Yilong Zhang$^{2}$, and
Guanghan (Frank) Liu$^{2}$\\
$^{1}$Department of Statistics, North Carolina State University, Raleigh, NC, USA \\
$^{2}$Merck \& Co., Inc., Kenilworth, NJ, USA}

\date{\vspace{-5ex}}

\maketitle

\spacingset{1.5} 
\begin{abstract}
Longitudinal studies are often subject to missing data. The recent guidance from regulatory agencies such as the ICH E9(R1) addendum addresses the importance of defining a treatment effect estimand with the consideration of intercurrent events. Jump-to-reference (J2R) is one classical control-based scenario for the treatment effect evaluation, where the participants in the treatment group after intercurrent events are assumed to have the same disease progress as those with identical covariates in the control group. We establish new estimators to assess the average treatment effect based on a proposed potential outcomes framework under J2R. Various identification formulas are constructed, motivating estimators that rely on different parts of the observed data distribution. Moreover, we obtain a novel estimator inspired by the efficient influence function, with multiple robustness in the sense that it achieves $n^{1/2}$-consistency if any pairs of multiple nuisance functions are correctly specified, or if the nuisance functions converge at a rate not slower than $n^{-1/4}$ when using flexible modeling approaches. The finite-sample performance of the proposed estimators is validated in simulation studies and an antidepressant clinical trial. 

\noindent \textbf{keywords:} Longitudinal clinical trial, longitudinal observational study, semiparametric theory, sensitivity analysis.
\end{abstract}
\newpage{}

\section{Introduction}\label{sec:intro} 
Missing data are a major concern in clinical studies,
especially in longitudinal settings. Participants are likely to deviate
from the current treatment due to the loss of follow-ups or a shift
to certain rescue therapy. To estimate the treatment effect precisely,
additional assumptions for the missing components are needed. It calls
for the importance of defining an estimand that can reflect the key
clinical questions of interest and take into account the intercurrent
events such as the discontinuation of the treatment \citep{international2019addendum}.

Different strategies are put forward by \citet{international2019addendum}
to deal with the intercurrent events. The \textit{hypothetical} strategy commonly envisions that participants who discontinue the
treatment are in compliance, i.e., they still take the assigned drug
throughout the entire study period. This approach, which is connected
to the unverifiable missing at random (MAR; \citealp{rubin1976inference})
assumption, frequently appears in the primary analysis to evaluate
the treatment efficacy. However, this hypothetical scenario may not be {\it realistic}, if participants lose access to the benefited test drug afterward. Under this
circumstance, those individuals are more likely to resemble the observed ones with identical historical information
in the control group, leading to control-based imputation (CBI; \citealp{carpenter2013analysis}).
CBI uses the \textit{treatment policy} strategy to construct a treatment
effect estimand that addresses a ``treatment switching'' scenario
for those individuals who drop out of the treated group.
As CBI reveals a discrepancy in outcome profiles between observed individuals and dropouts with the same history in the treated group,
a missing not at random (MNAR; \citealp{rubin1976inference}) pattern
is detected for the intercurrent events. Since the resulting estimand
is constructed under MNAR, it is often used in sensitivity analyses
(e.g., \citealp{carpenter2013analysis}; \citealp{liu2016analysis};
\citealp{cro2020sensitivity}; \citealp{di2022}; \citealp{rr2022})
to explore the robustness of results to alternative missing data assumptions
against MAR. Moreover, it has been receiving growing attention in
the primary analysis of clinical trials \citep{tan2021review} and
observational studies \citep{lee2021framework}.

Among the proposed CBI scenarios, we focus on one specific setting called jump-to-reference (J2R; \citealp{carpenter2013analysis}) throughout the paper, which has appeared in several regulatory reports (e.g., \citealp{cr2016tresiba}). In oncology trials, J2R is widely applicable since it is common for patients to shift to standard care if they quit the test therapy due to tumor progression \citep{mallinckrodt2019estimands}. Its usefulness is also revealed in the clinical trials of chronic pain treatments, where the subjects who drop out because they fail to experience pain relief may resemble the remaining ones in the control group \citep{gewandter2020improving}. The motivating example, which will be analyzed in Section \ref{sec:app}, uses an antidepressant trial conducted under the Auspices of the Drug Information Association \citep{mallinckrodt2014recent} to illustrate the usage of J2R. The trial collects the Hamilton Depression Rating Scale for 17 items (HAMD-17) scores at baseline and weeks 1, 2, 4, 6, and 8 among 100 randomly assigned participants in both the control and the treatment groups. We are interested in the average treatment effect (ATE) on the HAMD-17 score regardless of the occurrence of the intercurrent events, i.e., the ATE under the \textit{treatment policy}{} condition. As the test drug in this trial possesses a short-term effect, a reduced treatment effect is expected since the subjects taking the experimental drug are likely to experience no more treatment benefits after dropping out, indicating a J2R pattern. As a result, using the \textit{treatment policy}{} strategy and the guidelines in \citet{international2019addendum}, we define the treatment effect estimand as the mean difference of the change in the HAMD-17 score at the last time point from the baseline, assuming that the missing outcomes share the same profile as the observed ones with the identical history in the control group. The defined J2R estimand is an intent-to-treat (ITT) estimand, as it matches the goal of assessing the treatment effect in the group to which the individuals were initially assigned, regardless of the intervention \citep{lipkovich2020causal}.

The likelihood-based method and multiple imputation \citep{rubin2004multiple}
are two typical parametric approaches to handle missing data (\citealp{mallinckrodt2019estimands};
\citealp{di2022}). However, they will result in a biased estimate
of the ATE if any component of the likelihood function is misspecified.
When the parametric modeling assumptions are untenable, semiparametric
estimators based on the weighted estimating equations can be applied.
\citet{robins1994estimation} propose a doubly robust estimator for
the regression coefficients under MAR. \citet{bang2005doubly} further
develop a doubly robust estimator in longitudinal data with a monotone
missingness pattern using sequential regressions. While the robust
estimators under MAR have been well studied, they remain uncultivated
in the area of longitudinal clinical studies under MNAR-related scenarios.

Towards this end, we develop a semiparametric framework to evaluate
the ATE in longitudinal studies under J2R. As the estimand is defined
under an envisioned scenario where the outcomes have not been observed,
a potential outcomes framework is proposed to describe the counterfactuals.
The assumptions regarding treatment ignorability and partial ignorability
of missingness with causal consistency in the context of J2R are put
forward for identification. As a stepping stone, we first consider
cross-sectional studies, a special case of longitudinal studies with
one follow-up time. We discover three identification formulas for
the ATE, each of which invokes an estimator that relies on two of
the three models: 
\begin{enumerate}
\item[(a)] the \textit{propensity score}, as the model of the treatment conditional
on the observed history; 
\item[(b)] the \textit{response probability}, as the model of the response status
conditional on the observed historical covariates and the treatment; 
\item[(c)] the \textit{outcome mean}, as the model of the mean outcomes conditional
on the observed historical covariates and the treatment. 
\end{enumerate}
The three estimators assess the ATE in distinct aspects, motivating
us to construct a new estimator that combines all the modeling features.
Drawing on the semiparametric theory \citep{bickel1993efficient},
we obtain the efficient influence function (EIF) and use it to prompt
a novel estimator incorporating models (a)--(c). The proposed estimator
has a remarkable property of triple robustness (\citealp{wang2018bounded};
\citealp{jiang2020multiply}), in the sense that it is consistent
if any two of the three models are correctly specified when using
parametric models or achieves a $n^{1/2}$-consistency if the models
converge at a rate not slower than $n^{-1/4}$ when using flexible
models such as semiparametric or machine learning models. Extending
to longitudinal clinical studies, an additional model is needed for
identification: 
\begin{enumerate}
\item[(d)] the \textit{pattern mean}, as the model of the mean outcomes adjusted
by the response probability conditional on the observed history and
the treatment for any missingness pattern. 
\end{enumerate}
Even under MAR, the derivation of the EIF for longitudinal data is
notoriously challenging. The complexity is escalated 
under J2R, where the treatment group involves additional outcome information
from the control group, resulting in unexplored territory to date.
Our major theoretical contribution is to obtain the EIF in longitudinal
studies, which enables us to construct a multiply robust estimator
with the guaranteed $n^{1/2}$-consistency and asymptotic normality
if models (a)--(d) have convergence rates not slower than $n^{-1/4}$.
To mitigate the impact of extreme values in the estimator, we seek
alternative formations to obtain more stabilized estimators via normalization
\citep{lunceford2004stratification} and calibration (e.g., \citealp{hainmueller2012entropy};
\citealp{zhao2019covariate}; \citealp{lee2021improving}). Moreover,
a sequential estimation procedure that is analogous to the steps in
\citet{bang2005doubly} but under the more complex MNAR-related setting
is provided to obtain the estimator in practice. Inspired by the semiparametric
efficiency bound the estimator attains, we provide an EIF-based variance
estimator. 

The rest of the paper proceeds as follows. Section \ref{sec:1time}
constructs the semiparametric framework under J2R in cross-sectional
studies. Section \ref{sec:longi} extends it to longitudinal data.
Section \ref{sec:simu} assesses the finite-sample performance of
the proposed estimator via simulations. Section \ref{sec:app} uses
antidepressant trial data to further validate the novel estimator.
Conclusions and remarks are presented in Section \ref{sec:conclusion}.
Supporting information contains technical details, additional simulation
and real-data application results.

\section{Cross-sectional studies \label{sec:1time}}
To ground ideas, we first focus on cross-sectional studies. Let $A_{i}$
be the binary treatment, $X_{i}$ the baseline covariates, $Y_{1,i}$
the outcome, and $R_{1,i}$ the response indicator where $R_{1,i}=1$
indicates the outcome is observed and $R_{1,i}=0$ otherwise, where the subscript $1$ indicates the first post-baseline time point,
for
unit $i=1,\ldots,n$. 
Assume $\left\{ X_{i},A_{i},R_{1,i},Y_{1,i}:i=1,\cdots,n\right\} $
are independent and identically distributed. For simplicity of notation,
omit the subscript $i$ 
for the subject
. Let $V=(X,A,R_{1}Y_{1},R_{1})$
be the random vector of all observed variables and follow the 
distribution $\mathbb{P}$. 
To define the estimand unambiguously, we extend the causal framework
in \citet{lipkovich2020causal} and introduce the potential outcomes
framework by defining $R_{1}(a)$ as the potential response indicator
received treatment $a$ and $Y_{1}(a,r)$ as the potential outcome
received treatment $a$ with response status $r$. As a shorthand,
we also introduce the potential outcome $Y_{1}(a)=Y_{1}\{a,R_{1}(a)\}$
to acknowledge the equivalence between the potential outcome with
$A=a$ and the potential outcome with $A=a$ and $R_{1}$ to be the
value it would have been if $A=a$ based on the composition assumption
\citep{vanderweele2009conceptual}.

\begin{Assumption}[Treatment ignorability]\label{assump:1.1time}

$A\indep\left\{ R_{1}(a),Y_{1}(a,r)\right\} \mid X$, for all $a$
and $r$.

\end{Assumption}

Assumption \ref{assump:1.1time} is the classic treatment ignorability
in observational studies \citep{rosenbaum1983central}. In randomized
clinical trials, the treatment ignorability holds naturally. 
\begin{Assumption}[Causal consistency]\label{assump:2.1time}

$R_{1}=R_{1}(A),$ and $Y_{1}=Y_{1}\left\{ A,R_{1}(A)\right\} $.

\end{Assumption}
Assumption \ref{assump:2.1time} is the stable unit treatment value
assumption proposed by \citet{rubin1980randomization}. 
\begin{Assumption}[Partial ignorability of missingness]\label{assump:3.1time}

$R_{1}(0)\indep Y_{1}(0,r)\mid X$, for all $r$. 

\end{Assumption}
We distinguish Assumption \ref{assump:3.1time} from the conventional
MAR assumption, as it only requires conditional independence between
the potential response status and the potential outcome under any
response status in the control group. Since the control group in most clinical studies represents the placebo or standard care, the missingness ignorability matches the rationale that participants in this group still adhere to the assigned treatment after dropping out. 

\begin{Assumption}[J2R for the outcome mean]\label{assump:4.1time}

$\E\left\{ Y_{1}(1,0)\mid X,R_{1}(1)=0\right\} =\E\{Y_{1}(0)\mid X\}$.

\end{Assumption}
Assumption \ref{assump:4.1time} is vital as it specifies the outcome model under J2R. 
In the treated group, Assumptions \ref{assump:3.1time} and \ref{assump:4.1time} jointly characterize MNAR related to J2R, as the outcome distributions between observed individuals and dropouts are different based on the construction of the outcome mean. J2R is prespecified in the study protocol and belongs to a class of unverifiable assumptions on the outcome profile to target dropouts, revealing its applicability in diverse areas such as chronic diseases and oncology trials \citep{mallinckrodt2019estimands}. {In practice, one can include the outcome predictors of the control group in the outcome model to enhance the credibility of this assumption. Meanwhile, caution should be taken. Despite the prevalence of J2R, it may not be suitable for drugs with an enduring treatment benefit.}

Figure \ref{fig:demo} visualizes the four assumptions and extends the single-world intervention graph \citep{richardson2013single} to link counterfactuals with treatments. As Assumptions \ref{assump:3.1time} and \ref{assump:4.1time} imply differences in the distributions of the potential variables $R_{1}(a)$ and $Y_{1}(a)$ between treatments, we invent a graph containing both sets of the potential variables $\{R_{1}(0),Y_{1}(0,r_{1})\}$ and $\{R_{1}(1),Y_{1}(1,r_{1})\}$ and call it the double-world intervention graph (DWIG). By splitting the nodes to capture double-world distributions of the observed data, the DWIG shows different profiles for both potential variable sets and visualizes all causal assumptions.

{\large{}}
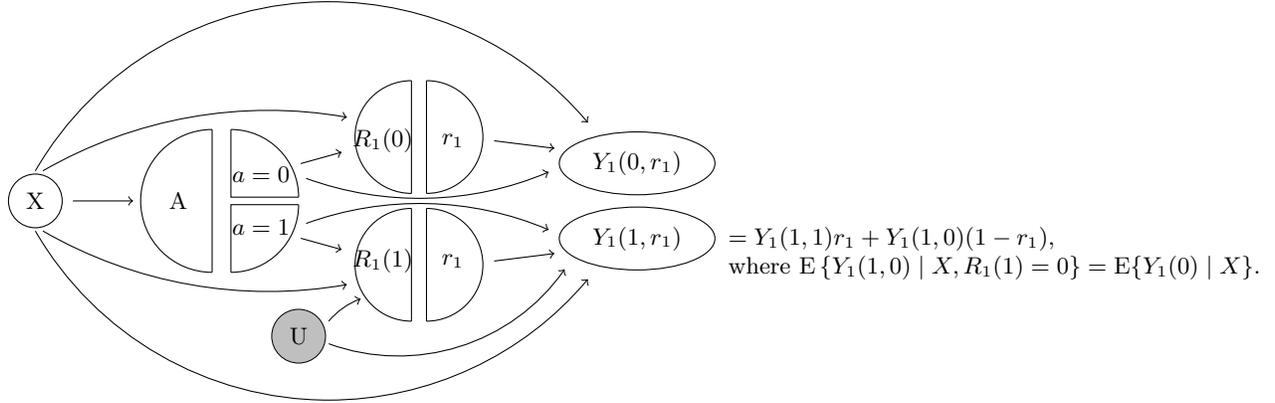
\begin{figure}
{\large{}\centering}{\large\par}
\centering{}{\large{}\caption{The DWIG encodes causal Assumptions \ref{assump:1.1time}--\ref{assump:4.1time}
and extends the single-world intervention graph \citep{richardson2013single},
visualizing double-world joint distributions $f\{X,A,R_{1}(0),Y_{1}(0,r_{1})\}$
and $f\{X,A,R_{1}(1),Y_{1}(1,r_{1})\}$. The vacancy of edges between
$A$ and the potential variables $\{R_{1}(a),Y_{1}(a,r_{1})\}$ represents
Assumption \ref{assump:1.1time}. Assumption \ref{assump:2.1time}
links $a$ with $R_{1}(a)$ and $(a,r_{1})$ with $Y_{1}(a,r_{1})$
to illustrate the causal consistency. The partial ignorability of
missingness in the control group in Assumption \ref{assump:3.1time}
only connects $R_{1}(0)$ and $Y_{1}(0,r)$ through $X$. The side
note and the additional involvement of an unmeasured confounder $U$
that links between $R_{1}(1)$ and $Y_{1}(1,r)$ indicate Assumption
\ref{assump:4.1time} and reveal an MNAR pattern invoked by J2R.}
\label{fig:demo}}
\begin{centering}
{\large{}
\begin{tikzpicture}[every node/.style=scale=1.1, node distance = 3cm, auto]
\tikzstyle{block} = [circle, draw, text centered]
\tikzstyle{outcome} = [ellipse, draw, text centered]

\tikzstyle{every node} = [font = \footnotesize]

\node [block]  at (-11,0) (X) {
X
};

\draw  (-8.65,0.95) arc (90:270:0.95);
\draw[-] (-8.65,0.95)--(-8.65,-0.95);
\node at (-9.1,0) (A) {A};
\draw  (-7.5,0.05) arc (0:90:0.9);
\draw[-] (-8.4,0.95)--(-8.4,0.05)--(-7.5,0.05);
\node at (-8,0.35) (a0) {$a=0$};
\draw  (-7.5,-0.05) arc (0:-90:0.9);
\draw[-] (-8.4,-0.95)--(-8.4,-0.05)--(-7.5,-0.05);
\node at (-8,-0.35) (a1) {$a=1$};

\draw  (-6,1.6) arc (90:270:0.75);
\draw[-] (-6,1.6)--(-6,0.1);
\node at (-6.38,0.8) (R0) {$R_1(0)$};
\draw  (-6,-1.6) arc (270:90:0.75);
\draw[-] (-6,-1.6)--(-6,-0.1);
\node at (-6.38,-0.8) (R1) {$R_1(1)$};

\draw  (-5.8,0.1) arc (-90:90:0.75);
\draw[-] (-5.8,1.6)--(-5.8,0.1);
\node at (-5.45,0.8) (r0) {$r_1$};
\draw  (-5.8,-1.6) arc (-90:90:0.75);
\draw[-] (-5.8,-1.6)--(-5.8,-0.1);
\node at (-5.45,-0.8) (r1) {$r_1$};

\fill[gray!50] (-7.5,-1.8) circle (0.35cm);
\node[block] at (-7.5,-1.8) (U) {U};

\node[outcome] (Y0) at (-3,0.5) {$Y_1(0,r_1)$};
\node[outcome] (Y1) at (-3,-0.5) {$Y_1(1,r_1)$};
\node (Y1e) at (1.75, -0.7) {\begin{tabular}{l} 	
    $=Y_1(1,1)r_1 + Y_1(1,0)(1 - r_1),$ \\
    	where $\E\left\{ Y_{1}(1,0)\mid X,R_{1}(1)=0\right\} =\E\{Y_{1}(0)\mid X\}.$\\
\end{tabular}
};

\draw[->] (-10.9,0.4) arc(120:80:6);
\draw[->] (-10.9,-0.4) arc(240:280:6);
\draw[->] (-11,0.4) arc(150:40:4.5);
\draw[->] (-11,-0.4) arc(210:320:4.5);
\draw[->] (-7.4,-0.3) arc(110:67:4.4);
\draw[->] (-7.4,0.3) arc(250:293:4.4);
\draw[->] (-10.5,0)--(-9.7,0);
\draw[->] (-4.9,0.8)--(-4.1,0.7);
\draw[->] (-4.9,-0.8)--(-4.1,-0.7);
\draw[->] (-7.1,-1.6) arc(140:110:1);
\draw[->] (-7.1,-1.9) arc(250:325:2.7);

\draw [->] (a0) edge (R0);
\draw [->] (a1) edge (R1);\end{tikzpicture}}{\large\par}
\par\end{centering}
{\large\par}
\end{figure}
{\large\par}

\subsection{Three identification formulas under J2R}

The ATE can be expressed under the potential outcomes framework as $\tau_{1}^{\jtr}=\E\{Y_{1}(1)-Y_{1}(0)\}$. 
Define the propensity score as $e(X)=\pr(A=1\mid X)$, the response probability
as ${\color{red}{\color{black}\pi_{1}(a,X)=\pr(R_{1}=1\mid X,A=a)}}$,
the outcome mean as $\mu_{1}^{a}(X)=\E(Y_{1}\mid X,R_{1}=1,A=a)$.
The following theorem provides three identification formulas of the
ATE.  

\begin{theorem}\label{thm:iden_1time}
Under Assumptions \ref{assump:1.1time}--\ref{assump:4.1time}, assume
there exists $\varepsilon>0,$ such that{} $\varepsilon<\big\{ e(X),\allowbreak\pi_{1}(a,X)\big\}<1-\varepsilon$
for all $X$ and $a$, the following identification formulas hold. 
\begin{enumerate}
\item[(a)] Based on the response probability and outcome mean, $\tau_{1}^{\text{\jtr}}=\mathbb{E}\left[\pi_{1}(1,X)\left\{ \mu_{1}^{1}(X)-\mu_{1}^{0}(X)\right\} \right].$ 
\item[(b)] Based on the propensity score and outcome mean, $\tau_{1}^{\text{\jtr}}=\mathbb{E}\Big((2A-1)\big\{ R_{1}Y_{1}+(1-R_{1})\mu_{1}^{0}(X)\big\}/\allowbreak\big[e(X)^{A}\{1-e(X)\}^{1-A}\big]\Big).$ 
\item[(c)] Based on the propensity score and response probability, $\tau_{1}^{\jtr}=\mathbb{E}\Big(AR_{1}Y_{1}/e(X)-(1-A)\pi_{1}(1,X)R_{1}\allowbreak Y_{1}/\big[\{1-e(X)\}\pi_{1}(0,X)\big]\Big).$ 
\end{enumerate}
\end{theorem}
Theorem \ref{thm:iden_1time} requires the positivity assumption of the treatment assignment \citep{rosenbaum1983central}. It means that each participant has a nonzero probability of being assigned to the control or treatment group. When missingness is involved, a positivity assumption regarding the response probability is also imposed, indicating that each individual has a chance to be observed at the study endpoint. As the missing components follow a MAR pattern in the control group, existing results (e.g., \citealp{robins1994estimation}) can help identify $\E\{Y_{1}(0)\}$. However, identifying $\E\{Y_{1}(1)\}$ requires considerable effort as the component $\E\{Y_{1}(1,0)\}$ borrows the available information from the control group to the treated group requested by J2R, which differs from the traditional approaches where the identification only relies on the observed data in the same group, resulting in one of the main contributions in our paper.

We give some intuition about the identification formulas below. The intuition also helps when we extend our framework to the longitudinal setting. Theorem \ref{thm:iden_1time} (a) describes that for any subject in the target population, the individual treatment effect will be zero when missingness is involved, as J2R entails that the individual will always take the control therapy and thus have the same outcome mean regardless of the assigned treatment; if the outcome is fully observed, the individual treatment effect given the baseline covariates will be $\mu_{1}^{1}(X)-\mu_{1}^{0}(X)$. Taking the expectation over the response status in the treatment group results in the overall marginal treatment effect. Theorem \ref{thm:iden_1time} (b) creates the pseudo-observed outcome $R_{1}Y_{1}+(1-R_{1})\mu_{1}^{0}(X)$ from imputing the missing component by the outcome mean under J2R. The standard inverse probability weighting (IPW; \citealp{imbens2004nonparametric}) method is then applied to adjust for the confounding effect using the propensity score. In Theorem \ref{thm:iden_1time} (c), the first term adjusted by $A/e(X)$ targets the participants who are still observed in the assigned treatment group, which corresponds to $\E\left\{ \pi_{1}(1,X)\mu_{1}^{1}(X)\right\} $. The second term marginalizes the multiplication between $\pi_{1}(1,X)$ and the IPW-based transformed outcome $(1-A)R_{1}Y_{1}\big/\left[\left\{ 1-e(X)\right\} \pi_{1}(0,X)\right]$, which measures the conditional control group mean $\mu_{1}^{0}(X)$, quantifies the difference between the borrowed information in the treated group from the control group and the information in the control group, and matches $\E\left\{ \pi_{1}(1,X)\mu_{1}^{0}(X)\right\} $ in Theorem \ref{thm:iden_1time} (a).

\subsection{Estimation based on the identification formulas}

We introduce additional notations for convenience. Let $\pr_{n}$
be the empirical average, i.e., $\pr_{n}(U)$ = $n^{-1}\sum_{i=1}^{n}U_{i}$
for any variable $U$. Under the parametric modeling framework, let
$e(X;\alpha)$, $\mu_{1}^{a}(X;\beta)$, and $\pi_{1}(a,X;\gamma)$
be the working models of $e(X)$, $\mu_{1}^{a}(X)$, and $\pi_{1}(a,X)$,
where $\alpha,\beta,\gamma$ are the model parameters. Suppose the
model parameter estimates $(\widehat{\alpha},\widehat{\beta},\widehat{\gamma})$
converge to their probability limits $\left(\alpha^{*},\beta^{*},\gamma^{*}\right)$.
Denote the true model parameters $\left(\alpha_{0},\beta_{0},\gamma_{0}\right)$
and the true models $\{e(X),\mu_{1}^{a}(X),\pi_{1}(a,X):a=0,1\}$
for shorthand. To illustrate model specifications, we use $\mathcal{M}$
with the subscripts ``ps'', ``om'', and ``rp'' to denote the correctly
specified propensity score, outcome mean, and response probability,
respectively. Under $\mathcal{M}_{\text{ps}}$, $e(X;\alpha^{*})=e(X)$;
under $\mathcal{M}_{\text{om}}$, $\mu_{1}^{a}(X;\beta^{*})=\mu_{1}^{a}(X)$;
under $\mathcal{M}_{\text{rp}}$, $\pi_{1}(a,X;\gamma^{*})=\pi_{1}(a,X)$.
We use $+$ to indicate the correct specification of more than one
model and $\cup$ to indicate that at least one model is correctly
specified, e.g., $\mathcal{M}_{\text{rp+om}}\cup\mathcal{M}_{\text{ps}}$
implies that the response probability and outcome mean are correct
or the propensity score is correct. The estimators are obtained by replacing 
$\{e(X),\pi_{1}(a,X),\allowbreak\mu_{1}^{a}(X):a=0,1\}$ with the
estimated models $\{e(X;\widehat{\alpha}),\pi_{1}(a,X;\widehat{\gamma}),\mu_{1}^{a}(X;\widehat{\beta}):a=0,1\}$
and the expectation with the empirical average. 
\begin{example}\label{exmp: est_1time}
The estimators motivated by
the identification formulas in Theorem \ref{thm:iden_1time} are: 
\begin{enumerate}
\item The response probability-outcome mean (rp-om) estimator: $\widehat{\tau}_{\text{rp-om}}=\pr_{n}\left[\pi_{1}(1,X;\widehat{\gamma})\left\{ \mu_{1}^{1}(X;\widehat{\beta})-\mu_{1}^{0}(X;\widehat{\beta})\right\} \right].$
The estimator is consistent under $\mathcal{M}_{\text{rp+om}}$.
\item The propensity score-outcome mean (ps-om) estimator: 
\[
\widehat{\tau}_{\text{ps-om}}=\pr_{n}\left[\frac{2A-1}{e(X;\widehat{\alpha})^{A}\left\{ 1-e(X;\widehat{\alpha})\right\} ^{1-A}}\left\{ R_{1}Y_{1}+(1-R_{1})\mu_{1}^{0}(X;\widehat{\beta})\right\} \right].
\]
The estimator is consistent under $\mathcal{M}_{\text{ps+om}}$. 
\item The propensity score-response probability (ps-rp) estimator: 
\[
\widehat{\tau}_{\text{ps-rp}}=\pr_{n}\left\{ \frac{A}{e(X;\widehat{\alpha})}R_{1}Y_{1}-\frac{1-A}{1-e(X;\widehat{\alpha})}\frac{\pi_{1}(1,X;\widehat{\gamma})}{\pi_{1}(0,X;\widehat{\gamma})}R_{1}Y_{1}\right\} .
\]
The estimator is consistent under $\mathcal{M}_{\text{ps+rp}}$. 
\end{enumerate}
\end{example} 

The estimators $\widehat{\tau}_{\text{ps-om}}$ and $\widehat{\tau}_{\text{ps-rp}}$
involve taking the inverse of the estimated propensity score or response
probability, which may produce extreme values when they are close
to 0 or 1. To mitigate the issue, we seek an alternative version of
the inverse probability weighting estimators by normalizing the weights
\citep{lunceford2004stratification}. The exact forms of the normalized
estimators $\widehat{\tau}_{\text{ps-om-N}}$ and $\widehat{\tau}_{\text{ps-rp-N}}$
are given in Web Appendix C.1. 

\subsection{EIF and the EIF-based estimators}

Based on the three different identification formulas and the motivated
estimators, it is possible to combine the three sets of model components
in one identification formula. In the subsection, we first compute
the EIF for the ATE under J2R to get a new identification formula
and then give the resulting EIF-based estimators. 

\begin{theorem}\label{thm: eif_1time}
Under Assumptions \ref{assump:1.1time}--\ref{assump:4.1time}, suppose
that there exists $\varepsilon>0,$ such that $\varepsilon<\big\{ e(X),\pi_{1}(a,X)\big\}<1-\varepsilon$
for all $X$ and $a$, the EIF for $\tau_{1}^{\jtr}$ is 
\begin{align*}
\varphi_{1}^{\jtr}(V;\mathbb{P})=\left\{ \frac{A}{e(X)}-\frac{1-A}{1-e(X)}\frac{\pi_{1}(1,X)}{\pi_{1}(0,X)}\right\} R_{1}\left\{ Y_{1}-\mu_{1}^{0}(X)\right\} -\frac{A-e(X)}{e(X)}\pi_{1}(1,X)\left\{ \mu_{1}^{1}(X)-\mu_{1}^{0}(X)\right\} -\tau_{1}^{\jtr}.
\end{align*}
\end{theorem} 

By the fact that the mean of the EIF is zero, we can
obtain another identification formula for the ATE,{} which motivates
the EIF-based estimator $\widehat{\tau}_{\text{tr}}$ as{} 
\begin{eqnarray*}
{\widehat{\tau}_{\text{tr}}}=\mathbb{P}_{n}\Bigg[\left\{ \frac{A}{e(X;\widehat{\alpha})}-\frac{1-A}{1-e(X;\widehat{\alpha})}\frac{\pi_{1}(1,X;\widehat{\gamma})}{\pi_{1}(0,X;\widehat{\gamma})}\right\} R_{1}\left\{ Y_{1}-\mu_{1}^{0}(X;\widehat{\beta})\right\} -\frac{A-e(X;\widehat{\alpha})}{e(X;\widehat{\alpha})}\pi_{1}(1,X;\widehat{\gamma})\left\{ \mu_{1}^{1}(X;\widehat{\beta})-\mu_{1}^{0}(X;\widehat{\beta})\right\} \Bigg].
\end{eqnarray*}

We provide the normalized estimator $\widehat{\tau}_{\text{tr-N}}$ to reduce the impact of extreme weights in Web Appendix C.2. We also consider employing calibration (e.g., \citealp{hainmueller2012entropy}; \citealp{zhao2019covariate}; \citealp{lee2021improving}) to improve the covariate balance and mitigate the outliers. Using the logistic link function, we estimate the weights by solving the optimization problem $\min_{w_{i}\geq0}\sum_{i=1}^{n}(w_{i}-1)\log(w_{i}-1)-w_{i}$ subject to $\sum_{i:A_{i}=1}w_{a_{1},i}h(X_{i})=n^{-1}\sum_{i=1}^{n}h(X_{i})$ to compute the weights $w_{i}=w_{a_{1},i}$ when $A=1$; subject to $\sum_{i:A_{i}=0}w_{a_{0},i}h(X_{i})=n^{-1}\sum_{i=1}^{n}h(X_{i})$ to compute the weights $w_{i}=w_{a_{0},i}$ when $A=0$; and subject to $\sum_{i:R_{i}=1}w_{r_{1},i}\allowbreak h(X_{i})=n^{-1}\sum_{i=1}^{n}h(X_{i})$ to compute the weights $w_{i}=w_{r_{1},i}$ when $R_{1}=1$. Here, $h(X)$ is any function of covariates. For example, one may incorporate the first two moments of the covariates to achieve a balance in both
means and variances. The calibration-based estimator $\widehat{\tau}_{\text{tr-C}}$
is given in Web Appendix C.2. {While $\widehat{\tau}_{\text{tr-N}}$ and  $\widehat{\tau}_{\text{tr-C}}$ enjoy superior finite-sample performance by mitigating extreme weights, the three EIF-based estimators are asymptotically equivalent with theoretical guarantees \citep{zhao2019covariate}.}

Connecting with the well-known robustness results under MAR in the missing data literature (e.g., \citealp{robins1995semiparametric,bang2005doubly}), the constructed EIF-motivated estimators distinguish themselves due to the discrepancy in outcome mean profiles between observed individuals and dropouts in the treated group envisioned by J2R, which
is further explained in Web Appendix E. Interestingly, they achieve better robust properties compared to the existing doubly robust estimators under MAR. 

As we will explain in the next subsection, the estimators reach $n^{1/2}$-consistency
if any two of the three models are correct when using a parametric
modeling strategy, or if the convergence rate of any model is not
less than $n^{-1/4}$ when using flexible models. We call this property
triple robustness. 

\subsection{Triple robustness}

We focus on investigating the asymptotic properties of $\widehat{\tau}_{\text{tr}}$.
Theorem \ref{thm: tr_1time_cons} explores the triple robustness of $\widehat{\tau}_{\text{tr}}$ under a parametric
modeling strategy on the nuisance functions. 
\begin{theorem}\label{thm: tr_1time_cons}

Under Assumptions \ref{assump:1.1time}--\ref{assump:4.1time}, suppose
that there exists $\varepsilon>0,$ such that{} $\varepsilon<\big\{ e(X;\alpha^{*}),\allowbreak e(X;\widehat{\alpha}),\pi_{1}(a,X;\gamma^{*}),\pi_{1}(a,X;\widehat{\gamma})\big\}<1-\varepsilon$
for all $X$ and $a$ almost surely, the estimator $\widehat{\tau}_{\text{tr}}$
is triply robust in the sense that it is consistent for $\tau_{1}^{\jtr}$
under $\mathcal{M}_{\text{rp+om}}\cup\mathcal{M}_{\text{ps+om}}\cup\mathcal{M}_{\text{ps+rp}}$.
Moreover, $\widehat{\tau}_{\text{tr}}$ achieves the semiparametric efficiency
bound under $\mathcal{M}_{\text{ps+rp+om}}$.

\end{theorem}

Theorem \ref{thm: tr_1time_cons} requires the true and estimated
propensity scores and response probabilities bounded away from 0 and
1 to reduce the extreme values 
\citep{robins1995semiparametric}. Given that the EIF-based estimators,
the estimators in Example \ref{exmp: est_1time}, and their normalized
versions are asymptotically linear, the variance estimators can be
computed by nonparametric bootstrap.

When the models for the nuisance functions are difficult to obtain
parametrically, one can turn to more flexible modeling strategies
such as semiparametric models like generalized additive models (GAM;
\citealp{hastie2017generalized}) or machine learning models to get
the estimated models $\left\{ \widehat{e}(X),\widehat{\pi}_{1}(a,X),\widehat{\mu}_{1}^{a}(X):a=0,1\right\} $.
To illustrate the convergence rate of the estimated models, denote
$\lVert U\rVert=\{\E(U^{2})\}^{1/2}$ as the $L_{2}$-norm of the
random variable $U$. Suppose the convergence rates are $\lVert\widehat{e}(X)-e(X)\rVert=o_{\mathbb{P}}(n^{-c_{e}}),\lVert\widehat{\mu}_{1}^{a}(X)-\mu_{1}^{a}(X)\rVert=o_{\mathbb{P}}(n^{-c_{\mu}})$
and $\lVert\widehat{\pi}_{1}(a,X)-\pi_{1}(a,X)\rVert=o_{\mathbb{P}}(n^{-c_{\pi}})$.
Denote $\widehat{\mathbb{P}}$ as the estimated distribution of the observed
data. Theorem \ref{thm: tr_1time_rate} illustrates the asymptotic
distribution of the EIF-based estimator. 

\begin{theorem}\label{thm: tr_1time_rate}
Under Assumptions \ref{assump:1.1time}--\ref{assump:4.1time}, suppose
that there exists $\varepsilon>0,$ such that{} $\varepsilon<\big\{ e(X),\widehat{e}(X),\allowbreak\pi_{1}(a,X),\widehat{\pi}_{1}(a,X)\big\}<1-\varepsilon$
for all $X$ and $a$ almost surely, and the nuisance functions and
their estimators take values in Donsker classes. Assume $\lVert\varphi_{1}^{\text{\jtr}}(V;\widehat{\mathbb{P}})-\varphi_{1}^{\text{\jtr}}(V;\mathbb{P})\rVert=o_{\mathbb{P}}(1)$.
Then, $\widehat{\tau}_{\text{tr}}=\tau_{1}^{\text{\jtr}}+n^{-1}\sum_{i=1}^{n}\varphi_{1}^{\text{\jtr}}(V_{i};\mathbb{P})+\text{Rem}(\widehat{\mathbb{P}},\mathbb{P})+o_{\mathbb{P}}(n^{-1/2}),$
where
\begin{align*}
\text{Rem}(\widehat{\mathbb{P}},\mathbb{P}) & =\E\bigg[\left\{ \frac{e(X)}{\widehat{e}(X)}-1\right\} \left\{ \pi_{1}(1,X)\mu_{1}^{1}(X)-\widehat{\pi}_{1}(1,X)\widehat{\mu}_{1}^{1}(X)\right\} +\left\{ 1-\frac{1-e(X)}{1-\widehat{e}(X)}\frac{\pi_{1}(0,X)}{\widehat{\pi}_{1}(0,X)}\right\} \widehat{\pi}_{1}(1,X)\big\{\mu_{1}^{0}(X)\\
 & \quad-\widehat{\mu}_{1}^{0}(X)\big\}+\left\{ \pi_{1}(1,X)-\widehat{\pi}_{1}(1,X)\right\} \left\{ \mu_{1}^{0}(X)-\frac{e(X)}{\widehat{e}(X)}\widehat{\mu}_{1}^{0}(X)\right\} \bigg].
\end{align*}

If $\text{Rem}(\widehat{\mathbb{P}},\mathbb{P})=o_{\mathbb{P}}(n^{-1/2})$,
then $n^{1/2}\left(\widehat{\tau}_{\text{tr}}-\tau_{1}^{\text{\jtr}}\right)\xrightarrow{d}\mathcal{N}\left(0,\mathbb{V}\left\{ \varphi_{1}^{\text{\jtr}}(V;\mathbb{P})\right\} \right)$,
where the asymptotic variance of $\widehat{\tau}_{\text{tr}}$ reaches
the semiparametric efficiency bound and $\mathbb{V}(\cdot)$ represents
the variance. \end{theorem}

The requirement of Donsker classes controls the complexity of the
nuisance functions and their estimators \citep{kennedy2016semiparametric},
which can be further relaxed using cross-fitting \citep{chernozhukov2018double}.
Theorem \ref{thm: tr_1time_rate} invokes the triple robustness in
terms of rate convergence when using flexible models, presented by
the following corollary. 

\begin{corollary}\label{cor: tr_1time_rate} Under the assumptions
in Theorem \ref{thm: tr_1time_rate}, suppose $\lVert\varphi_{1}^{\text{\jtr}}(V;\widehat{\mathbb{P}})-\varphi_{1}^{\text{\jtr}}(V;\mathbb{P})\rVert=o_{\mathbb{P}}(1)$,
and further suppose that there exists $0<M<\infty$, such that $\pr\bigg(\max\Big\{\big|\widehat{\mu}_{1}^{0}(X)\big|,\big|\widehat{\mu}_{1}^{1}(X)\big|,\allowbreak\Big|\{1-e(X)\}/\{1-\widehat{e}(X)\}\Big|\Big\}\leq M\bigg)=1$,
then $\widehat{\tau}_{\text{tr}}-\tau_{1}^{\text{\jtr}}=O_{\mathbb{P}}\left(n^{-1/2}+n^{-c}\right)$,
where $c=\min(c_{e}+c_{\mu},c_{e}+c_{\pi},c_{\mu}+c_{\pi})$.
\end{corollary}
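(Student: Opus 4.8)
The plan is to build directly on the asymptotic expansion established in Theorem \ref{thm: tr_1time_rate}. Writing that expansion as
\[
\hat{\tau}_{\text{tr}}-\tau_{1}^{\jtr}=\pr_{n}\left\{ \varphi_{1}^{\jtr}(V;\mathbb{P})\right\} +\text{Rem}(\hat{\mathbb{P}},\mathbb{P})+o_{\mathbb{P}}(n^{-1/2}),
\]
I would handle the three pieces separately. The linear term $\pr_{n}\{\varphi_{1}^{\jtr}(V;\mathbb{P})\}$ is a centered average of i.i.d.\ terms with finite variance (the EIF has a finite second moment under the positivity bounds), hence $O_{\mathbb{P}}(n^{-1/2})$ by the central limit theorem; and the trailing term is $o_{\mathbb{P}}(n^{-1/2})$ by hypothesis. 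Thus the whole claim reduces to showing $\text{Rem}(\hat{\mathbb{P}},\mathbb{P})=o_{\mathbb{P}}(n^{-c})$ with $c=\min(c_{e}+c_{\mu},c_{e}+c_{\pi},c_{\mu}+c_{\pi})$.

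For the remainder I would treat its three summands one at a time, in each case rewriting the expression as a sum of products in which every factor is either a uniformly bounded quantity or one of the estimation errors $e-\hat{e}$, $\pi_{1}(a,\cdot)-\hat{\pi}_{1}(a,\cdot)$, or $\mu_{1}^{a}(\cdot)-\hat{\mu}_{1}^{a}(\cdot)$. The algebraic identities I would use are $e/\hat{e}-1=(e-\hat{e})/\hat{e}$, the add-and-subtract decomposition $\pi_{1}(1,\cdot)\mu_{1}^{1}-\hat{\pi}_{1}(1,\cdot)\hat{\mu}_{1}^{1}=\pi_{1}(1,\cdot)(\mu_{1}^{1}-\hat{\mu}_{1}^{1})+(\pi_{1}(1,\cdot)-\hat{\pi}_{1}(1,\cdot))\hat{\mu}_{1}^{1}$, the manipulation $\mu_{1}^{0}-(e/\hat{e})\hat{\mu}_{1}^{0}=(\mu_{1}^{0}-\hat{\mu}_{1}^{0})+\hat{\mu}_{1}^{0}(\hat{e}-e)/\hat{e}$, and
\[
1-\frac{1-e(X)}{1-\hat{e}(X)}\frac{\pi_{1}(0,X)}{\hat{\pi}_{1}(0,X)}=\frac{\{1-\hat{e}(X)\}\{\hat{\pi}_{1}(0,X)-\pi_{1}(0,X)\}+\{e(X)-\hat{e}(X)\}\pi_{1}(0,X)}{\{1-\hat{e}(X)\}\hat{\pi}_{1}(0,X)}.
\]
Here the positivity assumptions keep $\hat{e}$, $1-\hat{e}$, and $\hat{\pi}_{1}(0,\cdot)$ bounded away from $0$; the response probabilities $\pi_{1}(1,\cdot)$ and $\hat{\pi}_{1}(1,\cdot)$ are bounded above by $1$; and $|\hat{\mu}_{1}^{0}|$, $|\hat{\mu}_{1}^{1}|$, $|(1-e)/(1-\hat{e})|$ are bounded by $M$ by hypothesis. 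Routing the boundedness through the hatted outcome means rather than the true ones, each summand of $\text{Rem}(\hat{\mathbb{P}},\mathbb{P})$ is then bounded in absolute value by a fixed multiple of an expectation of cross terms of the form $|e-\hat{e}|\,|\mu_{1}^{a}(\cdot)-\hat{\mu}_{1}^{a}(\cdot)|$, $|e-\hat{e}|\,|\pi_{1}(a,\cdot)-\hat{\pi}_{1}(a,\cdot)|$, and $|\pi_{1}(a,\cdot)-\hat{\pi}_{1}(a,\cdot)|\,|\mu_{1}^{0}(\cdot)-\hat{\mu}_{1}^{0}(\cdot)|$.

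Applying the Cauchy--Schwarz inequality to each cross term turns it into a product of $L_{2}$-norms, e.g.\ $\E[|e-\hat{e}|\,|\mu_{1}^{a}(\cdot)-\hat{\mu}_{1}^{a}(\cdot)|]\le\|e-\hat{e}\|\,\|\mu_{1}^{a}(\cdot)-\hat{\mu}_{1}^{a}(\cdot)\|=o_{\mathbb{P}}(n^{-c_{e}})\,o_{\mathbb{P}}(n^{-c_{\mu}})=o_{\mathbb{P}}(n^{-(c_{e}+c_{\mu})})$, and similarly for the $(e,\pi)$ and $(\mu,\pi)$ pairs. Summing over the finitely many cross terms gives $\text{Rem}(\hat{\mathbb{P}},\mathbb{P})=o_{\mathbb{P}}(n^{-(c_{e}+c_{\mu})}+n^{-(c_{e}+c_{\pi})}+n^{-(c_{\mu}+c_{\pi})})=o_{\mathbb{P}}(n^{-c})$. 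Combining this with the $O_{\mathbb{P}}(n^{-1/2})$ linear term and the $o_{\mathbb{P}}(n^{-1/2})$ residual yields $\hat{\tau}_{\text{tr}}-\tau_{1}^{\jtr}=O_{\mathbb{P}}(n^{-1/2}+n^{-c})$. I do not expect a genuine obstacle: the corollary is essentially bookkeeping on top of the remainder formula of Theorem \ref{thm: tr_1time_rate}. The only places needing care are (i) selecting the add-and-subtract decompositions so that every retained factor is provably bounded — exploiting $|\hat{\mu}_{1}^{a}|\le M$ and $\pi_{1}(1,\cdot),\hat{\pi}_{1}(1,\cdot)\le1$ rather than any unassumed boundedness of the true outcome means — and (ii) tracking which pair of errors multiplies in each term so that the exponent is correctly identified as the minimum of the three pairwise sums $c_{e}+c_{\mu}$, $c_{e}+c_{\pi}$, $c_{\mu}+c_{\pi}$.
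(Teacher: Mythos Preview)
Your proposal is correct and follows essentially the same approach as the paper: invoke the expansion from Theorem~\ref{thm: tr_1time_rate}, then bound $\text{Rem}(\hat{\mathbb{P}},\mathbb{P})$ term-by-term via add-and-subtract decompositions, the positivity and uniform-boundedness assumptions, and Cauchy--Schwarz/H\"older to reduce each summand to a product of two $L_{2}$ estimation errors, yielding $o_{\mathbb{P}}(n^{-c})$ with $c=\min(c_{e}+c_{\mu},c_{e}+c_{\pi},c_{\mu}+c_{\pi})$. The paper's proof is slightly terser (it reuses the inequality chain already derived for the efficiency part of Theorem~\ref{thm: tr_1time_cons}), but the algebraic manipulations and the role of the boundedness hypotheses are the same as yours.
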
 

The additional uniformly bounded condition for the
estimated outcome means and the ratio $\{1-e(X)\}/\{1-\widehat{e}(X)\}$,
which originates from \citet{kennedy2016semiparametric} and holds
in most clinical studies, guarantees an upper bound for $\text{Rem}(\widehat{\mathbb{P}},\mathbb{P})$.
Corollary \ref{cor: tr_1time_rate} provides alternative approaches
to reach a $n^{1/2}$-rate consistency of the estimator. The nuisance
functions can converge at a slower rate no less than $n^{-1/4}$ using
flexible models. 

\section{Longitudinal data with monotone missingness \label{sec:longi}}

Next, we focus on the longitudinal setting and introduce additional
notations. Suppose the longitudinal data contain $t$ time points.
Let $Y_{s,i}$ be the outcome at time $s$, $H_{s-1,i}=(X_{i}^{\text{T}},Y_{1,i},\cdots,Y_{s-1,i})^{\text{T}}$
be the historical information at time $s$ for $s=2,\cdots,t$, and
$H_{0,i}=X_{i}$. When missingness is involved, denote $R_{s,i}$
as the response indicator at time $s$ and $D_{i}$ as the dropout
time. Let $R_{0,i}=1$, indicating the baseline covariates $H_{0,i}$
are always observed. We assume a monotone missingness pattern, i.e.,
if the individual drops out at time $s$, we would expect $R_{s,i}=\cdots=R_{t,i}=0$.
By monotone missingness, there exists a one-to-one relationship between
the dropout time $D_{i}$ and the vector of response indicators $\left(R_{0,i},\cdots,R_{t,i}\right)$
as $D_{i}=\sum_{s=0}^{t}R_{s,i}$ for all $i$. Assume the full data
$\left\{ X_{i},A_{i},R_{1,i},Y_{1,i},\cdots,R_{t,i},Y_{t,i}:i=1,\cdots,n\right\} $
are independent and identically distributed. We omit the subscript
$i$ again for simplicity. Let $V=(X,A,R_{1}Y_{1},R_{1},\cdots,R_{t}Y_{t},R_{t})$
be the vector of all observed variables and follow the observed data
distribution $\mathbb{P}$. Extending the potential outcomes framework,
we define $R_{s}(a)$ as the potential response indicator if the subject
received treatment $a$ at time $s$, $D(a)$ as the potential dropout
time if the subject received treatment $a$, $Y_{s}(a,d)$ as the
potential outcome if the subject received treatment $a$ at time $s$
with the occurrence of dropout at time $d$. Similar to the cross-sectional
setting, we simplify the potential outcome $Y_{s}\{a,D(a)\}=Y_{s}(a)$
using the composition assumption, which assumes that the potential
outcome with $A=a$ and the potential outcome with $A=a$ and the
dropout time $D$ to be the value it would have been if $A=a$ are
the same. Due to the natural constraint that future dropouts do not
affect the current and past outcomes, we have $Y_{s}(a,t+1)=Y_{s}(a,s')$
for any $s<s'<t+1$ and $D(a)=\sum_{s=0}^{t}R_{s}(a)$. We extend
Assumptions \ref{assump:1.1time}--\ref{assump:4.1time} to the context
of longitudinal data with monotone missingness. 

\begin{Assumption}[Treatment ignorability]\label{assump:1.longi}
$A\indep\left\{ R_{s}(a),D(a),Y_{s}(a,d)\right\} \mid X$, for all
$a,s$ and $d$. \end{Assumption} 
\begin{Assumption}[Causal consistency]\label{assump:3.longi}

$R_{s}=R_{s}(A)$, $D=D(A)$, and $Y_{s}=Y_{s}\left\{ A,D(A)\right\} $,
for all $s$.

\end{Assumption} 
\begin{Assumption}[Partial ignorability of missingness]\label{assump:2.longi}
$R_{s}(0)\indep Y_{s'}(0,d)\mid H_{s-1}$, for all $s'\geq s$ and
$d$. \end{Assumption} 

\begin{Assumption}[J2R for the outcome mean]\label{assump:4.longi}
$\E\left\{ Y_{s}(1,d)\mid D(1)=d,H_{d-1}\right\} =\E\{Y_{s}(0)\mid H_{d-1},R_{d-1}=1\}$,
for all $s\geq d$. 
\end{Assumption} 


In the longitudinal setting, Assumption \ref{assump:4.longi} indicates
a transition from the active treatment to the control group for the
dropouts while preserving the historical treatment benefit. 
\citet{white2020causal} develop a similar potential outcomes framework
for CBI in longitudinal clinical trials. However, their assumptions
about the causal model are much stronger, as they assume a linear
relationship between future and historical outcomes. Our proposed
framework does not rely on any modeling assumptions and is more flexible
in practice. 
In this section, all results degenerate to the ones
in cross-sectional studies when $t=1$. 

\subsection{Three identification formulas under J2R}

In most longitudinal clinical studies, the endpoint of interest is
the ATE measured by the mean difference at the last time point between
the two groups. Therefore, the ATE can be expressed 
as $\tau_{t}^{\jtr}=\E\{Y_{t}(1)-Y_{t}(0)\}$.
Define the propensity score $e(H_{s-1})=\pr(A=1\mid H_{s-1},R_{s-1}=1)$,
the response probability $\pi_{s}(a,H_{s-1})=\pr(R_{s}=1\mid H_{s-1},R_{s-1}=1,A=a)$,
the longitudinal outcome mean $\mu_{t}^{a}(H_{s-1})=\E\left\{ \mu_{t}^{a}(H_{s})\mid H_{s-1},R_{s}=1,A=a\right\} $
with $\mu_{t}^{a}(H_{t})=Y_{t}$, and the pattern mean $g_{s+1}^{1}(H_{l-1})=\E\big\{\pi_{l+1}(1,H_{l})\allowbreak g_{s+1}^{1}(H_{l})\mid H_{l-1},R_{l}=1,A=1\big\}$
for $l=1,\cdots,s-1$ with $g_{s+1}^{1}(H_{s-1})=\E\Big[\big\{1-\pi_{s+1}(1,H_{s})\big\}\mu_{t}^{0}(H_{s})\mid H_{s-1},R_{s}=1,A=1\Big]$
if we let $\pi_{t+1}(1,H_{t})=0$. The pattern mean characterizes
the weighted outcome mean in each dropout pattern under the pattern-mixture
model (\citealp{little1993pattern}). In addition, denote $\overline{\pi}_{s}(a,H_{s-1})=\prod_{k=1}^{s}\pi_{k}(a,H_{k-1})$
as the cumulative response probability for the individual observed
at time $s$, for $s=1,\cdots,t$. 
The following theorem provides three identification formulas for longitudinal
data with monotone missingness under J2R. 

\begin{theorem}\label{thm:iden_longi} Under Assumptions \ref{assump:1.longi}--\ref{assump:4.longi},
suppose that there exists $\varepsilon>0$, such that{} $\varepsilon<\big\{ e(H_{s-1}),\pi_{s}(a,H_{s-1})\big\}<1-\varepsilon$
for all $H_{s-1}$ and $a$ with $s=1,\cdots,t$, the following identification
formulas hold for the ATE under J2R: 
\begin{enumerate}
\item[(a)] Based on the response probability and pattern mean, $\tau_{t}^{\text{\jtr}}=\mathbb{E}\left[\pi_{1}(1,H_{0})\left\{ \sum_{s=1}^{t}g_{s+1}^{1}(H_{0})-\mu_{t}^{0}(H_{0})\right\} \right].$ 
\item[(b)] Based on the propensity score and outcome mean, 
\[
\tau_{t}^{\text{\jtr}}=\mathbb{E}\left[\frac{2A-1}{e(H_{0})^{A}\left\{ 1-e(H_{0})\right\} ^{1-A}}\left\{ R_{t}Y_{t}+\sum_{s=1}^{t}R_{s-1}(1-R_{s})\mu_{t}^{0}(H_{s-1})\right\} \right].
\]
\item[(c)] Based on the propensity score and response probability, 
\begin{align*}
\tau_{t}^{\text{\jtr}}=\mathbb{E}\bigg(\frac{A}{e(H_{0})}R_{t}Y_{t}+\frac{1-A}{1-e(H_{0})}\Big[\sum_{s=1}^{t}\overline{\pi}_{s-1}(0,H_{s-2})\left\{ 1-\pi_{s}(1,H_{s-1})\right\} \delta(H_{s-1})-1\Big]\frac{R_{t}Y_{t}}{\overline{\pi}_{t}(0,H_{t-1})}\bigg),
\end{align*}
where $\delta(H_{s-1})=\left\{ e(H_{s-1})\big/e(H_{0})\right\} \Big/\left[\left\{ 1-e(H_{s-1})\right\} \big/\left\{ 1-e(H_{0})\right\} \right].$ 
\end{enumerate}
\end{theorem}

\subsection{Estimation based on the identification formulas}
Similar to the cross-sectional setting, the estimators can be obtained
by replacing the functions $\big\{ e(H_{s-1}),\pi_{s}(a,H_{s-1}),\mu_{t}^{a}(H_{s-1}),g_{s+1}^{1}(H_{l-1}):l=1,\cdots,s\text{ and }s=1,\cdots,t;a=0,1\big\}$
with the estimated functions $\big\{\widehat{e}(H_{s-1}),\widehat{\pi}_{s}(a,H_{s-1}),\widehat{\mu}_{t}^{a}(H_{s-1}),\widehat{g}_{s+1}^{1}(H_{l-1}):l=1,\cdots,s\text{ and }\allowbreak s=1,\cdots,t;a=0,1\big\}$
and the expectation with the empirical average. Compared to the cross-sectional
case, obtaining the ATE estimator here involves fitting sequential
models at each time point. However, the complex iterated form of $g_{s+1}^{1}(H_{l-1})$
is infeasible to model parametrically. We consider using more flexible
models such as semiparametric or machine learning models. Denote $\widehat{\mathbb{P}}$
as the estimated distribution of the observed data $V$. Suppose the
nuisance functions have convergence rates $\lVert\widehat{e}(H_{s-1})-e(H_{s-1})\rVert=o_{\mathbb{P}}(n^{-c_{e}}),\lVert\widehat{\mu}_{t}^{a}(H_{s-1})-\mu_{t}^{a}(H_{s-1})\rVert=o_{\mathbb{P}}(n^{-c_{\mu}}),\lVert\widehat{\pi}_{s}(a,H_{s-1})-\pi_{s}(a,H_{s-1})\rVert=o_{\mathbb{P}}(n^{-c_{\pi}})$
for any $H_{s-1}$, and $\lVert\widehat{g}_{s+1}^{1}(H_{l-1})-g_{s+1}^{1}(H_{l-1})\lVert=o_{\mathbb{P}}(n^{-c_{g}})$
for any $H_{l-1}$, when $l=1,\cdots,s$; $s=1,\cdots,t$ and $a=0,1$. 

\begin{example}\label{exmp: est_longi} The estimators motivated
by the identification formulas in Theorem \ref{thm:iden_longi} are: 
\begin{enumerate}
\item The response probability-pattern mean (rp-pm) estimator: $\widehat{\tau}_{\text{rp-pm}}=\mathbb{\pr}_{n}\Big[\widehat{\pi}_{1}(1,H_{0})\big\{\sum_{s=1}^{t}\widehat{g}_{s+1}^{1}(H_{0})-\widehat{\mu}_{t}^{0}(H_{0})\big\}\Big],$
where $\widehat{g}_{s+1}^{1}(H_{l-1})=\widehat{\E}\left\{ \widehat{\pi}_{l+1}(1,H_{l})\widehat{g}_{s+1}^{1}(H_{l})\mid H_{l-1},R_{l}=1,A=1\right\} $
for $l=1,\cdots,s-1$ and $\widehat{g}_{s+1}^{1}(H_{s-1})=\widehat{\E}\left[\left\{ 1-\widehat{\pi}_{s+1}(1,H_{s})\right\} \widehat{\mu}_{t}^{0}(H_{s})\mid H_{s-1},R_{s}=1,A=1\right]$
if let $\widehat{\pi}_{t+1}(1,H_{t})\allowbreak=0$. 
\item The ps-om estimator: 
\begin{align*}
\widehat{\tau}_{\text{ps-om}} & =\pr_{n}\left[\frac{2A-1}{\widehat{e}(H_{0})^{A}\left\{ 1-\widehat{e}(H_{0})\right\} ^{1-A}}\big\{ R_{t}Y_{t}+\sum_{s=1}^{t}R_{s-1}(1-R_{s})\widehat{\mu}_{t}^{0}(H_{s-1})\big\}\right].
\end{align*}
\item The ps-rp estimator: 
\begin{align*}
\widehat{\tau}_{\text{ps-rp}}=\pr_{n}\bigg(\frac{A}{\widehat{e}(H_{0})}R_{t}Y_{t}+\frac{1-A}{1-\widehat{e}(H_{0})}\big[\sum_{s=1}^{t}\overline{\widehat{\pi}}_{s-1}(0,H_{s-2})\{1-\widehat{\pi}_{s}(1,H_{s-1})\}\widehat{\delta}(H_{s-1})-1\big]\frac{R_{t}Y_{t}}{\overline{\widehat{\pi}}_{t}(0,H_{t-1})}\bigg),
\end{align*}
where $\widehat{\delta}(H_{s-1})=\left\{ \widehat{e}(H_{s-1})\big/\widehat{e}(H_{0})\right\} \Big/\left[\left\{ 1-\widehat{e}(H_{s-1})\right\} \big/\left\{ 1-\widehat{e}(H_{0})\right\} \right]$. 
\end{enumerate}
\end{example} 

The impact of the extreme propensity score and response probability
weights is more pronounced in the longitudinal setting with an extended
long period of follow-up. To mitigate the influence, we consider the
normalized estimators $\widehat{\tau}_{\text{ps-om-N}}$ and $\widehat{\tau}_{\text{ps-rp-N}}$.
The estimation procedure is similar to the one in \citet{bang2005doubly},
which involves fitting the models recursively. The propensity score $\big\{ e(H_{s-1}):s=1,\cdots,t\big\}$ and response probability ${\color{black}\big\{\pi_{s}(a,H_{s-1}):s=1,\cdots,t\big\}}$ incorporate all the available information $H_{s-1}$. For the outcome mean ${\color{black}\big\{\mu_{t}^{a}(H_{s-1}):s=1,\cdots,t\big\}}$, we begin from the observed data at the last time point and use the predicted values to regress on the observed data recursively in backward order. For the pattern mean $\big\{ g_{s+1}^{1}(H_{l-1}):l=1,\cdots,s\text{ and }s=1,\cdots,t\big\}$, the product of the predicted values $\left\{ 1-\widehat{\pi}_{s+1}(1,H_{s})\right\} $ and $\widehat{\mu}_{t}^{0}(H_{s})$ is regressed on the historical information $H_{s-1}$ at time $s$. The resulting predicted value $\widehat{g}_{s+1}^{1}(H_{s-1})$ multiplied by the predicted response probability $\widehat{\pi}_{s}(1,H_{s-1})$ then severs as the outcome in the model $g_{s+1}^{1}(H_{s-2})$ to regress on the observed data at time $s-1$. Note that the estimated pattern mean will have good performance only if both the response probability and the outcome mean are well-approximated. 

\subsection{EIF and the EIF-based estimators }
Similar to cross-sectional studies, we derive the EIF for $\tau_{t}^{\text{\jtr}}$
to motivate a new estimator.

\begin{theorem}\label{thm: eif_longi}
Under Assumptions \ref{assump:1.longi}--\ref{assump:4.longi},
suppose that there exists $\varepsilon>0$, such that{} $\varepsilon<\big\{ e(H_{s-1}),\pi_{s}(a,H_{s-1})\big\}<1-\varepsilon$
for all $H_{s-1}$ and $a$ with $s=1,\cdots,t$, the EIF for $\tau_{t}^{\text{\jtr}}$
is
\begin{align*}
\varphi_{t}^{\text{\jtr}}(V;\mathbb{P}) & =\frac{A}{e(H_{0})}\big\{ R_{t}Y_{t}+\sum_{s=1}^{t}R_{s-1}(1-R_{s})\mu_{t}^{0}(H_{s-1})\big\}-\tau_{t}^{\text{\jtr}}\\
 & +\{1-\frac{A}{e(H_{0})}\}\Big[\pi_{1}(1,H_{0})\sum_{s=1}^{t}g_{s+1}^{1}(H_{0})+\big\{1-\pi_{1}(1,H_{0})\big\}\mu_{t}^{0}(H_{0})\Big]-\mu_{t}^{0}(H_{0})\\
 & +\frac{1-A}{1-e(H_{0})}\sum_{s=1}^{t}\Big[\sum_{k=1}^{s}\overline{\pi}_{k-1}(0,H_{k-2})\{1-\pi_{k}(1,H_{k-1})\}\delta(H_{k-1})-1\Big]\frac{R_{s}}{\overline{\pi}_{s}(0,H_{s-1})}\big\{\mu_{t}^{0}(H_{s})-\mu_{t}^{0}(H_{s-1})\big\}.
\end{align*}
\end{theorem} 

Solving $\mathbb{E}\{\varphi_{t}^{\text{\jtr}}(V;\mathbb{P})\}=0$
yields another identification formula of $\tau_{t}^{\text{\jtr}}$
and motivates the EIF-based estimator $\widehat{\tau}_{\text{mr}}$ by
plugging in the estimated nuisance functions as
\begin{align*}
\widehat{\tau}_{\text{mr}} & =\mathbb{\pr}_{n}\bigg(\frac{A}{\widehat{e}(H_{0})}\big\{ R_{t}Y_{t}+\sum_{s=1}^{t}R_{s-1}(1-R_{s})\widehat{\mu}_{t}^{0}(H_{s-1})\big\}+\left\{ 1-\frac{A}{\widehat{e}(H_{0})}\right\} \left[\widehat{\pi}_{1}(1,H_{0})\sum_{s=1}^{t}\widehat{g}_{s+1}^{1}(H_{0})+\big\{1-\widehat{\pi}_{1}(1,H_{0})\big\}\widehat{\mu}_{t}^{0}(H_{0})\right]\\
 & -\widehat{\mu}_{t}^{0}(H_{0})+\frac{1-A}{1-\widehat{e}(H_{0})}\sum_{s=1}^{t}\left[\sum_{k=1}^{s}\widehat{\overline{\pi}}_{k-1}(0,H_{k-2})\{1-\widehat{\pi}_{k}(1,H_{k-1})\}\widehat{\delta}(H_{k-1})-1\right]\frac{R_{s}}{\widehat{\overline{\pi}}_{s}(0,H_{s-1})}\big\{\widehat{\mu}_{t}^{0}(H_{s})-\widehat{\mu}_{t}^{0}(H_{s-1})\big\}\bigg).
\end{align*}
In addition, one can consider the normalized estimator $\widehat{\tau}_{\text{mr-N}}$
or the calibration-based estimator $\widehat{\tau}_{\text{mr-C}}$ to
mitigate the extreme weights, as elaborated in Web Appendix C.5. 

\subsection{Multiple robustness}
To simplify the notations, let $E_{0,l-1}(\cdot;H_{s}):=\E\big\{\cdots\E(\cdot\mid H_{l-1},R_{l}=1,A=0)\cdots\mid H_{s},R_{s+1}=1,A=0\big\}$
be the function of $(l-s)$ layers conditional expectations, with
the conditions beginning from $(H_{l-1},R_{l}=1,A=0)$ to $(H_{s},R_{s+1}=1,A=0)$,
and $E_{1,s-1}(\cdot;H_{0}):=\E\big\{\cdots\E\left(\cdot\mid H_{s-1},R_{s}=1,A=1\right)\allowbreak\cdots\mid H_{0},R_{1}=1,A=1\big\}$
be the function of $s$ layers conditional expectations, with the
conditions beginning from $(H_{s-1},R_{s}=1,A=1)$ to $(H_{0},R_{1}=1,A=1)$.
Denote $g_{\widehat{\mu},s+1}^{1}(H_{l-1})=\E\big\{\pi_{l+1}(1,H_{l})g_{\widehat{\mu},s+1}^{1}(H_{l})\mid H_{l-1},R_{l}=1,A=1\big\}$
for $l=1,\cdots,s-1$, and $g_{\widehat{\mu},s+1}^{1}(H_{s-1})=\E\Big[\big\{1-\pi_{s+1}(1,H_{s})\big\}\widehat{\mu}_{t}^{0}(H_{s})\mid H_{s-1},R_{s}=1,A=1\Big]$
for $s=1,\cdots,t$, i.e., we only estimate the outcome mean in the
pattern mean model $g_{s+1}^{1}(H_{l-1})$. The asymptotic properties
of $\widehat{\tau}_{\text{mr}}$ are presented in the following theorem. 
\begin{theorem}\label{thm:eif_longi2} 
Under Assumptions \ref{assump:1.longi}--\ref{assump:4.longi},
suppose that there exists $\varepsilon>0$, such that $\varepsilon<\big\{ e(H_{s-1}), \widehat{e}(H_{s-1}),\allowbreak \pi_{s}(a,H_{s-1}), \widehat{\pi}_{s}(a,H_{s-1})\big\}<1-\varepsilon$
for all $H_{s-1}$ and $a$ with $s=1,\cdots,t$, and the nuisance
functions and their estimators take values in Donsker classes. Assume
$\lVert\varphi_{t}^{\text{\jtr}}(V;\widehat{\mathbb{P}})-\varphi_{t}^{\text{\jtr}}(V;\mathbb{P})\rVert=o_{\mathbb{P}}(1)$.
Then, $\widehat{\tau}_{\text{mr}}=\tau_{t}^{\text{\jtr}}+n^{-1}\sum_{i=1}^{n}\varphi_{t}^{\text{\jtr}}(V_{i};\mathbb{P})+\text{Rem}(\widehat{\mathbb{P}},\mathbb{P})+o_{\mathbb{P}}(n^{-1/2}),$
where 
\begin{align*}
\text{Rem}(\widehat{\mathbb{P}},\mathbb{P}) & =\E\Bigg(\left\{ \frac{e(H_{0})}{\widehat{e}(H_{0})}-1\right\} \Big[\pi_{1}(1,H_{0})g_{t+1}^{1}(H_{0})-\widehat{\pi}_{1}(1,H_{0})\widehat{g}_{t+1}^{1}(H_{0})+\sum_{s=1}^{t-1}\big\{\pi_{1}(1,H_{0})g_{\widehat{\mu},s+1}^{1}(H_{0})\\
 & -\widehat{\pi}_{1}(1,H_{0})\widehat{g}_{s+1}^{1}(H_{0})\big\}\Big]+\sum_{s=1}^{t-1}\sum_{l=s+1}^{t}E_{1,s-1}\Bigg\{ E_{0,l-1}\bigg(\overline{\pi}_{s}(1,H_{s-1})\Big[\frac{1-e(H_{0})}{1-\widehat{e}(H_{0})}\{1-\widehat{\pi}_{s+1}(1,H_{s})\}\frac{\widehat{\delta}(H_{s-1})}{\delta(H_{s-1})}\\
 & \qquad\qquad\prod_{k=s+1}^{l}\frac{\pi_{k}(0,H_{k-1})}{\widehat{\pi}_{k}(0,H_{k-1})}-\{1-\pi_{s+1}(1,H_{s})\}\Big]\left\{ \widehat{\mu}_{t}^{0}(H_{l})-\widehat{\mu}_{t}^{0}(H_{l-1})\right\} ;H_{s}\bigg);H_{0}\Bigg\}\\
 & +\left\{ \widehat{\pi}_{1}(1,H_{0})-\pi_{1}(1,H_{0})\right\} \left\{ \frac{e(H_{0})}{\widehat{e}(H_{0})}\widehat{\mu}_{t}^{0}(H_{0})-\mu_{t}^{0}(H_{0})\right\} \\
 & +\widehat{\pi}_{1}(1,H_{0})\sum_{s=1}^{t}E_{0,s-1}\left[\left\{ 1-\frac{1-e(H_{0})}{1-\widehat{e}(H_{0})}\frac{\overline{\pi}_{s}(0,H_{s-1})}{\overline{\widehat{\pi}}_{s}(0,H_{s-1})}\right\} \left\{ \widehat{\mu}_{t}^{0}(H_{s})-\widehat{\mu}_{t}^{0}(H_{s-1})\right\} ;H_{0}\right]\Bigg).
\end{align*}

If $\text{Rem}(\widehat{\mathbb{P}},\mathbb{P})=o_{\mathbb{P}}(n^{-1/2})$,
then $n^{1/2}\left(\widehat{\tau}_{\text{mr}}-\tau_{t}^{\text{\jtr}}\right)\xrightarrow{d}\mathcal{N}\left(0,\mathbb{V}\left\{ \varphi_{t}^{\text{\jtr}}(V;\mathbb{P})\right\} \right)$,
where the asymptotic variance of $\widehat{\tau}_{\text{mr}}$ reaches
the semiparametric efficiency bound.

\end{theorem} 
The semiparametric efficiency bound prompts the EIF-based variance
estimator 
$\widehat{\mathbb{V}}(\widehat{\tau}_{\text{mr}})=n^{-2}\sum_{i=1}^{n}\allowbreak\big\{\varphi_{t}^{\text{\jtr}}(V_{i};\widehat{\mathbb{P}})-\widehat{\tau}_{\text{mr}}\big\}^{2}.$
In practice, the Wald-type confidence interval (CI) tends to have
narrower intervals which can be anti-conservative \citep{boos2013essential}.
Symmetric t bootstrap CI \citep{hall1988symmetric} is considered
to improve the coverage. In each bootstrap iteration from $b=1,\cdots,B$,
where $B$ is the total number of bootstrap replicates, we compute
$T^{*(b)}=(\widehat{\tau}^{(b)}-\widehat{\tau})/\widehat{\mathbb{V}}^{1/2}(\widehat{\tau}^{(b)})$
to get the estimated bootstrap distribution. The $95\%$ symmetric
t bootstrap CI of $\tau_{t}^{\text{\jtr}}$ is obtained by $(\widehat{\tau}-c^{*}\widehat{\mathbb{V}}^{1/2}(\widehat{\tau}),\widehat{\tau}+c^{*}\widehat{\mathbb{V}}^{1/2}(\widehat{\tau}))$,
where $c^{*}$ is the $95\%$ quantile of $\{|T^{*(b)}|:b=1,\cdots,B\}$.
Theorem \ref{thm:eif_longi2} motivates the following corollary, which
addresses the multiple robustness of $\widehat{\tau}_{\text{mr}}$ in
terms of the convergence rate under flexible modeling strategies. 

\begin{corollary}\label{cor:rate_coverge} 
Under the assumptions
in Theorem \ref{thm:eif_longi2}, suppose $\lVert\varphi_{t}^{\text{\jtr}}(V;\widehat{\mathbb{P}})-\varphi_{t}^{\text{\jtr}}(V;\mathbb{P})\rVert=o_{\mathbb{P}}(1)$
and there exists $0<M<\infty$, such that 
\[
\pr\left(\max\left\{ \Big|\frac{e(H_{0})}{\widehat{e}(H_{0})}\Big|,\Big|\mu_{t}^{0}(H_{0})\Big|,\Big|\widehat{g}_{s+1}^{1}(H_{0})\Big|,\Big|\frac{\{1-e(H_{0})\}\widehat{\delta}(H_{s-1})}{\{1-\widehat{e}(H_{0})\}\delta(H_{s-1})}\Big|\right\} \leq M\right)=1
\]
for $s=1,\cdots,t$, then $\widehat{\tau}_{\text{mr}}-\tau_{t}^{\text{\jtr}}=O_{\mathbb{P}}\left(n^{-1/2}+n^{-c}\right)$,
where $c=\min\big\{ c_{e}+c_{\mu},c_{e}+c_{\pi},c_{\mu}+c_{\pi},c_{e}+c_{g}\big\}$.
\end{corollary}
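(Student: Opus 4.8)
The plan is to derive the rate-robustness statement as a direct consequence of Theorem~\ref{thm:eif_longi2} by bounding each of the six summands in $\text{Rem}(\hat{\mathbb{P}},\mathbb{P})$ using the Cauchy--Schwarz inequality together with the boundedness assumptions. First I would observe that the asymptotic expansion in Theorem~\ref{thm:eif_longi2} already gives
\[
\hat{\tau}_{\text{mr}}-\tau_{t}^{\text{\jtr}}=\frac{1}{n}\sum_{i=1}^{n}\varphi_{t}^{\text{\jtr}}(V_{i};\mathbb{P})+\text{Rem}(\hat{\mathbb{P}},\mathbb{P})+o_{\mathbb{P}}(n^{-1/2}),
\]
where the empirical-average term is $O_{\mathbb{P}}(n^{-1/2})$ by the central limit theorem since $\varphi_{t}^{\text{\jtr}}(V;\mathbb{P})$ has finite variance (a consequence of the positivity bounds on $e$ and $\pi_s$). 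Hence the whole task reduces to showing $\text{Rem}(\hat{\mathbb{P}},\mathbb{P})=O_{\mathbb{P}}(n^{-c})$ with $c=\min\{c_e+c_\mu,\,c_e+c_\pi,\,c_\mu+c_\pi,\,c_e+c_g\}$.

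Next I would go term by term. The first summand, $\E[\{e(H_0)/\hat{e}(H_0)-1\}\{\pi_1(1,H_0)g_{t+1}^1(H_0)-\hat{\pi}_1(1,H_0)\hat{g}_{t+1}^1(H_0)\}]$, I split the inner difference as $\pi_1\{g_{t+1}^1-\hat g_{t+1}^1\}+g_{t+1}^1\{\pi_1-\hat\pi_1\}+\{\pi_1-\hat\pi_1\}\{g_{t+1}^1-\hat g_{t+1}^1\}$ (or group it so the $\hat g$ piece is controlled): applying Cauchy--Schwarz, $\|e/\hat e-1\|\lesssim\|\hat e-e\|=o_{\mathbb{P}}(n^{-c_e})$ by the lower bound on $\hat e$, and the factor involving $g$ contributes $o_{\mathbb{P}}(n^{-c_g})$ while the factor involving $\pi_1$ contributes $o_{\mathbb{P}}(n^{-c_\pi})$, so this term is $O_{\mathbb{P}}(n^{-c_e-c_g}+n^{-c_e-c_\pi})$. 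The second summand is analogous but pairs $e$-error against the sequential pattern-mean object $g_{\hat\mu,s+1}^1$, in which only $\hat\mu$ has been plugged in; here I would argue that $\|g_{\hat\mu,s+1}^1(H_0)-\hat g_{s+1}^1(H_0)\|$ is controlled by the errors of $\hat\pi$ (since $g_{\hat\mu}$ and $\hat g$ differ only through the response-probability layers), giving a contribution of order $n^{-c_e}\cdot n^{-c_\pi}$ — and using $|\hat g_{s+1}^1(H_0)|\le M$ and $|e/\hat e|\le M$ to handle the leftover $\hat g$ piece against the $e$-error, which is $n^{-c_e}$ times a bounded quantity; this is where the $c_e+c_\pi$ and also a bare $n^{-c_e}$... — so I must double-check that the $\hat\mu$ inside $g_{\hat\mu}$ versus $\hat\mu$ inside $\hat g$ cancels the bare $n^{-c_e}$ and leaves a genuine product rate. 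The fourth summand $\{\hat\pi_1(1,H_0)-\pi_1(1,H_0)\}\{e(H_0)\hat\mu_t^0(H_0)/\hat e(H_0)-\mu_t^0(H_0)\}$ I rewrite the second factor as $\frac{e}{\hat e}\{\hat\mu_t^0-\mu_t^0\}+\mu_t^0\{\frac{e}{\hat e}-1\}$, yielding $O_{\mathbb{P}}(n^{-c_\pi-c_\mu}+n^{-c_\pi-c_e})$ using $|e/\hat e|\le M$ and $|\mu_t^0(H_0)|\le M$. The last summand telescopes nicely: the bracket $\{1-\frac{1-e}{1-\hat e}\frac{\bar\pi_s(0,\cdot)}{\hat{\bar\pi}_s(0,\cdot)}\}$ is a product of ratios, each within $o_{\mathbb{P}}(n^{-\min(c_e,c_\pi)})$ of $1$ by the positivity bounds, and it multiplies $\{\hat\mu_t^0(H_s)-\hat\mu_t^0(H_{s-1})\}$ which is bounded; but since we want a product rate, I pair this with a Cauchy--Schwarz against something of size $o_{\mathbb{P}}(n^{-c_\mu})$ — here the trick is that the corresponding term in $\varphi_t^{\text{\jtr}}$ already subtracts off the true $\mu_t^0$ increments, so the genuine integrand in $\text{Rem}$ involves $(\hat\mu-\mu)$ increments, not $\hat\mu$ increments; I would re-expand to expose $\{(\hat\mu_t^0-\mu_t^0)(H_s)-(\hat\mu_t^0-\mu_t^0)(H_{s-1})\}$ and then the bracket's deviation from $1$ supplies the second factor, giving $O_{\mathbb{P}}(n^{-c_e-c_\mu}+n^{-c_\pi-c_\mu})$. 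The middle (third) summand, a double sum of nested conditional expectations $E_{1,s-1}\{E_{0,l-1}(\cdots)\}$, is handled the same way once one notes $|E_{1,s-1}\{E_{0,l-1}(Z)\}|\le\E|Z|\le\|Z\|$ by Jensen, and inside, the bracket is $\bar\pi_s(1,\cdot)$ (bounded) times a deviation-from-$\{1-\pi_{s+1}(1,H_s)\}$ of order $o_{\mathbb{P}}(n^{-\min(c_e,c_\pi)})$ — using $|\{1-e\}\hat\delta/\{1-\hat e\}\delta|\le M$ — multiplying the bounded $\hat\mu$-increment, but again replaced by the $(\hat\mu-\mu)$-increment after the $\varphi_t$-subtraction is accounted for, so it is $O_{\mathbb{P}}((n^{-c_e}+n^{-c_\pi})n^{-c_\mu})$.

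Collecting the six bounds, every term is $O_{\mathbb{P}}$ of one of $n^{-c_e-c_\mu}$, $n^{-c_e-c_\pi}$, $n^{-c_\mu-c_\pi}$, $n^{-c_e-c_g}$, so $\text{Rem}(\hat{\mathbb{P}},\mathbb{P})=O_{\mathbb{P}}(n^{-c})$ with $c$ as stated, and combining with the $O_{\mathbb{P}}(n^{-1/2})$ empirical term and the $o_{\mathbb{P}}(n^{-1/2})$ remainder gives $\hat{\tau}_{\text{mr}}-\tau_{t}^{\text{\jtr}}=O_{\mathbb{P}}(n^{-1/2}+n^{-c})$. The main obstacle I anticipate is the bookkeeping for the two summands involving $g_{\hat\mu,s+1}^1$ and the nested-expectation term: one has to verify carefully that the "partially estimated" pattern-mean object $g_{\hat\mu,s+1}^1$ differs from the fully estimated $\hat g_{s+1}^1$ only through response-probability factors, so that the difference is genuinely $O_{\mathbb{P}}(n^{-c_\pi})$ rather than $O_{\mathbb{P}}(1)$, and — crucially — that all apparently first-order (single-rate) pieces cancel against the corresponding components of $\varphi_t^{\text{\jtr}}$, leaving only honest products of two estimation errors; this cancellation is exactly the content of the $\text{Rem}$ expression in Theorem~\ref{thm:eif_longi2}, so the argument is really just a disciplined application of Cauchy--Schwarz plus the boundedness hypotheses, with no new ideas beyond the expansion already in hand.
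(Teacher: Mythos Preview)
Your proposal is correct and follows essentially the same route as the paper: invoke the expansion of Theorem~\ref{thm:eif_longi2}, then bound each summand of $\text{Rem}(\hat{\mathbb{P}},\mathbb{P})$ via Cauchy--Schwarz/H\"older together with the stated uniform bounds to exhibit every piece as a product of two nuisance errors. One small clarification on the last two summands: the $c_\mu$ rate for $\hat\mu_t^0(H_s)-\hat\mu_t^0(H_{s-1})$ inside $E_{0,\cdot}$ does not come from ``$\varphi_t^{\text{\jtr}}$ subtracting off true increments'' (the Rem formula is already final) but rather from the iterated definition $\mu_t^0(H_{s-1})=\E\{\mu_t^0(H_s)\mid H_{s-1},R_s=1,A=0\}$, which makes the innermost conditional expectation of the $\hat\mu$-increment equal to a genuine $(\hat\mu_t^0-\mu_t^0)$-type quantity.
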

Similar to the cross-sectional setting, even if the nuisance functions
converge at a lower rate, we can still obtain a $n^{1/2}$-rate consistency.
An additional function $\big\{ g_{s+1}^{1}(H_{l-1}):l=1,\cdots,s\text{ and }s=1,\cdots,t\big\}$
is involved, whose convergence rate may
be harder to control as it incorporates the estimation of both the
outcome mean and response probability.
\section{Simulation study\label{sec:simu}}

\subsection{Cross-sectional setting}

We first conduct the simulation in a cross-sectional setting to evaluate
the finite-sample performance of the proposed estimators.
Set the sample size as 500. The covariates $X\in\mathbb{R}^{5}$ are
generated by $X_{j}\sim N(0.25,1)$ for $j=1,\cdots,4$ and $X_{5}\sim\text{Bernoulli}(0.5)$.
Consider a nonlinear transformation of the covariates and denote $Z_{j}=\{X_{j}^{2}+2\sin(X_{j})-1.5\}/\sqrt{2}$
for $j=1,\cdots,4$ and $Z_{5}=X_{5}$. We generate $A\mid X\sim\text{Bernoulli}\text{\{\ensuremath{e(X)}}\}$,
where $\text{logit}\{e(X)\}=0.1\sum_{j=1}^{4}Z_{j}$; $R_{1}\mid(X,A=a)\sim\text{Bernoulli}\{\pi_{1}(a,X)$\},
where $\text{logit\{\ensuremath{\pi_{1}(a,X)}\}}=(2a-1)\ensuremath{\sum_{j=1}^{5}Z_{j}/6}$;
and $Y_{1}\mid(X,A=a,R_{1}=1)\sim N\{\mu_{1}^{a}(X),1\}$, where $\mu_{1}^{a}(X)=(2+a)\sum_{j=1}^{5}Z_{j}/6$.
The true ATE $\tau_{1}^{\jtr}=0.0680$. To evaluate the robustness
of the estimators, we consider two model specifications
of the propensity score, response probability, and outcome mean. Specifically,
we fit the corresponding parametric models with the covariates $Z$
as the correctly specified models or with the covariates $X$ as the
misspecified models.

We compare the estimators from Example \ref{exmp: est_1time} and
their normalized versions with the three EIF-based estimators. The
first moment of the covariates $Z$ is incorporated in the calibration.
The estimators are assessed in terms of the point estimation, coverage
rates of the $95\%$ CI, and mean CI lengths under 8 scenarios, each
of which relies on whether the propensity score, response probability,
or outcome mean is correctly specified. We compute the variance
estimates $\widehat{\mathbb{V}}_{1}$ of the estimators by nonparametric
bootstrap with $B=100$ and use the $95\%$ Wald-type CI as $(\widehat{\tau}-1.96\widehat{\mathbb{V}}_{1}^{1/2},\widehat{\tau}+1.96\widehat{\mathbb{V}}_{1}^{1/2})$.
Figure \ref{fig:point_1time} shows the point estimation results based
on 1000 Monte Carlo simulations.
When three models are correctly specified, all the estimators are
unbiased. For the estimators without triple robustness, they are biased
when at least one of their required models is misspecified; while
the three EIF-based estimators verify triple robustness since they
are unbiased when any two of the three models are correct. Normalization
mitigates the impact of extreme weights and results in smaller variations.
Moreover, calibration produces a more steady estimator. The coverage
rates and mean CI lengths are presented in Table \ref{tab:sim_1time},
which match the observations we make from Figure \ref{fig:point_1time}.
All estimators have satisfactory coverage rates when their required
models are correct. Among the EIF-based estimators, the coverage rates are close to the empirical value when any
two of the three models are correct, with the smallest mean CI length
produced by $\widehat{\tau}_{\text{tr-C}}$.

\begin{figure}
\centering{}\caption{Performance of the estimators in the cross-sectional setting under
8 different model specifications, where ps, rp, and om are shorthands
for the propensity score, response probability, and outcome mean; ``yes'' denotes the
correct model with the nonlinear covariates $Z$, while ``no'' denotes
the wrong model with the linear covariates $X$.
In the x-axis, tr, tr-N, and tr-C denote the three EIF-based estimators
$\widehat{\tau}_{\text{tr}}$, $\widehat{\tau}_{\text{tr-N}}$, and $\widehat{\tau}_{\text{tr-C}}$;
psrp and psrp-N denote the estimators $\widehat{\tau}_{\text{ps-rp}}$
and $\widehat{\tau}_{\text{ps-rp-N}}$; psom and psom-N denote the estimators
$\widehat{\tau}_{\text{ps-om}}$ and $\widehat{\tau}_{\text{ps-om-N}}$; and
rpom denotes the estimator $\widehat{\tau}_{\text{rp-om}}$ in Example
\ref{exmp: est_1time}. 
\label{fig:point_1time}}
\includegraphics[scale=0.5]{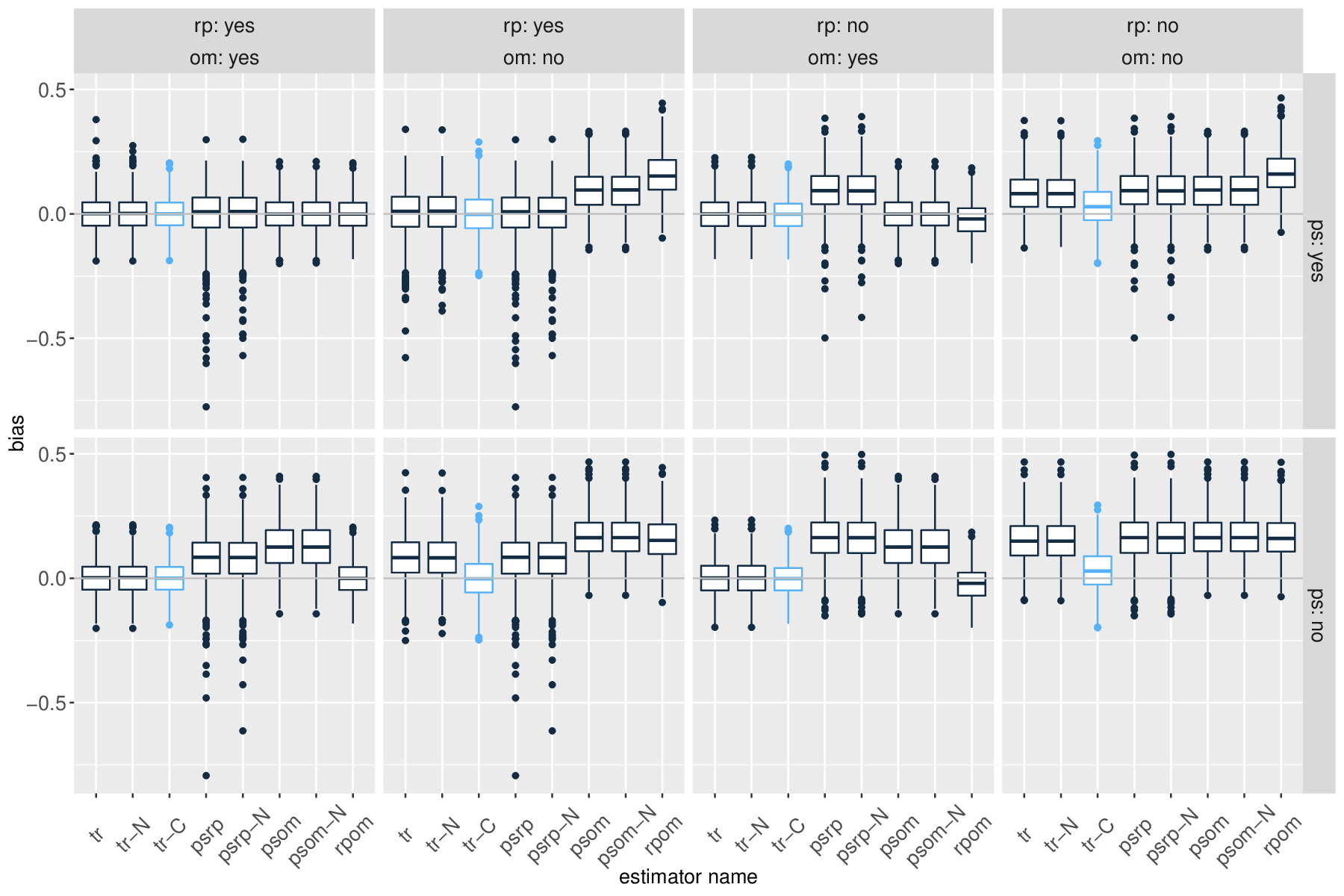} 
\end{figure}
\begin{table}[ht]
\global\long\def\arraystretch{0.75}%
\caption{Coverage rates and mean CI lengths in the cross-sectional setting
under 8 different model specifications, where PS, RP, and OM are shorthands
for the propensity score, response probability, and outcome mean; ``yes'' denotes the
correct model with the nonlinear covariates $Z$, while ``no'' denotes
the wrong model with the linear covariates $X$.\label{tab:sim_1time}}

\noindent \resizebox{\textwidth}{!}{%
\begin{tabular}{>{\centering}m{0.05\columnwidth}>{\centering}m{0.05\columnwidth}>{\centering}m{0.05\columnwidth}>{\centering}p{0.01\columnwidth}>{\centering}p{0.07\columnwidth}>{\centering}p{0.07\columnwidth}>{\centering}p{0.07\columnwidth}>{\centering}p{0.07\columnwidth}>{\centering}p{0.07\columnwidth}>{\centering}p{0.07\columnwidth}>{\centering}p{0.07\columnwidth}>{\centering}p{0.07\columnwidth}}
\hline 
\multicolumn{3}{c}{Model specification} &  &  &  & \multicolumn{4}{c}{Coverage rate (\%)} &  & \tabularnewline
 &  &  &  & \multicolumn{8}{c}{(Mean CI length, \%)}\tabularnewline
\hline 
PS  & RP  & OM  &  & $\widehat{\tau}_{\text{tr}}$  & $\widehat{\tau}_{\text{tr-N}}$  & $\widehat{\tau}_{\text{tr-C}}$  & $\widehat{\tau}_{\text{ps-rp}}$  & $\widehat{\tau}_{\text{ps-rp-N}}$  & $\widehat{\tau}_{\text{ps-om}}$  & $\widehat{\tau}_{\text{ps-om-N}}$  & $\widehat{\tau}_{\text{rp-om}}$\tabularnewline
\hline 
yes  & yes  & yes  &  & 94.7  & 94.7  & \textbf{94.4}  & 95.7  & 95.5  & 94.9  & 94.9  & \textbf{94.3}\tabularnewline
 &  &  &  & (30.9)  & (29.5)  & \textbf{(28.5)}  & (59.9)  & (41.8)  & (29.1)  & (29.0)  & \textbf{(28.2)}\tabularnewline
yes  & yes  & no  &  & 95.3  & 94.8  & \textbf{94.3}  & 95.7  & 95.5  & 80.6  & 80.6  & 57.6\tabularnewline
 &  &  &  & (41.8)  & (36.1)  & \textbf{(33.7)}  & (59.9)  & (41.8)  & (33.1)  & (33.1)  & (34.0)\tabularnewline
yes  & no  & yes  &  & 94.1  & 94.1  & \textbf{94.2}  & 79.7  & 80.0  & 94.9  & 94.9  & 93.5\tabularnewline
 &  &  &  & (28.8)  & (28.3)  & \textbf{(28.2)}  & (36.7)  & (35.3)  & (29.1)  & (29.0)  & (27.7)\tabularnewline
no  & yes  & yes  &  & 94.4  & 94.4  & \textbf{94.4}  & 85.8  & 86.0  & 72.8  & 72.9  & \textbf{94.3}\tabularnewline
 &  &  &  & (29.5)  & (29.1)  & \textbf{(28.5)}  & (45.7)  & (40.7)  & (37.9)  & (37.9)  & \textbf{(28.2)}\tabularnewline
yes  & no  & no  &  & 83.0  & 82.9  & 93.1  & 79.7  & 80.0  & 80.6  & 80.6  & 53.4\tabularnewline
 &  &  &  & (32.7)  & (32.3)  & (33.8)  & (36.7)  & (35.3)  & (33.1)  & (33.1)  & (34.1)\tabularnewline
no  & yes  & no  &  & 84.1  & 83.9  & 94.3  & 85.8  & 86.0  & 53.8  & 53.8  & 57.6\tabularnewline
 &  &  &  & (37.4)  & (35.9)  & (33.7)  & (45.7)  & (40.7)  & (34.6)  & (34.7)  & (34.0)\tabularnewline
no  & no  & yes  &  & 94.6  & 94.6  & 94.2  & 56.1  & 56.1  & 72.8  & 72.9  & 93.5\tabularnewline
 &  &  &  & (29.2)  & (29.2)  & (28.2)  & (38.0)  & (37.4)  & (37.9)  & (37.9)  & (27.7)\tabularnewline
no  & no  & no  &  & 61.3  & 61.3  & 93.1  & 56.1  & 56.1  & 53.8  & 53.8  & 53.4\tabularnewline
 &  &  &  & (35.1)  & (34.9)  & (33.8)  & (38.0)  & (37.4)  & (34.7)  & (34.7)  & (34.1)\tabularnewline
\hline 
\end{tabular}} 
\end{table}

\subsection{Longitudinal setting}
We further evaluate the performance of the proposed estimators in
longitudinal studies under J2R. Consider the data with two follow-up
time points. We choose the sample size as $n=1000$,
generate the same covariates $X\in\mathbb{R}^{5}$, and use the same
transformation on the covariates to construct $Z\in\mathbb{R}^{5}$
as the one in the cross-sectional setting. The treatments are generated
by $A\mid X\sim\text{Bernoulli}\text{\{\ensuremath{e(X)}}\}$, where
$\text{logit}\{e(X)\}=0.1\sum_{j=1}^{4}Z_{j}$. The observed indicators
and the longitudinal outcomes are generated in time order. Specifically,
at the first time point, we generate $R_{1}\mid(X,A=a)\sim\text{Bernoulli}\left\{ \pi_{1}(a,X)\right\} $,
where $\text{logit\{\ensuremath{\pi_{1}(a,X)}\}}=5(2a-1)\ensuremath{\sum_{j=1}^{4}Z_{j}/9}$,
and $Y_{1}\mid(X,R_{1}=1,A=a)\sim N\{\mu_{1}^{a}(X),1\}$, where $\mu_{1}^{a}(X)=(2+a)\big\{\sum_{j=1}^{4}\log(Z_{j}^{2})+\sum_{j=1}^{5}Z_{j}\big\}/6$;
at the second time point, we generate $R_{2}\mid(X,Y_{1},R_{1}=1,A=a)\sim\text{Bernoulli}\left\{ \pi_{2}(a,X,Y_{1})\right\} $,
where $\text{logit\{\ensuremath{\pi_{2}(a,X,Y_{1})}\}}=(2a-1)\big\{\sum_{j=1}^{4}\log(Z_{j}^{2})+Z_{5}+0.1Y_{1}\big\}/6$,
and $Y_{2}\mid(X,Y_{1},R_{2}=1,A=a)\sim N\{\mu_{2}^{a}(X,Y_{1}),1\}$,
where $\mu_{2}^{a}(X,Y_{1})=(2+a)\big(\sum_{j=1}^{5}Z_{j}+Y_{1}\big)/3$.
The true ATE $\tau_{2}^{\jtr}=0.3198$. Since the models are infeasible
to approximate parametrically, we apply GAM using smooth splines,
where we incorporate the original covariates $X$ in each nuisance
function and employ calibration.

We compare the performance of the point estimation, coverage rates
of the $95\%$ CI, and mean CI lengths for the proposed estimators.
For the EIF-based estimators, we compute the $95\%$ symmetric
t bootstrap CIs with a larger number of bootstrap
replicates as $B=500$.
For other estimators,
since multiple robustness is not guaranteed, we use nonparametric
bootstrap to obtain their bootstrap percentile intervals. 
Figure \ref{fig:sim_longi} shows the point estimation results
based on 1000 Monte Carlo simulations. All the EIF-based estimators
are unbiased, and the one involving calibration has the smallest variation,
alleviating the impact of extreme values. Other estimators suffer
from different levels of bias. Table \ref{tab:sim_2time} supports
the superiority of the EIF-based estimators in terms of coverage rates
and mean CI lengths. 

\begin{figure}
\centering{}\caption{Performance of the estimators in the longitudinal setting. In the
x-axis, mr, mr-N, and mr-C denote the three EIF-based estimators $\widehat{\tau}_{\text{mr}}$,
$\widehat{\tau}_{\text{mr-N}}$, and $\widehat{\tau}_{\text{mr-C}}$; psrp
and psrp-N denote the estimators $\widehat{\tau}_{\text{ps-rp}}$ and
$\widehat{\tau}_{\text{ps-rp-N}}$; psom and psom-N denote the estimators
$\widehat{\tau}_{\text{ps-om}}$ and $\widehat{\tau}_{\text{ps-om-N}}$; and
rppm denotes the estimator $\widehat{\tau}_{\text{rp-pm}}$ in Example
\ref{exmp: est_longi}. 
\label{fig:sim_longi}}
\includegraphics[scale=0.3]{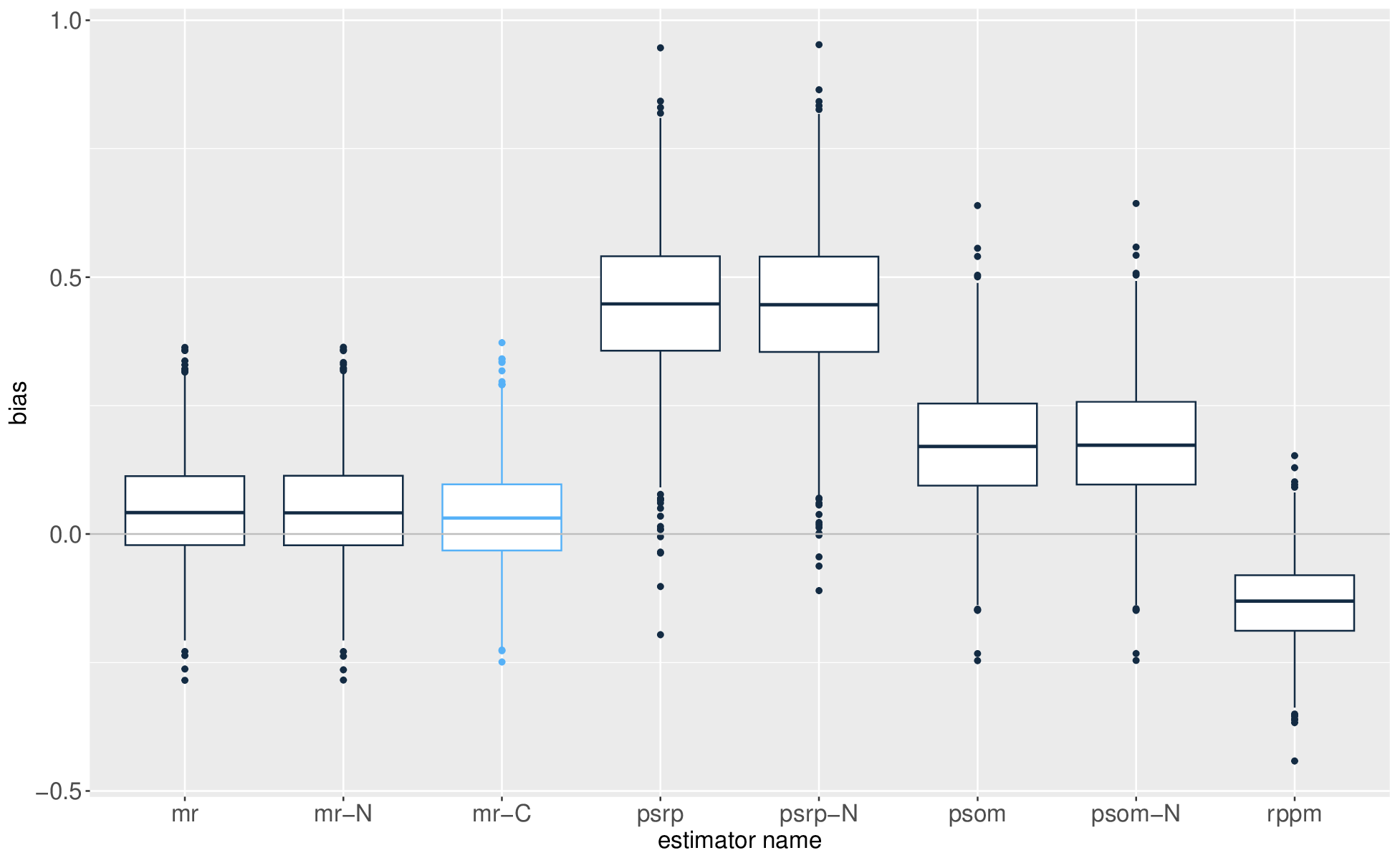} 
\end{figure}

\begin{table}[ht]
\centering{} 
\global\long\def\arraystretch{0.75}%
\centering{}\caption{Coverage rates and mean CI lengths in the longitudinal setting.\label{tab:sim_2time}}
\begin{tabular}{>{\centering}p{0.14\columnwidth}>{\centering}p{0.14\columnwidth}>{\centering}p{0.14\columnwidth}}
\toprule 
Estimator   & Coverage rate ($\%$)   & Mean CI length ($\%$)\tabularnewline
\midrule 
$\widehat{\tau}_{\text{mr}}$   & 95.4  & 43.8\tabularnewline
$\widehat{\tau}_{\text{mr-N}}$   & 95.2  & 43.7\tabularnewline
$\widehat{\tau}_{\text{mr-C}}$   & \textbf{96.6 } & \textbf{42.8}\tabularnewline
$\widehat{\tau}_{\text{ps-rp}}$   & 26.7  & 72.0\tabularnewline
$\widehat{\tau}_{\text{ps-rp-N}}$   & 27.1  & 72.1\tabularnewline
$\widehat{\tau}_{\text{ps-om}}$   & 93.1  & 51.8\tabularnewline
$\widehat{\tau}_{\text{ps-om-N}}$   & 92.5  & 52.0\tabularnewline
$\widehat{\tau}_{\text{rp-pm}}$   & 77.2  & 39.5\tabularnewline
\bottomrule
\end{tabular}
\end{table}
\section{Application\label{sec:app}}
We apply our proposed estimators to analyze the data from the antidepressant
clinical trial introduced in Section \ref{sec:intro} under J2R. Apart
from the partially observed HAMD-17 scores, a categorical variable
indicating the investigation sites is observed for all individuals.
For the nuisance functions involved in the proposed estimators, we
fit GAM sequentially.
To handle the extreme weights, calibration is applied,
where we include the first two moments of the history.
We compute the $95\%$ symmetric t bootstrap CIs for the three EIF-based
estimators and the $95\%$ bootstrap percentile intervals for other
estimators, with $B=500$.

Table \ref{tab:real} presents the analysis results. All the estimators
have similar point estimates. However, we detect a relatively obvious
difference in the values between $\widehat{\tau}_{\text{ps-rp}}$ and
$\widehat{\tau}_{\text{ps-rp-N}}$, indicating the existence of extreme
weights. The weight distributions in Web Appendix H validate the presence of outliers at weeks 4, 6, and 8 in
the control group. Calibration stabilizes the estimation results and
leads to a smaller CI compared to the other two EIF-based estimators.
Although $\widehat{\tau}_{\text{ps-om}}$ and $\widehat{\tau}_{\text{rp-pm}}$
have similar point estimates and narrower CIs compared to the EIF-based
estimators, they rely on a good approximation of their corresponding
two models, which may not be guaranteed in practice due to the lack
of consistency under slow convergences of the estimated nuisance functions.
The EIF-based estimators are preferred with a trade-off between bias
and precision since they have a guaranteed multiple robustness in
terms of rate convergence. All the resulting $95\%$ CIs indicate
a statistically significant treatment effect.
\begin{table}
\caption{Analysis of the HAMD-17 data for the ATE under J2R. \label{tab:real}}
\renewcommand{\arraystretch}{0.75}
\centering{}%
\begin{tabular}{cccc}
\toprule 
Estimator & Point estimate & $95\%$ CI & CI length\tabularnewline
\midrule
$\widehat{\tau}_{\text{mr}}$ & -1.93 & (-3.63, -0.24) & 3.39\tabularnewline
$\widehat{\tau}_{\text{mr-N}}$ & -1.93 & (-3.62, -0.25) & 3.37\tabularnewline
$\widehat{\tau}_{\text{mr-C}}$ & \textbf{-1.71} & \textbf{(-3.25, -0.16)} & \textbf{3.09}\tabularnewline
$\widehat{\tau}_{\text{ps-rp}}$ & -2.05 & (-4.08, -0.50) & 3.57\tabularnewline
$\widehat{\tau}_{\text{ps-rp-N}}$ & -1.61 & (-3.74, -0.07) & 3.67\tabularnewline
$\widehat{\tau}_{\text{ps-om}}$ & -1.74 & (-3.20, -0.25) & 2.95\tabularnewline
$\widehat{\tau}_{\text{ps-om-N}}$ & -1.75 & (-3.18, -0.22) & 2.96\tabularnewline
$\widehat{\tau}_{\text{rp-pm}}$ & -1.78 & (-3.18, -0.25) & 2.93\tabularnewline
\bottomrule
\end{tabular}
\end{table}

\section{Conclusion \label{sec:conclusion}}
Evaluating the treatment effect under an assumed MNAR assumption has been receiving growing interest in both primary and sensitivity analyses in longitudinal studies. We propose a potential outcomes framework to describe the missing data scenario pre-specified as J2R to identify the ATE. The new estimator is constructed with the help of the EIF, combining the propensity score, response probability, outcome mean, and pattern mean. It allows flexible modeling strategies such as semiparametric or machine learning models, with the good property of multiple robustness in that it achieves $n^{1/2}$-consistency and asymptotic normality even when the models converge at a slower rate such as $n^{-1/4}$. The proposed estimators can be applied in a wide range of clinical studies including randomized trials and observational studies, and are extendable to other MNAR-related scenarios.

The model assumptions are relaxed in the established semiparametric framework. However, standard untestable assumptions about the missing components are imposed to identify the ATE. The assumed outcome mean for the dropouts under J2R prevents introducing external parameters 
and reveals its credibility 
for the drug with a short-term effect. Meanwhile, it may produce a conservative treatment effect evaluation 
if the active treatment is supposed to be superior \citep{liu2016analysis}. Its wide applicability 
appeals to regulatory agencies. 

Our framework relies on a monotone missingness pattern for the longitudinal data, which however may not always be the case in reality. \citet{sun2018inverse} provide an inverse probability weighting approach to deal with the MAR data with non-monotone missingness patterns. It is possible to extend our method to handle intermittent missing data using their proposed approaches. We leave it as a future research direction.

The construction of the multiply robust estimators is based on continuous longitudinal outcomes. Possibilities exist in the extension of the proposed framework to broader types of outcomes. 
For example, \citet{yang2020smim} consider the $\delta$-adjusted and control-based models to evaluate the treatment effect on the survival outcomes; \citet{tang2018controlled} extends CBI to binary and ordinal longitudinal outcomes using sequential generalized linear models. These extensions shed light on establishing new multiply
robust estimators with the use of our idea.

\section*{Acknowledgements}
Yang is partially supported by the NSF SES 2242776, NIH 1R01AG066883 and 1R01ES031651.

\section*{Supplementary Materials}

Supplementary materials contain technical details in Sections 2--5. The R package to implement the method is available at \url{https://github.com/siyi48/mrJ2R}.

\section*{Data Availability}
The data that support the findings in this paper are openly available
in the Drug Information Association Missing Data at \url{https://www.lshtm.ac.uk/research/centres-projects-groups/missing-data#dia-missing-data}
collected by \citet{mallinckrodt2014recent}.

\bibliographystyle{Chicago}
\bibliography{template/biomsample_bib}       


\newpage
\begin{center}
\textbf{Supplementary Materials for "Multiply robust estimators in
longitudinal studies with missing data under control-based imputation" by Liu et al.}
\par\end{center}

\begin{center}
Siyi Liu, Shu Yang, Yilong Zhang, Guanghan (Frank) Liu
\par\end{center}

{}\pagenumbering{arabic} 
\renewcommand*{\thepage}{S\arabic{page}}

{}\setcounter{lemma}{0}  
\global\long\def\thelemma{\textup{S}\arabic{lemma}}%
{}\setcounter{equation}{0}  
\global\long\def\theequation{S\arabic{equation}}%
{}\setcounter{section}{0}  
\global\long\def\thesection{Web Appendix \Alph{section}}%
{} 
\global\long\def\thesubsection{Web Appendix \Alph{section}.\arabic{subsection}}%
{}\setcounter{table}{0}  
\global\long\def\tablename{Web Table}%
{}  
\global\long\def\thetable{\arabic{table}}%
{}\setcounter{figure}{0}  
\global\long\def\figurename{Web Figure}%
{}  
\global\long\def\thefigure{\arabic{figure}}%
{}\setcounter{theorem}{0}  
\global\long\def\thethm{\textup{S}\arabic{thm}}%
{}\setcounter{corollary}{0}  
\global\long\def\thecorollary{\textup{S}\arabic{corollary}}%
The supplementary material contains technical details, additional
simulation, and real-data application results. \ref{sec:supp_iden}
provides proof for the identification formulas provided in Theorems
1 and 5. \ref{sec:supp_eif} presents detailed derivations of the
EIFs in Theorems 2 and 6. \ref{sec:supp_est} gives additional estimators
and the detailed estimation steps. \ref{sec:supp_tr} consists of
the proofs regarding multiple robustness. 
\ref{sec:supp_aipw} connects the proposed multiply robust estimators with the existing results in the literature. 
\ref{sec:supp_sen} gives a sensitivity analysis framework to test the robustness of results against the partial ignorability of missingness assumption.
\ref{sec:supp_simu} contains
additional simulation results. \ref{sec:supp_app} shows additional
notes on the real-data application.

\section{Proof of the identification formulas \label{sec:supp_iden}}

\subsection{Proof of Theorem 1 \label{subsec:supp_iden_1time}}

{}We first prove the equivalence of the three identification formulas,
then prove the validity of the identification formula (a) in Theorem
1.

{}Denote 
\begin{align*}
E_{1,1} & =\mathbb{E}\big[\pi_{1}(1,X)\{\mu_{1}^{1}(X)-\mu_{1}^{0}(X)\}\big];\\
E_{2,1} & =\mathbb{E}\big[\frac{A}{e(X)}\{R_{1}Y_{1}+(1-R_{1})\mu_{1}^{0}(X)\}-\frac{1-A}{1-e(X)}\{R_{1}Y_{1}+(1-R_{1})\mu_{1}^{0}(X)\}\big];\\
E_{3,1} & =\mathbb{E}\big[\frac{A}{e(X)}R_{1}Y_{1}-\frac{1-A}{1-e(X)}\frac{\pi_{1}(1,X)}{\pi_{1}(0,X)}R_{1}Y_{1}\big].
\end{align*}
\begingroup\makeatletter\def\f@size{11}\check@mathfonts
{}Note that $E_{1,1}=E_{2,1}$ holds since 
\begin{align*}
\mathbb{E}\left[\frac{A}{e(X)}\{R_{1}Y_{1}+(1-R_{1})\mu_{1}^{0}(X)\}\right] & =\mathbb{E}\left[\frac{\E\left(A\mid X\right)}{e(X)}\E\left\{ R_{1}Y_{1}+(1-R_{1})\mu_{1}^{0}(X)\mid X,A=1\right\} \right]\\
 & =\mathbb{E}\left(\frac{\E\left(A\mid X\right)}{e(X)}\left[\E(R_{1}\mid X,A=1)\mu_{1}^{1}(X)+\{1-\E(R_{1}\mid X,A=1)\}\mu_{1}^{0}(X)\right]\right)\\
 & =\mathbb{E}\left(\frac{\E\left(A\mid X\right)}{e(X)}\left[\pi_{1}(1,X)\mu_{1}^{1}(X)+\{1-\pi_{1}(1,X)\}\mu_{1}^{0}(X)\right]\right)\\
 & =\mathbb{E}\left[\pi_{1}(1,X)\mu_{1}^{1}(X)+\{1-\pi_{1}(1,X)\}\mu_{1}^{0}(X)\right],
\end{align*}
And similarly, 
\begin{align*}
\E\left[\frac{1-A}{1-e(X)}\{R_{1}Y_{1}+(1-R_{1})\mu_{1}^{0}(X)\}\right] & =\E\left[\frac{1-A}{1-e(X)}\{R_{1}\mu_{1}^{0}(X)+(1-R_{1})\mu_{1}^{0}(X)\}\right]\\
 & =\E\left\{ \frac{1-A}{1-e(X)}\mu_{1}^{0}(X)\right\} \\
 & =\E\{\mu_{1}^{0}(X)\}.
\end{align*}
Then, we have $E_{2,1}=\mathbb{E}\left[\pi_{1}(1,X)\mu_{1}^{1}(X)+\{1-\pi_{1}(1,X)\}\mu_{1}^{0}(X)\right]-\E\{\mu_{1}^{0}(X)\}=E_{1,1}$.

{}Also note that $E_{1,1}=E_{3,1}$ holds since 
\begin{align*}
E_{3,1} & =\E\left\{ \frac{\E\left(A\mid X\right)}{e(X)}E(R_{1}\mid X,A=1)\E(Y_{1}\mid X,R_{1}=1,A=1)\right\} \\
 & \quad-\E\left\{ \frac{\E\left(1-A\mid X\right)}{1-e(X)}\frac{\pi_{1}(1,X)}{\pi_{1}(0,X)}E(R_{1}\mid X,A=0)\E(Y_{1}\mid X,R_{1}=1,A=0)\right\} \\
 & =\E\left\{ \frac{\E(A\mid X)}{e(X)}\pi_{1}(1,X)\mu_{1}^{1}(X)\right\} -\E\left\{ \frac{\E\left(1-A\mid X\right)}{1-e(X)}\pi_{1}(1,X)\mu_{1}^{0}(X)\right\} \\
 & =\E\left\{ \pi_{1}(1,X)\mu_{1}^{1}(X)-\pi_{1}(1,X)\mu_{1}^{0}(X)\right\} =E_{1,1}.
\end{align*}

{}We proceed to prove the validity of the identification formula
(a) in Theorem 1. Denote $\tau_{1,1}=\E[Y_{1}^{\text{}}\{1,R(1)\}${]}
and $\tau_{0,1}=\E[Y_{1}^{\text{}}\{0,R(0)\}]$. Note that 
\begin{align*}
\tau_{1,1} & =\E\left[R_{1}(1)Y_{1}^{\text{}}(1,1)+\{1-R_{1}(1)\}Y_{1}(1,0)\right]\\
 & =\E\left[\E\left\{ R_{1}(1)\mid X\right\} \E\left\{ Y_{1}(1,1)\mid X,R_{1}(1)=1\right\} +\E\left\{ 1-R_{1}(1)\mid X\right\} \E\left\{ Y_{1}(1,0)\mid X,R_{1}(1)=0\right\} \right]\\
 & =\E\left[\E\left(R_{1}\mid X,A=1\right)\E\left\{ Y_{1}(1,1)\mid X,R_{1}(1)=1,A=1\right\} +\E\left(1-R_{1}\mid X,A=0\right)\E\left\{ Y_{1}(1,0)\mid X,R_{1}(1)=0\right\} \right]\\
 & \quad\text{(By A1, A3)}\\
 & =\E\left\{ \pi_{1}(1,X)\E\left(Y_{1}\mid X,R_{1}=1,A=1\right)+\left\{ 1-\pi_{1}(1,X)\right\} \E\left(Y_{1}\mid X,A=0\right)\right\} \text{(By A3, A4)}\\
 & =\E\left\{ \pi_{1}(1,X)\mu_{1}^{1}(X)+\{1-\pi_{1}(1,X)\}\mu_{1}^{0}(X)\right\} \text{(By A2)}.
\end{align*}
and 
\begin{align*}
\tau_{0,1} & =\E\left[R_{1}(0)Y_{1}^{\text{}}(0,1)+\{1-R_{1}(0)\}Y_{1}(0,0)\right]\\
 & =\E\left[\E\left\{ R_{1}(0)\mid X\right\} \E\left\{ Y_{1}(0,1)\mid X,R_{1}(0)=1\right\} +\E\left\{ 1-R_{1}(0)\mid X\right\} \E\left\{ Y_{1}(0,0)\mid X,R_{1}(0)=0\right\} \right]\\
 & =\E\Big[\E\left(R_{1}\mid X,A=0\right)\E\left\{ Y_{1}(0,1)\mid X,R_{1}(0)=1,A=0\right\} \\
 & \qquad+\E\left(1-R_{1}\mid X,A=0\right)\E\left\{ Y_{1}(0,0)\mid X,R_{1}(0)=0,A=0\right\} \Big]\text{(By A1, A3)}\\
 & =\E\left[\pi_{1}(0,X)\E\left(Y_{1}\mid A=0,R_{1}=1,X\right)+\left\{ 1-\pi_{1}(0,X)\right\} \mu_{1}^{0}(X)\right]\text{(By A3, A4)}\\
 & =\E\left[\pi_{1}(0,X)\mu_{1}^{0}(X)+\left\{ 1-\pi_{1}(0,X)\right\} \mu_{1}^{0}(X)\right]\text{(By A2)}\\
 & =\E\left\{ \mu_{1}^{0}(X)\right\} .
\end{align*}

{}Combine the two parts, we have 
\[
\tau_{1}^{\text{\jtr}}=\tau_{1,1}-\tau_{0,1}=\E\left\{ \pi_{1}(1,X)\mu_{1}^{1}(X)-\pi_{1}(1,X)\mu_{1}^{0}(X)\right\} =E_{1,1}.
\]

\subsection{Proof of Theorem 5 \label{subsec:supp_iden_longi}}

{}We first prove the equivalence of the three identification formulas,
then prove the validity of the identification formula (a) in Theorem
5.

{}Denote 
\begin{align*}
E_{1,t} & =\mathbb{E}\left[\pi_{1}(1,H_{0})\left\{ \sum_{s=1}^{t}g_{s+1}^{1}(H_{0})-\mu_{t}^{0}(H_{0})\right\} \right];\\
E_{2,t} & =\mathbb{E}\left[\frac{2A-1}{e(H_{0})^{A}\left\{ 1-e(H_{0})\right\} ^{1-A}}\left\{ R_{t}Y_{t}+\sum_{s=1}^{t}R_{s-1}(1-R_{s})\mu_{t}^{0}(H_{s-1})\right\} \right];\\
E_{3,t} & =\mathbb{E}\left(\frac{A}{e(H_{0})}R_{t}Y_{t}+\frac{1-A}{1-e(H_{0})}\left[\sum_{s=1}^{t}\bar{\pi}_{s-1}(0,H_{s-2})\left\{ 1-\pi_{s}(1,H_{s-1})\right\} \delta(H_{s-1})-1\right]\frac{R_{t}Y_{t}}{\bar{\pi}_{t}(0,H_{t-1})}\right).
\end{align*}
\endgroup

\begingroup\makeatletter\def\f@size{10}\check@mathfonts
{}To simplify the proof, we first introduce relevant lemmas.

{}\begin{lemma}\label{lemma:iden_mar}\par Under MAR, the group
mean can be identified using the sequential outcome means, i.e., $\E\left[Y_{t}^{\text{}}\{0,D(0)\}\right]=\E\left\{ \mu_{t}^{0}(H_{0})\right\} .$\par \end{lemma}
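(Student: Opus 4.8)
The plan is to reduce the counterfactual mean $\E[Y_t\{0,D(0)\}]$ to an observed-data iterated-conditional-expectation functional and then telescope. First I would use causal consistency (Assumption \ref{assump:3.longi}): within the stratum $\{A=0\}$ the potential outcome $Y_t\{0,D(0)\}$ coincides with the underlying outcome $Y_t$ (observed only when $R_t=1$, but a well-defined random variable of the complete data). Together with treatment ignorability (Assumption \ref{assump:1.longi}), which gives $Y_t\{0,D(0)\}\indep A\mid X$, this yields $\E[Y_t\{0,D(0)\}]=\E\{\E(Y_t\{0,D(0)\}\mid X)\}=\E\{\E(Y_t\mid X,A=0)\}$. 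It then remains to show the identity $\E(Y_t\mid X,A=0)=\mu_t^0(H_0)$, i.e. that the ICE functional recovers this (a priori not observable) conditional mean; note that Assumption \ref{assump:4.longi} is not used here, since this lemma is the ``pure MAR'' building block.

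The core of the argument is a backward induction on $s=t,t-1,\dots,0$ establishing that $\mu_t^0(H_s)=\E(Y_t\mid H_s,R_s=1,A=0)$, where the positivity condition $\pi_s(0,H_{s-1})>\varepsilon$ ensures these conditional expectations are well-defined. The base case $s=t$ is immediate because $\mu_t^0(H_t)=Y_t$ and $Y_t$ is $H_t$-measurable. For the inductive step I would expand the definition $\mu_t^0(H_{s-1})=\E\{\mu_t^0(H_s)\mid H_{s-1},R_s=1,A=0\}$, substitute the inductive hypothesis, collapse the nested conditional expectations by the tower property (using $\sigma(H_{s-1})\subseteq\sigma(H_s)$ and that the conditioning event $\{R_s=1,A=0\}$ is common to inner and outer), obtaining $\E(Y_t\mid H_{s-1},R_s=1,A=0)$, and finally invoke the partial-ignorability/MAR condition (Assumption \ref{assump:2.longi}) with $s'=t\ge s$, namely $R_s\indep Y_t\mid(H_{s-1},A=0)$, together with monotone missingness $\{R_s=1\}\subseteq\{R_{s-1}=1\}$, to drop the conditioning on $R_s=1$ and reach $\E(Y_t\mid H_{s-1},R_{s-1}=1,A=0)$, which is exactly the inductive statement at $s-1$. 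Evaluating at $s=0$, where $R_0\equiv1$ and $H_0=X$, gives $\mu_t^0(H_0)=\E(Y_t\mid X,A=0)$, and combining with the first step closes the proof.

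The main obstacle is the bookkeeping around the nested conditioning events: one must keep track of the fact that $\mu_t^0(H_s)$ is meaningful only on $\{R_s=1\}$, that monotone missingness nests these events, and that Assumption \ref{assump:2.longi} is the \emph{sequential} MAR condition — independence of the time-$s$ dropout from \emph{all} current and future outcomes given the history — which is precisely what licenses removing the $R_s=1$ conditioning in the inductive step. Once the inductive hypothesis is stated in the right form, with the $R_s=1$ conditioning built in, and the consistency identity $Y_t=Y_t\{0,D(0)\}$ on $\{A=0\}$ is used to bridge the potential-outcome and observed-data scales, the remaining manipulations are routine applications of the tower property and conditional independence.
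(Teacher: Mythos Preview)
Your argument is correct and takes a genuinely different route from the paper. The paper proceeds by a pattern-mixture decomposition: it writes $Y_t\{0,D(0)\}=\sum_{s=1}^{t+1}\mathbb{I}\{D(0)=s\}Y_t(0,s)$, evaluates each summand by repeated conditioning to obtain $\E\{\pi_1(0,H_0)g_{s}^0(H_0)\}$ (the control-arm analogue of the pattern mean), and then closes with the identity $\sum_{s=1}^{t}\pi_1(0,H_0)g_{s+1}^0(H_0)+\{1-\pi_1(0,H_0)\}\mu_t^0(H_0)=\mu_t^0(H_0)$, stated but not proved there. You instead run the direct g-formula/ICE induction, showing $\mu_t^0(H_s)=\E(Y_t\mid H_s,R_s=1,A=0)$ backward in $s$ and concluding via ignorability plus consistency. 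Your path is shorter, avoids introducing the auxiliary control-arm pattern means $g_{s+1}^0$, and makes the use of sequential MAR explicit at each step; the paper's route, by contrast, mirrors the dropout-pattern machinery it genuinely needs for the treatment arm under J2R, so it is reusing scaffolding rather than choosing the most economical proof for this lemma. One minor point of care in your inductive step: passing from $\E(Y_t\mid H_{s-1},R_s=1,A=0)$ to $\E(Y_t\mid H_{s-1},R_{s-1}=1,A=0)$ is cleanest if Assumption~\ref{assump:2.longi} is read, as is standard and as the paper itself implicitly uses it, with the conditioning event including $R_{s-1}=1$; alternatively, stating the hypothesis as $\mu_t^0(H_s)=\E(Y_t\mid H_s,A=0)$ (without the $R_s=1$) and using $R_s\indep Y_s\mid(H_{s-1},A=0)$ to remove the $R_s=1$ in the recursive definition makes the tower step go through without any extra bookkeeping.
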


{}\begin{proof} Similar to the notations in the main text, we define
the pattern mean in the control group as $g_{s+1}^{0}(H_{l-1})=\E\big\{\pi_{l+1}(0,H_{l})g_{s+1}^{0}(H_{l})\mid H_{l-1},R_{l}=1,A=0\big\}$
for $l=1,\cdots,s-1$ with $g_{s+1}^{0}(H_{s-1})=\E\left[\left\{ 1-\pi_{s+1}(0,H_{s})\right\} \mu_{t}^{0}(H_{s})\mid H_{s-1},R_{s}=1,A=0\right]$
if we let $\pi_{t+1}(0,H_{t})=0$. Based on the pattern-mixture model
(PMM; \citealp{little1993pattern}) framework, we express the potential
outcome $Y_{t}^{\text{}}\{0,D(0)\}$ based on its potential dropout
pattern as $Y_{t}^{\text{}}\{0,D(0)\}=\sum_{s=1}^{t+1}\mathbb{I}\left\{ D(0)=s\right\} Y_{t}(0,s)$
and compute the expectation. For any $s\in\left\{ 2,\cdots,t+1\right\} $,
$\E\left[\mathbb{I}\left\{ D(0)=s\right\} Y_{t}(0,s)\right]$ is calculated
as 
\begin{align*}
 & =\E\left[R_{1}(0)\cdots R_{s-1}(0)\left\{ 1-R_{s}(0)\right\} Y_{t}(0,s)\right]\text{ (By the definition of \ensuremath{D})}\\
 & =\E\left(\E\left(R_{1}(0)\mid H_{0}\right)\E\left[R_{2}(0)\cdots R_{s-1}(0)\left\{ 1-R_{s}(0)\right\} Y_{t}(0,s)\mid H_{0},R_{1}(0)=1\right]\right)\\
 & =\E\Bigg\{\E\left(R_{1}(0)\mid H_{0}\right)\E\bigg(\E\left\{ R_{2}(0)\mid H_{1},R_{1}(0)=1\right\} \E\left[R_{3}(0)\cdots R_{s-1}(0)\left\{ 1-R_{s}(0)\right\} Y_{t}(0,s)\mid H_{1},R_{2}(0)=1\right]\\
 & \qquad\qquad\qquad\qquad\qquad\mid H_{0},R_{1}(0)=1\bigg)\Bigg\}\\
 & =\cdots\text{ (keep using the iterated expectation until the condition is \ensuremath{\left(H_{s-1},R_{s-1}(0)=1\right)})}\\
 & =\E\Bigg[\E\left(R_{1}(0)\mid H_{0}\right)\E\Bigg\{\cdots\E\left(\E\left(R_{s-1}(0)\mid H_{s-2},R_{s-2}(0)=1\right)\E\left[\left\{ 1-R_{s}(0)\right\} Y_{t}(0,s)\mid H_{s-1},R_{s-1}(0)=1\right]\right)\\
 & \qquad\qquad\qquad\qquad\qquad\qquad\mid H_{s-2},R_{s-2}(0)=1\mid\cdots\mid H_{0},R_{1}(0)=1\bigg)\Bigg\}\Bigg]\\
 & =\E\Bigg[\E\left(R_{1}(0)\mid H_{0}\right)\E\Bigg\{\E\left\{ R_{2}(0)\mid H_{1},R_{1}(0)=1\right\} \cdots\E\bigg(\E\left\{ R_{s-1}(0)\mid H_{s-2},R_{s-2}(0)=1\right\} \\
 & \qquad\qquad\qquad\E\Big[\E\left\{ 1-R_{s}(0)\mid H_{s-1},R_{s-1}(0)=1\right\} \E\left\{ Y_{t}(0,s)\mid H_{s-1},R_{s-1}(0)=1\right\} \mid H_{s-2},R_{s-1}(0)=1\Big]\\
 & \qquad\qquad\qquad\mid H_{s-3},R_{s-2}(0)=1\bigg)\cdots\mid H_{0},R_{1}(0)=1\Bigg\}\Bigg]\text{\ (By A5, \ensuremath{R_{s}(0)\indep Y_{t}(0,s)\mid\left(H_{s-1},R_{s-1}(0)=1\right)})}\\
 & =\E\Bigg\{\E\left(R_{1}\mid H_{0},A=0\right)\E\bigg(\cdots\E\Big[\E\left\{ R_{s-1}\mid H_{s-2},R_{s-2}=1,A=0\right\} \E\big\{\E\left(1-R_{s}\mid H_{s-1},R_{s-1}=1,A=0\right)\\
 & \qquad\E\left(Y_{t}\mid H_{s-1},R_{s-1}=1,A=0\right)\mid H_{s-2},R_{s-1}=1,A=0\big\}\mid H_{s-3},R_{s-2}=1,A=0\Big]\cdots\mid H_{0},R_{1}=1,A=0\bigg)\Bigg\}\text{ (By A6)}\\
 & =\E\Bigg[\pi_{1}(0,H_{0})\E\bigg\{\pi_{2}(0,H_{1})\cdots\E\bigg(\pi_{s-1}(0,H_{s-2})\E\left[\left\{ 1-\pi_{s}(0,H_{s-1})\right\} \mu_{t}^{0}(H_{s-1})\mid H_{s-2},R_{s-1}=1,A=0\right]\\
 & \qquad\qquad\qquad\qquad\qquad\qquad\mid H_{s-3},R_{s-2}=1,A=0\bigg)\cdots\mid H_{0},R_{1}=1,A=0\bigg)\Bigg]\\
 & =\E\left(\pi_{1}(0,H_{0})\E\left[\cdots\E\left\{ \pi_{s-1}(0,H_{s-2})g_{s}^{0}(H_{s-2})\mid H_{s-3},R_{s-2}=1,A=0\right\} \cdots\mid H_{0},R_{1}=1,A=0\right]\right)\\
 & =\E\left[\pi_{1}(0,H_{0})g_{s}^{0}(H_{0})\right]\text{ (By the definition of the pattern mean)}.
\end{align*}
\par When $s=1$, using the same calculation technique, we have $\E\left[\mathbb{I}\left\{ D(0)=s\right\} Y_{t}(0,s)\right]=\E\left[\left\{ 1-\pi_{1}(0,H_{0})\right\} \mu_{t}^{0}(H_{0})\right]$.
Note that under MAR, $\sum_{s=1}^{t}\pi_{1}(0,H_{0})g_{s+1}^{0}(H_{0})+\big\{1-\pi_{1}(0,H_{0})\big\}\mu_{t}^{0}(H_{0})=\mu_{t}^{0}(H_{0})$,
which completes the proof. \end{proof}

{Lemma \ref{lemma:iden_mar} validates the equivalence
in identifying the potential outcome mean $\E\left[Y_{t}^{\text{}}\{0,D(0)\}\right]$
between our proposed framework under J2R and the existing methods
under MAR, in the sense that one can use the sequential regression
model $\mu_{t}^{0}(H_{0})$ to estimate the control group mean. }

{}\begin{lemma}\label{lemma:delta_ratio}\par The propensity score
ratio $\delta(H_{s})$ have the following expression: 
\[
\delta(H_{s})=\frac{\bar{\pi}_{s}(1,H_{s-1})}{\bar{\pi}_{s}(0,H_{s-1})}\prod_{j=1}^{s}\frac{f(Y_{j}\mid H_{j-1},R_{j}=1,A=1)}{f(Y_{j}\mid H_{j-1},R_{j}=1,A=0)}.
\]
\par \end{lemma}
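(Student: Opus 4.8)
The plan is to recognize $\delta(H_s)$ as a ratio of conditional odds of the treatment indicator and then invert the conditioning by Bayes' theorem. First I would rewrite the definition of $\delta$: since $e(H_{s-1})=\pr(A=1\mid H_{s-1},R_{s-1}=1)$ and, because $H_0=X$ and $R_0\equiv1$, $e(H_0)=\pr(A=1\mid X)$, we have
\[
\delta(H_s)=\frac{e(H_s)/\{1-e(H_s)\}}{e(H_0)/\{1-e(H_0)\}}=\frac{\pr(A=1\mid H_s,R_s=1)\big/\pr(A=0\mid H_s,R_s=1)}{\pr(A=1\mid X)\big/\pr(A=0\mid X)}.
\]
Applying Bayes' theorem to each conditional probability, the normalizing densities $f(H_s,R_s=1)$ and $f(X)$ cancel inside the respective odds, and the prior odds $\pr(A=1)/\pr(A=0)$ cancel between the numerator and denominator of the outer ratio, which leaves
\[
\delta(H_s)=\frac{f(H_s,R_s=1\mid A=1)\big/f(X\mid A=1)}{f(H_s,R_s=1\mid A=0)\big/f(X\mid A=0)},
\]
where $f(H_s,R_s=1\mid A=a)$ is the joint density of $(X,R_1=1,Y_1,\dots,R_s=1,Y_s)$ given $A=a$; by monotonicity the event $\{R_s=1\}$ coincides with $\{R_1=\cdots=R_s=1\}$.

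The second step is a chain-rule factorization of this joint density in time order. For each $a\in\{0,1\}$,
\[
f(H_s,R_s=1\mid A=a)=f(X\mid A=a)\prod_{j=1}^{s}\pr(R_j=1\mid H_{j-1},R_{j-1}=1,A=a)\,f(Y_j\mid H_{j-1},R_j=1,A=a),
\]
where monotone missingness is used again so that conditioning on $R_j=1$ subsumes $R_1=\cdots=R_{j-1}=1$; the response factors are exactly $\pi_j(a,H_{j-1})$, whose product over $j=1,\dots,s$ is $\bar{\pi}_s(a,H_{s-1})$. Dividing through by $f(X\mid A=a)$ removes the baseline density, and substituting the resulting expressions for $a=1$ and $a=0$ into the ratio above yields
\[
\delta(H_s)=\frac{\bar{\pi}_s(1,H_{s-1})}{\bar{\pi}_s(0,H_{s-1})}\prod_{j=1}^{s}\frac{f(Y_j\mid H_{j-1},R_j=1,A=1)}{f(Y_j\mid H_{j-1},R_j=1,A=0)},
\]
which is the claimed identity.

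I do not expect a genuine obstacle here: the argument is essentially bookkeeping. The only point requiring care is the repeated appeal to the monotone missingness pattern, both to identify $\{R_s=1\}$ with $\{R_1=\cdots=R_s=1\}$ and to ensure that the conditional factors produced by the chain rule are precisely $\pi_j(a,H_{j-1})$ and $f(Y_j\mid H_{j-1},R_j=1,A=a)$ in the notation already fixed for $\bar{\pi}_s$ and $\mu_s^a$. The positivity conditions in Theorem \ref{thm:iden_longi} guarantee that all the conditional densities and their ratios are well defined, and the standard dominating-measure considerations behind the Bayes step need no further comment.
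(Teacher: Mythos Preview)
Your argument is correct and rests on the same ingredients as the paper---Bayes' rule plus a chain-rule factorization along the time index under monotone missingness---but you organize them differently. The paper works term by term: for each $j$ it shows
\[
\frac{f(Y_{j}\mid H_{j-1},R_{j}=1,A=1)}{f(Y_{j}\mid H_{j-1},R_{j}=1,A=0)}
=\frac{f(A=1\mid H_{j},R_{j}=1)/f(A=1\mid H_{j-1},R_{j-1}=1)}{f(A=0\mid H_{j},R_{j}=1)/f(A=0\mid H_{j-1},R_{j-1}=1)}\cdot\frac{\pi_{j}(0,H_{j-1})}{\pi_{j}(1,H_{j-1})},
\]
and then multiplies over $j=1,\dots,s$ so that the odds ratios telescope to $\delta(H_s)$. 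You instead apply Bayes once to the global odds $\delta(H_s)$, reduce it to the ratio of joint densities $f(H_s,R_s=1\mid A=a)/f(X\mid A=a)$, and then factorize that joint in one shot. Your route is slightly more direct and avoids the telescoping bookkeeping; the paper's route has the minor advantage of isolating exactly how each time point contributes, which is occasionally useful elsewhere in the supplement when single-step density ratios reappear. Either way the substance is identical.
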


{}\begin{proof} For any $j\in\{1,\cdots,s\}$, we have 
\begin{align*}
\frac{f(Y_{j}\mid H_{j-1},R_{j}=1,A=1)}{f(Y_{j}\mid H_{j-1},R_{j}=1,A=0)} & =\frac{f(Y_{j},A=1\mid H_{j-1},R_{j}=1)/f(A=1\mid H_{j-1},R_{j}=1)}{f(Y_{j},A=0\mid H_{j-1},R_{j}=1)/f(A=0\mid H_{j-1},R_{j}=1)}\text{ (conditional probability)}\\
 & =\frac{f(A=1\mid H_{j},R_{j}=1)f(Y_{j}\mid H_{j-1},R_{j}=1)/f(A=1\mid H_{j-1},R_{j}=1)}{f(A=0\mid H_{j},R_{j}=1)f(Y_{j}\mid H_{j-1},R_{j}=1)/f(A=0\mid H_{j-1},R_{j}=1)}\\
 & =\frac{f(A=1\mid H_{j},R_{j}=1)/f(A=1\mid H_{j-1},R_{j}=1)}{f(A=0\mid H_{j},R_{j}=1)/f(A=0\mid H_{j-1},R_{j}=1)}\\
 & =\frac{f(A=1\mid H_{j},R_{j}=1)/f(A=1\mid H_{j-1},R_{j-1}=1)}{f(A=0\mid H_{j},R_{j}=1)/f(A=0\mid H_{j-1},R_{j-1}=1)}\frac{\pi_{j}(0,H_{j-1})}{\pi_{j}(1,H_{j-1})}.
\end{align*}
The last equality holds since 
\begin{align*}
\frac{f(A=0\mid H_{j-1},R_{j}=1)}{f(A=1\mid H_{j-1},R_{j}=1)} & =\frac{f(A=0,R_{j}=1\mid H_{j-1},R_{j-1}=1)/f(R_{j}=1\mid H_{j-1},R_{j-1}=1)}{f(A=1,R_{j}=1\mid H_{j-1},R_{j-1}=1)/f(R_{j}=1\mid H_{j-1},R_{j-1}=1)}\\
 & =\frac{f(A=0,R_{j}=1\mid H_{j-1},R_{j-1}=1)}{f(A=1,R_{j}=1\mid H_{j-1},R_{j-1}=1)}\\
 & =\frac{f(R_{j}=1\mid H_{j-1},R_{j-1}=1,A=0)f(A=0\mid H_{j-1},R_{j-1}=1)}{f(R_{j}=1\mid H_{j-1},R_{j-1}=1,A=1)f(A=1\mid H_{j-1},R_{j-1}=1)}\\
 & =\frac{\pi_{j}(0,H_{j-1})}{\pi_{j}(1,H_{j-1})}\frac{f(A=0\mid H_{j-1},R_{j-1}=1)}{f(A=1\mid H_{j-1},R_{j-1}=1)}.
\end{align*}
\par Taking the cumulative product for $j$ from $1$ to $s$, we
have 
\begin{align*}
\prod_{j=1}^{s}\frac{f(Y_{j}\mid H_{j-1},R_{j}=1,A=1)}{f(Y_{j}\mid H_{j-1},R_{j}=1,A=0)} & =\prod_{j=1}^{s}\frac{f(A=1\mid H_{j},R_{j}=1)/f(A=1\mid H_{j-1},R_{j-1}=1)}{f(A=0\mid H_{j},R_{j}=1)/f(A=0\mid H_{j-1},R_{j-1}=1)}\frac{\pi_{j}(0,H_{j-1})}{\pi_{j}(1,H_{j-1})}\\
 & =\frac{\bar{\pi}_{s}(0,H_{s-1})}{\bar{\pi}_{s}(1,H_{s-1})}\frac{f(A=1\mid H_{s},R_{s}=1)/f(A=1\mid H_{0})}{f(A=0\mid H_{s},R_{s}=1)/f(A=0\mid H_{0})}\\
 & =\frac{\bar{\pi}_{s}(0,H_{s-1})}{\bar{\pi}_{s}(1,H_{s-1})}\delta(H_{s}),
\end{align*}
which completes the proof. \end{proof}

{}We proceed to prove for the equivalence of the three identification
formulas. Note that $E_{1,t}=E_{2,t}$ holds since $\E\left[A\left\{ R_{t}Y_{t}+\sum_{s=1}^{t}R_{s-1}(1-R_{s})\mu_{t}^{0}(H_{s-1})\right\} \big/e(H_{0})\right]$
\begin{align*}
 & =\E\left[\frac{A}{e(H_{0})}\E\left\{ R_{t}Y_{t}+\sum_{s=1}^{t}R_{s-1}(1-R_{s})\mu_{t}^{0}(H_{s-1})\mid A=1\right\} \right]\\
 & =\E\left(\frac{A}{e(H_{0})}\left[\sum_{s=1}^{t}\pi_{1}(1,H_{0})g_{s+1}^{1}(H_{0})+\left\{ 1-\pi_{1}(1,H_{0})\right\} \mu_{t}^{0}(H_{0})\right]\right)\text{ (follow the proof in Lemma \ref{lemma:iden_mar})}\\
 & =\E\left[\sum_{s=1}^{t}\pi_{1}(1,H_{0})g_{s+1}^{1}(H_{0})+\left\{ 1-\pi_{1}(1,H_{0})\right\} \mu_{t}^{0}(H_{0})\right].
\end{align*}
Similarly, follow the proof in Lemma \ref{lemma:iden_mar}, 
\[
\E\left[\frac{1-A}{1-e(H_{0})}\big\{ R_{t}Y_{t}+\sum_{s=1}^{t}R_{s-1}(1-R_{s})\mu_{t}^{0}(H_{s-1})\big\}\right]=\E\left\{ \frac{1-A}{1-e(H_{0})}\mu_{t}^{0}(H_{0})\right\} =\E\left\{ \mu_{t}^{0}(H_{0})\right\} .
\]
Then, we have $E_{2,t}=\E\left[\pi_{1}(1,H_{0})\sum_{s=1}^{t}g_{s+1}^{1}(H_{0})+\left\{ 1-\pi_{1}(1,H_{0})\right\} \mu_{t}^{0}(H_{0})-\mu_{t}^{0}(H_{0})\right]=E_{1,t}$.

{}Also note that $E_{1,t}=E_{3,t}$ holds since for the first term
in $E_{3,t}$, $\E\left\{ AR_{t}Y_{t}/e(H_{0})\right\} =\E\left\{ \pi_{1}(1,H_{0})g_{t+1}^{1}(H_{0})\right\} $.
We focus on the second term and consider separate it into two components:
\begin{align*}
= & \E\left(\frac{1-A}{1-e(H_{0})}\left[\sum_{s=1}^{t}\bar{\pi}_{s-1}(0,H_{s-2})\{1-\pi_{s}(1,H_{s-1})\}\delta(H_{s-1})\right]\frac{R_{t}Y_{t}}{\bar{\pi}_{t}(0,H_{t-1})}\right)-\E\left\{ \frac{1-A}{1-e(H_{0})}\frac{R_{t}Y_{t}}{\bar{\pi}_{t}(0,H_{t-1})}\right\} .
\end{align*}

{}The second component can be easily obtained using the similar strategy
in Lemma \ref{lemma:iden_mar}, which results in $\E\big\{\mu_{t}^{0}(H_{0})\big\}$.
For the first components, apply Lemma \ref{lemma:delta_ratio}, for
$s\in\{2,\cdots,t\}$, we have 
\begin{align*}
 & \E\left[\frac{1-A}{1-e(H_{0})}\bar{\pi}_{s-1}(0,H_{s-2})\{1-\pi_{s}(1,H_{s-1})\}\delta(H_{s-1})\frac{R_{t}Y_{t}}{\bar{\pi}_{t}(0,H_{t-1})}\right]\\
= & \E\left[\frac{1-A}{1-e(H_{0})}\bar{\pi}_{s-1}(0,H_{s-2})\{1-\pi_{s}(1,H_{s-1})\}\delta(H_{s-1})\frac{R_{t}}{\bar{\pi}_{t}(0,H_{t-1})}\E\left(Y_{t}\mid H_{t-1},R_{t}=1,A=0\right)\right]\\
= & \cdots\text{ (keep using the iterated expectation, conditional on \ensuremath{H_{t-2},\cdots,H_{s-1}} in backward order)}\\
= & \E\left[\frac{1-A}{1-e(H_{0})}\bar{\pi}_{s-1}(0,H_{s-2})\{1-\pi_{s}(1,H_{s-1})\}\delta(H_{s-1})\frac{R_{s-1}}{\bar{\pi}_{s-1}(0,H_{s-2})}\mu_{t}^{0}(H_{s-1})\right]\\
= & \E\bigg(\frac{1-A}{1-e(H_{0})}\bar{\pi}_{s-1}(1,H_{s-2})\frac{R_{s-1}}{\bar{\pi}_{s-1}(0,H_{s-2})}\prod_{j=1}^{s-2}\frac{f(Y_{j}\mid H_{j-1},R_{j}=1,A=1)}{f(Y_{j}\mid H_{j-1},R_{j}=1,A=0)}\\
 & \quad\E\left[\{1-\pi_{s}(1,H_{s-1})\}\mu_{t}^{0}(H_{s-1})\frac{f(Y_{s-1}\mid H_{s-2},R_{s-1}=1,A=1)}{f(Y_{s-1}\mid H_{s-2},R_{s-1}=1,A=0)}\mid H_{s-2},R_{s-1}=1,A=0\right]\bigg)\text{ (Lemma \ref{lemma:delta_ratio})}\\
= & \E\left[\frac{1-A}{1-e(H_{0})}\bar{\pi}_{s-1}(1,H_{s-2})\frac{R_{s-1}}{\bar{\pi}_{s-1}(0,H_{s-2})}\prod_{j=1}^{s-2}\frac{f(Y_{j}\mid H_{j-1},R_{j}=1,A=1)}{f(Y_{j}\mid H_{j-1},R_{j}=1,A=0)}g_{s}^{1}(H_{s-2})\right]\\
= & \E\bigg[\frac{1-A}{1-e(H_{0})}\bar{\pi}_{s-2}(1,H_{s-3})\frac{R_{s-2}}{\bar{\pi}_{s-2}(0,H_{s-3})}\prod_{j=1}^{s-3}\frac{f(Y_{j}\mid H_{j-1},R_{j}=1,A=1)}{f(Y_{j}\mid H_{j-1},R_{j}=1,A=0)}\\
 & \quad\E\left\{ \pi_{s-1}(1,H_{s-2})g_{s}^{1}(H_{s-2})\frac{f(Y_{s-2}\mid H_{s-3},R_{s-2}=1,A=1)}{f(Y_{s-2}\mid H_{s-3},R_{s-2}=1,A=0)}\mid H_{s-3},R_{s-2}=1,A=0\right\} \bigg]\\
= & \E\left\{ \frac{1-A}{1-e(H_{0})}\bar{\pi}_{s-2}(1,H_{s-3})\frac{R_{s-2}}{\bar{\pi}_{s-2}(0,H_{s-3})}\prod_{j=1}^{s-3}\frac{f(Y_{j}\mid H_{j-1},R_{j}=1,A=1)}{f(Y_{j}\mid H_{j-1},R_{j}=1,A=0)}g_{s}^{1}(H_{s-3})\right\} \\
= & \cdots\text{ (keep using the iterated expectation, conditional on \ensuremath{H_{s-4},\cdots,H_{0}} in backward order)}\\
= & \E\left\{ \pi_{1}(1,H_{0})g_{s}^{1}(H_{0})\right\} .
\end{align*}
For $s=1$, use the same technique and can get $\E\left[\left\{ 1-\pi_{1}(1,H_{0})\right\} \mu_{t}^{0}(H_{0})\right]$.
Combine those components, we have 
\[
E_{3,t}=\E\left[\pi_{1}(1,H_{0})g_{t+1}^{1}(H_{0})+\sum_{s=1}^{t-1}\pi_{1}(1,H_{0})g_{s+1}^{1}(H_{0})+\left\{ 1-\pi_{1}(1,H_{0})\right\} \mu_{t}^{0}(H_{0})-\mu_{t}^{0}(H_{0})\right]=E_{1,t}.
\]

{}We proceed to prove the validity of the identification formula
(a) in Theorem 5. Denote $\tau_{1,t}=\E[Y_{t}\{1,D(1)\}${]} and $\tau_{0,t}=\E[Y_{t}\{0,D(0)\}]$.
For the first part of the identification formula (a) in Theorem 1,
\begin{align*}
\tau_{1,t} & =\E\left[\sum_{s=1}^{t+1}\mathbb{I}\left\{ D(1)=s\right\} Y_{t}(1,s)\right]\\
 & =\E\left[\sum_{s=1}^{t}\mathbb{I}\left\{ D(1)=s\right\} Y_{t}(1,s)+\mathbb{I}\left\{ D(1)=t+1\right\} Y_{t}(1,t+1)\right]\\
 & =\sum_{s=1}^{t}\E\left[R_{1}(1)\cdots R_{s-1}(1)\left\{ 1-R_{s}(1)\right\} Y_{t}(1,s)\right]+\E\left[R_{1}(1)\cdots R_{t}(1)Y_{t}(1,t+1)\right]\text{ (By the definition of \ensuremath{D})}.
\end{align*}

{}For $s\in\{2,\cdots,t\}$, $\E\left[R_{1}(1)\cdots R_{s-1}(1)\left\{ 1-R_{s}(1)\right\} Y_{t}(1,s)\right]$
can be computed by iterative expectations as 
\begin{align*}
= & \E\left(\E\left\{ R_{1}(1)\mid H_{0}\right\} \E\left[R_{2}(1)\cdots R_{s-1}(1)\left\{ 1-R_{s}(1)\right\} Y_{t}(1,s)\mid H_{0},R_{1}(1)=1\right]\right)\\
= & \E\Bigg\{\E\left\{ R_{1}(1)\mid H_{0}\right\} \E\bigg(\E\left\{ R_{2}(1)\mid H_{1},R_{1}(1)=1\right\} \\
 & \qquad\qquad\qquad\qquad\qquad\E\left[R_{3}(1)\cdots R_{s-1}(1)\left\{ 1-R_{s}(1)\right\} Y_{t}(1,s)\mid H_{1},R_{1}(1)=1\right]\mid H_{0},R_{1}(1)=1\bigg)\Bigg\}\\
= & \cdots\text{ (keep using the iterated expectation, use similar steps in the proof of Lemma \ref{lemma:iden_mar})}\\
= & \E\Bigg\{\E\left\{ R_{1}(1)\mid H_{0}\right\} \E\bigg(\E\left\{ R_{2}(1)\mid H_{1},R_{1}(1)=1\right\} \cdots\E\Big[\E\left\{ 1-R_{s}(1)\mid H_{s-1},R_{s-1}(1)=1\right\} \\
 & \qquad\qquad\qquad\qquad\E\left\{ Y_{t}(1,s)\mid H_{s-1},D(1)=s\right\} \mid H_{s-2},R_{s-1}(1)=1\Big]\cdots\mid H_{0},R_{1}(1)=1\bigg)\Bigg\}\\
= & \E\Bigg\{\E\left\{ R_{1}(1)\mid H_{0}\right\} \E\bigg(\E\left\{ R_{2}(1)\mid H_{1},R_{1}(1)=1\right\} \cdots\E\Big[\E\left\{ 1-R_{s}(1)\mid H_{s-1},R_{s-1}(1)=1\right\} \\
 & \qquad\qquad\qquad\qquad\qquad\mu_{t}^{0}(H_{s-1})\mid H_{s-2},R_{s-1}(1)=1\Big]\cdots\mid H_{0},R_{1}(1)=1\bigg)\Bigg\}\text{ (By A8)}\\
= & \E\Bigg\{\E\left\{ R_{1}(1)\mid H_{0},A=1\right\} \E\bigg(\E\left\{ R_{2}(1)\mid H_{1},R_{1}(1)=1,A=1\right\} \cdots\E\Big[\E\left\{ 1-R_{s}(1)\mid H_{s-1},R_{s-1}(1)=1,A=1\right\} \\
 & \qquad\qquad\qquad\qquad\qquad\mu_{t}^{0}(H_{s-1})\mid H_{s-2},R_{s-1}(1)=1,A=1\Big]\cdots\mid H_{0},R_{1}(1)=1,A=1\bigg)\Bigg\}\text{ (By A5)}\\
= & \E\left\{ \pi_{1}(1,H_{0})\E\left(\pi_{2}(1,H_{0})\cdots\E\left[\left\{ 1-\pi_{s}(1,H_{s-1})\right\} \mu_{t}^{0}(H_{s-1})\mid H_{s-2},R_{s-1}=1,A=1\right]\cdots\mid H_{0},R_{1}=1,A=1\right)\right\} \\
 & \text{ (By A7)}\\
= & \E\left\{ \pi_{1}(1,H_{0})g_{s}^{1}(H_{0})\right\} \text{ (By the definition of the pattern mean)}.
\end{align*}
Similarly, $\E\left\{ R_{1}(1)\cdots R_{t}(1)Y_{t}(1,t+1)\right\} $
\begin{align*}
= & \E\left[\E\left\{ R_{1}(1)\mid H_{0}\right\} \E\left\{ R_{2}(1)\cdots R_{t}(1)Y_{t}(1,t+1)\mid H_{0},R_{1}(1)=1\right\} \right]\\
= & \E\bigg(\E\left\{ R_{1}(1)\mid H_{0}\right\} \E\Big[\E\left\{ R_{2}(1)\mid H_{1},R_{1}(1)=1\right\} \\
 & \quad\E\left\{ R_{3}(1)\cdots R_{t}(1)Y_{t}(1,t+1)\mid H_{1},R_{2}(1)=1\right\} \mid H_{0},R_{1}(1)=1\Big]\bigg)\\
= & \cdots\text{ (keep using the iterated expectation, use similar steps in the proof of Lemma \ref{lemma:iden_mar})}\\
= & \E\Bigg\{\E\left\{ R_{1}(1)\mid H_{0}\right\} \E\bigg(\E\left\{ R_{2}(1)\mid H_{1},R_{1}(1)=1\right\} \cdots\E\Big[\E\left\{ R_{t}(1)\mid H_{t-1},R_{t-1}(1)=1\right\} \\
 & \qquad\qquad\qquad\qquad\E\left\{ Y_{t}(1,t+1)\mid H_{t-1},D(1)=t+1\right\} \mid H_{t-2},R_{t-1}(1)=1\Big]\cdots\mid H_{0},R_{1}(1)=1\bigg)\Bigg\}\\
= & \E\Bigg\{\E\left\{ R_{1}(1)\mid H_{0},A=1\right\} \E\bigg(\E\left\{ R_{2}(1)\mid H_{1},R_{1}(1)=1,A=1\right\} \cdots\E\Big[\E\left\{ R_{t}(1)\mid H_{t-1},R_{t-1}(1)=1,A=1\right\} \\
 & \quad\E\left\{ Y_{t}(1,t+1)\mid H_{t-1},D(1)=t+1,A=1\right\} \mid H_{t-2},R_{t-1}(1)=1,A=1\Big]\cdots\mid H_{0},R_{1}(1)=1,A=1\bigg)\Bigg\}\\
= & \E\bigg(\E\left(R_{1}\mid H_{0},A=1\right)\E\Big[\E\left(R_{2}\mid H_{1},R_{1}=1,A=1\right)\cdots\E\big\{\E\left(R_{t}\mid H_{t-1},R_{t-1}=1,A=1\right)\\
 & \qquad\qquad\qquad\E\left(Y_{t}\mid H_{t-1},R_{t}=1,A=1\right)\mid H_{t-2},R_{t-1}=1,A=1\big\}\cdots\mid H_{0},R_{1}(1)=1,A=1\Big]\bigg)\text{ (By A7)}\\
= & \E\left(\pi_{1}(1,H_{0})\E\left[\pi_{2}(1,H_{1})\cdots\E\left\{ \pi_{t}(1,H_{t-1})\mu_{t}^{1}(H_{t-1})\mid H_{t-2},R_{t-1}=1,A=1\right\} \cdots\mid H_{0},R_{1}(1)=1,A=1\right]\right)\\
= & \E\left\{ \pi_{1}(1,H_{0})g_{t+1}^{1}(H_{0})\right\} .
\end{align*}
When $s=1$, we have $\E\left[\left\{ 1-R_{s}(1)\right\} Y_{t}(1,1)\right]$
as 
\begin{align*}
\E\left[\E\left\{ 1-R_{1}(1)\mid H_{0}\right\} \E\left\{ Y_{t}(1,1)\mid H_{0},D(1)=0\right\} \right] & =\E\left\{ \E\left(1-R_{1}\mid H_{0},A=1\right)\mu_{t}^{0}(H_{0})\right\} \\
 & =\E\left[\left\{ 1-\pi_{1}(1,H_{0})\right\} \mu_{t}^{0}(H_{0})\right].
\end{align*}
Therefore, $\tau_{1,t}=\E\left[\pi_{1}(1,H_{0})\sum_{s=1}^{t}g_{s+1}^{1}(H_{0})+\left\{ 1-\pi_{1}(1,H_{0})\right\} \mu_{t}^{0}(H_{0})\right]$.
For the second part, by Lemma \ref{lemma:iden_mar} we know that $\tau_{0,t}=\E\left\{ \mu_{t}^{0}(H_{0})\right\} $.
Combine the two parts, we have 
\[
\tau_{t}^{\text{\jtr}}=\tau_{1,t}-\tau_{0,t}=\E\left[\pi_{1}(1,H_{0})\sum_{s=1}^{t}g_{s+1}^{1}(H_{0})+\left\{ 1-\pi_{1}(1,H_{0})\right\} \mu_{t}^{0}(H_{0})-\mu_{t}^{0}(H_{0})\right]=E_{1,t}.
\]

\subsection{Interpretations of Theorem 5\label{subsec:supp_idenint_longi} }

{}We give some intuition of the identification formulas in the longitudinal
setting. Theorem 5 (a) describes the treatment effect in terms of
the response probability and pattern mean. Under J2R, if the individual
in the treatment group is not fully observed, we would expect its
missing outcome will follow the same outcome model as the control
group with the same missing pattern given the observed data. The treatment
group mean is then expressed as the weighted sum over the missing
patterns as $\mathbb{E}\left[\pi_{1}(1,H_{0})\sum_{s=1}^{t}g_{s+1}^{1}(H_{0})+\left\{ 1-\pi_{1}(1,H_{0})\right\} \mu_{t}^{0}(H_{0})\right]$
under the PMM framework. For the control group, the group mean is
$\E\left\{ \mu_{t}^{0}(H_{0})\right\} $ under MAR.

{}Theorem 5 (b) describes the treatment effect as the difference
in means between the treatment and control groups over the missing
patterns, in terms of the propensity score and outcome mean. Similar
to the cross-sectional setting, after adjusting for the covariate
balance with the use of propensity score weights, the outcomes at
the last time point are combinations of the observed outcomes and
the conditional outcome means given the observed data, distinguished
by distinct dropout patterns.

{}Theorem 5 (c) describes the treatment effect over the missing patterns
in terms of the propensity score and response probability. The first
term $AR_{t}Y_{t}/e(H_{0})$ characterizes the participants who stay
in the assigned treatment throughout the entire study period identified
by $R_{t}$ after the adjustment for the group difference by $A/e(H_{0})$,
which is parallel to $\E\left\{ \pi_{1}(1,H_{0})g_{t+1}^{1}(H_{0})\right\} $
in Theorem 5 (a). The transformed outcome $(1-A)R_{t}Y_{t}\big/\left[\left\{ 1-e(H_{0})\right\} \bar{\pi}_{t}(0,H_{t-1})\right]$
measures the outcome mean $\mu_{t}^{0}(H_{0})$ given the baseline
covariates, for the participants who complete the trial in the control
group. Notice that 
\[
\delta(H_{s-1})=\frac{\bar{\pi}_{s-1}(1,H_{s-2})}{\bar{\pi}_{s-1}(0,H_{s-2})}\prod_{l=1}^{s-1}\frac{f(Y_{l}\mid H_{l-1},R_{l}=1,A=1)}{f(Y_{l}\mid H_{l-1},R_{l}=1,A=0)}
\]
is the cumulative product of the density ratios of the current outcome
given the observed historical information, multiplied by a ratio of
the cumulative response probability in the treatment and control group.
Therefore, with the transformed outcome involved, the term $\bar{\pi}_{s-1}(0,H_{s-2})\{1-\pi_{s}(1,H_{s-1})\}\delta(H_{s-1})$
implicitly shifts the participants with the same observed information,
who drop out at time $s$ in the treatment group, to the control group,
which matches $\E\left\{ \pi_{1}(1,H_{0})g_{s}^{1}(H_{0})\right\} $
when $s=2,\cdots,t$ and $\E\left[\left\{ 1-\pi_{1}(1,H_{0})\right\} \mu_{t}^{0}(H_{0})\right]$
when $s=1$ after marginalizing the history. Therefore, the second
term in the identification formula is equivalent to $\E\left[\pi_{1}(1,H_{0})\left\{ \sum_{s=1}^{t-1}g_{s+1}^{1}(H_{0})-\mu_{t}^{0}(H_{0})\right\} \right]$
in Theorem 5 (a).
\section{Proof of the EIFs \label{sec:supp_eif}}

{}Let $V=(X,A,R_{1}Y_{1},R_{1},\cdots,R_{t}Y_{t},R_{t})$ with $R_{0}=1$
be the vector of all observed variables with the likelihood factorized
as 
\begin{equation}
f(V)=f(X)f(A\mid X)\prod_{s=1}^{t}\left\{ f(Y_{s}\mid H_{s-1},R_{s}=1,A)f(R_{s}\mid H_{s-1},R_{s-1}=1,A)\right\} \label{eq:CR-lik-1}
\end{equation}
We will use the semiparametric theory in \citet{bickel1993efficient}
to derive the EIF of $\tau_{t}^{\text{\jtr}}$. To derive the EIFs,
we consider a one-dimensional parametric submodel, $f_{\theta}(V)$,
which contains the true model $f(V)$ at $\theta=0$, i.e., $f_{\theta}(V)\vert_{\theta=0}=f(V)$,
where $\theta$ consists of the nuisance model parameters. We use
$\theta$ in the subscript to denote the quantity evaluated with respect
to the submodel, e.g., $\mu_{t,\theta}^{a}$ is the value of $\mu_{t}^{a}$
with respect to the submodel. We use dot to denote the partial derivative
with respect to $\theta$, e.g., $\dot{\mu}_{t,\theta}^{a}=\partial\mu_{t}^{a}/\partial\theta$,
and use $s(\cdot)$ to denote the score function. From formula \eqref{eq:CR-lik-1},
the score function of the observed data can be decomposed as 
\[
s_{\theta}(V)=s_{\theta}(X)+s_{\theta}(A\mid X)+\sum_{s=1}^{t}\left\{ s_{\theta}(Y_{s}\mid H_{s-1},R_{s}=1,A)+s_{\theta}(R_{s}\mid H_{s-1},R_{s-1}=1,A)\right\} ,
\]
where $s_{\theta}(X)=\partial\log f_{\theta}(X)/\partial\theta$,
$s_{\theta}(A\mid X)=\partial\log\pr_{\theta}(A\mid X)/\partial\theta$,
$s_{\theta}(Y_{s}\mid H_{s-1},R_{s}=1,A)=\partial\log f_{\theta}(Y_{s}\mid H_{s-1},R_{s}=1,A)/\partial\theta$,
and $s_{\theta}(R_{s}\mid H_{s-1},R_{s-1}=1,A)=\partial\log\pr_{\theta}(R_{s}\mid H_{s-1},R_{s-1}=1,A)/\partial\theta$
are the score functions corresponding to the $(2t+2)$ components
of the likelihood. Because $f_{\theta}(V)\vert_{\theta=0}=f(V)$,
we can simplify $s_{\theta}(\cdot)\rvert_{\theta=0}$ as $s(\cdot)$.

{}From the semiparametric theory, the tangent space 
\[
\Lambda=B_{1}\oplus B_{2}\oplus B_{3,1}\oplus B_{4,1}\oplus\cdots\oplus B_{3,t}\oplus B_{4,t}
\]
is the direct sum of 
\begin{eqnarray*}
B_{1} & = & \{u(X):\E\{u(X)\}=0\},\\
B_{2} & = & \{u(A,X):\E\{u(A,X)\mid X\}=0\},\\
B_{3,s} & = & \{u(H_{s},A):\E\{u(H_{s},A)\mid A,H_{s-1}\}=0\},\\
B_{4,s} & = & \{u(R_{s},A,H_{s-1}):\E\{u(R_{s},A,H_{s-1})\mid A,H_{s-1}\}=0\},
\end{eqnarray*}
for $s=1,\cdots,t$, where $B_{1}$, $B_{2}$, $B_{3,s}$ and $B_{4,s}$
are orthogonal to each other, and $u(\cdot)$ is some functions. The
EIF for $\tau_{t}^{\jtr}$, denoted by $\varphi_{t}^{\text{\jtr}}(V;\mathbb{P})\in\Lambda$,
must satisfy 
\[
\left.\dot{\tau}_{t,\theta}^{\text{\jtr}}\right\rvert _{\theta=0}=\E\{\varphi_{t}^{\text{\jtr}}(V;\mathbb{P})s(V)\}.
\]

{}We will derive the EIFs in both cross-sectional and longitudinal
settings. To simplify the proof, we first provide some lemmas with
their proofs.

{}\begin{lemma}\label{lemma:basic_score}\par For any function
$u(V)$ that does not depend on $\theta$, $\partial\E_{\theta}\left\{ u(V)\right\} /\partial\theta\deriv=\E\left\{ u(V)s(V)\right\} $.\par \end{lemma}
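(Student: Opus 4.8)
The plan is to reduce the claim to the elementary ``log-derivative trick'' after expressing the expectation as an integral against a common dominating measure. Let $\nu$ be a $\sigma$-finite measure dominating the parametric submodel $\{f_{\theta}\}$, which exists by the construction of a regular parametric submodel, so that $\E_{\theta}\{u(V)\}=\int u(v)f_{\theta}(v)\,d\nu(v)$. Since $u$ does not depend on $\theta$, the first step is to differentiate under the integral sign, giving $\partial_{\theta}\E_{\theta}\{u(V)\}=\int u(v)\,\partial_{\theta}f_{\theta}(v)\,d\nu(v)$.

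The second step is to substitute the identity $\partial_{\theta}f_{\theta}(v)=f_{\theta}(v)\,\partial_{\theta}\log f_{\theta}(v)=f_{\theta}(v)\,s_{\theta}(v)$, which holds on $\{f_{\theta}(v)>0\}$ (the complement being $\nu$-null under $f_{\theta}$). This turns the integral into $\int u(v)\,s_{\theta}(v)\,f_{\theta}(v)\,d\nu(v)=\E_{\theta}\{u(V)\,s_{\theta}(V)\}$. Evaluating at $\theta=0$ and using that the submodel passes through the truth, i.e. $f_{\theta}\rvert_{\theta=0}=f$ and $s_{\theta}\rvert_{\theta=0}=s$, yields $\partial_{\theta}\E_{\theta}\{u(V)\}\deriv=\E\{u(V)s(V)\}$, as claimed.

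The only point requiring care, rather than calculation, is the interchange of differentiation and integration in the first step; this rests on a dominated-convergence-type hypothesis, e.g. an integrable envelope for $u(v)\,\partial_{\theta}f_{\theta}(v)$ in a neighborhood of $\theta=0$, or differentiability in quadratic mean of the submodel (which also guarantees that $s(V)$ has finite second moment, so the right-hand side is well defined). I would not prove this regularity here: following \citet{bickel1993efficient}, all submodels considered in this section are taken to be regular, so the required smoothness is part of the standing setup, and under that convention the two displays above constitute the complete argument. The subsequent EIF derivations then invoke this lemma repeatedly with different choices of $u(V)$.
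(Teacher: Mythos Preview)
Your proof is correct and follows essentially the same argument as the paper: write the expectation as an integral against a dominating measure, differentiate under the integral, apply the identity $\partial_\theta f_\theta = f_\theta\,\partial_\theta\log f_\theta$, and evaluate at $\theta=0$. The paper's version is terser and does not spell out the regularity needed for the interchange, which you correctly flag as the only substantive assumption.
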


{}\begin{proof} By the definition 
\begin{align*}
\frac{\partial\E_{\theta}\left\{ u(V)\right\} }{\partial\theta} & \deriv=\frac{\partial}{\partial\theta}\int u(V)f_{\theta}(V)d\nu(V)\deriv\\
 & =\int u(V)\frac{\partial}{\partial\theta}\log f_{\theta}(V)\deriv f(V)d\nu(V)\\
 & =\E\left\{ u(V)s(V)\right\} .
\end{align*}
\end{proof}

{}\begin{lemma}\label{lemma:pi_deriv}\par For $s=1,\cdots,t$,
we have 
\begin{align*}
\dot{\pi}_{s,\theta}(1,H_{s-1})\deriv & =\E\left[\frac{A}{e(X)}\frac{R_{s-1}}{\bar{\pi}_{s-1}(1,H_{s-2})}\left\{ R_{s}-\pi_{s}(1,H_{s-1})\right\} s(V)\mid H_{s-1}\right],\\
\dot{\pi}_{s,\theta}(0,H_{s-1})\deriv & =\E\left[\frac{1-A}{1-e(X)}\frac{R_{s-1}}{\bar{\pi}_{s-1}(0,H_{s-2})}\left\{ R_{s}-\pi_{s}(0,H_{s-1})\right\} s(V)\mid H_{s-1}\right].
\end{align*}
\par \end{lemma}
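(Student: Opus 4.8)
The plan is to differentiate $\pi_{s,\theta}(1,H_{s-1})$ along the one-dimensional submodel, extract the response-probability score, and then re-express the result through the full observed-data score $s(V)$ and an inverse-probability weight; the formula for $a=0$ follows by the symmetric argument, so I describe only $a=1$. First I would write $\pi_{s,\theta}(1,H_{s-1})=\E_\theta\{R_s\mid H_{s-1},R_{s-1}=1,A=1\}$, so that by the factorization \eqref{eq:CR-lik-1} the $\theta$-dependence sits entirely in the Bernoulli factor $f_\theta(R_s\mid H_{s-1},R_{s-1}=1,A)$. Differentiating $\log\{\pi_{s,\theta}(1,H_{s-1})^{R_s}(1-\pi_{s,\theta}(1,H_{s-1}))^{1-R_s}\}$ at $\theta=0$ gives
\[
s(R_s\mid H_{s-1},R_{s-1}=1,A)=\frac{R_s-\pi_s(1,H_{s-1})}{\pi_s(1,H_{s-1})\{1-\pi_s(1,H_{s-1})\}}\,\dot\pi_{s,\theta}(1,H_{s-1})\deriv .
\]
Multiplying by $R_s-\pi_s(1,H_{s-1})$, taking $\E\{\cdot\mid H_{s-1},R_{s-1}=1,A=1\}$, and using $\E[\{R_s-\pi_s(1,H_{s-1})\}^2\mid H_{s-1},R_{s-1}=1,A=1]=\pi_s(1,H_{s-1})\{1-\pi_s(1,H_{s-1})\}$ isolates
\[
\dot\pi_{s,\theta}(1,H_{s-1})\deriv=\E\big[\{R_s-\pi_s(1,H_{s-1})\}\,s(R_s\mid H_{s-1},R_{s-1}=1,A)\,\big|\,H_{s-1},R_{s-1}=1,A=1\big].
\]

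Second, I would replace $s(R_s\mid H_{s-1},R_{s-1}=1,A)$ by the full score $s(V)=s(X)+s(A\mid X)+\sum_{k=1}^t\{s(Y_k\mid H_{k-1},R_k=1,A)+s(R_k\mid H_{k-1},R_{k-1}=1,A)\}$. The components carried by $(X,A)$ and by the time indices $k\le s-1$ are measurable with respect to the conditioning event $\{H_{s-1},R_{s-1}=1,A=1\}$, hence constant there, and are annihilated by $\E\{R_s-\pi_s(1,H_{s-1})\mid H_{s-1},R_{s-1}=1,A=1\}=0$. Each component with $k\ge s$ (and $k\ge s+1$ for the response scores) has conditional mean zero given its own conditioning $\sigma$-field; by monotone missingness that $\sigma$-field already determines $R_s$ on the relevant event, so $R_s-\pi_s(1,H_{s-1})$ factors out of an inner conditional expectation and the term vanishes by the tower rule. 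Only the $k=s$ response score survives, so the last display holds verbatim with $s(R_s\mid\cdots)$ replaced by $s(V)$.

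Third, I would convert the conditioning on $\{R_{s-1}=1,A=1\}$ into the inverse-probability weight: write $\E[\,\cdot\,\mid H_{s-1},R_{s-1}=1,A=1]$ as $\E[R_{s-1}\mathbb{I}(A=1)(\cdot)\mid H_{s-1}]$ divided by the corresponding selection probability, and evaluate that probability through the sequential structure of \eqref{eq:CR-lik-1}, so that the cumulative response factor becomes $\bar\pi_{s-1}(1,H_{s-2})=\prod_{k=1}^{s-1}\pi_k(1,H_{k-1})$ and the treatment factor becomes $e(X)$; the indicator ratio then turns into $\frac{A}{e(X)}\frac{R_{s-1}}{\bar\pi_{s-1}(1,H_{s-2})}$, the claimed weight. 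The case $a=0$ only replaces $A/e(X)$ by $(1-A)/\{1-e(X)\}$ and $\bar\pi_{s-1}(1,\cdot)$ by $\bar\pi_{s-1}(0,\cdot)$. I expect the second step to be the main obstacle: one must pin down the conditioning $\sigma$-field attached to each tangent-space component and use monotonicity carefully to argue that $R_s-\pi_s(1,H_{s-1})$ is measurable there, so that all cross terms drop; the differentiation in the first step and the reweighting in the third step are comparatively mechanical.
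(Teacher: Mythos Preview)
Your proposal is correct and follows essentially the same approach as the paper: establish $\dot\pi_{s,\theta}(1,H_{s-1})\big|_{\theta=0}=\E[\{R_s-\pi_s(1,H_{s-1})\}\,s(R_s\mid H_{s-1},R_{s-1}=1,A)\mid H_{s-1},R_{s-1}=1,A=1]$, pass from the conditional score to $s(V)$ via orthogonality of the tangent-space blocks, and convert the $(R_{s-1}=1,A=1)$ conditioning into the inverse-probability weight by Bayes' rule. The only cosmetic differences are that you obtain the first display through the explicit Bernoulli score and a variance identity rather than the paper's direct differentiation via Lemma~\ref{lemma:basic_score}, and you perform the orthogonality step before the reweighting whereas the paper does them in the opposite order.
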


{}\begin{proof} Note that 
\begin{align*}
\dot{\pi}_{s,\theta}(1,H_{s-1})\deriv & =\frac{\partial}{\partial\theta}\E_{\theta}\left(R_{s}\mid H_{s-1},R_{s-1}=1,A=1\right)\deriv\\
 & =\E\left\{ R_{s}s(R_{s}\mid H_{s-1},R_{s-1}=1,A=1)\mid H_{s-1},R_{s-1}=1,A=1\right\} \text{ (by Lemma \ref{lemma:basic_score})}\\
 & =\E\left[\left\{ R_{s}-\pi_{s}(1,H_{s-1})\right\} s(R_{s}\mid H_{s-1},R_{s-1}=1,A=1)\mid H_{s-1},R_{s-1}=1,A=1\right]\\
 & =\E\left[\frac{A}{e(X)}\frac{R_{s-1}}{\bar{\pi}_{s-1}(1,H_{s-2})}\left\{ R_{s}-\pi_{s}(1,H_{s-1})\right\} s(R_{s}\mid H_{s-1},R_{s-1},A)\mid H_{s-1}\right]\text{ (by Bayes' rule)}\\
 & =\E\left[\frac{A}{e(X)}\frac{R_{s-1}}{\bar{\pi}_{s-1}(1,H_{s-2})}\left\{ R_{s}-\pi_{s}(1,H_{s-1})\right\} s(V)\mid H_{s-1}\right],
\end{align*}
where the last equality holds since $B_{3,s'}$, $B_{4,s'}$ for $s'>s$
are orthogonal to the spaces $B_{1},B_{2},$ $B_{3,1}$, $B_{4,1},\cdots,B_{3,s},B_{4,s}$.
Similarly, we can prove the result for $\dot{\pi}_{s,\theta}(0,H_{s-1})\deriv$.\end{proof}

{}\begin{lemma}\label{lemma:mu_deriv}\par For $s=1,\cdots,t$,
we have 
\begin{align}
\dot{\mu}_{t,\theta}^{1}(H_{t-1})\deriv & =\E\left[\frac{A}{e(X)}\frac{R_{t}}{\bar{\pi}_{t}(1,H_{t-1})}\left\{ Y_{t}-\mu_{t}^{1}(H_{t-1})\right\} s(V)\mid H_{t-1}\right],\nonumber \\
\dot{\mu}_{t,\theta}^{0}(H_{s-1})\deriv & =\E\left[\frac{1-A}{1-e(X)}\text{\ensuremath{\sum_{k=s}^{t}}}\frac{R_{k}}{\bar{\pi}_{k}(0,H_{k-1})}\left\{ \mu_{t}^{0}(H_{k})-\mu_{t}^{0}(H_{k-1})\right\} s(V)\mid H_{s-1}\right].\label{eq:mut0}
\end{align}
\par \end{lemma}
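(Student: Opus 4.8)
The plan is to prove the two displays separately: the first is a one-step computation parallel to Lemma \ref{lemma:pi_deriv}, while the second requires a downward induction on the time index $s$.

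For $\dot{\mu}_{t,\theta}^{1}(H_{t-1})\deriv$, the recursion defining $\mu_t^a$ gives in one step $\mu_{t,\theta}^{1}(H_{t-1})=\E_{\theta}(Y_t\mid H_{t-1},R_t=1,A=1)$, since $\mu_t^1(H_t)=Y_t$. Differentiating under the integral sign and applying the conditional form of Lemma \ref{lemma:basic_score} yields $\dot{\mu}_{t,\theta}^{1}(H_{t-1})\deriv=\E\left[\{Y_t-\mu_t^1(H_{t-1})\}\,s(Y_t\mid H_{t-1},R_t=1,A=1)\mid H_{t-1},R_t=1,A=1\right]$, where subtracting $\mu_t^1(H_{t-1})$ is harmless because a conditional score has conditional mean zero. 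A Bayes'-rule change of the conditioning event --- the same one used for $\dot{\pi}_{s,\theta}$ in the proof of Lemma \ref{lemma:pi_deriv}, with the conditioning factor $R_t=1$ replaced by the weight $R_t/\bar{\pi}_t(1,H_{t-1})$ and the factor $A=1$ by the weight $A/e(X)$ --- rewrites this as an expectation given $H_{t-1}$, and replacing the local score $s(Y_t\mid H_{t-1},R_t=1,A=1)$ by the full score $s(V)$ is justified by the orthogonality of the tangent-space blocks exactly as in that proof. This is the first display.

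For $\dot{\mu}_{t,\theta}^{0}(H_{s-1})\deriv$ I would induct downward from $s=t$. The base case $s=t$ is the previous paragraph with $A=0$ and $\mu_t^0(H_t)=Y_t$, so that $\{Y_t-\mu_t^0(H_{t-1})\}=\{\mu_t^0(H_t)-\mu_t^0(H_{t-1})\}$ and the sum in \eqref{eq:mut0} reduces to its $k=t$ term. For the inductive step, write $\mu_{t,\theta}^{0}(H_{s-1})=\E_{\theta}\{\mu_{t,\theta}^{0}(H_s)\mid H_{s-1},R_s=1,A=0\}$ and differentiate, collecting two contributions: one from the conditional law of $Y_s$ given $(H_{s-1},R_s=1,A=0)$, and one from the integrand $\mu_{t,\theta}^{0}(H_s)$. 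The first equals $\E[\{\mu_t^0(H_s)-\mu_t^0(H_{s-1})\}\,s(Y_s\mid H_{s-1},R_s=1,A=0)\mid H_{s-1},R_s=1,A=0]$, which by the same Bayes'-rule-plus-orthogonality step as above produces the $k=s$ term of \eqref{eq:mut0}. The second equals $\E\{\dot{\mu}_{t,\theta}^{0}(H_s)\deriv\mid H_{s-1},R_s=1,A=0\}$; substituting the induction hypothesis, namely \eqref{eq:mut0} with $s$ in place of $s-1$, turns it into a nested conditional expectation of the terms with $k=s+1,\dots,t$, and what remains is to show that each nested expectation collapses to the corresponding term of \eqref{eq:mut0} evaluated at $H_{s-1}$.

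The step I expect to be the main obstacle is exactly this collapse. Under monotone missingness the response indicators telescope ($R_k=1$ forces $R_s=1$ for $k>s$), so, by iterating the Bayes'-rule identity of Lemma \ref{lemma:pi_deriv} together with the likelihood factorization \eqref{eq:CR-lik-1}, the chain of $A=0$ conditional expectations merges into a single expectation given $H_{s-1}$ carrying the \emph{cumulative} inverse-probability weight $\{(1-A)/(1-e(X))\}\{R_k/\bar{\pi}_k(0,H_{k-1})\}$ --- rather than an unwanted product of per-step weights --- and the intermediate local scores are again absorbed into $s(V)$ by orthogonality of the tangent-space blocks. Adding the two contributions returns $\dot{\mu}_{t,\theta}^{0}(H_{s-1})\deriv$ in the asserted form \eqref{eq:mut0}, which closes the induction.
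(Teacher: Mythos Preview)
Your proposal is correct and follows essentially the same route as the paper: a direct score computation for $\dot{\mu}_{t,\theta}^{1}(H_{t-1})$, and backward induction for $\dot{\mu}_{t,\theta}^{0}(H_{s-1})$ via the chain rule, splitting into the score-of-$Y_s$ term (handled by centering, Bayes' rule, and orthogonality) and the integrand-derivative term (handled by substituting the induction hypothesis and collapsing the nested expectation). The only difference is emphasis: the paper dispatches the collapse of the nested expectation in one line ``by double expectation,'' whereas you correctly flag it as the main obstacle and spell out why the inverse-probability weights already present from the induction hypothesis, together with monotone missingness and the likelihood factorization~\eqref{eq:CR-lik-1}, make the outer $(H_{s-1},R_s=1,A=0)$ conditioning reduce to conditioning on $H_{s-1}$ without picking up extra per-step factors.
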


{}\begin{proof} Note that 
\begin{align*}
\dot{\mu}_{t,\theta}^{1}(H_{t-1})\deriv & =\frac{\partial}{\partial\theta}\E_{\theta}(Y_{t}\mid H_{t-1},R_{t}=1,A=1)\deriv\\
 & =\E\left\{ Y_{t}s(Y_{t}\mid H_{t-1},R_{t}=1,A=1)\mid H_{t-1},R_{t}=1,A=1\right\} \text{ (by Lemma \ref{lemma:basic_score})}\\
 & =\E\left\{ \frac{A}{e(X)}\frac{R_{t}}{\bar{\pi}_{t}(1,H_{t-1})}Y_{t}s(Y_{t}\mid H_{t-1},R_{t}=1,A=1)\mid H_{t-1}\right\} \text{ (by Bayes' rule)}\\
 & =\E\left[\frac{A}{e(X)}\frac{R_{t}}{\bar{\pi}_{t}(1,H_{t-1})}\left\{ Y_{t}-\mu_{t}^{1}(H_{t-1})\right\} s(Y_{t}\mid H_{t-1},R_{t},A)\mid H_{t-1}\right]\\
 & =\E\left[\frac{A}{e(X)}\frac{R_{t}}{\bar{\pi}_{t}(1,H_{t-1})}\left\{ Y_{t}-\mu_{t}^{1}(H_{t-1})\right\} s(V)\mid H_{t-1}\right].
\end{align*}
The last equality holds by the orthogonality of the spaces.\par For
the condition involves $A=0$, we prove it by induction in backward
order since it involves iteratively taking the derivative with respect
to $\theta$.\par For $s=t$, we can obtain $\dot{\mu}_{t,\theta}^{0}(H_{t-1})\deriv$
using the similar procedure as the one involves $A=1$, and get 
\[
\dot{\mu}_{t,\theta}^{0}(H_{t-1})\deriv=\E\left[\frac{1-A}{1-e(X)}\frac{R_{t}}{\bar{\pi}_{t}(0,H_{t-1})}\left\{ Y_{t}-\mu_{t}^{0}(H_{t-1})\right\} s(V)\mid H_{t-1}\right],
\]
which matches the right hand side of Equation \eqref{eq:mut0} when
$s=t$.\par Suppose Equation \eqref{eq:mut0} holds at time $(s+1)$
when $s<t$, i.e., 
\[
\dot{\mu}_{t,\theta}^{0}(H_{s})\deriv=\E\left[\frac{1-A}{1-e(X)}\text{\ensuremath{\sum_{k=s+1}^{t}}}\frac{R_{k}}{\bar{\pi}_{k}(0,H_{k-1})}\left\{ \mu_{t}^{0}(H_{k})-\mu_{t}^{0}(H_{k-1})\right\} s(V)\mid H_{s}\right].
\]
Then for the time point $s$, based on the sequential expression of
$\mu_{t}^{0}(H_{s-1})=\E\big\{\mu_{t}^{0}(H_{s})\mid H_{s-1},R_{s}=1,A=0\big\}$,
\begin{align*}
\dot{\mu}_{t,\theta}^{0}(H_{s})\deriv & =\frac{\partial}{\partial\theta}\E_{\theta}\left\{ \mu_{t}^{0}(H_{s})\mid H_{s-1},R_{s}=1,A=0\right\} \deriv\\
 & =\E\left\{ \dot{\mu}_{t,\theta}^{0}(H_{s})\deriv\mid H_{s-1},R_{s}=1,A=1\right\} \\
 & \;+\E\left\{ \mu_{t}^{0}(H_{s})s(Y_{t}\mid H_{s-1},R_{s}=1,A=1)\mid H_{s-1},R_{s}=1,A=1\right\} \text{ (by chain rule)}\\
 & =\E\left(\E\left[\frac{1-A}{1-e(X)}\text{\ensuremath{\sum_{k=s+1}^{t}}}\frac{R_{k}}{\bar{\pi}_{k}(0,H_{k-1})}\left\{ \mu_{t}^{0}(H_{k})-\mu_{t}^{0}(H_{k-1})\right\} s(V)\mid H_{s}\right]\mid H_{s-1}\right)\\
 & \;+\E\left(\frac{1-A}{1-e(X)}\frac{R_{s}}{\bar{\pi}_{s}(0,H_{s-1})}\left\{ \mu_{t}^{0}(H_{s})-\mu_{t}^{0}(H_{s-1})\right\} s(Y_{s}\mid H_{s-1},R_{s},A)\mid H_{s-1}\right)\text{ (by Bayes' rule)}\\
 & =\E\left[\frac{1-A}{1-e(X)}\text{\ensuremath{\sum_{k=s+1}^{t}}}\frac{R_{k}}{\bar{\pi}_{k}(0,H_{k-1})}\left\{ \mu_{t}^{0}(H_{k})-\mu_{t}^{0}(H_{k-1})\right\} s(V)\mid H_{s-1}\right]\text{ (by double expectation)}\\
 & \;+\E\left(\frac{1-A}{1-e(X)}\frac{R_{s}}{\bar{\pi}_{s}(0,H_{s-1})}\left\{ \mu_{t}^{0}(H_{s})-\mu_{t}^{0}(H_{s-1})\right\} s(V)\mid H_{s-1}\right)\text{\text{ (by orthogonality)}}\\
 & =\E\left[\frac{1-A}{1-e(X)}\text{\ensuremath{\sum_{k=s}^{t}}}\frac{R_{k}}{\bar{\pi}_{k}(0,H_{k-1})}\left\{ \mu_{t}^{0}(H_{k})-\mu_{t}^{0}(H_{k-1})\right\} s(V)\mid H_{s-1}\right],
\end{align*}
which completes the proof. \end{proof}

{}Denote the marginal mean for the longitudinal outcomes at the last
time point in the control group as $\tau_{0,t}$, i.e., $\tau_{0,t}=\E[Y_{t}^{\text{\jtr}}\{0,D(0)\}]$.
Under J2R, the missing values in the control group is MAR. The following
lemma provides the EIF for the control group mean $\tau_{0,t}$ under
MAR.

{}\begin{lemma}\label{lemma:eif_mar}\par Under MAR, the EIF for
$\tau_{0,t}$ is 
\[
\varphi_{0,t}(V;\mathbb{P})=\frac{1-A}{1-e(X)}\text{\ensuremath{\sum_{s=1}^{t}}}\frac{R_{s}}{\bar{\pi}_{s}(0,H_{s-1})}\left\{ \mu_{t}^{0}(H_{s})-\mu_{t}^{0}(H_{s-1})\right\} +\mu_{t}^{0}(H_{0})-\tau_{0,t}.
\]
\par \end{lemma}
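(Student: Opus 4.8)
The plan is to follow the parametric-submodel recipe already set up in Section~\ref{sec:supp_eif}: exhibit the candidate $\varphi_{0,t}$, verify the defining identity $\dot{\tau}_{0,t,\theta}\big|_{\theta=0}=\E\{\varphi_{0,t}(V;\mathbb{P})\,s(V)\}$ for every one-dimensional submodel, and then check that $\varphi_{0,t}$ belongs to the tangent space $\Lambda$ so that it is the \emph{efficient} influence function. The natural starting point is Lemma~\ref{lemma:iden_mar}, which identifies $\tau_{0,t}=\E\{\mu_t^0(H_0)\}$ (with $H_0=X$). Hence along a submodel $\tau_{0,t,\theta}=\E_\theta\{\mu_{t,\theta}^0(H_0)\}$, and the whole problem reduces to differentiating this observed-data functional in $\theta$.

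First I would differentiate by the chain rule, separating the outer integration against $f_\theta(H_0)$ from the inner $\mu_{t,\theta}^0(H_0)$:
\[
\dot{\tau}_{0,t,\theta}\big|_{\theta=0}=\E\big\{\dot{\mu}_{t,\theta}^0(H_0)\big|_{\theta=0}\big\}+\E\big\{\mu_t^0(H_0)\,s(X)\big\}.
\]
For the second term, since $s(V)-s(X)=s(A\mid X)+\sum_{s=1}^{t}\{s(Y_s\mid H_{s-1},R_s=1,A)+s(R_s\mid H_{s-1},R_{s-1}=1,A)\}$ is a sum of terms each having conditional mean zero given $X$, one gets $\E\{\mu_t^0(H_0)s(X)\}=\E\{\mu_t^0(H_0)s(V)\}$, and then subtracting $\tau_{0,t}$ (permissible because $\E\{s(V)\}=0$) rewrites this as $\E\{(\mu_t^0(H_0)-\tau_{0,t})s(V)\}$, which is precisely the contribution of the last two terms of $\varphi_{0,t}$.

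For the first term I would invoke Lemma~\ref{lemma:mu_deriv}, equation~\eqref{eq:mut0}, specialized to $s=1$ so that $H_{s-1}=H_0$:
\[
\dot{\mu}_{t,\theta}^0(H_0)\big|_{\theta=0}=\E\left[\frac{1-A}{1-e(X)}\sum_{k=1}^{t}\frac{R_k}{\bar{\pi}_k(0,H_{k-1})}\big\{\mu_t^0(H_k)-\mu_t^0(H_{k-1})\big\}\,s(V)\;\Big|\;H_0\right],
\]
and then take the outer expectation, collapsing the iterated conditioning, to obtain $\E\{\dot{\mu}_{t,\theta}^0(H_0)|_{\theta=0}\}=\E\big[\tfrac{1-A}{1-e(X)}\sum_{s=1}^{t}\tfrac{R_s}{\bar{\pi}_s(0,H_{s-1})}\{\mu_t^0(H_s)-\mu_t^0(H_{s-1})\}s(V)\big]$. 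Adding the two pieces gives $\dot{\tau}_{0,t,\theta}|_{\theta=0}=\E\{\varphi_{0,t}(V;\mathbb{P})s(V)\}$, so $\varphi_{0,t}$ is a valid gradient (influence function).

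Finally I would confirm $\varphi_{0,t}\in\Lambda=B_1\oplus B_2\oplus\bigoplus_{s=1}^{t}(B_{3,s}\oplus B_{4,s})$: the term $\mu_t^0(H_0)-\tau_{0,t}$ is a mean-zero function of $X$, hence lies in $B_1$; and for each $s$ the summand $\tfrac{1-A}{1-e(X)}\tfrac{R_s}{\bar{\pi}_s(0,H_{s-1})}\{\mu_t^0(H_s)-\mu_t^0(H_{s-1})\}$ has conditional mean zero given $(A,H_{s-1})$, using $\E\{R_s\mid H_{s-1},R_{s-1}=1,A=0\}=\pi_s(0,H_{s-1})$ together with $\E\{\mu_t^0(H_s)\mid H_{s-1},R_s=1,A=0\}=\mu_t^0(H_{s-1})$, so it decomposes into $B_{3,s}\oplus B_{4,s}$. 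A gradient that already lies in $\Lambda$ is the EIF; equivalently, since the observed-data model in \eqref{eq:CR-lik-1} is nonparametric, the gradient is unique. The main obstacle here is not any single step but the bookkeeping: making the iterated-expectation and Bayes-rule manipulations of Lemma~\ref{lemma:mu_deriv} line up term by term with $\varphi_{0,t}$, and verifying the conditional-mean-zero property cleanly under monotone missingness — the genuinely heavy computation, namely differentiating the sequential regression $\mu_t^0$, has already been carried out in Lemma~\ref{lemma:mu_deriv}.
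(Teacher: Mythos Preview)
Your proposal is correct and follows essentially the same route as the paper: both start from $\tau_{0,t}=\E\{\mu_t^0(H_0)\}$, apply the chain rule to split off $\E\{\mu_t^0(H_0)s(X)\}$, invoke Lemma~\ref{lemma:mu_deriv} at $s=1$ for $\dot{\mu}_{t,\theta}^0(H_0)|_{\theta=0}$, and use orthogonality to pass from $s(X)$ to $s(V)$. Your additional step of explicitly checking that $\varphi_{0,t}\in\Lambda$ is a nice touch the paper omits (it simply appeals to ``the definition of the EIF''), but otherwise the arguments coincide.
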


{}\begin{proof} From the proof of Theorem 5, $\tau_{0,t}=\E\left\{ \mu_{t}^{0}(H_{0})\right\} $.
Then 
\begin{align*}
\dot{\tau}_{0,t,\theta}\deriv & =\frac{\partial}{\partial\theta}\E_{\theta}\left\{ \mu_{t}^{0}(H_{0})\right\} \deriv\\
 & =\E\left\{ \dot{\mu}_{t,\theta}^{0}(H_{0})\deriv\right\} +\E\left\{ \mu_{t}^{0}(H_{0})s(H_{0})\right\} \text{ (by chain rule)}\\
 & =\E\left(\E\left[\frac{1-A}{1-e(X)}\text{\ensuremath{\sum_{k=1}^{t}}}\frac{R_{k}}{\bar{\pi}_{k}(0,H_{k-1})}\left\{ \mu_{t}^{0}(H_{k})-\mu_{t}^{0}(H_{k-1})\right\} s(V)\mid H_{0}\right]\right)\\
 & \quad+\E\left[\left\{ \mu_{t}^{0}(H_{0})-\tau_{0,t}\right\} s(V)\right]\text{\text{ (by Lemma \ref{lemma:mu_deriv} and orthogonality)}}\\
 & =\E\left(\left[\frac{1-A}{1-e(X)}\text{\ensuremath{\sum_{k=1}^{t}}}\frac{R_{k}}{\bar{\pi}_{k}(0,H_{k-1})}\left\{ \mu_{t}^{0}(H_{k})-\mu_{t}^{0}(H_{k-1})\right\} +\mu_{t}^{0}(H_{0})-\tau_{0,t}\right]s(V)\right).
\end{align*}
Therefore, the proof is completed by the definition of the EIF as
$\left.\dot{\tau}_{0,t,\theta}\right\rvert _{\theta=0}=\E\{\varphi_{0,t}(V;\pr)s(V)\}.$\end{proof}

{}To proceed the proof in the longitudinal setting, we give the following
lemma for $\dot{g}_{s+1,\theta}^{1}(H_{l-1})\deriv$ when $l=1,\cdots,s-1$
and $s=1,\cdots,t$.

{}\begin{lemma}\label{lemma:g_seq}\par For any $s\in\{1,\cdots,t\}$,
when $l=1,\cdots,s-1$, we have 
\begin{align*}
\dot{g}_{s+1,\theta}^{1}(H_{l-1})\deriv & =\E\Bigg\{\bigg(\frac{A}{e(X)}\frac{R_{l}}{\bar{\pi}_{l}(1,H_{l-1})}\left[R_{s}\left\{ 1-R_{s+1}\right\} \mu_{t}^{0}(H_{s})-g_{s+1}^{1}(H_{l-1})\right]\\
 & \quad+\frac{1-A}{1-e(X)}\prod_{j=l+1}^{s}\pi_{j}(0,H_{j-1})\frac{f(Y_{j-1}\mid H_{j-2},R_{j-1}=1,A=1)}{f(Y_{j-1}\mid H_{j-2},R_{j-1}=1,A=0)}D_{s+1}\bigg)s(V)\mid H_{l-1}\Bigg\},\\
\dot{g}_{s+1,\theta}^{1}(H_{s-1})\deriv & =\E\left(\left[\frac{A}{e(X)}\frac{R_{s}}{\bar{\pi}_{s}(1,H_{s-1})}\left\{ \left(1-R_{s+1}\right)\mu_{t}^{0}(H_{s})-g_{s+1}^{1}(H_{s-1})\right\} +\frac{1-A}{1-e(X)}D_{s+1}\right]s(V)\mid H_{s-1}\right),
\end{align*}
where for the simplicity of notations, we denote 
\[
D_{s+1}:=\left\{ 1-\pi_{s+1}(0,H_{s})\right\} \frac{f(Y_{s}\mid H_{s-1},R_{s}=1,A=1)}{f(Y_{s}\mid H_{s-1},R_{s}=1,A=0)}\left[\text{\ensuremath{\sum_{k=s+1}^{t}}}\frac{R_{k}}{\bar{\pi}_{k}(0,H_{k-1})}\left\{ \mu_{t}^{0}(H_{k})-\mu_{t}^{0}(H_{k-1})\right\} \right],
\]
and let $D_{t+1}=0$.\par \end{lemma}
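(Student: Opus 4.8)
The plan is to prove both displays by downward induction, starting from the base case $l=s$ — where $\dot g_{s+1,\theta}^1(H_{s-1})|_{\theta=0}$ is obtained by differentiating the defining conditional expectation $g_{s+1}^1(H_{s-1})=\E[\{1-\pi_{s+1}(1,H_s)\}\mu_t^0(H_s)\mid H_{s-1},R_s=1,A=1]$ — and then descending to $l=s-1,\dots,1$, at each level computing $\dot g_{s+1,\theta}^1(H_{l-1})|_{\theta=0}$ from $\dot g_{s+1,\theta}^1(H_l)|_{\theta=0}$, which is either the base case or a previously established instance. The recurring moves are: (i) the chain rule, splitting $\dot g_{s+1,\theta}^1$ into a ``score'' piece in which $\theta$ acts on the conditional density $f_\theta(Y_j\mid H_{j-1},R_j=1,A=1)$, a ``$\dot\pi$'' piece, and a ``$\dot g$'' (or, at the bottom, ``$\dot\mu$'') piece; (ii) Bayes' rule, to turn a conditional expectation given $(H_{j-1},R_j=1,A=a)$ into an unconditional one carrying the weight $A^a(1-A)^{1-a}R_{j-1}\big/\{e(X)^a(1-e(X))^{1-a}\bar\pi_{j-1}(a,H_{j-2})\}$; (iii) the orthogonality of the tangent blocks $B_1,B_2,B_{3,s},B_{4,s}$, to replace a partial score by $s(V)$; and (iv) the one-step density ratio $f(Y_j\mid H_{j-1},R_j=1,A=1)/f(Y_j\mid H_{j-1},R_j=1,A=0)$ — the building block of Lemma \ref{lemma:delta_ratio} — to cross from the treatment law to the control law. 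Lemmas \ref{lemma:basic_score}, \ref{lemma:pi_deriv}, and \ref{lemma:mu_deriv} supply, respectively, the differentiation identity and the closed forms for $\dot\pi_{s,\theta}|_{\theta=0}$ and $\dot\mu_{t,\theta}^a|_{\theta=0}$ that get substituted into the $\dot\pi$ and $\dot\mu$ pieces.

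In the base case $l=s$, after the chain rule the score piece — once centered by $g_{s+1}^1(H_{s-1})$, legitimate since the partial score is mean-zero — becomes $\E[\tfrac{A}{e(X)}\tfrac{R_s}{\bar\pi_s(1,H_{s-1})}\{(1-\pi_{s+1}(1,H_s))\mu_t^0(H_s)-g_{s+1}^1(H_{s-1})\}s(V)\mid H_{s-1}]$; substituting Lemma \ref{lemma:pi_deriv} into the $\dot\pi$ piece produces the factor $-\{R_{s+1}-\pi_{s+1}(1,H_s)\}\mu_t^0(H_s)$ under the same weight, and adding the two collapses $(1-\pi_{s+1})-(R_{s+1}-\pi_{s+1})$ to $1-R_{s+1}$, giving the first term of the claimed base-case identity. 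For the $\dot\mu$ piece I substitute the $A=0$ formula of Lemma \ref{lemma:mu_deriv}; since its inner sum is an expectation under the control law while the outer expectation is under the treatment law at time $s$, move (iv) rewrites this piece as $\E[\tfrac{1-A}{1-e(X)}D_{s+1}\,s(V)\mid H_{s-1}]$, with the convention $D_{t+1}=0$ consistent with $\pi_{t+1}(1,H_t)=0$ and $\mu_t^0(H_t)=Y_t$.

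In the inductive step $l<s$, differentiating $g_{s+1}^1(H_{l-1})=\E[\pi_{l+1}(1,H_l)g_{s+1}^1(H_l)\mid H_{l-1},R_l=1,A=1]$, the score and $\dot\pi_{l+1}$ pieces combine as in the base case (now via $(R_{l+1}-\pi_{l+1})+\pi_{l+1}=R_{l+1}$ together with the recursion for $g$), and the $\dot g_{s+1}^1(H_l)$ piece is replaced by the induction hypothesis. Substituting the hypothesis and pushing the outer conditional expectation through the remaining time indices, the $\tfrac{A}{e(X)}$-weighted contributions telescope to $\tfrac{A}{e(X)}\tfrac{R_l}{\bar\pi_l(1,H_{l-1})}\{R_s(1-R_{s+1})\mu_t^0(H_s)-g_{s+1}^1(H_{l-1})\}$, while each passage across a time index $j\in\{l+1,\dots,s\}$ contributes a Bayes factor $\pi_j(0,H_{j-1})$ and a change-of-measure factor $f(Y_{j-1}\mid H_{j-2},R_{j-1}=1,A=1)/f(Y_{j-1}\mid H_{j-2},R_{j-1}=1,A=0)$, so the $\tfrac{1-A}{1-e(X)}$-weighted part accumulates exactly $\prod_{j=l+1}^s\pi_j(0,H_{j-1})\,f(Y_{j-1}\mid H_{j-2},R_{j-1}=1,A=1)/f(Y_{j-1}\mid H_{j-2},R_{j-1}=1,A=0)$ multiplying $D_{s+1}$.

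The main obstacle I anticipate is precisely this bookkeeping at every level of the recursion: keeping straight which conditional law each factor lives under, checking that the centering constants ($g_{s+1}^1(\cdot)$ in the score pieces, the telescoping differences $\mu_t^0(H_k)-\mu_t^0(H_{k-1})$ in the $\dot\mu$ pieces) appear with the correct weights, and verifying that the accumulated product of one-step density ratios and cumulative response probabilities is exactly the quantity in the statement — the step where the per-step identity behind Lemma \ref{lemma:delta_ratio} does the real work. A subsidiary subtlety is the careful, repeated use of the mean-zero property of the various partial scores, which is what licenses subtracting those centering constants in the first place.
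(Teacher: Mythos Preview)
Your proposal is correct and follows essentially the same route as the paper: the paper too first establishes the base case $\dot g_{s+1,\theta}^{1}(H_{s-1})|_{\theta=0}$ by the chain rule (splitting into the score, $\dot\pi$, and $\dot\mu$ pieces, invoking Lemmas \ref{lemma:pi_deriv} and \ref{lemma:mu_deriv}, and using the one-step density ratio to pass the $\dot\mu$ contribution to the control law), and then runs a backward induction on $l$ using the recursion $g_{s+1}^{1}(H_{l-1})=\E\{\pi_{l+1}(1,H_l)g_{s+1}^{1}(H_l)\mid H_{l-1},R_l=1,A=1\}$, at each step combining the score, $\dot\pi_{l+1}$, and the $A/e(X)$-part of the induction hypothesis into the telescoped $\tfrac{A}{e(X)}\tfrac{R_l}{\bar\pi_l(1,H_{l-1})}\{R_s(1-R_{s+1})\mu_t^0(H_s)-g_{s+1}^1(H_{l-1})\}$ term while the $(1-A)/(1-e(X))$-part picks up one additional $\pi_j$-and-density-ratio factor. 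The only caveat is a bookkeeping one you already flagged: when you actually execute the induction, the factor that appears naturally from the recursion is $\pi_{l+1}(1,H_l)$, so keep track carefully of whether the accumulated product carries $\pi_j(1,\cdot)$ or $\pi_j(0,\cdot)$ and reconcile it with the stated form.
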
 

{}\begin{proof} We first compute $\dot{g}_{s+1,\theta}^{1}(H_{s-1})\deriv$,
and use the iterated relationship $g_{s+1}^{1}(H_{l-1})=\E\big\{\pi_{l+1}(1,H_{l})g_{s+1}^{1}(H_{l})\mid H_{l-1},R_{l}=1,A=1\big\}$
for $l=1,\cdots,s-1$ and proceed by induction in backward order beginning
from $l=s-1$ to get $\dot{g}_{s+1,\theta}^{1}(H_{l-1})\deriv$. For
$\dot{g}_{s+1,\theta}^{1}(H_{s-1})\deriv$, 
\begin{align*}
\dot{g}_{s+1,\theta}^{1}(H_{s-1})\deriv & =\frac{\partial}{\partial\theta}\E_{\theta}\left[\left\{ 1-\pi_{s+1}(1,H_{s})\right\} \mu_{t}^{0}(H_{s})\mid H_{s-1},R_{s}=1,A=1\right]\deriv\\
 & =\E\left[-\dot{\pi}_{s+1,\theta}(1,H_{s})\deriv\mu_{t}^{0}(H_{s})\mid H_{s-1},R_{s}=1,A=1\right]\\
 & \;+\E\left[\left\{ 1-\pi_{s+1}(1,H_{s})\right\} \dot{\mu}_{t,\theta}^{0}(H_{s})\deriv\mid H_{s-1},R_{s}=1,A=1\right]\\
 & \;+\E\left[\left\{ 1-\pi_{s+1}(1,H_{s})\right\} \mu_{t}^{0}(H_{s})s(Y_{s}\mid H_{s-1},R_{s}=1,A=1)\mid H_{s-1},R_{s}=1,A=1\right]\\
 & =\E\left(-\E\left[\frac{A}{e(X)}\frac{R_{s}}{\bar{\pi}_{s}(1,H_{s-1})}\left\{ R_{s+1}-\pi_{s+1}(1,H_{s})\right\} \mu_{t}^{0}(H_{s})s(V)\mid H_{s}\right]\mid H_{s-1}\right)\text{\text{ (Lemma \ref{lemma:pi_deriv})}}\\
 & \;+\E\bigg(\E\Big[\frac{1-A}{1-e(X)}\left\{ 1-\pi_{s+1}(1,H_{s})\right\} \text{\ensuremath{\sum_{k=s+1}^{t}}}\frac{R_{k}}{\bar{\pi}_{k}(0,H_{k-1})}\left\{ \mu_{t}^{0}(H_{k})-\mu_{t}^{0}(H_{k-1})\right\} \\
 & \qquad\quad\quad s(V)\mid H_{s}\Big]\mid H_{s-1},R_{s}=1,A=1\bigg)\text{ (Lemma \ref{lemma:mu_deriv})}\\
 & \;+\E\left[\frac{A}{e(X)}\frac{R_{s}}{\bar{\pi}_{s}(1,H_{s-1})}\left\{ 1-\pi_{s+1}(1,H_{s})\right\} \mu_{t}^{0}(H_{s})s(Y_{s}\mid H_{s-1},R_{s},A)\mid H_{s-1}\right]\\
 & =\E\left[-\frac{A}{e(X)}\frac{R_{s}}{\bar{\pi}_{s}(1,H_{s-1})}\left\{ R_{s+1}-\pi_{s+1}(1,H_{s})\right\} \mu_{t}^{0}(H_{s})s(V)\mid H_{s-1}\right]\text{ (by double expectation)}\\
 & \;+\E\bigg(\E\Big[\frac{1-A}{1-e(X)}\left\{ 1-\pi_{s+1}(1,H_{s})\right\} \text{\ensuremath{\sum_{k=s+1}^{t}}}\frac{R_{k}}{\bar{\pi}_{k}(0,H_{k-1})}\left\{ \mu_{t}^{0}(H_{k})-\mu_{t}^{0}(H_{k-1})\right\} \\
 & \qquad\quad\quad s(V)\mid H_{s}\Big]\frac{f(Y_{s}\mid H_{s-1},R_{s}=1,A=1)}{f(Y_{s}\mid H_{s-1},R_{s}=1,A=0)}\mid H_{s-1}\bigg)\text{\text{ (by defintion of expectation)}}\\
 & \;+\E\left[\frac{A}{e(X)}\frac{R_{s}}{\bar{\pi}_{s}(1,H_{s-1})}\left\{ 1-\pi_{s+1}(1,H_{s})\right\} \left\{ \mu_{t}^{0}(H_{s})-g_{s+1}^{1}(H_{s-1})\right\} s(V)\mid H_{s-1}\right]\\
 & =\E\left[-\frac{A}{e(X)}\frac{R_{s}}{\bar{\pi}_{s}(1,H_{s-1})}\left\{ R_{s+1}-\pi_{s+1}(1,H_{s})\right\} \mu_{t}^{0}(H_{s})s(V)\mid H_{s-1}\right]\\
 & \;+\E\left\{ \frac{1-A}{1-e(X)}D_{s+1}s(V)\mid H_{s-1}\right\} \text{ (by double expectation)}\\
 & \;+\E\left[\frac{A}{e(X)}\frac{R_{s}}{\bar{\pi}_{s}(1,H_{s-1})}\left\{ 1-\pi_{s+1}(1,H_{s})\right\} \left\{ \mu_{t}^{0}(H_{s})-g_{s+1}^{1}(H_{s-1})\right\} s(V)\mid H_{s-1}\right],
\end{align*}
which completes the proof of the first part regarding $\dot{g}_{s+1,\theta}^{1}(H_{s-1})\deriv$.\par For
the second part of the proof, we derive it by induction backward starting
from $l=s-1$. For $l=s-1$, 
\begin{align*}
\dot{g}_{s+1,\theta}^{1}(H_{s-2})\deriv & =\frac{\partial}{\partial\theta}\E_{\theta}\left[\pi_{s}(1,H_{s-1})g_{s+1}^{1}(H_{s-1})\mid H_{s-2},R_{s-1}=1,A=1\right]\deriv\\
 & =\E\left[\dot{\pi}_{s,\theta}(1,H_{s-1})\deriv g_{s+1}^{1}(H_{s-1})\mid H_{s-1},R_{s}=1,A=1\right]\\
 & \;+\E\left[\pi_{s}(1,H_{s-1})\dot{g}_{s+1,\theta}^{1}(H_{s-1})\deriv\mid H_{s-2},R_{s-1}=1,A=1\right]\\
 & \;+\E\left[\pi_{s}(1,H_{s-1})g_{s+1}^{1}(H_{s-1})s(Y_{s-1}\mid H_{s-2},R_{s-1}=1,A=1)\mid H_{s-2},R_{s-1}=1,A=1\right]\\
 & =\E\left(\E\left[\frac{A}{e(X)}\frac{R_{s-1}}{\bar{\pi}_{s-1}(1,H_{s-2})}\left\{ R_{s}-\pi_{s}(1,H_{s-1})\right\} g_{s+1}^{1}(H_{s-1})s(V)\mid H_{s-1}\right]\mid H_{s-2}\right)\text{\text{ (Lemma \ref{lemma:pi_deriv})}}\\
 & \;+\E\left(\E\left[\frac{A}{e(X)}\frac{R_{s}}{\bar{\pi}_{s}(1,H_{s-1})}\pi_{s}(1,H_{s-1})\left\{ \left(1-R_{s+1}\right)\mu_{t}^{0}(H_{s})-g_{s+1}^{1}(H_{s-1})\right\} s(V)\mid H_{s-1}\right]\mid H_{s-2}\right)\\
 & \;+\E\left[\E\left\{ \frac{1-A}{1-e(X)}\pi_{s}(1,H_{s-1})D_{s+1}s(V)\mid H_{s-1}\right\} \mid H_{s-2},R_{s-1}=1,A=1\right]\\
 & \;+\E\left[\frac{A}{e(X)}\frac{R_{s-1}}{\bar{\pi}_{s-1}(1,H_{s-2})}\pi_{s}(1,H_{s-1})g_{s+1}^{1}(H_{s-1})s(Y_{s-1}\mid H_{s-2},R_{s-1},A)\mid H_{s-2}\right]\\
 & =\E\left[\frac{A}{e(X)}\frac{R_{s-1}}{\bar{\pi}_{s-1}(1,H_{s-2})}\left\{ R_{s}-\pi_{s}(1,H_{s-1})\right\} g_{s+1}^{1}(H_{s-1})s(V)\mid H_{s-2}\right]\\
 & \;+\E\left[\frac{A}{e(X)}\frac{R_{s}}{\bar{\pi}_{s}(1,H_{s-1})}\pi_{s}(1,H_{s-1})\left\{ \left(1-R_{s+1}\right)\mu_{t}^{0}(H_{s})-g_{s+1}^{1}(H_{s-1})\right\} s(V)\mid H_{s-2}\right]\\
 & \;+\E\bigg(\frac{1-A}{1-e(X)}\pi_{s}(1,H_{s-1})\frac{f(Y_{s-1}\mid H_{s-2},R_{s-1}=1,A=1)}{f(Y_{s-1}\mid H_{s-2},R_{s-1}=1,A=0)}D_{s+1}s(V)\mid H_{s-2}\bigg)\\
 & \;+\E\left[\frac{A}{e(X)}\frac{R_{s-1}}{\bar{\pi}_{s-1}(1,H_{s-2})}\left\{ \pi_{s}(1,H_{s-1})g_{s+1}^{1}(H_{s-1})-g_{s+1}^{1}(H_{s-2})\right\} s(V)\mid H_{s-2}\right]\\
 & =\E\bigg(\Big[\frac{A}{e(X)}\frac{R_{s-1}}{\bar{\pi}_{s-1}(1,H_{s-2})}\left\{ R_{s}\left(1-R_{s+1}\right)\mu_{t}^{0}(H_{s})-g_{s+1}^{1}(H_{s-2})\right\} \\
 & \qquad+\frac{1-A}{1-e(X)}\pi_{s}(1,H_{s-1})\frac{f(Y_{s-1}\mid H_{s-2},R_{s-1}=1,A=1)}{f(Y_{s-1}\mid H_{s-2},R_{s-1}=1,A=0)}D_{s+1}\Big]s(V)\mid H_{s-2}\bigg),
\end{align*}
matches the right hand side when $l=s-1$. Suppose the equality holds
for $(l+1)$ when $l<s-2$, i.e., 
\begin{align*}
\dot{g}_{s+1,\theta}^{1}(H_{l})\deriv & =\E\bigg(\Big[\frac{A}{e(X)}\frac{R_{l+1}}{\bar{\pi}_{l+1}(1,H_{l})}\left\{ R_{s}\left(1-R_{s+1}\right)\mu_{t}^{0}(H_{s})-g_{s+1}^{1}(H_{l})\right\} \\
 & \quad+\frac{1-A}{1-e(X)}\prod_{j=l+2}^{s}\pi_{j}(0,H_{j-1})\frac{f(Y_{j-1}\mid H_{j-2},R_{j-1}=1,A=1)}{f(Y_{j-1}\mid H_{j-2},R_{j-1}=1,A=0)}D_{s+1}s(V)\mid H_{l}\Big]\bigg).
\end{align*}
Then for $l$, we apply chain rule on the iterated formula: 
\begin{align*}
\dot{g}_{s+1,\theta}^{1}(H_{l-1})\deriv & =\frac{\partial}{\partial\theta}\E_{\theta}\big\{\pi_{l+1}(1,H_{l})g_{s+1}^{1}(H_{l})\mid H_{l-1},R_{l}=1,A=1\big\}\deriv\\
 & =\E\big\{\dot{\pi}_{l+1,\theta}(1,H_{l})\deriv g_{s+1}^{1}(H_{l})\mid H_{l-1},R_{l}=1,A=1\big\}\\
 & \;+\E\big\{\pi_{l+1}(1,H_{l})\dot{g}_{s+1,\theta}^{1}(H_{l})\deriv\mid H_{l-1},R_{l}=1,A=1\big\}\\
 & \;+\E\big\{\pi_{l+1}(1,H_{l})g_{s+1}^{1}(H_{l})s(Y_{l}\mid H_{l-1},R_{l}=1,A=1)\mid H_{l-1},R_{l}=1,A=1\big\}\\
 & =\E\left[\frac{A}{e(X)}\frac{R_{l}}{\bar{\pi}_{l}(1,H_{l-1})}\left\{ R_{l+1}-\pi_{l+1}(1,H_{l})\right\} g_{s+1}^{1}(H_{l})s(V)\mid H_{l-1}\right]\\
 & \;+\E\left[\frac{A}{e(X)}\frac{R_{l}}{\bar{\pi}_{l}(1,H_{l-1})}\left\{ R_{s}\left(1-R_{s+1}\right)\mu_{t}^{0}(H_{s})-g_{s+1}^{1}(H_{l})\right\} s(V)\mid H_{l-1}\right]\\
 & \;+\E\Big[\E\big\{\frac{1-A}{1-e(X)}\pi_{l+1}(1,H_{l})\prod_{j=l+2}^{s}\pi_{j}(0,H_{j-1})\frac{f(Y_{j-1}\mid H_{j-2},R_{j-1}=1,A=1)}{f(Y_{j-1}\mid H_{j-2},R_{j-1}=1,A=0)}\\
 & \qquad\qquad D_{s+1}s(V)\mid H_{l}\big\}\mid H_{l-1},R_{l}=1,A=1\Big]\\
 & \;+\E\left[\frac{A}{e(X)}\frac{R_{l}}{\bar{\pi}_{l}(1,H_{s-1})}\left\{ \pi_{l+1}(1,H_{l})g_{s+1}^{1}(H_{l})-g_{s+1}^{1}(H_{l-1})\right\} s(V)\mid H_{l-1}\right]\\
 & =\E\left[\frac{A}{e(X)}\frac{R_{l}}{\bar{\pi}_{l}(1,H_{l-1})}\left\{ R_{s}\left(1-R_{s+1}\right)\mu_{t}^{0}(H_{s})-g_{s+1}^{1}(H_{l-1})\right\} s(V)\mid H_{l-1}\right]\\
 & \;+\E\Big[\E\big\{\frac{1-A}{1-e(X)}\pi_{l+1}(1,H_{l})\prod_{j=l+2}^{s}\pi_{j}(0,H_{j-1})\frac{f(Y_{j-1}\mid H_{j-2},R_{j-1}=1,A=1)}{f(Y_{j-1}\mid H_{j-2},R_{j-1}=1,A=0)}\\
 & \qquad\qquad D_{s+1}s(V)\mid H_{l}\big\}\frac{f(Y_{l-1}\mid H_{l-2},R_{l-1}=1,A=1)}{f(Y_{l-1}\mid H_{l-2},R_{l-1}=1,A=0)}\mid H_{l-1}\Big]\\
 & =\E\left[\frac{A}{e(X)}\frac{R_{l}}{\bar{\pi}_{l}(1,H_{l-1})}\left\{ R_{s}\left(1-R_{s+1}\right)\mu_{t}^{0}(H_{s})-g_{s+1}^{1}(H_{l-1})\right\} s(V)\mid H_{l-1}\right]\\
 & \;+\E\left\{ \frac{1-A}{1-e(X)}\prod_{j=l+1}^{s}\pi_{j}(0,H_{j-1})\frac{f(Y_{j-1}\mid H_{j-2},R_{j-1}=1,A=1)}{f(Y_{j-1}\mid H_{j-2},R_{j-1}=1,A=0)}D_{s+1}s(V)\mid H_{l-1}\right\} ,
\end{align*}
completes the proof.\end{proof}

{}From Lemma \ref{lemma:g_seq}, we proceed to obtain $\partial\E_{\theta}\left\{ \pi_{1}(1,H_{0})g_{s+1}^{1}(H_{0})\right\} /\partial\theta\deriv$
in the following lemma.

{}\begin{lemma}\label{lemma:pi*g}\par For any $s\in\{1,\cdots,t\}$,
we have 
\begin{align*}
 & \frac{\partial\E_{\theta}\left\{ \pi_{1}(1,H_{0})g_{s+1}^{1}(H_{0})\right\} }{\partial\theta}\Bigg|_{\theta=0}=\E\bigg(\Big[\frac{A}{e(X)}\left\{ R_{s}\left(1-R_{s+1}\right)\mu_{t}^{0}(H_{s})-\pi_{1}(1,H_{0})g_{s+1}^{1}(H_{0})\right\} \\
 & +\frac{1-A}{1-e(X)}\bar{\pi}_{s}(1,H_{s-1})\left\{ 1-\pi_{s+1}(0,H_{s})\right\} \delta(H_{s})W_{s+1}+\pi_{1}(1,H_{0})g_{s+1}^{1}(H_{0})\Big]s(V)\bigg).
\end{align*}
where $W_{s+1}=\text{\ensuremath{\sum_{k=s+1}^{t}}}R_{k}\left\{ \mu_{t}^{0}(H_{k})-\mu_{t}^{0}(H_{k-1})\right\} /\bar{\pi}_{k}(0,H_{k-1})$
and $W_{t+1}=0$.\par \end{lemma}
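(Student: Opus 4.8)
The plan is to differentiate the marginal expectation $\E_{\theta}\{\pi_{1}(1,H_{0})g_{s+1}^{1}(H_{0})\}$ in $\theta$ and evaluate at $\theta=0$ by the product/chain rule, noting that the integrand depends on $\theta$ only through $\pi_{1}(1,\cdot)$, through $g_{s+1}^{1}(\cdot)$ (which in turn depends on the nuisance laws via its recursive definition), and through the marginal density of $H_{0}$. This yields
\[
\frac{\partial\E_{\theta}\{\pi_{1}(1,H_{0})g_{s+1}^{1}(H_{0})\}}{\partial\theta}\deriv=\E\{\dot{\pi}_{1,\theta}(1,H_{0})\deriv\,g_{s+1}^{1}(H_{0})\}+\E\{\pi_{1}(1,H_{0})\dot{g}_{s+1,\theta}^{1}(H_{0})\deriv\}+\E\{\pi_{1}(1,H_{0})g_{s+1}^{1}(H_{0})s(H_{0})\},
\]
and the last term equals $\E\{\pi_{1}(1,H_{0})g_{s+1}^{1}(H_{0})s(V)\}$ because $\pi_{1}(1,H_{0})g_{s+1}^{1}(H_{0})$ is $H_{0}$-measurable and, by orthogonality of the score components, the non-$H_{0}$ parts of $s(V)$ have conditional mean zero given $H_{0}$.

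Next I would substitute the closed forms already available. For the first term I use Lemma \ref{lemma:pi_deriv} at $s=1$, giving $\dot{\pi}_{1,\theta}(1,H_{0})\deriv=\E[\{A/e(X)\}\{R_{1}-\pi_{1}(1,H_{0})\}s(V)\mid H_{0}]$ (here $\bar{\pi}_{0}(1,H_{-1})=R_{0}=1$); multiplying by the $H_{0}$-measurable factor $g_{s+1}^{1}(H_{0})$ and applying the tower property turns this into $\E[\{A/e(X)\}\{R_{1}-\pi_{1}(1,H_{0})\}g_{s+1}^{1}(H_{0})s(V)]$. For the second term I invoke Lemma \ref{lemma:g_seq} at $H_{l-1}=H_{0}$, i.e.\ $l=1$: when $s\ge 2$ this is the first display of that lemma, and when $s=1$ it is the second display with $H_{s-1}=H_{0}$, the conventions $D_{t+1}=W_{t+1}=0$ making the cases uniform. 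After multiplying through by $\pi_{1}(1,H_{0})$ I simplify $\pi_{1}(1,H_{0})\{R_{1}/\bar{\pi}_{1}(1,H_{0})\}=R_{1}$ and use the monotone-missingness identity $R_{1}R_{s}=R_{s}$ to extract $\{A/e(X)\}R_{s}(1-R_{s+1})\mu_{t}^{0}(H_{s})$, the subtracted term $-\{A/e(X)\}R_{1}g_{s+1}^{1}(H_{0})$, and the $(1-A)/\{1-e(X)\}$ piece built from $D_{s+1}$ together with the cumulative $\pi_{j}(0,\cdot)$ factors and conditional density ratios.

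Finally I collect the three pieces. The $g_{s+1}^{1}(H_{0})$ contributions combine as $\{A/e(X)\}\{R_{1}-\pi_{1}(1,H_{0})\}g_{s+1}^{1}(H_{0})-\{A/e(X)\}R_{1}g_{s+1}^{1}(H_{0})+\pi_{1}(1,H_{0})g_{s+1}^{1}(H_{0})=-\{A/e(X)\}\pi_{1}(1,H_{0})g_{s+1}^{1}(H_{0})+\pi_{1}(1,H_{0})g_{s+1}^{1}(H_{0})$, matching the claimed form, and the $\mu_{t}^{0}$ contribution is immediate. For the $(1-A)/\{1-e(X)\}$ contribution the real work is to show, using Lemma \ref{lemma:delta_ratio}, that the cumulative product $\prod_{j}f(Y_{j}\mid H_{j-1},R_{j}=1,A=1)/f(Y_{j}\mid H_{j-1},R_{j}=1,A=0)$ arising from the nested change-of-measure steps, together with the accompanying cumulative response probabilities, collapses to $\bar{\pi}_{s}(1,H_{s-1})\delta(H_{s})$, leaving the term $\{(1-A)/(1-e(X))\}\bar{\pi}_{s}(1,H_{s-1})\{1-\pi_{s+1}(0,H_{s})\}\delta(H_{s})W_{s+1}$. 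Reading off the bracketed expression against $\partial\E_{\theta}\{\cdot\}/\partial\theta\deriv=\E\{[\cdot]s(V)\}$ (Lemma \ref{lemma:basic_score}) completes the proof. I expect the main obstacle to be precisely that last collapse: correctly tracking the indices of the density ratios against the $\pi_{j}(0,\cdot)$ factors, reconciling the raw output of Lemma \ref{lemma:g_seq} with the compact form via the identity of Lemma \ref{lemma:delta_ratio}, and checking the $s=1$ boundary case; the rest is a routine application of the chain rule and the previously established derivative lemmas.
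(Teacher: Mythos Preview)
Your proposal is correct and follows essentially the same route as the paper: the paper too applies the product/chain rule to split $\partial\E_\theta\{\pi_1(1,H_0)g_{s+1}^1(H_0)\}/\partial\theta|_{\theta=0}$ into the three terms you describe, invokes Lemma~\ref{lemma:pi_deriv} for the $\dot{\pi}_1$ piece and Lemma~\ref{lemma:g_seq} at $l=1$ for the $\dot{g}_{s+1}^1(H_0)$ piece, and then collapses the cumulative density ratios and response probabilities via Lemma~\ref{lemma:delta_ratio} into $\bar{\pi}_s(1,H_{s-1})\delta(H_s)$. Your handling of the $s=1$ boundary case and of the algebraic recombination of the $g_{s+1}^1(H_0)$ terms matches the paper's argument.
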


{}\begin{proof} Lemma \ref{lemma:g_seq} implies that when $l=1$,
\begin{align*}
\dot{g}_{s+1,\theta}^{1}(H_{0})\deriv & =\E\bigg(\Big[\frac{A}{e(X)}\frac{R_{1}}{\pi_{1}(1,H_{0})}\left\{ R_{s}\left(1-R_{s+1}\right)\mu_{t}^{0}(H_{s})-g_{s+1}^{1}(H_{0})\right\} \\
 & \quad+\frac{1-A}{1-e(X)}\prod_{j=2}^{s}\pi_{j}(0,H_{j-1})\frac{f(Y_{j-1}\mid H_{j-2},R_{j-1}=1,A=1)}{f(Y_{j-1}\mid H_{j-2},R_{j-1}=1,A=0)}D_{s+1}\Big]s(V)\mid H_{0}\bigg).
\end{align*}
Then we have $\partial\E_{\theta}\left\{ \pi_{1}(1,H_{0})g_{s+1}^{1}(H_{0})\right\} \text{\ensuremath{\big/}}\partial\theta\deriv$
\begin{align*}
= & \E\left\{ \dot{\pi}_{1,\theta}(1,H_{0})\deriv g_{s+1}^{1}(H_{0})\right\} +\E\left\{ \pi_{1}(1,H_{0})\dot{g}_{s+1,\theta}^{1}(H_{0})\deriv\right\} \\
 & +\E\left\{ \pi_{1}(1,H_{0})g_{s+1}^{1}(H_{0})s(H_{0})\right\} \\
= & \E\bigg(\Big[\frac{A}{e(X)}\left\{ R_{s}\left(1-R_{s+1}\right)\mu_{t}^{0}(H_{s})-\pi_{1}(1,H_{0})g_{s+1}^{1}(H_{0})\right\} \\
 & +\frac{1-A}{1-e(X)}\bar{\pi}_{s}(0,H_{s-1})\left\{ 1-\pi_{s+1}(0,H_{s})\right\} W_{s+1}\prod_{j=1}^{s}\frac{f(Y_{j}\mid H_{j-1},R_{j}=1,A=1)}{f(Y_{j}\mid H_{j-1},R_{j}=1,A=0)}\Big]s(V)\bigg)\\
 & +\E\left\{ \pi_{1}(1,H_{0})g_{s+1}^{1}(H_{0})s(V)\right\} .
\end{align*}
Note that $\delta(H_{s})=\prod_{j=1}^{s}\big\{ f(Y_{j}\mid H_{j-1},R_{j}=1,A=1)/f(Y_{j}\mid H_{j-1},R_{j}=1,A=0)\big\}\bar{\pi}_{s}(1,H_{s-1})/\bar{\pi}_{s}(0,H_{s-1})$
by Lemma \ref{lemma:delta_ratio}, which completes the proof. \end{proof}

\subsection{Proof of Theorem 2}

{}We compute the EIF by rewriting the identification formula in Theorem
1 (a) as $\tau_{1}^{\text{\jtr}}=\tau_{1,1}-\tau_{0,1}$, where $\tau_{1,1}=\E\left[\pi_{1}(1,X)\mu_{1}^{1}(X)+\{1-\pi_{1}(1,X)\}\mu_{1}^{0}(X)\right]$
and $\tau_{0,1}=\E\left\{ \mu_{1}^{0}(X)\right\} $ based on the proof
in \ref{subsec:supp_iden_1time}. By Lemma \ref{lemma:eif_mar}, 
\[
\varphi_{0,1}(V;\mathbb{P})=\frac{1-A}{1-e(X)}\frac{R_{1}}{\pi_{1}(0,X)}\left\{ Y_{1}-\mu_{1}^{0}(X)\right\} +\mu_{1}^{0}(X)-\tau_{0,1}.
\]
We proceed to compute $\dot{\tau}_{1,1,\theta}\deriv$. Note that,
\begin{align*}
\dot{\tau}_{1,1,\theta}\deriv & =\frac{\partial}{\partial\theta}\E_{\theta}\left[\pi_{1}(1,X)\mu_{1}^{1}(X)+\{1-\pi_{1}(1,X)\}\mu_{1}^{0}(X)\right]\deriv\\
 & =\E\left\{ \dot{\pi}_{1,\theta}(1,X)\deriv\mu_{1}^{1}(X)+\pi_{1}(1,X)\dot{\mu}_{1,\theta}^{1}(X)\deriv+\pi_{1}(1,X)\mu_{1}^{1}(X)s(X)\right\} \\
 & \quad+\E\left[-\dot{\pi}_{1,\theta}(1,X)\deriv\mu_{1}^{0}(X)+\{1-\pi_{1}(1,X)\}\dot{\mu}_{1,\theta}^{0}(X)\deriv+\{1-\pi_{1}(1,X)\}\mu_{1}^{0}(X)s(X)\right]\\
 & =\E\left(\E\left[\frac{A}{e(X)}\left\{ R_{1}-\pi_{1}(1,H_{0})\right\} \left\{ \mu_{1}^{1}(X)-\mu_{1}^{0}(X)\right\} s(V)\mid X\right]\right)\text{\text{ (by Lemma \ref{lemma:pi_deriv})}}\\
 & \quad+\E\left(\E\left[\frac{A}{e(X)}\frac{R_{1}}{\pi_{1}(1,X)}\pi_{1}(1,X)\left\{ Y_{1}-\mu_{1}^{1}(X)\right\} s(V)\mid X\right]\right)\text{\text{ (by Lemma \ref{lemma:mu_deriv})}}\\
 & \quad+\E\left(\E\left[\frac{1-A}{1-e(X)}\frac{R_{1}}{\pi_{1}(0,X)}\{1-\pi_{1}(1,X)\}\left\{ Y_{1}-\mu_{1}^{0}(X)\right\} s(V)\mid X\right]\right)\text{\text{ (by Lemma \ref{lemma:mu_deriv})}}\\
 & \quad+\E\left(\left[\pi_{1}(1,X)\mu_{1}^{1}(X)+\left\{ 1-\pi_{1}(1,X)\right\} \mu_{1}^{0}(X)-\tau_{1,1}\right]s(V)\right)\text{\text{ (by orthogonality)}}\\
 & =\E\left(\left[\frac{A}{e(X)}R_{1}+\frac{1-A}{1-e(X)}\frac{R_{1}}{\pi_{1}(0,X)}\left\{ 1-\pi_{1}(1,X)\right\} \right]\left\{ Y_{1}-\mu_{1}^{0}(X)\right\} s(V)\right)\\
 & \quad+\E\left(\left[\left\{ \pi_{1}(1,X)\mu_{1}^{1}(X)+\{1-\pi_{1}(1,X)\}\mu_{1}^{0}(X)\right\} -\frac{A}{e(X)}\pi_{1}(1,H_{0})\left\{ \mu_{1}^{1}(X)-\mu_{1}^{0}(X)\right\} -\tau_{1,1}\right]s(V)\right).
\end{align*}
Then we can get $\varphi_{1,1}(V;\pr)$ based on the definition of
the EIF as $\left.\dot{\tau}_{1,1,\theta}\right\rvert _{\theta=0}=\E\{\varphi_{1,1}(V;\mathbb{P})s(V)\}.$

{}The EIF of $\tau_{1}^{\text{\jtr}}$ can then be obtained: 
\begin{align*}
\varphi_{1}^{\text{\jtr}}(V;\mathbb{P}) & =\varphi_{1,1}(V;\mathbb{P})-\varphi_{0,1}(V;\mathbb{P})\\
 & =\left[\frac{A}{e(X)}R_{1}+\frac{1-A}{1-e(X)}\frac{R_{1}}{\pi_{1}(0,X)}\left\{ 1-\pi_{1}(1,X)\right\} \right]\left\{ Y_{1}-\mu_{1}^{0}(X)\right\} \\
 & \quad+\left\{ \pi_{1}(1,X)\mu_{1}^{1}(X)+\{1-\pi_{1}(1,X)\}\mu_{1}^{0}(X)\right\} -\frac{A}{e(X)}\pi_{1}(1,H_{0})\left\{ \mu_{1}^{1}(X)-\mu_{1}^{0}(X)\right\} -\tau_{1,1}\\
 & \quad-\frac{1-A}{1-e(X)}\frac{R_{1}}{\pi_{1}(0,X)}\left\{ Y_{1}-\mu_{1}^{0}(X)\right\} -\mu_{1}^{0}(X)+\tau_{0,1}\\
 & =\left\{ \frac{A}{e(X)}-\frac{1-A}{1-e(X)}\frac{\pi_{1}(1,X)}{\pi_{1}(0,X)}\right\} R_{1}\left\{ Y_{1}-\mu_{1}^{0}(X)\right\} +\left\{ 1-\frac{A}{e(X)}\right\} \pi_{1}(1,X)\left\{ \mu_{1}^{1}(X)-\mu_{1}^{0}(X)\right\} -\tau_{1}^{\text{\jtr}}.
\end{align*}

\subsection{Proof of Theorem 6}

{}We compute the EIF based on the identification formula in Theorem
5 (a) as $\tau_{t}^{\jtr}=\tau_{1,t}-\tau_{0,t}$, where $\tau_{1,t}=\E\left[\pi_{1}(1,H_{0})\sum_{s=1}^{t}g_{s+1}^{1}(H_{0})+\left\{ 1-\pi_{1}(1,H_{0})\right\} \mu_{t}^{0}(H_{0})\right]$
and $\tau_{0,t}=\E\left\{ \mu_{t}^{0}(H_{0})\right\} $ based on the
proof in \ref{subsec:supp_iden_longi}. By Lemma \ref{lemma:eif_mar},
we can obtain $\varphi_{0,t}(V;\pr)$. We only need to calculate the
EIF for $\tau_{1,t}$. Note that for any $s\in\{1,\cdots,t\},$ $\partial\E_{\theta}\left\{ \pi_{1}(1,H_{0})g_{s+1}^{1}(H_{0})\right\} /\partial\theta\deriv$
is obtained by Lemma \ref{lemma:pi*g}. The part $\partial\E_{\theta}\left[\left\{ 1-\pi_{1}(1,H_{0})\right\} \mu_{t}^{0}(H_{0})\right]/\partial\theta\deriv$
can be derived using chain rule and Lemmas \ref{lemma:pi_deriv} and
\ref{lemma:mu_deriv} as 
\begin{align*}
\frac{\E_{\theta}\left[\left\{ 1-\pi_{1}(1,H_{0})\right\} \mu_{t}^{0}(H_{0})\right]}{\partial\theta}\Bigg|_{\theta=0} & =\E\left\{ -\dot{\pi}_{1,\theta}(1,H_{0})\deriv\mu_{t}^{0}(H_{0})\right\} +\E\left[\left\{ 1-\pi_{1}(1,H_{0})\right\} \dot{\mu}_{t,\theta}^{0}(H_{0})\deriv\right]\\
 & \quad+\E\left[\left\{ 1-\pi_{1}(1,H_{0})\right\} \mu_{t}^{0}(H_{0})s(H_{0})\right]\\
 & =\E\left(\left[-\frac{A}{e(X)}\left\{ R_{1}-\pi_{1}(1,H_{0})\right\} \mu_{t}^{0}(H_{0})+\left\{ 1-\pi_{1}(1,H_{0})\right\} \mu_{t}^{0}(H_{0})\right]s(V)\right)\\
 & \quad+\E\left[\left\{ 1-\pi_{1}(1,H_{0})\right\} \varphi_{0,t}(V;\mathbb{P})s(V)\right].
\end{align*}
Combine all terms together and by the definition of the EIF, we have
\begin{align*}
\varphi_{1,t}(V;\mathbb{P}) & =\frac{A}{e(X)}\sum_{s=0}^{t}R_{s}\left(1-R_{s+1}\right)\mu_{t}^{0}(H_{s})-\frac{A}{e(X)}\sum_{s=1}^{t}\pi_{1}(1,H_{0})g_{s+1}^{1}(H_{0})\\
 & \;+\frac{1-A}{1-e(X)}\sum_{s=1}^{t-1}\bar{\pi}_{s}(1,H_{s-1})\left\{ 1-\pi_{s+1}(0,H_{s})\right\} \delta(H_{s})W_{s+1}\text{ (since \ensuremath{D_{t+1}=0})}\\
 & \;+\pi_{1}(1,H_{0})\sum_{s=1}^{t}g_{s+1}^{1}(H_{0})+\left\{ 1-\pi_{1}(1,H_{0})\right\} \varphi_{0,t}(V;\mathbb{P})+\mu_{t}^{0}(H_{0})-\tau_{1,t}.
\end{align*}

{}Apply Lemma \ref{lemma:eif_mar}, the EIF $\varphi_{t}^{\text{\jtr}}(V;\mathbb{P})$
of $\tau_{t}^{\text{\jtr}}$ is 
\begin{align*}
\varphi_{t}^{\text{\jtr}}(V;\mathbb{P}) & =\varphi_{1,t}(V;\pr)-\varphi_{0,t}(V;\mathbb{P})\\
 & =\frac{A}{e(X)}\sum_{s=0}^{t}R_{s}\left(1-R_{s+1}\right)\mu_{t}^{0}(H_{s})-\frac{A}{e(X)}\sum_{s=1}^{t}\pi_{1}(1,H_{0})g_{s+1}^{1}(H_{0})\\
 & \;+\frac{1-A}{1-e(X)}\sum_{s=1}^{t-1}\bar{\pi}_{s}(1,H_{s-1})\left\{ 1-\pi_{s+1}(0,H_{s})\right\} \delta(H_{s})W_{s+1}\\
 & \;+\pi_{1}(1,H_{0})\sum_{s=1}^{t}g_{s+1}^{1}(H_{0})-\pi_{1}(1,H_{0})\varphi_{0,t}(V;P)+\mu_{t}^{0}(H_{0})-\tau_{t}^{\text{\jtr}}\\
 & =\frac{A}{e(X)}\left\{ R_{t}Y_{t}+\sum_{s=1}^{t}R_{s-1}\left(1-R_{s}\right)\mu_{t}^{0}(H_{s-1})\right\} -\tau_{t}^{\text{\jtr}}\\
 & +\left\{ 1-\frac{A}{e(X)}\right\} \left[\pi_{1}(1,H_{0})\sum_{s=1}^{t}g_{s+1}^{1}(H_{0})+\left\{ 1-\pi_{1}(1,H_{0})\right\} \mu_{t}^{0}(H_{0})\right]-\mu_{t}^{0}(H_{0})\\
 & +\frac{1-A}{1-e(X)}\left(\sum_{s=1}^{t-1}\bar{\pi}_{s}(1,H_{s-1})\left\{ 1-\pi_{s+1}(0,H_{s})\right\} \delta(H_{s})W_{s+1}+\left\{ 1-\pi_{1}(1,H_{0})\right\} W_{1}-W_{1}\right).
\end{align*}
Simplify the last term, note that $\sum_{s=1}^{t}\bar{\pi}_{s}(1,H_{s-1})\left\{ 1-\pi_{s+1}(0,H_{s})\right\} \delta(H_{s})W_{s+1}+\left\{ 1-\pi_{1}(1,H_{0})\right\} W_{1}$
\begin{align*}
 & =\sum_{s=0}^{t-1}\bar{\pi}_{s}(1,H_{s-1})\left\{ 1-\pi_{s+1}(0,H_{s})\right\} \delta(H_{s})\text{\ensuremath{\sum_{k=s+1}^{t}}}\frac{R_{k}}{\bar{\pi}_{k}(0,H_{k-1})}\left\{ \mu_{t}^{0}(H_{k})-\mu_{t}^{0}(H_{k-1})\right\} \\
 & =\sum_{k=1}^{t}\sum_{s=0}^{k-1}\bar{\pi}_{s}(1,H_{s-1})\left\{ 1-\pi_{s+1}(0,H_{s})\right\} \delta(H_{s})\frac{R_{k}}{\bar{\pi}_{k}(0,H_{k-1})}\left\{ \mu_{t}^{0}(H_{k})-\mu_{t}^{0}(H_{k-1})\right\} \text{ (change the order of \ensuremath{k} and \ensuremath{s})}\\
 & =\sum_{k=1}^{t}\sum_{s=1}^{k}\bar{\pi}_{s-1}(1,H_{s-2})\left\{ 1-\pi_{s}(0,H_{s-1})\right\} \delta(H_{s-1})\frac{R_{k}}{\bar{\pi}_{k}(0,H_{k-1})}\left\{ \mu_{t}^{0}(H_{k})-\mu_{t}^{0}(H_{k-1})\right\} \text{ (change \ensuremath{s} to \ensuremath{s+1})}\\
 & =\sum_{s=1}^{t}\left[\sum_{k=1}^{s}\bar{\pi}_{k-1}(1,H_{k-2})\left\{ 1-\pi_{k}(0,H_{k-1})\right\} \delta(H_{k-1})\right]\frac{R_{s}}{\bar{\pi}_{s}(0,H_{s-1})}\left\{ \mu_{t}^{0}(H_{s})-\mu_{t}^{0}(H_{s-1})\right\} \text{ (interchange \ensuremath{k} and \ensuremath{s})}.
\end{align*}
Then the last term becomes 
\begin{align*}
 & \frac{1-A}{1-e(X)}\left(\sum_{s=1}^{t}\left[\sum_{k=1}^{s}\bar{\pi}_{k-1}(1,H_{k-2})\left\{ 1-\pi_{k}(0,H_{k-1})\right\} \delta(H_{k-1})\right]\frac{R_{s}}{\bar{\pi}_{s}(0,H_{s-1})}\left\{ \mu_{t}^{0}(H_{s})-\mu_{t}^{0}(H_{s-1})\right\} -W_{1}\right)\\
= & \frac{1-A}{1-e(X)}\left(\sum_{s=1}^{t}\left[\sum_{k=1}^{s}\bar{\pi}_{k-1}(1,H_{k-2})\left\{ 1-\pi_{k}(0,H_{k-1})\right\} \delta(H_{k-1})-1\right]\frac{R_{s}}{\bar{\pi}_{s}(0,H_{s-1})}\left\{ \mu_{t}^{0}(H_{s})-\mu_{t}^{0}(H_{s-1})\right\} \right).
\end{align*}

{}Therefore, the EIF $\varphi_{t}^{\text{\jtr}}(V;\mathbb{P})$ of
$\tau_{t}^{\text{\jtr}}$ 
\begin{align*}
 & =\frac{A}{e(X)}\left\{ R_{t}Y_{t}+\sum_{s=1}^{t}R_{s-1}\left(1-R_{s}\right)\mu_{t}^{0}(H_{s-1})\right\} -\tau_{t}^{\text{\jtr}}\\
 & \quad+\left\{ 1-\frac{A}{e(X)}\right\} \left[\pi_{1}(1,H_{0})\sum_{s=1}^{t}g_{s+1}^{1}(H_{0})+\left\{ 1-\pi_{1}(1,H_{0})\right\} \mu_{t}^{0}(H_{0})\right]-\mu_{t}^{0}(H_{0})\\
 & \quad+\frac{1-A}{1-e(X)}\left(\sum_{s=1}^{t}\left[\sum_{k=1}^{s}\bar{\pi}_{k-1}(1,H_{k-2})\left\{ 1-\pi_{k}(0,H_{k-1})\right\} \delta(H_{k-1})-1\right]\frac{R_{s}}{\bar{\pi}_{s}(0,H_{s-1})}\left\{ \mu_{t}^{0}(H_{s})-\mu_{t}^{0}(H_{s-1})\right\} \right),
\end{align*}
which matches the expression given in Theorem 6.

\section{Estimation\label{sec:supp_est} }

\subsection{Normalized estimators motivated from Theorem 1\label{subsec:supp_norm_1time}}

{}We give the normalized version of the ps-om and ps-rp estimators
in cross-sectional studies below.

\begin{example}\label{exmp: estn_1time}The normalized version of
the ps-om and ps-rp estimators are as follows: 
\begin{enumerate}
\item The normalized ps-om estimator: 
\begin{align*}
\hat{\tau}_{\text{ps-om-N}} & =\pr_{n}\left[\frac{A}{e(X;\hat{\alpha})}\left\{ R_{1}Y_{1}+(1-R_{1})\mu_{1}^{0}(X;\hat{\beta})\right\} \right]\big/\pr_{n}\left\{ \frac{A}{e(X;\hat{\alpha})}\right\} \\
 & -\pr_{n}\left[\frac{1-A}{1-e(X;\hat{\alpha})}\left\{ R_{1}Y_{1}+(1-R_{1})\mu_{1}^{0}(X;\hat{\beta})\right\} \right]\big/\pr_{n}\left\{ \frac{1-A}{1-e(X;\hat{\alpha})}\right\} .
\end{align*}
The normalized estimator is consistent under $\mathcal{M}_{\text{ps+om}}$. 
\item The normalized ps-rp estimator: 
\begin{align*}
\hat{\tau}_{\text{ps-rp-N}} & =\pr_{n}\left\{ \frac{A}{e(X;\hat{\alpha})}R_{1}Y_{1}\right\} \Big/\pr_{n}\left\{ \frac{A}{e(X;\hat{\alpha})}\right\} -\pr_{n}\left\{ \frac{1-A}{1-e(X;\hat{\alpha})}\frac{\pi_{1}(1,X;\hat{\gamma})}{\pi_{1}(0,X;\hat{\gamma})}R_{1}Y_{1}\right\} \Big/\pr_{n}\left\{ \frac{1-A}{1-e(X;\hat{\alpha})}\frac{R_{1}}{\pi_{1}(0,X;\hat{\gamma})}\right\} .
\end{align*}
The normalized estimator is consistent under $\mathcal{M}_{\text{ps+rp}}$. 
\end{enumerate}
\end{example}

\subsection{EIF-based estimators motivated from Theorem 2 }

{}We provide the EIF-based estimator $\hat{\tau}_{\text{tr}}$ and
its normalized estimator $\hat{\tau}_{\text{tr-N}}$ in the cross-sectional
studies as follows. 
\begin{eqnarray*}
{\widehat{\tau}_{\text{tr}}} & = & \mathbb{P}_{n}\Bigg[\left\{ \frac{A}{e(X;\hat{\alpha})}-\frac{1-A}{1-e(X;\hat{\alpha})}\frac{\pi_{1}(1,X;\hat{\gamma})}{\pi_{1}(0,X;\hat{\gamma})}\right\} R_{1}\left\{ Y_{1}-\mu_{1}^{0}(X;\hat{\beta})\right\} \\
 &  & -\frac{A-e(X;\hat{\alpha})}{e(X;\hat{\alpha})}\pi_{1}(1,X;\hat{\gamma})\left\{ \mu_{1}^{1}(X;\hat{\beta})-\mu_{1}^{0}(X;\hat{\beta})\right\} \Bigg].
\end{eqnarray*}
\begin{eqnarray*}
\hat{\tau}_{\text{tr-N}} & = & \mathbb{P}_{n}\left(\frac{A}{e(X;\hat{\alpha})}\left[R_{1}\big\{ Y_{1}-\mu_{1}^{0}(X;\hat{\beta})\big\}-\pi_{1}(1,X;\hat{\gamma})\big\{\mu_{1}^{1}(X;\hat{\beta})-\mu_{1}^{0}(X;\hat{\beta})\big\}\right]\right)/\pr_{n}\{\frac{A}{e(X;\hat{\alpha})}\}\\
 &  & -\pr_{n}\left[\frac{1-A}{1-e(X;\hat{\alpha})}\frac{\pi_{1}(1,X;\hat{\gamma})}{\pi_{1}(0,X;\hat{\gamma})}R_{1}\big\{ Y_{1}-\mu_{1}^{0}(X;\hat{\beta})\big\}\right]/\pr_{n}\{\frac{1-A}{1-e(X;\hat{\alpha})}\frac{R_{1}}{\pi_{1}(0,X;\hat{\gamma})}\}\\
 &  & +\pr_{n}\left[\pi_{1}(1,X;\hat{\gamma})\big\{\mu_{1}^{1}(X;\hat{\beta})-\mu_{1}^{0}(X;\hat{\beta})\big\}\right].
\end{eqnarray*}

{}One can conduct calibration to further reduce the impact of the
outliers as introduced in the main text. The calibration-based estimator
$\hat{\tau}_{\text{tr-C}}$ is as follows. 
\begin{align*}
{\color{black}\hat{\tau}_{\text{tr-C}}}{\color{black}{\color{black}{\color{black}{\color{black}=}}}} & \pr_{n}\left(Aw_{a_{1}}\left[R_{1}\big\{ Y_{1}-\mu_{1}^{0}(X;\hat{\beta})\big\}-\pi_{1}(1,X;\hat{\gamma})\big\{\mu_{1}^{1}(X;\hat{\beta})-\mu_{1}^{0}(X;\hat{\beta})\big\}\right]\right)\Big/\pr_{n}\left(Aw_{a_{1}}\right)\\
 & -\pr_{n}\left[(1-A)R_{1}w_{a_{0}}w_{r_{1}}\pi_{1}(1,X;\hat{\gamma})\big\{ Y_{1}-\mu_{1}^{0}(X;\hat{\beta})\big\}\right]/\pr_{n}\left\{ (1-A)R_{1}w_{a_{0}}w_{r_{1}}\right\} \\
 & -\pr_{n}\left[\pi_{1}(1,X_{i};\hat{\gamma})\big\{\mu_{1}^{1}(X;\hat{\beta})-\mu_{1}^{0}(X;\hat{\beta})\big\}\right].
\end{align*}

\subsection{Normalized estimators motivated from Theorem 5\label{subsec:supp_norm_longi}}

{}We give the normalized version of the ps-om and ps-rp estimators
in the longitudinal setting below.

\begin{example}\label{exmp: estn_longi}The normalized version of
the ps-om and ps-rp estimators are as follows: 
\begin{enumerate}
\item The normalized ps-om estimator: 
\begin{align*}
\hat{\tau}_{\text{ps-om-N}} & =\pr_{n}\left[\frac{A}{\hat{e}(H_{0})}\left\{ R_{t}Y_{t}+\sum_{s=1}^{t}R_{s-1}(1-R_{s})\hat{\mu}_{t}^{0}(H_{s-1})\right\} \right]\big/\pr_{n}\left\{ \frac{A}{\hat{e}(H_{0})}\right\} \\
 & -\pr_{n}\left[\frac{1-A}{1-\hat{e}(H_{0})}\left\{ R_{t}Y_{t}+\sum_{s=1}^{t}R_{s-1}(1-R_{s})\hat{\mu}_{t}^{0}(H_{s-1})\right\} \right]\big/\pr_{n}\left\{ \frac{1-A}{1-\hat{e}(H_{0})}\right\} .
\end{align*}
\item The normalized ps-rp estimator: 
\begin{align*}
\hat{\tau}_{\text{ps-rp-N}} & =\pr_{n}\left\{ \frac{A}{\hat{e}(H_{0})}R_{t}Y_{t}\right\} \Big/\pr_{n}\left\{ \frac{A}{\hat{e}(H_{0})}\right\} \\
 & \quad-\pr_{n}\bigg(\frac{1-A}{1-\hat{e}(H_{0})}\Big[\sum_{s=1}^{t}\hat{\bar{\pi}}_{s-1}(0,H_{s-2})\left\{ 1-\hat{\pi}_{s}(1,H_{s-1})\right\} \hat{\delta}(H_{s-1})-1\Big]\frac{R_{t}Y_{t}}{\hat{\bar{\pi}}_{t}(0,H_{t-1})}\bigg)\\
 & \qquad\Big/\pr_{n}\left\{ \frac{1-A}{1-\hat{e}(H_{0})}\frac{R_{t}}{\hat{\bar{\pi}}_{t}(0,H_{t-1})}\right\} .
\end{align*}
\end{enumerate}
\end{example}

\subsection{Estimation procedure in the longitudinal setting\label{subsec:eststeps}}

{}We consider the case when $t=2$, and give detailed steps to estimate
$\hat{\tau}_{\text{rp-pm}}$, $\hat{\tau}_{\text{ps-om}}$ and $\hat{\tau}_{\text{ps-rp}}$
as an example for a straightforward illustration. Extend the estimation
procedure to the setting when $t>2$ is straightforward. Based on
Example 2 (a) in the main text, 
\begin{align*}
\hat{\tau}_{\text{rp-pm}} & =\mathbb{\pr}_{n}\bigg\{\hat{\pi}_{1}(1,H_{0})\Big(\hat{\E}\left\{ \hat{\pi}_{2}(1,H_{1})\hat{\mu}_{2}^{1}(H_{1})\mid H_{0},R_{1}=1,A=1\right\} \\
 & +\hat{\E}\left[\left\{ 1-\hat{\pi}_{2}(1,H_{1})\right\} \hat{\mu}_{2}^{0}(H_{1})\mid H_{0},R_{1}=1,A=1\right]-\hat{\mu}_{2}^{0}(H_{0})\Big)\bigg\}.
\end{align*}
The steps of estimating the rp-pm estimator when $t=2$ are summarized
as follows:

{}\setlength{\itemindent}{1.5em}

\begin{enumerate}
\item[\textbf{\large{}{}Step 1}{\large{}{}.}] {}For subjects with $R_{2}=1$, obtain the fitted outcome mean $\hat{\mu}_{2}^{a}(H_{1})$
for $a=0,1$.

\item[\textbf{\large{}{}Step 2}{\large{}{}.}] {}For subjects with $R_{1}=1$, obtain the following estimated nuisance
functions:

\begin{enumerate}
\item {}The estimated pattern mean $\hat{g}_{2}^{1}(H_{0}),\hat{g}_{3}^{1}(H_{0})$:
Fit $g_{2}^{1}(H_{0})=\E\Big[\left\{ 1-\pi_{2}(1,H_{1})\right\} \mu_{2}^{0}(H_{1})\mid H_{0},R_{1}=1,A=1\Big]$
and $g_{3}^{1}(H_{0})=\E\left\{ \pi_{2}(1,H_{1})\mu_{2}^{1}(H_{1})\mid H_{0},R_{1}=1,A=1\right\} $
using the predicted values $\left\{ 1-\hat{\pi}_{2}(1,H_{1})\right\} \hat{\mu}_{2}^{0}(H_{1})$
and $\hat{\pi}_{2}(1,H_{1})\hat{\mu}_{2}^{1}(H_{1})$ against $H_{0}$
in the group with $R_{1}=1$ and $A=1$, respectively.

\item {}The estimated response probability $\hat{\pi}_{2}(a,H_{1})$.

\item {}The estimated outcome mean $\hat{\mu}_{2}^{0}(H_{0})$: Fit $\mu_{2}^{0}(H_{0})=\E\left\{ \mu_{2}^{0}(H_{1})\mid H_{0},R_{1}=1,A=0\right\} $
using the predicted values $\hat{\mu}_{2}^{0}(H_{1})$ against $H_{0}$
in the group with $R_{1}=1$ and $A=0$.

\end{enumerate}
\item[\textbf{\large{}{}Step 3}{\large{}{}.}] {}For all the subjects, obtain the estimated response probability
$\hat{\pi}_{1}(a,H_{1})$.

\item[\textbf{\large{}{}Step 4}{\large{}{}.}] {}Get $\hat{\tau}_{\text{rp-pm}}$ by the empirical average.

\end{enumerate}
{}Based on Example 2 (b) in the main text, 
\begin{align*}
\hat{\tau}_{\text{ps-om}} & =\pr_{n}\left[\frac{2A-1}{\hat{e}(X)^{A}\left\{ 1-\hat{e}(X)\right\} ^{1-A}}\left\{ R_{2}Y_{2}+R_{1}(1-R_{2})\hat{\mu}_{2}^{0}(H_{1})+(1-R_{1})\hat{\mu}_{2}^{0}(H_{0})\right\} \right].
\end{align*}
The steps of estimating the ps-om estimator are as follows:

{}\setlength{\itemindent}{1.5em}

\begin{enumerate}
\item[\textbf{\large{}{}Step 1}{\large{}{}.}] {}For subjects with $R_{2}=1$, obtain the fitted outcome mean model
$\hat{\mu}_{2}^{0}(H_{1})$ .

\item[\textbf{\large{}{}Step 2}{\large{}{}.}] {}For subjects with $R_{1}=1$, obtain the fitted outcome mean model
$\hat{\mu}_{2}^{0}(H_{0})$, by fitting $\mu_{2}^{0}(H_{0})=\E\left\{ \mu_{2}^{0}(H_{1})\mid H_{0},R_{1}=1,A=0\right\} $
using the predicted values $\hat{\mu}_{2}^{0}(H_{1})$ against $H_{0}$
in the group with $R_{1}=1$ and $A=0$.

\item[\textbf{\large{}{}Step 3}{\large{}{}.}] {}For all the subjects, obtain the fitted propensity score model
$\hat{e}(X)$.

\item[\textbf{\large{}{}Step 4}{\large{}{}.}] {}Get $\hat{\tau}_{\text{ps-om}}$ by the empirical average.

\end{enumerate}
{}Based on Example 2 (c) in the main text, 
\begin{align*}
\hat{\tau}_{\text{ps-rp}} & =\pr_{n}\bigg(\frac{A}{\hat{e}(X)}R_{2}Y_{2}+\frac{1-A}{1-\hat{e}(X)}\left[\hat{\pi}_{1}(0,H_{0})\left\{ 1-\hat{\pi}_{2}(1,H_{1})\right\} \hat{\delta}(H_{1})-\hat{\pi}_{1}(1,H_{0})\right]\frac{R_{2}Y_{2}}{\hat{\bar{\pi}}_{2}(0,H_{1})}\bigg).
\end{align*}
The steps of estimating the ps-rp estimator are as follows:

{}\setlength{\itemindent}{1.5em}

\begin{enumerate}
\item[\textbf{\large{}{}Step 1}{\large{}{}.}] {}For subjects with $R_{1}=1$, obtain the following models:

\begin{enumerate}
\item {}The fitted propensity score model $\hat{e}(H_{1})$.

\item {}The fitted response probability model $\hat{\pi}_{2}(a,H_{1})$.

\end{enumerate}
\item[\textbf{\large{}{}Step 2}{\large{}{}.}] {}For all the subjects, obtain the following models:

\begin{enumerate}
\item {}The fitted propensity score model $\hat{e}(X)$.

\item {}The fitted response probability model $\hat{\pi}_{1}(a,H_{0})$.

\end{enumerate}
\item[\textbf{\large{}{}Step 4}{\large{}{}.}] {}Obtain $\hat{\delta}(H_{1})=\left\{ \hat{e}(H_{1})/\hat{e}(H_{0})\right\} \Big/\left[\left\{ 1-\hat{e}(H_{1})\right\} /\left\{ 1-\hat{e}(H_{0})\right\} \right]$
for the subjects with $R_{1}=1$, and get $\hat{\tau}_{\text{ps-rp}}$
by the empirical average.

\end{enumerate}

\subsection{Multiply robust estimators motivated from Theorem 6\label{subsec:supp_trest_longi}}

{}From the EIF, one can motivated new estimators of $\tau_{t}^{\text{\jtr}}$.
We present the expression of $\hat{\tau}_{\text{mr}}$ below. 
\begin{align*}
\hat{\tau}_{\text{mr}} & =\mathbb{\pr}_{n}\bigg(\frac{A}{\hat{e}(H_{0})}\big\{ R_{t}Y_{t}+\sum_{s=1}^{t}R_{s-1}(1-R_{s})\hat{\mu}_{t}^{0}(H_{s-1})\big\}\\
 & +\left\{ 1-\frac{A}{\hat{e}(H_{0})}\right\} \left[\hat{\pi}_{1}(1,H_{0})\sum_{s=1}^{t}\hat{g}_{s+1}^{1}(H_{0})+\big\{1-\hat{\pi}_{1}(1,H_{0})\big\}\hat{\mu}_{t}^{0}(H_{0})\right]-\hat{\mu}_{t}^{0}(H_{0})\\
 & +\frac{1-A}{1-\hat{e}(H_{0})}\sum_{s=1}^{t}\left[\sum_{k=1}^{s}\hat{\bar{\pi}}_{k-1}(0,H_{k-2})\{1-\hat{\pi}_{k}(1,H_{k-1})\}\hat{\delta}(H_{k-1})-1\right]\frac{R_{s}}{\hat{\bar{\pi}}_{s}(0,H_{s-1})}\big\{\hat{\mu}_{t}^{0}(H_{s})-\hat{\mu}_{t}^{0}(H_{s-1})\big\}\bigg).
\end{align*}
Now, we provide the normalized version of $\hat{\tau}_{\text{mr}}$
as follows. The normalized estimator is less influenced by the extreme
weights compared to $\hat{\tau}_{\text{mr}}$. 
\begin{align*}
\hat{\tau}_{\text{mr-N}} & =\mathbb{\pr}_{n}\bigg(\frac{A}{\hat{e}(H_{0})}\Big[R_{t}Y_{t}+\sum_{s=1}^{t}R_{s-1}(1-R_{s})\hat{\mu}_{t}^{0}(H_{s-1})-\hat{\pi}_{1}(1,H_{0})\sum_{s=1}^{t}\hat{g}_{s+1}^{1}(H_{0})\\
 & -\big\{1-\hat{\pi}_{1}(1,H_{0})\big\}\hat{\mu}_{t}^{0}(H_{0})\Big]\bigg)\Big/\pr_{n}\left\{ \frac{A}{\hat{e}(H_{0})}\right\} \\
 & +\pr_{n}\left\{ \hat{\pi}_{1}(1,H_{0})\sum_{s=1}^{t}\hat{g}_{s+1}^{1}(H_{0})-\hat{\pi}_{1}(1,H_{0})\hat{\mu}_{t}^{0}(H_{0})\right\} \\
 & +\sum_{s=1}^{t}\pr_{n}\Bigg\{\frac{1-A}{1-\hat{e}(H_{0})}\bigg(\left[\sum_{k=1}^{s}\hat{\bar{\pi}}_{k-1}(0,H_{k-2})\left\{ 1-\hat{\pi}_{k}(1,H_{k-1})\right\} \hat{\delta}(H_{k-1})-1\right]\\
 & \quad\quad\quad\quad\frac{R_{s}}{\hat{\bar{\pi}}_{s}(0,H_{s-1})}\big\{\hat{\mu}_{t}^{0}(H_{s})-\hat{\mu}_{t}^{0}(H_{s-1})\big\}\bigg)\Bigg\}\Big/\pr_{n}\left\{ \frac{1-A}{1-\hat{e}(H_{0})}\frac{R_{s}}{\hat{\bar{\pi}}_{s}(0,H_{s-1})}\right\} .
\end{align*}

{}In addition, one can conduct calibration to further reduce the
impact of the outliers. The calibration-based estimator expresses
as follows. 
\begin{align*}
\hat{\tau}_{\text{mr-C}} & =\pr_{n}\bigg(Aw_{a_{1}}\Big[R_{t}Y_{t}+\sum_{s=1}^{t}R_{s-1}(1-R_{s})\hat{\mu}_{t}^{0}(H_{s-1})-\hat{\pi}_{1}(1,H_{0})\sum_{s=1}^{t}\hat{g}_{s+1}^{1}(H_{0})\\
 & \qquad\quad-\big\{1-\hat{\pi}_{1}(1,H_{0})\big\}\hat{\mu}_{t}^{0}(H_{0})\Big]\bigg)\bigg/\pr_{n}\left(Aw_{a_{1}}\right)\\
 & +\pr_{n}\left\{ \pi_{1}(1,H_{0};\hat{\gamma})\sum_{s=1}^{t}\hat{g}_{s+1}^{1}(H_{0})-\pi_{1}(1,H_{0};\hat{\gamma})\hat{\mu}_{t}^{0}(H_{0})\right\} \\
 & +\pr_{n}\Bigg\{(1-A)R_{s}w_{a_{0}}w_{r_{1}}\cdots w_{r_{s}}\bigg(\left[\sum_{k=1}^{s}\hat{\bar{\pi}}_{k-1}(0,H_{k-2})\left\{ 1-\hat{\pi}_{k}(1,H_{k-1})\right\} \hat{\delta}(H_{k-1})-1\right]\\
 & \quad\quad\quad\quad\big\{\hat{\mu}_{t}^{0}(H_{s})-\hat{\mu}_{t}^{0}(H_{s-1})\big\}\bigg)\Bigg\}\Big/\pr_{n}\left\{ (1-A)R_{s}w_{a_{0}}w_{r_{1}}\cdots w_{r_{s}}\right\} .
\end{align*}

{}We present the detailed estimation steps for the calibration-based
estimator $\hat{\tau}_{\text{mr-C}}$ when $t=2$ below for illustration.

{}\setlength{\itemindent}{1.5em}

\begin{enumerate}
\item[\textbf{\large{}{}Step 1}{\large{}{}.}] {}For subjects with $R_{2}=1$, obtain the fitted outcome mean models
$\hat{\mu}_{2}^{a}(H_{1})$ for $a=0,1$.

\item[\textbf{\large{}{}Step 2}{\large{}{}.}] {}For subjects with $R_{1}=1$, obtain the following quantities:

\begin{enumerate}
\item {}The fitted propensity score model $\hat{e}(H_{1})$.

\item {}The fitted response probability model $\hat{\pi}_{2}(a,H_{1})$.

\item {}The fitted outcome mean model $\hat{\mu}_{2}^{0}(H_{0})$, by fitting
$\mu_{2}^{0}(H_{0})=\E\left\{ \mu_{2}^{0}(H_{1})\mid H_{0},R_{1}=1,A=0\right\} $
using the predicted values $\hat{\mu}_{2}^{0}(H_{0})$ against $H_{0}$
in the group with $R_{1}=1$ and $A=0$.

\item {}The fitted models $\hat{g}_{2}^{1}(H_{0}),\hat{g}_{3}^{1}(H_{0})$:
Fit $g_{2}^{1}(H_{0})=\E\left\{ \pi_{2}(1,H_{1})\mu_{2}^{1}(H_{1})\mid H_{0},R_{1}=1,A=1\right\} $
and $g_{3}^{1}(H_{0})=\E\left[\left\{ 1-\pi_{2}(1,H_{1})\right\} \mu_{2}^{0}(H_{1})\mid H_{0},R_{1}=1,A=1\right]$
using the predicted values $\hat{\pi}_{2}(1,H_{1})\hat{\mu}_{2}^{1}(H_{1})$
and $\left\{ 1-\hat{\pi}_{2}(1,H_{1})\right\} \hat{\mu}_{2}^{0}(H_{1})$
against $H_{0}$ in the group with $R_{1}=1$ and $A=1$, respectively.

\item {}The calibration weights $w_{r_{2}}$ associated with the response
indicator $R_{2}$: Solve the optimization problem (1) subject to
$\sum_{i:R_{2,i}=1}w_{r_{2},i}h(X_{i})=\sum_{i:R_{1,i}=1}h(X_{i})/\left(\sum_{i=1}^{n}R_{1,i}\right)$.

\end{enumerate}
\item[\textbf{\large{}{}Step 3}{\large{}{}.}] {}For all the subjects, obtain the following models:

\begin{enumerate}
\item {}The fitted propensity score model $\hat{e}(H_{0})$ and the ratio
$\hat{\delta}(H_{1})$ for the subjects with $R_{1}=1$.

\item {}The fitted response probability model $\hat{\pi}_{1}(a,H_{0})$.

\item {}The calibration weights $w_{r_{1}}$ associated with the response
indicator $R_{1}$: Solve the optimization problem (1) subject to
$\sum_{i:R_{1,i}=1}w_{r_{1},i}h(X_{i})=n^{-1}\sum_{i=1}^{n}h(X_{i})$.

\item {}The calibration weights $w_{a_{1}},w_{a_{0}}$ associated with
the treatment: Solve the optimization problem (1) subject to $\sum_{i:A_{i}=1}w_{a_{1},i}h(X_{i})=n^{-1}\sum_{i=1}^{n}h(X_{i})$
to get $w_{a_{1}}$; subject to $\sum_{i:A_{i}=0}w_{a_{0},i}h(X_{i})=n^{-1}\sum_{i=1}^{n}h(X_{i})$
to get $w_{a_{0}}$.

\end{enumerate}
\item[\textbf{\large{}{}Step 4}{\large{}{}.}] {}Get the calibration-based estimator as 
\begin{align*}
\hat{\tau}_{\text{mr-C}} & =\pr_{n}\Bigg[Aw_{a_{1}}\bigg(R_{2}Y_{2}+R_{1}(1-R_{2})\hat{\mu}_{2}^{0}(H_{1})+(1-R_{1})\hat{\mu}_{2}^{0}(H_{0})\\
 & \;-\hat{\pi}_{1}(1,H_{0})\hat{\E}\left\{ \hat{\pi}_{2}(1,H_{1})\hat{\mu}_{2}^{1}(H_{1})\mid H_{0},R_{1}=1,A=1\right\} \\
 & \;-\hat{\pi}_{1}(1,H_{0})\hat{\E}\left[\left\{ 1-\hat{\pi}_{2}(1,H_{1})\right\} \hat{\mu}_{2}^{0}(H_{1})\mid H_{0},R_{1}=1,A=1\right]\\
 & \;-\big\{1-\hat{\pi}_{1}(1,H_{0})\big\}\hat{\mu}_{2}^{0}(H_{0})\bigg)\Bigg]\bigg/\pr_{n}\left(Aw_{a_{1}}\right)\\
 & \;+\pr_{n}\bigg(\hat{\pi}_{1}(1,H_{0})\hat{\E}\left\{ \hat{\pi}_{2}(1,H_{1})\hat{\mu}_{2}^{1}(H_{1})\mid H_{0},R_{1}=1,A=1\right\} \\
 & \;+\hat{\pi}_{1}(1,H_{0})\hat{\E}\left[\left\{ 1-\hat{\pi}_{2}(1,H_{1})\right\} \hat{\mu}_{2}^{0}(H_{1})\mid H_{0},R_{1}=1,A=1\right]-\big\{1-\hat{\pi}_{1}(1,H_{0})\big\}\hat{\mu}_{2}^{0}(H_{0})\bigg)\\
 & +\pr_{n}\bigg[\left(1-A\right)R_{2}w_{a_{0}}w_{r_{1}}w_{r_{2}}\Big[\hat{\pi}_{1}(1,H_{0})\left\{ 1-\hat{\pi}_{2}(1,H_{1})\right\} \hat{\delta}(H_{1})\\
 & \qquad-\hat{\pi}_{1}(1,H_{0})\Big]\big\{ Y_{2}-\hat{\mu}_{2}^{0}(H_{1})\big\}\bigg]\bigg/\left\{ \pr_{n}\left(1-A\right)R_{2}w_{a_{0}}w_{r_{1}}w_{r_{2}}\right\} \\
 & -\pr_{n}\left[\left(1-A\right)R_{1}w_{a_{0}}w_{r_{1}}\hat{\pi}_{1}(1,H_{0})\big\{\hat{\mu}_{2}^{0}(H_{1})-\hat{\mu}_{2}^{0}(H_{0})\big\}\right]\bigg/\pr_{n}\left\{ \left(1-A\right)R_{1}w_{a_{0}}w_{r_{1}}\right\} .
\end{align*}

\end{enumerate}

\section{Proof of the multiple robustness \label{sec:supp_tr}}

{}We prove the multiple robustness and semiparametric efficiency
of the EIF-based estimators. For the cross-sectional data, we prove
the triple robustness in two aspects: consistency when using parametric
models and rate convergence when using flexible models. For the longitudinal
outcomes, we focus on the multiple robustness in terms of the rate
convergence. Throughout the section, we use the estimators motivated
by Theorems 2 and 6 for illustration, which is asymptotically equivalent
to the corresponding normalized and calibration-based estimators.

\subsection{Proof of Theorem 3 \label{subsec:supp_tr_1time}}

\paragraph{Proof of the triple robustness: }

{}Suppose the model estimators $\hat{\theta}=(\hat{\alpha},\hat{\beta},\hat{\gamma})^{\text{T}}$
converges to $\theta^{*}=(\alpha^{*},\beta^{*},\gamma^{*})^{\text{T}}$
in the sense that $\lVert\hat{\theta}-\theta^{*}\rVert=o_{p}(1)$,
where at least one component of $\hat{\theta}$ needs to converge
to the true value. As the sample size $n\rightarrow\infty$, we would
expect $\hat{\tau}_{\text{tr}}$ converges to 
\begin{align}
 & \E\left[\big\{\frac{A}{e(X;\alpha^{*})}-\frac{1-A}{1-e(X;\alpha^{*})}\frac{\pi_{1}(1,X;\gamma^{*})}{\pi_{1}(0,X;\gamma^{*})}\big\} R_{1}\big\{ Y_{1}-\mu_{1}^{0}(X;\beta^{*})\big\}\right]\label{eq:trpart1_1time}\\
- & \E\left[\frac{A-e(X;\alpha^{*})}{e(X;\alpha^{*})}\pi_{1}(1,X;\gamma^{*})\big\{\mu_{1}^{1}(X;\beta^{*})-\mu_{1}^{0}(X;\beta^{*})\big\}\right]\label{eq:trpart2_1time}
\end{align}

{}Rearrange \eqref{eq:trpart1_1time}, we have 
\begin{align*}
 & \E\left[\big\{\frac{A}{e(X;\alpha^{*})}-\frac{1-A}{1-e(X;\alpha^{*})}\frac{\pi_{1}(1,X;\gamma^{*})}{\pi_{1}(0,X;\gamma^{*})}\big\} R_{1}\big\{ Y_{1}-\mu_{1}^{0}(X;\beta^{*})\big\}\right]\\
= & \E\left[\frac{\E(A\mid X)}{e(X;\alpha^{*})}\E(R_{1}\mid X,A=1)\big\{\E(Y_{1}\mid X,R_{1}=1,A=1)-\mu_{1}^{0}(X;\beta^{*})\big\}\right]\\
 & -\E\left[\frac{\E(1-A\mid X)}{1-e(X;\alpha^{*})}\frac{\pi_{1}(1,X;\gamma^{*})}{\pi_{1}(0,X;\gamma^{*})}\E(R_{1}\mid X,A=0)\big\{\E(Y_{1}\mid X,R_{1}=1,A=0)-\mu_{1}^{0}(X;\beta^{*})\big\}\right]\\
= & \E\left[\frac{e(X)}{e(X;\alpha^{*})}\pi_{1}(1,X)\big\{\mu_{1}^{1}(X)-\mu_{1}^{0}(X;\beta^{*})\big\}-\frac{1-e(X)}{1-e(X;\alpha^{*})}\frac{\pi_{1}(0,X)\pi_{1}(1,X;\gamma^{*})}{\pi_{1}(0,X;\gamma^{*})}\big\{\mu_{1}^{0}(X)-\mu_{1}^{0}(X;\beta^{*})\big\}\right]\\
= & \E\left[\pi_{1}(1,X)\big\{\mu_{1}^{1}(X)-\mu_{1}^{0}(X)\big\}+\pi_{1}(1,X)\mu_{1}^{1}(X)\left\{ \frac{e(X)}{e(X;\alpha^{*})}-1\right\} +\pi_{1}(1,X)\left\{ \mu_{1}^{0}(X)-\frac{e(X)}{e(X;\alpha^{*})}\mu_{1}^{0}(X;\beta^{*})\right\} \right]\\
 & -\E\left[\frac{1-e(X)}{1-e(X;\alpha^{*})}\frac{\pi_{1}(0,X)\pi_{1}(1,X;\gamma^{*})}{\pi_{1}(0,X;\gamma^{*})}\big\{\mu_{1}^{0}(X)-\mu_{1}^{0}(X;\beta^{*})\big\}\right]\\
= & \tau_{1}^{\text{CR}}+\E\left[\pi_{1}(1,X)\mu_{1}^{1}(X)\left\{ \frac{e(X)}{e(X;\alpha^{*})}-1\right\} +\pi_{1}(1,X)\left\{ \mu_{1}^{0}(X)-\frac{e(X)}{e(X;\alpha^{*})}\mu_{1}^{0}(X;\beta^{*})\right\} \right]\\
 & -\E\left[\left\{ \frac{1-e(X)}{1-e(X;\alpha^{*})}\frac{\pi_{1}(0,X)\pi_{1}(1,X;\gamma^{*})}{\pi_{1}(0,X;\gamma^{*})}-\frac{e(X)}{e(X;\alpha^{*})}\pi_{1}(1,X)\right\} \mu_{1}^{0}(X;\beta^{*})\right].
\end{align*}
Rearrange \eqref{eq:trpart2_1time}, we have $\E\left[\frac{A-e(X;\alpha^{*})}{e(X;\alpha^{*})}\pi_{1}(1,X;\gamma^{*})\big\{\mu_{1}^{1}(X;\beta^{*})-\mu_{1}^{0}(X;\beta^{*})\big\}\right]$
\[
=\E\left[\frac{e(X)-e(X;\alpha^{*})}{e(X;\alpha^{*})}\pi_{1}(1,X;\gamma^{*})\big\{\mu_{1}^{1}(X;\beta^{*})-\mu_{1}^{0}(X;\beta^{*})\big\}\right].
\]
Combine the two parts together, \eqref{eq:trpart1_1time}+\eqref{eq:trpart2_1time}
\begin{align*}
= & \tau_{1}^{\text{CR}}+\E\left[\pi_{1}(1,X)\left\{ \frac{e(X)}{e(X;\alpha^{*})}-1\right\} \mu_{1}^{1}(X)+\left\{ \pi_{1}(1,X)-\frac{1-e(X)}{1-e(X;\alpha^{*})}\frac{\pi_{1}(0,X)\pi_{1}(1,X;\gamma^{*})}{\pi_{1}(0,X;\gamma^{*})}\right\} \mu_{1}^{0}(X)\right]\\
 & +\E\left[\left\{ \frac{e(X)}{e(X;\alpha^{*})}-1\right\} \pi_{1}(1,X;\gamma^{*})\mu_{1}^{1}(X;\beta^{*})\right]\\
 & -\E\left[\left\{ \frac{1-e(X)}{1-e(X;\alpha^{*})}\frac{\pi_{1}(0,X)\pi_{1}(1,X;\gamma^{*})}{\pi_{1}(0,X;\gamma^{*})}-\frac{e(X)}{e(X;\alpha^{*})}\pi_{1}(1,X)+\frac{e(X)-e(X;\alpha^{*})}{e(X;\alpha^{*})}\pi_{1}(1,X;\gamma^{*})\right\} \mu_{1}^{0}(X;\beta^{*})\right]\\
= & \tau_{1}^{\text{CR}}+\E\left[\left\{ \frac{e(X)}{e(X;\alpha^{*})}-1\right\} \left\{ \pi_{1}(1,X)\mu_{1}^{1}(X)-\pi_{1}(1,X;\gamma^{*})\mu_{1}^{1}(X;\beta^{*})\right\} \right]\\
 & +\E\left[\left\{ 1-\frac{1-e(X)}{1-e(X;\alpha^{*})}\frac{\pi_{1}(0,X)\pi_{1}(1,X;\gamma^{*})}{\pi_{1}(0,X;\gamma^{*})}\right\} \pi_{1}(1,X;\gamma^{*})\left\{ \mu_{1}^{0}(X)-\mu_{1}^{0}(X;\beta^{*})\right\} \right]\\
 & +\E\left[\left\{ \pi_{1}(1,X)-\pi_{1}(1,X;\gamma^{*})\right\} \left\{ \mu_{1}^{0}(X)-\frac{e(X)}{e(X;\alpha^{*})}\mu_{1}^{0}(X;\beta^{*})\right\} \right]
\end{align*}

{}Therefore, the bias of $\hat{\tau}_{\text{tr}}$ converges to 
\begin{align}
 & \E\left[\left\{ \frac{e(X)}{e(X;\alpha^{*})}-1\right\} \left\{ \pi_{1}(1,X)\mu_{1}^{1}(X)-\pi_{1}(1,X;\gamma^{*})\mu_{1}^{1}(X;\beta^{*})\right\} \right]\label{eq:trbias1_1time}\\
+ & \E\left[\left\{ 1-\frac{1-e(X)}{1-e(X;\alpha^{*})}\frac{\pi_{1}(0,X)\pi_{1}(1,X;\gamma^{*})}{\pi_{1}(0,X;\gamma^{*})}\right\} \pi_{1}(1,X;\gamma^{*})\left\{ \mu_{1}^{0}(X)-\mu_{1}^{0}(X;\beta^{*})\right\} \right]\label{eq:trbias2_1time}\\
+ & \E\left[\left\{ \pi_{1}(1,X)-\pi_{1}(1,X;\gamma^{*})\right\} \left\{ \mu_{1}^{0}(X)-\frac{e(X)}{e(X;\alpha^{*})}\mu_{1}^{0}(X;\beta^{*})\right\} \right]\label{eq:trbias3_1time}
\end{align}

{}Note that \eqref{eq:trbias1_1time} $=0$ under $\mathcal{M}_{\text{rp+om}}\cup\mathcal{M}_{\text{ps}}$,
\eqref{eq:trbias2_1time} $=0$ under $\mathcal{M}_{\text{ps+rp}}\cup\mathcal{M}_{\text{om}}$,
\eqref{eq:trbias3_1time} $=0$ under $\mathcal{M}_{\text{ps+om}}\cup\mathcal{M}_{\text{rp}}$.
Thus, $\hat{\tau}_{\text{tr}}$ is consistent for $\tau_{1}^{\text{\jtr}}$
under $\mathcal{M}_{\text{rp+om}}\cup\mathcal{M}_{\text{ps+om}}\cup\mathcal{M}_{\text{ps+rp}}$.
The triple robustness holds.

\paragraph{Proof of the semiparametric efficiency: }

{}We follow the proof in \citet{kennedy2016semiparametric}. To simplify
the notations, denote $\pr\left\{ N(V;\theta_{0})\right\} =\tau_{1}^{\text{\jtr}}$,
where 
\[
N(V;\theta_{0})=\left\{ \frac{A}{e(X)}-\frac{1-A}{1-e(X)}\frac{\pi_{1}(1,X)}{\pi_{1}(0,X)}\right\} R_{1}\left\{ Y_{1}-\mu_{1}^{0}(X)\right\} -\frac{A-e(X)}{e(X)}\pi_{1}(1,X)\left\{ \mu_{1}^{1}(X)-\mu_{1}^{0}(X)\right\} .
\]
Then $\pr\left\{ N(V;\theta^{*})\right\} =\pr\left\{ N(V;\theta_{0})\right\} =\tau^{\text{\jtr}}$.
Consider the decomposition 
\begin{equation}
\hat{\tau}_{\text{tr}}-\tau^{\text{\jtr}}=\left(\pr_{n}-\pr\right)N(V;\hat{\theta})-\pr\left\{ N(V;\hat{\theta})-N(V;\theta^{*})\right\} .\label{eq:tr_decomp_t1}
\end{equation}
Using empirical process theory, if the nuisance functions take values
in Donsker classes, and satisfy the positivity assumption, i.e., there
exists $\varepsilon>0$, such that $\varepsilon<e(X)<1-\varepsilon$
and $\pi_{1}(0,X)>\varepsilon$ for all $X$, then $N(V;\hat{\theta})$
takes values in Donsker classes, and the first term can be written
as 
\[
\left(\pr_{n}-\pr\right)N(V;\hat{\theta})=\left(\pr_{n}-\pr\right)N(V;\theta_{0})+o_{\pr}(n^{-\frac{1}{2}}).
\]

{}For the second term $\pr\left\{ N(V;\hat{\theta})-N(V;\theta^{*})\right\} $,
by computing the expectations, we have 
\begin{align*}
\pr\left\{ N(V;\hat{\theta})-N(V;\theta^{*})\right\}  & =\pr\left[\left\{ \frac{e(X)}{e(X;\hat{\alpha})}-1\right\} \left\{ \pi_{1}(1,X)\mu_{1}^{1}(X)-\pi_{1}(1,X;\hat{\gamma})\mu_{1}^{1}(X;\hat{\beta})\right\} \right]\\
 & +\pr\left[\left\{ 1-\frac{1-e(X)}{1-e(X;\hat{\alpha})}\frac{\pi_{1}(0,X)}{\pi_{1}(0,X;\hat{\gamma})}\right\} \pi_{1}(1,X;\hat{\gamma})\left\{ \mu_{1}^{0}(X)-\mu_{1}^{0}(X;\hat{\beta})\right\} \right]\\
 & +\pr\left[\left\{ \pi_{1}(1,X)-\pi_{1}(1,X;\hat{\gamma})\right\} \left\{ \mu_{1}^{0}(X)-\frac{e(X)}{e(X;\hat{\alpha})}\mu_{1}^{0}(X;\hat{\beta})\right\} \right].
\end{align*}

{}Under the positivity assumptions, we apply Cauchy-Schwarz inequality
($\pr(fg)\leq\lVert f\rVert\lVert g\rVert$) and obtain a upper bound
for the second term as 
\begin{align*}
\pr\left\{ N(V;\hat{\theta})-N(V;\theta^{*})\right\}  & \leq\Big\lVert\frac{e(X)}{e(X;\hat{\alpha})}-1\Big\rVert\cdot\Big\lVert\pi_{1}(1,X)\mu_{1}^{1}(X)-\pi_{1}(1,X;\hat{\gamma})\mu_{1}^{1}(X;\hat{\beta})\Big\rVert\\
 & +\Big\lVert1-\frac{1-e(X)}{1-e(X;\hat{\alpha})}\frac{\pi_{1}(0,X)\pi_{1}(1,X;\hat{\gamma})}{\pi_{1}(0,X;\hat{\gamma})}\Big\rVert\cdot\Big\lVert\pi_{1}(1,X;\hat{\gamma})\left\{ \mu_{1}^{0}(X)-\mu_{1}^{0}(X;\hat{\beta})\right\} \Big\rVert\\
 & +\Big\lVert\pi_{1}(1,X)-\pi_{1}(1,X;\hat{\gamma})\Big\rVert\cdot\Big\lVert\mu_{1}^{0}(X)-\frac{e(X)}{e(X;\hat{\alpha})}\mu_{1}^{0}(X;\hat{\beta})\Big\rVert\\
\text{} & \leq\Big\lVert\left\{ \frac{e(X)}{e(X;\hat{\alpha})}-1\right\} \left\{ \mu_{1}^{1}(X)-\mu_{1}^{1}(X;\hat{\beta})\right\} \Big\rVert_{1}\cdot\Big\lVert\pi_{1}(1,X)\Big\rVert_{\infty}\\
 & +\Big\lVert\left\{ \frac{e(X)}{e(X;\hat{\alpha})}-1\right\} \left\{ \pi_{1}(1,X)-\pi_{1}(1,X;\hat{\gamma})\right\} \Big\rVert_{1}\cdot\Big\lVert\mu_{1}^{1}(X;\hat{\beta})\Big\rVert_{\infty}\\
 & +\Big\lVert\left\{ 1-\frac{1-e(X)}{1-e(X;\hat{\alpha})}\right\} \left\{ \mu_{1}^{0}(X)-\mu_{1}^{0}(X;\hat{\beta})\right\} \Big\rVert_{1}\cdot\Big\lVert\pi_{1}(1,X;\hat{\gamma})\Big\rVert_{\infty}\\
 & +\Big\lVert\left\{ 1-\frac{\pi_{1}(0,X)}{\pi_{1}(0,X;\hat{\gamma})}\right\} \left\{ \mu_{1}^{0}(X)-\mu_{1}^{0}(X;\hat{\beta})\right\} \Big\rVert_{1}\cdot\Big\lVert\frac{1-e(X)}{1-e(X;\hat{\alpha})}\Big\rVert_{\infty}\\
 & +\Big\lVert\pi_{1}(1,X)-\pi_{1}(1,X;\hat{\gamma})\Big\lVert\cdot\Big\lVert\mu_{1}^{0}(X)-\mu_{1}^{0}(X;\hat{\beta})\Big\lVert\\
 & +\Big\lVert\left\{ \pi_{1}(1,X)-\pi_{1}(1,X;\hat{\gamma})\right\} \left\{ \frac{e(X)}{e(X;\hat{\alpha})}-1\right\} \Big\rVert_{1}\cdot\Big\lVert\mu_{1}^{0}(X;\hat{\beta})\Big\rVert_{\infty}\\
 & \leq M\Big\lVert\frac{e(X)}{e(X;\hat{\alpha})}-1\Big\rVert\cdot\left\{ \Big\lVert\mu_{1}^{1}(X)-\mu_{1}^{1}(X;\hat{\beta})\Big\rVert+\Big\lVert\pi_{1}(1,X)-\pi_{1}(1,X;\hat{\gamma})\Big\lVert\right\} \\
 & +M\Big\lVert\mu_{1}^{0}(X)-\mu_{1}^{0}(X;\hat{\beta})\Big\rVert\cdot\left\{ \Big\lVert1-\frac{1-e(X)}{1-e(X;\hat{\alpha})}\Big\rVert+\Big\lVert1-\frac{\pi_{1}(0,X)}{\pi_{1}(0,X;\hat{\gamma})}\Big\lVert\right\} \\
 & +M\Big\lVert\pi_{1}(1,X)-\pi_{1}(1,X;\hat{\gamma})\Big\rVert\cdot\left\{ \Big\lVert\frac{e(X)}{e(X;\hat{\alpha})}-1\Big\rVert+\Big\lVert\mu_{1}^{0}(X)-\mu_{1}^{0}(X;\hat{\beta})\Big\lVert\right\} .
\end{align*}

{}The second inequality holds by the triangle inequality and Holder's
inequality, and the last inequality holds by Cauchy-Schwarz. Under
$\mathcal{M}_{\text{ps+rp+om}}$, we would expect $\pr\left\{ N(V;\hat{\theta})-N(V;\theta^{*})\right\} =O_{\pr}(n^{-1/2})\cdot o_{\pr}(1)=o_{\pr}(n^{-1/2})$.
Therefore, the EIF-based estimator $\hat{\tau}_{\text{tr}}$ satisfies
$\hat{\tau}_{\text{tr}}-\tau_{1}^{\jtr}=\left(\pr_{n}-\pr\right)N(V;\theta_{0})+o_{\pr}(n^{-\frac{1}{2}})$
and its influence function $N(V;\theta_{0})+\tau_{1}^{\text{\jtr}}$,
which is the same as the EIF in Theorem 2 and completes the proof.

\subsection{Proof of Theorem 4 and Corollary 1}

\paragraph{Proof of Theorem 4: }

{}When using flexible models, we let $\theta$ consist of all the
nuisance functions $\big\{ e(X),\pi_{1}(a,X),\mu_{1}^{a}(X):a=0,1\big\}$,
and $\hat{\theta}$ be its limit. We use the same notations in \ref{subsec:supp_tr_1time},
and consider the same decomposition as formula \eqref{eq:tr_decomp_t1}.
Using empirical process theory, if the nuisance functions take values
in Donsker classes, and satisfy the positivity assumption, i.e., there
exists $\varepsilon>0$, such that $\varepsilon<e(X)<1-\varepsilon$
and $\pi_{1}(0,X)>\varepsilon$ for all $X$, then $N(V;\hat{\theta})$
takes values in Donsker classes, and the first term can be written
as 
\begin{align*}
\left(\pr_{n}-\pr\right)N(V;\hat{\theta}) & =\left(\pr_{n}-\pr\right)N(V;\theta_{0})+o_{\pr}(n^{-\frac{1}{2}}).
\end{align*}

{}For the second term $\pr\left\{ N(V;\hat{\theta})-N(V;\theta^{*})\right\} $,
by computing the expectations, we have 
\begin{align*}
\pr\left\{ N(V;\hat{\theta})-N(V;\theta^{*})\right\}  & =\pr\left[\left\{ \frac{e(X)}{\hat{e}(X)}-1\right\} \left\{ \pi_{1}(1,X)\mu_{1}^{1}(X)-\hat{\pi}_{1}(1,X)\hat{\mu}_{1}^{1}(X)\right\} \right]\\
 & +\pr\left[\left\{ 1-\frac{1-e(X)}{1-\hat{e}(X)}\frac{\pi_{1}(0,X)}{\hat{\pi}_{1}(0,X)}\right\} \hat{\pi}_{1}(1,X)\left\{ \mu_{1}^{0}(X)-\hat{\mu}_{1}^{0}(X)\right\} \right]\\
 & +\pr\left[\left\{ \pi_{1}(1,X)-\hat{\pi}_{1}(1,X)\right\} \left\{ \mu_{1}^{0}(X)-\frac{e(X)}{\hat{e}(X)}\hat{\mu}_{1}^{0}(X)\right\} \right]=\text{Rem}(\hat{\pr},\pr)
\end{align*}
Therefore, $\hat{\tau}_{\text{tr}}-\tau_{1}^{\text{\jtr}}=\left(\pr_{n}-\pr\right)N(V;\theta_{0})+\text{Rem}(\hat{\pr},\pr)+o_{\pr}(n^{-\frac{1}{2}})=\pr_{n}\left\{ \varphi_{1}^{\text{\jtr}}(V_{i};\pr)\right\} +\text{Rem}(\hat{\pr},\pr)+o_{\pr}(n^{-1/2})$.
If $\text{Rem}(\hat{\pr},\pr)=o_{\pr}(n^{-1/2})$, then $\hat{\tau}_{\text{tr}}-\tau_{1}^{\text{\jtr}}=n^{-1}\sum_{i=1}^{n}\varphi^{\text{\jtr}}(V_{i};\pr)+o_{\pr}(n^{-1/2})$.
Apply central limit theorem and we complete the proof.

\paragraph{Proof of Corollary 1:}

{}For the remainder term, based on the uniform bounded condition,
apply Cauchy-Schwarz and Holder's inequality, we have

{} 
\begin{align*}
\pr\left\{ N(V;\hat{\theta})-N(V;\theta^{*})\right\}  & \leq M\Big\lVert\frac{e(X)}{\hat{e}(X)}-1\Big\rVert\cdot\left\{ \Big\lVert\mu_{1}^{1}(X)-\hat{\mu}_{1}^{1}(X)\Big\rVert+\Big\lVert\pi_{1}(1,X)-\hat{\pi}_{1}(1,X)\Big\lVert\right\} \\
 & +M\Big\lVert\mu_{1}^{0}(X)-\hat{\mu}_{1}^{0}(X)\Big\rVert\cdot\left\{ \Big\lVert1-\frac{1-e(X)}{1-\hat{e}(X)}\Big\rVert+\Big\lVert1-\frac{\pi_{1}(0,X)}{\hat{\pi}_{1}(0,X)}\Big\lVert\right\} \\
 & +M\Big\lVert\pi_{1}(1,X)-\hat{\pi}_{1}(1,X)\Big\rVert\cdot\left\{ \Big\lVert\frac{e(X)}{\hat{e}(X)}-1\Big\rVert+\Big\lVert\mu_{1}^{0}(X)-\hat{\mu}_{1}^{0}(X)\Big\lVert\right\} .
\end{align*}
With the convergence rate $\lVert\hat{e}(X)-e(X)\rVert=o_{\pr}(n^{-c_{e}}),\lVert\hat{\mu}_{1}^{a}(X)-\mu_{1}^{a}(X)\rVert=o_{\pr}(n^{-c_{\mu}}),\lVert\hat{\pi}_{1}(a,X)-\pi_{1}(a,X)\rVert=o_{\pr}(n^{-c_{\pi}})$,
and by Theorem 4 based on the central limit theorem, we have $\hat{\tau}_{\text{tr}}-\tau_{1}^{\text{\jtr}}=O_{\pr}(n^{-1/2}+n^{-c})$,
where $c=\min(r_{e}+r_{\pi},r_{e}+r_{\mu},r_{\pi}+r_{\mu})$, which
completes the proof.

\subsection{Proof of Theorem 7 and Corollary 2 \label{subsec:supp_tr_longi}}

\paragraph{Proof of Theorem 7: }

{}When using flexible models, we let $\theta$ consist of all the
nuisance functions{}{} ${\color{black}\big\{ e(H_{s-1}),\pi_{s}(a,H_{s-1}),\mu_{t}^{a}(H_{s-1}),g_{s+1}^{1}(H_{l-1}):l=1,\cdots,s\text{ and }s=1,\cdots,t;a=0,1\big\}}${},
and $\hat{\theta}$ be its limit. We use the same notations in \ref{subsec:supp_tr_1time},
and denote $N(V;\theta):=\varphi_{t}^{\text{\jtr}}(V;\pr)+\tau_{t}^{\text{\jtr}}$.
Consider the same decomposition as formula \eqref{eq:tr_decomp_t1}.

{}Using empirical process theory, if the nuisance functions take
values in Donsker classes, and satisfy the positivity assumption,
i.e., there exists $\varepsilon>0$, such that $\varepsilon<\big\{ e(H_{s-1}),\hat{e}(H_{s-1})\big\}<1-\varepsilon$
and $\big\{\pi_{s}(0,H_{s-1}),\hat{\pi}_{s}(0,H_{s-1})\big\}>\varepsilon$
for all $H_{s-1}$ when $s=1,\cdots,t$, then $N(V;\hat{\theta})$
takes values in Donsker classes, and the first term can be written
as 
\[
\left(\pr_{n}-\pr\right)N(V;\hat{\theta})=\left(\pr_{n}-\pr\right)N(V;\theta_{0})+o_{\pr}(n^{-\frac{1}{2}}).
\]

{}For the second term $\pr\left\{ N(V;\hat{\theta})-N(V;\theta^{*})\right\} $,
we proceed by deriving the expectations of $N(V;\hat{\theta})-N(V;\theta^{*})$.
Note that $\pr\left\{ N(V;\hat{\theta})\right\} $ equals to 
\begin{align}
 & \pr\Bigg\{\frac{A}{\hat{e}(H_{0})}\big\{ R_{t}Y_{t}+\sum_{s=1}^{t}R_{s-1}(1-R_{s})\hat{\mu}_{t}^{0}(H_{s-1})\big\}\label{eq:tr_longi1}\\
 & +\left\{ 1-\frac{A}{\hat{e}(H_{0})}\right\} \left[\hat{\pi}_{1}(1,H_{0})\sum_{s=1}^{t-1}\hat{g}_{s+1}^{1}(H_{0})+\big\{1-\hat{\pi}_{1}(1,H_{0})\big\}\hat{\mu}_{t}^{0}(H_{0})\right]-\hat{\mu}_{t}^{0}(H_{0})\label{eq:tr_longi2}\\
 & +\frac{1-A}{1-\hat{e}(H_{0})}\bigg(\sum_{s=1}^{t}\left[\sum_{k=1}^{s}\hat{\bar{\pi}}_{k-1}(0,H_{k-2})\{1-\hat{\pi}_{k}(1,H_{k-1})\}\hat{\delta}(H_{k-1})-1\right]\frac{R_{s}}{\hat{\bar{\pi}}_{s}(0,H_{s-1})}\big\{\hat{\mu}_{t}^{0}(H_{s})-\hat{\mu}_{t}^{0}(H_{s-1})\big\}\bigg)\Bigg\}.\label{eq:tr_longi3}
\end{align}
By iterated expectations, the first term \eqref{eq:tr_longi1} and
the second term \eqref{eq:tr_longi2} equal to 
\begin{align*}
\pr & \bigg(\frac{e(H_{0})}{\hat{e}(H_{0})}\left[\pi_{1}(1,H_{0})g_{t+1}^{1}(H_{0})+\sum_{s=1}^{t-1}\pi_{1}(1,H_{0})g_{\hat{\mu},s+1}^{1}(H_{0})+\left\{ 1-\pi_{1}(1,H_{0})\right\} \hat{\mu}_{t}^{0}(H_{s-1})\right]\\
 & +\left\{ 1-\frac{e(H_{0})}{\hat{e}(H_{0})}\right\} \left[\hat{\pi}_{1}(1,H_{0})\sum_{s=1}^{t-1}\hat{g}_{s+1}^{1}(H_{0})+\big\{1-\hat{\pi}_{1}(1,H_{0})\big\}\hat{\mu}_{t}^{0}(H_{0})\right]-\hat{\mu}_{t}^{0}(H_{0})\bigg),
\end{align*}
using the notations in the main text.

{}For the third term \eqref{eq:tr_longi3}, for $s=1,\cdots,t$,
we have 
\begin{align*}
 & \E\bigg(\frac{1-A}{1-\hat{e}(H_{0})}\Big[\hat{\bar{\pi}}_{k-1}(0,H_{k-2})\{1-\hat{\pi}_{k}(1,H_{k-1})\}\hat{\delta}(H_{k-1})-1\Big]\frac{R_{s}}{\hat{\bar{\pi}}_{s}(0,H_{s-1})}\big\{\hat{\mu}_{t}^{0}(H_{s})-\hat{\mu}_{t}^{0}(H_{s-1})\big\}\\
 & \qquad\mid H_{s-1},R_{s-1}=1,A=0\bigg)\\
= & \E\Bigg(\frac{1-A}{1-\hat{e}(H_{0})}\text{\ensuremath{\left[\hat{\bar{\pi}}_{k-1}(0,H_{k-2})\{1-\hat{\pi}_{k}(1,H_{k-1})\}\hat{\delta}(H_{k-1})-1\right]}}\left[\E\left\{ \hat{\mu}_{t}^{0}(H_{s})\mid H_{s-1},R_{s}=1,A=0\right\} -\hat{\mu}_{t}^{0}(H_{s-1})\right]\\
 & \qquad\frac{R_{s-1}}{\hat{\bar{\pi}}_{s-1}(0,H_{s-2})}\frac{\pi_{s}(0,H_{s-1})}{\hat{\bar{\pi}}_{s}(0,H_{s-1})}\mid H_{s-1},R_{s-1}=1,A=0\Bigg).
\end{align*}
And for $k=1,\cdots,s$, apply iterated expectations to the above
formula and use the notation in the main text, we have 
\begin{align*}
\E & \bigg(\frac{1-A}{1-\hat{e}(H_{0})}\left[\hat{\bar{\pi}}_{k-1}(0,H_{k-2})\{1-\hat{\pi}_{k}(1,H_{k-1})\}\hat{\delta}(H_{k-1})-1\right]\frac{R_{k}}{\hat{\bar{\pi}}_{k}(0,H_{k-1})}\\
 & \prod_{l=k+1}^{s}\frac{R_{l}}{\hat{\pi}_{l}(0,H_{l-1})}\left[\E\left\{ \hat{\mu}_{t}^{0}(H_{s})\mid H_{s-1},R_{s}=1,A=0\right\} -\hat{\mu}_{t}^{0}(H_{s-1})\right]\bigg)\\
=\E & \Bigg[\frac{1-A}{1-\hat{e}(H_{0})}\left[\hat{\bar{\pi}}_{k-1}(0,H_{k-2})\{1-\hat{\pi}_{k}(1,H_{k-1})\}\hat{\delta}(H_{k-1})-1\right]\frac{R_{k}}{\hat{\bar{\pi}}_{k}(0,H_{k-1})}\\
 & \E\bigg\{\frac{\pi_{k+1}(0,H_{k})}{\hat{\pi}_{k+1}(0,H_{k})}\cdots\\
 & \E\bigg(\frac{\pi_{s}(0,H_{s-1})}{\hat{\pi}_{s}(0,H_{s-1})}\left[\E\left\{ \hat{\mu}_{t}^{0}(H_{s})\mid H_{s-1},R_{s}=1,A=0\right\} -\hat{\mu}_{t}^{0}(H_{s-1})\right]\mid H_{s-2,}R_{s-1}=1,A=0\bigg)\\
 & \cdots\mid H_{k-1},R_{k}=1,A=0\bigg\}\Bigg]\\
=\E & \Bigg\{\frac{1-A}{1-\hat{e}(H_{0})}\left[\hat{\bar{\pi}}_{k-1}(0,H_{k-2})\{1-\hat{\pi}_{k}(1,H_{k-1})\}\hat{\delta}(H_{k-1})-1\right]\frac{R_{k-1}}{\hat{\bar{\pi}}_{k-1}(0,H_{k-2})}\\
 & E_{0,s-2}\left(\prod_{l=k}^{s}\frac{\pi_{l}(0,H_{l-1})}{\hat{\pi}_{l}(0,H_{l-1})}\left[\E\left\{ \hat{\mu}_{t}^{0}(H_{s})\mid H_{s-1},R_{s}=1,A=0\right\} -\hat{\mu}_{t}^{0}(H_{s-1})\right];H_{k-1}\right)\Bigg\}.
\end{align*}
Continue the calculation, the above formula becomes 
\begin{align}
=\E & \Bigg\{\frac{1-A}{1-\hat{e}(H_{0})}\left[R_{k-1}\{1-\hat{\pi}_{k}(1,H_{k-1})\}\hat{\delta}(H_{k-1})-\frac{R_{k-1}}{\hat{\bar{\pi}}_{k-1}(0,H_{k-2})}\right]\nonumber \\
 & E_{0,s-2}\left(\prod_{l=k}^{s}\frac{\pi_{l}(0,H_{l-1})}{\hat{\pi}_{l}(0,H_{l-1})}\left[\E\left\{ \hat{\mu}_{t}^{0}(H_{s})\mid H_{s-1},R_{s}=1,A=0\right\} -\hat{\mu}_{t}^{0}(H_{s-1})\right];H_{k-1}\right)\Bigg\}\nonumber \\
=\E & \left[\frac{1-A}{1-\hat{e}(H_{0})}R_{k-1}\{1-\hat{\pi}_{k}(1,H_{k-1})\}\frac{\hat{\delta}(H_{k-1})}{\delta(H_{k-1})}\delta(H_{k-1})G_{\hat{\mu},\hat{\pi},s-2}(H_{k-1})\right]\label{eq:tr_inter1}\\
- & \E\left\{ \frac{1-A}{1-\hat{e}(H_{0})}\frac{R_{k-1}}{\hat{\bar{\pi}}_{k-1}(0,H_{k-2})}G_{\hat{\mu},\hat{\pi},s-2}(H_{k-1})\right\} \label{eq:tr_inter2}
\end{align}
if we denote

{}$G_{\hat{\mu},\hat{\pi},s-2}(H_{k-1})=E_{0,s-2}\bigg(\prod_{l=k}^{s}\pi_{l}(0,H_{l-1})\Big[\E\left\{ \hat{\mu}_{t}^{0}(H_{s})\mid H_{s-1},R_{s}=1,A=0\right\} -\hat{\mu}_{t}^{0}(H_{s-1})\Big]\big/\hat{\pi}_{l}(0,H_{l-1});H_{k-1}\bigg)$
to indicate the involvement of the estimated nuisance function $\hat{\mu}_{t}^{0}(H_{l-1})$
and $\hat{\pi}_{l}(0,H_{l-1})$ for $l=k,\cdots,s$.

{}For the first term (\ref{eq:tr_inter1}), by Bayes' rule, 
\[
\delta(H_{s-1})=\frac{\bar{\pi}_{s-1}(1,H_{s-2})}{\bar{\pi}_{s-1}(0,H_{s-2})}\prod_{l=1}^{s-1}\frac{f(Y_{l}\mid H_{l-1},R_{l}=1,A=1)}{f(Y_{l}\mid H_{l-1},R_{l}=1,A=0)}.
\]
Take iterated expectations conditional on the historical information,
it equals to 
\begin{align*}
\E & \Big[\frac{1-A}{1-\hat{e}(H_{0})}\frac{R_{k-1}}{\bar{\pi}_{k-1}(0,H_{k-2})}\bar{\pi}_{k-1}(1,H_{k-2})\{1-\hat{\pi}_{k}(1,H_{k-1})\}\frac{\hat{\delta}(H_{k-1})}{\delta(H_{k-1})}\\
 & \prod_{l=1}^{k-1}\frac{f(y_{l}\mid H_{l-1},R_{l}=1,A=1)}{f(y_{l}\mid H_{l-1},R_{l}=1,A=0)}G_{\hat{\mu},\hat{\pi},s-2}(H_{k-1})\Big]\\
= & \E\bigg(\frac{1-A}{1-\hat{e}(H_{0})}\frac{R_{k-1}}{\bar{\pi}_{k-1}(0,H_{k-2})}\bar{\pi}_{k-1}(1,H_{k-2})\prod_{l=1}^{k-2}\frac{f(Y_{l}\mid H_{l-1},R_{l}=1,A=1)}{f(Y_{l}\mid H_{l-1},R_{l}=1,A=0)}\\
 & \E\Big[\{1-\hat{\pi}_{k}(1,H_{k-1})\}\frac{\hat{\delta}(H_{k-1})}{\delta(H_{k-1})}\frac{f(Y_{k-1}\mid H_{k-2},R_{k-1}=1,A=1)}{f(Y_{k-1}\mid H_{k-2},R_{k-1}=1,A=0)}G_{\hat{\mu},\hat{\pi},s-2}(H_{k-1})\mid H_{k-2},R_{k-1}=1,A=0\Big]\bigg)\\
= & \E\bigg(\frac{1-A}{1-\hat{e}(H_{0})}\frac{R_{k-2}}{\bar{\pi}_{k-2}(0,H_{k-1})}\bar{\pi}_{k-1}(1,H_{k-2})\prod_{l=1}^{k-2}\frac{f(Y_{l}\mid H_{l-1},R_{l}=1,A=1)}{f(Y_{l}\mid H_{l-1},R_{l}=1,A=0)}\\
 & \E\left[\{1-\hat{\pi}_{k}(1,H_{k-1})\}\frac{\hat{\delta}(H_{k-1})}{\delta(H_{k-1})}G_{\hat{\mu},\hat{\pi},s-2}(H_{k-1})\mid H_{k-2},R_{k-1}=1,A=1\right]\bigg)\\
= & \E\bigg(\cdots\E\Big[\frac{1-e(H_{0})}{1-\hat{e}(H_{0})}\bar{\pi}_{k-1}(1,H_{k-2})\{1-\hat{\pi}_{k}(1,H_{k-1})\}\frac{\hat{\delta}(H_{k-1})}{\delta(H_{k-1})}\\
 & G_{\hat{\mu},\hat{\pi},s-2}(H_{k-1})\mid H_{k-2},R_{k-1}=1,A=1\Big]\cdots\mid H_{0},R_{1}=1,A=1\bigg)\\
:= & E_{1,k-2}\left[\frac{1-e(H_{0})}{1-\hat{e}(H_{0})}\bar{\pi}_{k-1}(1,H_{k-2})\{1-\hat{\pi}_{k}(1,H_{k-1})\}\frac{\hat{\delta}(H_{k-1})}{\delta(H_{k-1})}G_{\hat{\mu},\hat{\pi},s-2}(H_{k-1});H_{0}\right].
\end{align*}

{}For the second term (\ref{eq:tr_inter2}), again by iterated expectations,\textbf{{}
\begin{align*}
= & \E\Big[\cdots\E\left\{ \frac{1-e(H_{0})}{1-\hat{e}(H_{0})}\frac{\bar{\pi}_{k-1}(0,H_{k-2})}{\hat{\bar{\pi}}_{k-1}(0,H_{k-2})}G_{\hat{\mu},\hat{\pi},s-2}(H_{k-1})\mid H_{k-2},R_{k-1}=1,A=0\right\} \cdots\mid H_{0},R_{1}=1,A=0\Big]\\
= & E_{0,0}\left\{ \frac{1-e(H_{0})}{1-\hat{e}(H_{0})}G_{\hat{\mu},\hat{\pi},s-2}(H_{0})\right\} \text{ (by the definition of \ensuremath{G_{\hat{\mu},\hat{\pi},s-2}(H_{0})}).}
\end{align*}
}

{}Therefore, as the sample size $n\rightarrow\infty$, the multiply
robust estimator $\hat{\tau}_{\text{mr}}$ converges to 
\begin{align*}
\E & \bigg(\frac{e(H_{0})}{\hat{e}(H_{0})}\left[\pi_{1}(1,H_{0})g_{t+1}^{1}(H_{0})+\sum_{s=1}^{t-1}\pi_{1}(1,H_{0})g_{\hat{\mu},s+1}^{1}(H_{0})+\left\{ 1-\pi_{1}(1,H_{0})\right\} \hat{\mu}_{t}^{0}(H_{0})\right]\\
 & +\left\{ 1-\frac{e(H_{0})}{\hat{e}(H_{0})}\right\} \left[\sum_{s=1}^{t}\hat{\pi}_{1}(1,H_{0})\hat{g}_{s+1}^{1}(H_{0})+\left\{ 1-\hat{\pi}_{1}(1,H_{0})\right\} \hat{\mu}_{t}^{0}(H_{0})\right]-\hat{\mu}_{t}^{0}(H_{0})\\
 & +\frac{1-e(H_{0})}{1-\hat{e}(H_{0})}\sum_{s=1}^{t}\left\{ \sum_{k=1}^{s}E_{1,k-2}\left[\bar{\pi}_{k-1}(1,H_{k-2})\{1-\hat{\pi}_{k}(1,H_{k-1})\}\frac{\hat{\delta}(H_{k-1})}{\delta(H_{k-1})}G_{\hat{\mu},\hat{\pi},s-2}(H_{k-1});H_{0}\right]-G_{\hat{\mu},\hat{\pi},s-2}(H_{0})\right\} \bigg).
\end{align*}

{}Rearrange the terms, we can get the formula for $\pr\left\{ N(V;\hat{\theta})-N(V;\theta^{*})\right\} $
as 
\begin{align*}
\E & \bigg(\left\{ \frac{e(H_{0})}{\hat{e}(H_{0})}-1\right\} \pi_{1}(1,H_{0})g_{t+1}^{1}(H_{0})+\frac{e(H_{0})}{\hat{e}(H_{0})}\pi_{1}(1,H_{0})\sum_{s=1}^{t-1}g_{\hat{\mu},s+1}^{1}(H_{0})-\pi_{1}(1,H_{0})\sum_{s=1}^{t-1}g_{s+1}^{1}(H_{0})\\
 & +\frac{e(H_{0})}{\hat{e}(H_{0})}\left\{ 1-\pi_{1}(1,H_{0})\right\} \hat{\mu}_{t}^{0}(H_{0})+\pi_{1}(1,H_{0})\mu_{t}^{0}(H_{0})\\
 & +\left\{ 1-\frac{e(H_{0})}{\hat{e}(H_{0})}\right\} \left[\sum_{s=1}^{t}\hat{\pi}_{1}(1,H_{0})\hat{g}_{s+1}^{1}(H_{0})+\left\{ 1-\hat{\pi}_{1}(1,H_{0})\right\} \hat{\mu}_{t}^{0}(H_{0})\right]-\hat{\mu}_{t}^{0}(H_{0})\\
 & +\frac{1-e(H_{0})}{1-\hat{e}(H_{0})}\sum_{s=1}^{t}\left\{ \sum_{k=1}^{s}E_{1,k-2}\left[\bar{\pi}_{k-1}(1,H_{k-2})\{1-\hat{\pi}_{k}(1,H_{k-1})\}\frac{\hat{\delta}(H_{k-1})}{\delta(H_{k-1})}G_{\hat{\mu},\hat{\pi},s-2}(H_{k-1});H_{0}\right]-G_{\hat{\mu},\hat{\pi},s-2}(H_{0})\right\} \bigg).
\end{align*}

{}For the terms related to $g_{t+1}^{1}(H_{0})$ , we have 
\begin{align*}
 & \E\left[\left\{ \frac{e(H_{0})}{\hat{e}(H_{0})}-1\right\} \pi_{1}(1,H_{0})g_{t+1}^{1}(H_{0})+\left\{ 1-\frac{e(H_{0})}{\hat{e}(H_{0})}\right\} \hat{\pi}_{1}(1,H_{0})\hat{g}_{t+1}^{1}(H_{0})\right]\\
= & \E\left[\left\{ \frac{e(H_{0})}{\hat{e}(H_{0})}-1\right\} \left\{ \pi_{1}(1,H_{0})g_{t+1}^{1}(H_{0})-\hat{\pi}_{1}(1,H_{0})\hat{g}_{t+1}^{1}(H_{0})\right\} \right].
\end{align*}
For the terms with $s$ layers of expectations and the condition $A=1$
for $s=1,\cdots,t$, we have 
\begin{align*}
\E & \Big[\frac{e(H_{0})}{\hat{e}(H_{0})}\pi_{1}(1,H_{0})g_{\hat{\mu},s+1}^{1}(H_{0})-\pi_{1}(1,H_{0})g_{s+1}^{1}(H_{0})\\
 & +\left\{ 1-\frac{e(H_{0})}{\hat{e}(H_{0})}\right\} \hat{\pi}_{1}(1,H_{0})\hat{g}_{s+1}^{1}(H_{0})+\sum_{l=s}^{t}E_{1,s-1}\left[\bar{\pi}_{s}(1,H_{s-1})\{1-\hat{\pi}_{s+1}(1,H_{s})\}\frac{\hat{\delta}(H_{s})}{\delta(H_{s})}G_{\hat{\mu},\hat{\pi},l}(H_{s});H_{0}\right]\Big]\\
=\E\Biggl[ & \left\{ \frac{e(H_{0})}{\hat{e}(H_{0})}-1\right\} \left\{ \pi_{1}(1,H_{0})g_{\hat{\mu},s+1}^{1}(H_{0})-\hat{\pi}_{1}(1,H_{0})\hat{g}_{s+1}^{1}(H_{0})\right\} \\
 & +\sum_{l=s}^{t}\E\Bigg\{\cdots\E\Bigg\{\E\bigg(\cdots\E\bigg(\bar{\pi}_{s}(1,H_{s-1})\Big[\frac{1-e(H_{0})}{1-\hat{e}(H_{0})}\{1-\hat{\pi}_{s+1}(1,H_{s})\}\frac{\hat{\delta}(H_{s})}{\delta(H_{s-1})}\prod_{k=s+1}^{l}\frac{\pi_{l}(0,H_{l-1})}{\hat{\pi}_{l}(0,H_{l-1})}\\
 & -\{1-\pi_{s+1}(1,H_{s})\}\Big]\left\{ \hat{\mu}_{t}^{0}(H_{l})-\hat{\mu}_{t}^{0}(H_{l-1})\right\} \mid H_{l-1},R_{l}=1,A=0\bigg)\cdots\mid H_{s},R_{s+1}=1,A=0\bigg)\\
 & \mid H_{s-1},R_{s}=1,A=1\Bigg\}\cdots\mid H_{0},R_{1}=1,A=1\Bigg\}\Biggl]\\
=\E\Biggl[ & \left\{ \frac{e(H_{0})}{\hat{e}(H_{0})}-1\right\} \left\{ \pi_{1}(1,H_{0})g_{\hat{\mu},s+1}^{1}(H_{0})-\hat{\pi}_{1}(1,H_{0})\hat{g}_{s+1}^{1}(H_{0})\right\} \\
 & +\sum_{l=s}^{t}E_{1,s-1}\Bigg\{ E_{0,l-1}\bigg(\bar{\pi}_{s}(1,H_{s-1})\Big[\frac{1-e(H_{0})}{1-\hat{e}(H_{0})}\{1-\hat{\pi}_{s+1}(1,H_{s})\}\frac{\hat{\delta}(H_{s})}{\delta(H_{s-1})}\prod_{k=s+1}^{l}\frac{\pi_{l}(0,H_{l-1})}{\hat{\pi}_{l}(0,H_{l-1})}\\
 & -\{1-\pi_{s+1}(1,H_{s})\}\Big]\left\{ \hat{\mu}_{t}^{0}(H_{l})-\hat{\mu}_{t}^{0}(H_{l-1})\right\} ;H_{s}\bigg);H_{0}\Bigg\}\Bigg].
\end{align*}
For the rest terms with the condition $A=0$, we have 
\begin{align*}
\E\Big[ & \frac{e(H_{0})}{\hat{e}(H_{0})}\left\{ 1-\pi_{1}(1,H_{0})\right\} \hat{\mu}_{t}^{0}(H_{0})+\pi_{1}(1,H_{0})\mu_{t}^{0}(H_{0})\\
+ & \left\{ 1-\frac{e(H_{0})}{\hat{e}(H_{0})}\right\} \left\{ 1-\hat{\pi}_{1}(1,H_{0})\right\} \hat{\mu}_{t}^{0}(H_{0})-\hat{\mu}_{t}^{0}(H_{0})\\
+ & \frac{1-e(H_{0})}{1-\hat{e}(H_{0})}\left\{ 1-\hat{\pi}_{1}(1,H_{0})-1\right\} \frac{\pi_{1}(0,H_{0})}{\hat{\pi}_{1}(0,H_{0})}\sum_{s=1}^{t}G_{\hat{\mu},\hat{\pi},s-2}(H_{0})\Big]\\
=\E\Biggl[ & \left\{ \hat{\pi}_{1}(1,H_{0})-\pi_{1}(1,H_{0})\right\} \left\{ \frac{e(H_{0})}{\hat{e}(H_{0})}\hat{\mu}_{t}^{0}(H_{0})-\mu_{t}^{0}(H_{0})\right\} +\hat{\pi}_{1}(1,H_{0})\left\{ \mu_{t}^{0}(H_{0})-\hat{\mu}_{t}^{0}(H_{0})\right\} \\
 & -\frac{1-e(H_{0})}{1-\hat{e}(H_{0})}\hat{\pi}_{1}(1,H_{0})\frac{\pi_{1}(0,H_{0})}{\hat{\pi}_{1}(0,H_{0})}\Bigg\{\sum_{s=1}^{t}\E\bigg(\cdots\\
 & \E\left[\prod_{l=2}^{s}\frac{\pi_{l}(0,H_{l-1})}{\hat{\pi}_{l}(0,H_{l-1})}\left\{ \hat{\mu}_{t}^{0}(H_{s})-\hat{\mu}_{t}^{0}(H_{s-1})\right\} \mid H_{s-1},R_{s}=1,A=0\right]\cdots\mid H_{0},R_{1}=1,A=0\bigg)\Bigg\}\Biggl]\\
=\E\Biggl[ & \left\{ \hat{\pi}_{1}(1,H_{0})-\pi_{1}(1,H_{0})\right\} \left\{ \frac{e(H_{0})}{\hat{e}(H_{0})}\hat{\mu}_{t}^{0}(H_{0})-\mu_{t}^{0}(H_{0})\right\} +\hat{\pi}_{1}(1,H_{0})\Bigg\{\sum_{s=1}^{t}\E\bigg(\cdots\\
 & \E\left[\left\{ 1-\frac{1-e(H_{0})}{1-\hat{e}(H_{0})}\frac{\bar{\pi}_{s}(0,H_{s-1})}{\hat{\bar{\pi}}_{s}(0,H_{s-1})}\right\} \left\{ \hat{\mu}_{t}^{0}(H_{s})-\hat{\mu}_{t}^{0}(H_{s-1})\right\} \mid H_{s-1},R_{s}=1,A=0\right]\\
 & \cdots\mid H_{0},R_{1}=1,A=0\bigg)\Bigg\}\Biggl]\text{ (since \ensuremath{\mu_{t}^{0}(H_{0})-\hat{\mu}_{t}^{0}(H_{0})=\sum_{s=1}^{t}\E\left\{ \hat{\mu}_{t}^{0}(H_{s})-\hat{\mu}_{t}^{0}(H_{s-1})\mid H_{0},R_{1}=1,A=0\right\} })}.
\end{align*}

{}Summarize $\pr\left\{ N(V;\hat{\theta})-N(V;\theta^{*})\right\} $,
which is the remainder term $\text{Rem}(\hat{\pr},\pr)$, we have
\begin{align}
=\E\Bigg[ & \left\{ \frac{e(H_{0})}{\hat{e}(H_{0})}-1\right\} \left\{ \pi_{1}(1,H_{0})g_{t+1}^{1}(H_{0})-\hat{\pi}_{1}(1,H_{0})\hat{g}_{t+1}^{1}(H_{0})\right\} \nonumber \\
 & +\left\{ \frac{e(H_{0})}{\hat{e}(H_{0})}-1\right\} \left[\sum_{s=2}^{t}\left\{ \pi_{1}(1,H_{0})g_{\hat{\mu},s+1}^{1}(H_{0})-\hat{\pi}_{1}(1,H_{0})\hat{g}_{t+1}^{1}(H_{0})\right\} \right]\label{eq:tr_ps-rpom}\\
 & +\sum_{s=1}^{t-1}\sum_{l=s+1}^{t}E_{1,s-1}\Bigg\{ E_{0,l-1}\bigg(\bar{\pi}_{s}(1,H_{s-1})\Big[\frac{1-e(H_{0})}{1-\hat{e}(H_{0})}\{1-\hat{\pi}_{s+1}(1,H_{s})\}\frac{\hat{\delta}(H_{s-1})}{\delta(H_{s-1})}\prod_{k=s+1}^{l}\frac{\pi_{k}(0,H_{k-1})}{\hat{\pi}_{k}(0,H_{k-1})}\nonumber \\
 & \qquad\qquad\qquad\qquad\qquad\qquad-\{1-\pi_{s+1}(1,H_{s})\}\Big]\left\{ \hat{\mu}_{t}^{0}(H_{l})-\hat{\mu}_{t}^{0}(H_{l-1})\right\} ;H_{s}\bigg);H_{0}\Bigg\}\label{eq:tr_om-psrp1}\\
 & +\left\{ \hat{\pi}_{1}(1,H_{0})-\pi_{1}(1,H_{0})\right\} \left\{ \frac{e(H_{0})}{\hat{e}(H_{0})}\hat{\mu}_{t}^{0}(H_{0})-\mu_{t}^{0}(H_{0})\right\} \label{eq:tr_rp-psom}\\
 & +\hat{\pi}_{1}(1,H_{0})\left(\sum_{s=1}^{t}E_{0,s-1}\left[\left\{ 1-\frac{1-e(H_{0})}{1-\hat{e}(H_{0})}\frac{\bar{\pi}_{s}(0,H_{s-1})}{\hat{\bar{\pi}}_{s}(0,H_{s-1})}\right\} \left\{ \hat{\mu}_{t}^{0}(H_{s})-\hat{\mu}_{t}^{0}(H_{s-1})\right\} ;H_{0}\right]\right)\Bigg],\label{eq:tr_om-psrp2}
\end{align}
which matches the remainder term in Theorem 7.

{}Therefore, $\hat{\tau}_{\text{mr}}-\tau_{t}^{\text{\jtr}}=\left(\pr_{n}-\pr\right)N(V;\theta_{0})+\text{Rem}(\hat{\pr},\pr)+o_{\pr}(n^{-\frac{1}{2}})=\pr_{n}\left\{ \varphi_{t}^{\text{\jtr}}(V_{i};\pr)\right\} +\text{Rem}(\hat{\pr},\pr)+o_{\pr}(n^{-1/2})$.
If $\text{Rem}(\hat{\pr},\pr)=o_{\pr}(n^{-1/2})$, then $\hat{\tau}_{\text{mr}}-\tau_{t}^{\jtr}=n^{-1}\sum_{i=1}^{n}\varphi_{t}^{\text{\jtr}}(V_{i};\pr)+o_{\pr}(n^{-1/2})$.
Apply the central limit theorem and we complete the proof.

\paragraph{Proof of Corollary 2:}

{}For the remainder term, based on the uniform bounded condition,
we proceed to apply Cauchy-Schwarz and Holder's inequality to obtain
the upper bound for each component. For the first term that corresponds
to (\ref{eq:tr_ps-rpom}), we have 
\begin{align*}
\leq & \Big\lVert\frac{e(H_{0})}{\hat{e}(H_{0})}-1\Big\lVert\cdot\Big\lVert\pi_{1}(1,H_{0})g_{t+1}^{1}(H_{0})-\hat{\pi}_{1}(1,H_{0})\hat{g}_{t+1}^{1}(H_{0})\Big\lVert\\
+ & \Big\lVert\frac{e(H_{0})}{\hat{e}(H_{0})}-1\Big\lVert\cdot\left[\sum_{s=1}^{t-1}\left\{ \Big\lVert\pi_{1}(1,H_{0})g_{\hat{\mu},s+1}^{1}(H_{0})-\hat{\pi}_{1}(1,H_{0})\hat{g}_{s+1}^{1}(H_{0})\Big\lVert\right\} \right]\\
\leq & \Big\lVert\pi_{1}(1,H_{0})\Big\lVert_{\infty}\cdot\Big\lVert\frac{e(H_{0})}{\hat{e}(H_{0})}-1\Big\lVert\cdot\left\{ \Big\lVert g_{t+1}^{1}(H_{0})-\hat{g}_{t+1}^{1}(H_{0})\Big\lVert+\Big\lVert\pi_{1}(1,H_{0})-\hat{\pi}_{1}(1,H_{0})\Big\lVert\right\} \\
 & +\Big\lVert\frac{e(H_{0})}{\hat{e}(H_{0})}-1\Big\lVert\cdot\left[\sum_{s=1}^{t-1}\left\{ \Big\lVert g_{\hat{\mu},s+1}^{1}(H_{0})-\hat{g}_{s+1}^{1}(H_{0})\Big\lVert+\Big\lVert\pi_{1}(1,H_{0})-\hat{\pi}_{1}(1,H_{0})\Big\lVert\right\} \right]\\
\leq & \Big\lVert\frac{e(H_{0})}{\hat{e}(H_{0})}-1\Big\lVert\cdot\Big\{\Big\lVert g_{t+1}^{1}(H_{0})-\hat{g}_{t+1}^{1}(H_{0})\Big\lVert+\sum_{s=1}^{t-1}\Big\lVert g_{\hat{\mu},s+1}^{1}(H_{0})-\hat{g}_{s+1}^{1}(H_{0})\Big\lVert\\
 & +(t-1)\Big\lVert\pi_{1}(1,H_{0})-\hat{\pi}_{1}(1,H_{0})\Big\lVert\Big\}\text{ (since \ensuremath{\pi_{1}(1,H_{0})\leq1})}.
\end{align*}
The second inequality holds by Holder's inequality and triangle inequality.
Based on the derived upper bound, the bound of this term is $O_{\pr}(n^{-\min\left(c_{e}+c_{\pi},c_{e}+c_{g}\right)})$.

{}For the second term that corresponds to \eqref{eq:tr_om-psrp1},
we have 
\begin{align*}
\leq & \sum_{s=1}^{t-1}\sum_{l=s+1}^{t}\bigg[\Big\lVert\bar{\pi}_{s}(1,H_{s-1})\Big\lVert\cdot\Big\lVert\frac{1-e(H_{0})}{1-\hat{e}(H_{0})}\{1-\hat{\pi}_{s+1}(1,H_{s})\}\frac{\hat{\delta}(H_{s-1})}{\delta(H_{s-1})}\prod_{k=s+1}^{l}\frac{\pi_{l}(0,H_{l-1})}{\hat{\pi}_{l}(0,H_{l-1})}\\
 & -\{1-\pi_{s+1}(1,H_{s})\}\Big\lVert\cdot\Big\lVert\E\left\{ \hat{\mu}_{t}^{0}(H_{l})\mid H_{l-1},R_{l}=1,A=0\right\} -\hat{\mu}_{t}^{0}(H_{l-1})\Big\lVert\bigg]\\
\leq & \sum_{s=2}^{t}\sum_{l=s+1}^{t}\bigg[\Big\lVert\frac{1-e(H_{0})}{1-\hat{e}(H_{0})}\{1-\hat{\pi}_{s+1}(1,H_{s})\}\frac{\hat{\delta}(H_{s-1})}{\delta(H_{s-1})}\prod_{k=s+1}^{l}\frac{\pi_{l}(0,H_{l-1})}{\hat{\pi}_{l}(0,H_{l-1})}-\{1-\pi_{s+1}(1,H_{s})\}\Big\lVert\\
 & \cdot\Big\lVert\E\left\{ \hat{\mu}_{t}^{0}(H_{l})\mid H_{l-1},R_{l}=1,A=0\right\} -\hat{\mu}_{t}^{0}(H_{l-1})\Big\lVert\bigg]\text{ (since \ensuremath{\bar{\pi}_{s}(1,H_{s-1})\leq1})}\\
\leq & \sum_{s=1}^{t-1}\sum_{l=s+1}^{t}\Big\lVert\E\left\{ \hat{\mu}_{t}^{0}(H_{l})\mid H_{l-1},R_{l}=1,A=0\right\} -\hat{\mu}_{t}^{0}(H_{l-1})\Big\lVert\cdot\\
 & \bigg[\Big\lVert\frac{1-e(H_{0})}{1-\hat{e}(H_{0})}\frac{\hat{\delta}(H_{s-1})}{\delta(H_{s-1})}\Big\lVert_{\infty}\cdot\Big\lVert\{1-\hat{\pi}_{s+1}(1,H_{s})\}\prod_{k=s+1}^{l}\frac{\pi_{l}(0,H_{l-1})}{\hat{\pi}_{l}(0,H_{l-1})}-\{1-\pi_{s+1}(1,H_{s})\}\Big\lVert\\
 & +\Big\lVert1-\pi_{s+1}(1,H_{s})\Big\lVert_{\infty}\cdot\Big\lVert\frac{1-e(H_{0})}{1-\hat{e}(H_{0})}\frac{\hat{\delta}(H_{s-1})}{\delta(H_{s-1})}-1\Big\rVert\bigg]\\
\leq & M\sum_{s=1}^{t-1}\sum_{l=s+1}^{t}\Big\lVert\E\left\{ \hat{\mu}_{t}^{0}(H_{l})\mid H_{l-1},R_{l}=1,A=0\right\} -\hat{\mu}_{t}^{0}(H_{l-1})\Big\lVert\cdot\\
 & \bigg[\Big\lVert\{1-\hat{\pi}_{s+1}(1,H_{s})\}\prod_{k=s+1}^{l}\frac{\pi_{l}(0,H_{l-1})}{\hat{\pi}_{l}(0,H_{l-1})}-\{1-\pi_{s+1}(1,H_{s})\}\Big\lVert+\Big\lVert\frac{1-e(H_{0})}{1-\hat{e}(H_{0})}\frac{\hat{\delta}(H_{s-1})}{\delta(H_{s-1})}-1\Big\rVert\bigg].
\end{align*}
The second and the third inequalities hold by Holder's inequality
and triangle inequality. The term is $o_{\pr}(n^{-\min\left(c_{e}+c_{\mu},c_{\mu}+c_{\pi}\right)})$.

{}For the third term that corresponds to \eqref{eq:tr_rp-psom},
we have 
\begin{align*}
\leq & \Big\lVert\hat{\pi}_{1}(1,H_{0})-\pi_{1}(1,H_{0})\Big\lVert\cdot\Big\lVert\frac{e(H_{0})}{\hat{e}(H_{0})}\hat{\mu}_{t}^{0}(H_{0})-\mu_{t}^{0}(H_{0})\Big\lVert\\
\leq & \Big\lVert\hat{\pi}_{1}(1,H_{0})-\pi_{1}(1,H_{0})\Big\lVert\cdot\bigg\{\Big\lVert\frac{e(H_{0})}{\hat{e}(H_{0})}\Big\lVert_{\infty}\Big\lVert\hat{\mu}_{t}^{0}(H_{0})-\mu_{t}^{0}(H_{0})\Big\lVert\\
 & +\Big\lVert\mu_{t}^{0}(H_{0})\Big\lVert_{\infty}\Big\lVert\frac{e(H_{0})}{\hat{e}(H_{0})}-1\Big\lVert\bigg\}\\
\leq & M\Big\lVert\hat{\pi}_{1}(1,H_{0})-\pi_{1}(1,H_{0})\Big\lVert\cdot\left\{ \Big\lVert\hat{\mu}_{t}^{0}(H_{0})-\mu_{t}^{0}(H_{0})\Big\lVert+\Big\lVert\frac{e(H_{0})}{\hat{e}(H_{0})}-1\Big\lVert\right\} .
\end{align*}
The second inequality holds by Holder's inequality and triangle inequality.
The term is $o_{\pr}(n^{-\min\left(c_{e}+c_{\pi},c_{\mu}+c_{\pi}\right)})$.

{}For the fourth term that corresponds to \eqref{eq:tr_om-psrp2},
we have 
\begin{align*}
\leq & \sum_{s=1}^{t}\Big\lVert1-\frac{1-e(H_{0})}{1-\hat{e}(H_{0})}\frac{\bar{\pi}_{s}(0,H_{s-1})}{\hat{\bar{\pi}}_{s}(0,H_{s-1})}\Big\lVert\cdot\Big\lVert\E\left\{ \hat{\mu}_{t}^{0}(H_{s})\mid H_{s-1},R_{l}=1,A=0\right\} -\hat{\mu}_{t}^{0}(H_{s-1})\Big\lVert.\\
\leq & \sum_{s=1}^{t}\Big\lVert\E\left\{ \hat{\mu}_{t}^{0}(H_{s})\mid H_{s-1},R_{l}=1,A=0\right\} -\hat{\mu}_{t}^{0}(H_{s-1}))\Big\lVert\cdot\bigg\{\Big\lVert1-\frac{1-e(H_{0})}{1-\hat{e}(H_{0})}\Big\rVert\\
 & +\Big\lVert\frac{1-e(H_{0})}{1-\hat{e}(H_{0})}\Big\rVert_{\infty}\cdot\Big\lVert1-\frac{\bar{\pi}_{s}(0,H_{s-1})}{\hat{\bar{\pi}}_{s}(0,H_{s-1})}\Big\rVert\bigg\}\\
\leq & M\sum_{s=1}^{t}\Big\lVert\E\left\{ \hat{\mu}_{t}^{0}(H_{s})\mid H_{s-1},R_{l}=1,A=0\right\} -\hat{\mu}_{t}^{0}(H_{s-1})\Big\lVert\cdot\bigg\{\Big\lVert1-\frac{1-e(H_{0})}{1-\hat{e}(H_{0})}\Big\rVert+\Big\lVert1-\frac{\bar{\pi}_{s}(0,H_{s-1})}{\hat{\bar{\pi}}_{s}(0,H_{s-1})}\Big\rVert\bigg\}.
\end{align*}
The term is $o_{\pr}(n^{-\min\left(c_{e}+c_{\mu},c_{\mu}+c_{\pi}\right)})$.
Therefore, based on Theorem 7 and apply central limit theorem, we
have $\hat{\tau}_{\text{mr}}-\tau_{t}^{\text{\jtr}}=O_{\pr}\left(n^{-\frac{1}{2}}+n^{-c}\right)$,
where $c=\min(c_{e}+c_{\mu},c_{e}+c_{\pi},c_{\mu}+c_{\pi},c_{\pi}+c_{g})$,
which completes the proof.

\section{{Connections to the conventional augmented inverse
propensity weighted estimator}}\label{sec:supp_aipw}

We try to connect the proposed multiply robust estimators
with the augmented inverse propensity weighted (AIPW; \citealp{robins1994estimation})
estimators in the existing missing data literature (e.g., \citealp{robins1995semiparametric,bang2005doubly}).
Under the cross-sectional setting, we use the identification formula
in Theorem 1 as a starting point to construct
the AIPW estimator. Extending to longitudinal settings follows a similar
idea. 

Since the identification formula in Theorem 1
(b) depends on two of the three models, we can apply the standard
AIPW technique to obtain a doubly robust estimator in the AIPW form
as
\[
\hat{\tau}_{\text{ps-rpom}}=\pr_{n}\left[\left\{ \frac{A}{e(X;\hat{\alpha})}-\frac{1-A}{1-e(X;\hat{\alpha})}\right\} R_{1}\left\{ Y_{1}-\mu_{1}^{0}(X;\hat{\beta})\right\} -\left\{ \frac{A}{e(X;\hat{\alpha})}-1\right\} \pi_{1}(1,X;\hat{\gamma})\left\{ \mu_{1}^{1}(X;\hat{\beta})-\mu_{1}^{0}(X;\hat{\beta})\right\} \right].
\]
The following theorem indicates that it is doubly robust in the sense
that it is consistent under $\mathcal{M}_{\text{ps}}\cup\mathcal{M}_{\text{rp+om}}$
when using parametric modeling strategy to estimate the nuisance functions.

\begin{theorem}\label{thm: ps-rpom-para}

{Under Assumptions 1--4,
suppose that there exists $\varepsilon>0,$ such that $\varepsilon<\big\{ e(X;\alpha^{*}),\allowbreak e(X;\hat{\alpha}),\pi_{1}(a,X;\gamma^{*}),\pi_{1}(a,X;\hat{\gamma})\big\}<1-\varepsilon$
for all $X$ and $a$ almost surely, the estimator $\hat{\tau}_{\text{ps-rpom}}$
is doubly robust in the sense that it is consistent for $\tau_{1}^{\jtr}$
under $\mathcal{M}_{\text{ps}}\cup\mathcal{M}_{\text{rp+om}}$. }

\end{theorem}

\begin{proof} {Suppose the model estimators $\hat{\theta}=(\hat{\alpha},\hat{\beta},\hat{\gamma})^{\text{T}}$
converges to $\theta^{*}=(\alpha^{*},\beta^{*},\gamma^{*})^{\text{T}}$
in the sense that $\lVert\hat{\theta}-\theta^{*}\rVert=o_{p}(1)$,
where at least one component of $\hat{\theta}$ needs to converge
to the true value. As the sample size $n\rightarrow\infty$, we would
expect $\hat{\tau}_{\text{ps-rpom}}$ converges to 
\begin{align*}
 & \E\left[\big\{\frac{A}{e(X;\alpha^{*})}-\frac{1-A}{1-e(X;\alpha^{*})}\big\} R_{1}\big\{ Y_{1}-\mu_{1}^{0}(X;\beta^{*})\big\}\right]-\E\left[\frac{A-e(X;\alpha^{*})}{e(X;\alpha^{*})}\pi_{1}(1,X;\gamma^{*})\big\{\mu_{1}^{1}(X;\beta^{*})-\mu_{1}^{0}(X;\beta^{*})\big\}\right]\\
= & \tau_{1}^{\jtr}-\E\left[\pi_{1}(1,X)\left\{ \mu_{1}^{1}(X)-\mu_{1}^{0}(X)\right\} \right]\\
+ & \E\left[\frac{e(X)}{e(X;\alpha^{*})}\pi_{1}(1,X)\left\{ \mu_{1}^{1}(X)-\mu_{1}^{0}(X)+\mu_{1}^{0}(X)-\mu_{1}^{0}(X;\beta^{*})\right\} \right]-\E\left[\frac{1-e(X)}{1-e(X;\alpha^{*})}\pi_{1}(1,X)\left\{ \mu_{1}^{0}(X)-\mu_{1}^{0}(X;\beta^{*})\right\} \right]\\
- & \E\left[\frac{e(X)-e(X;\alpha^{*})}{e(X;\alpha^{*})}\pi_{1}(1,X;\gamma^{*})\big\{\mu_{1}^{1}(X;\beta^{*})-\mu_{1}^{0}(X;\beta^{*})\big\}\right]\\
= & \E\left[\frac{e(X)}{e(X;\alpha^{*})}\pi_{1}(1,X)\left\{ \mu_{1}^{1}(X)-\mu_{1}^{0}(X)\right\} \right]-\E\left[\pi_{1}(1,X)\left\{ \mu_{1}^{1}(X)-\mu_{1}^{0}(X)\right\} \right]\\
+ & \E\left[\left\{ \frac{e(X)}{e(X;\alpha^{*})}-1\right\} \pi_{1}(1,X)\left\{ \mu_{1}^{0}(X)-\mu_{1}^{0}(X;\beta^{*})\right\} \right]-\E\left[\left\{ \frac{1-e(X)}{1-e(X;\alpha^{*})}-1\right\} \pi_{1}(1,X)\left\{ \mu_{1}^{0}(X)-\mu_{1}^{0}(X;\beta^{*})\right\} \right]\\
- & \E\left[\frac{e(X)-e(X;\alpha^{*})}{e(X;\alpha^{*})}\pi_{1}(1,X;\gamma^{*})\big\{\mu_{1}^{1}(X;\beta^{*})-\mu_{1}^{0}(X;\beta^{*})\big\}\right]\\
= & \tau_{1}^{\jtr}+\E\left(\left\{ \frac{e(X)}{e(X;\alpha^{*})}-1\right\} \left[\pi_{1}(1,X)\left\{ \mu_{1}^{1}(X)-\mu_{1}^{0}(X)\right\} -\pi_{1}(1,X;\gamma^{*})\big\{\mu_{1}^{1}(X;\beta^{*})-\mu_{1}^{0}(X;\beta^{*})\big\}+\mu_{1}^{0}(X)-\mu_{1}^{0}(X;\beta^{*})\right]\right)\\
- & \E\left[\left\{ \frac{1-e(X)}{1-e(X;\alpha^{*})}-1\right\} \pi_{1}(1,X)\left\{ \mu_{1}^{0}(X)-\mu_{1}^{0}(X;\beta^{*})\right\} \right].
\end{align*}
From the expression of the asymptotic bias, the estimator $\hat{\tau}_{\text{ps-rpom}}$
is consistent for $\tau_{1}^{\jtr}$ under $\mathcal{M}_{\text{ps}}\cup\mathcal{M}_{\text{rp+om}}$.}\end{proof} 

When using flexible modeling strategies to approximate
the nuisance functions, a standard AIPW estimator has the form
\[
\hat{\tau}_{\text{ps-rpom}}=\pr_{n}\left[\left\{ \frac{A}{\hat{e}(X)}-\frac{1-A}{1-\hat{e}(X)}\right\} R_{1}\left\{ Y_{1}-\hat{\mu}_{1}^{0}(X)\right\} -\left\{ \frac{A}{\hat{e}(X)}-1\right\} \hat{\pi}_{1}(1,X)\left\{ \hat{\mu}_{1}^{1}(X)-\hat{\mu}_{1}^{0}(X)\right\} \right]
\]
and enjoys the property of rate-double robustness, in the sense that
it reaches $n^{1/2}$-consistency if any nuisance functions converge
at a rate no less than $n^{-1/4}$, as illustrated in Corollary \ref{cor: ps-rpom-flex}.

\begin{corollary}\label{cor: ps-rpom-flex} {Under
the assumptions in Corollary 1, $\hat{\tau}_{\text{ps-rpom}}-\tau_{1}^{\text{\jtr}}=O_{\mathbb{P}}\left(n^{-1/2}+n^{-c}\right)$,
where $c=\min(c_{e}+c_{\mu},c_{e}+c_{\pi})$.}

\end{corollary} 

\begin{proof} {We again follow the proof in \citet{kennedy2016semiparametric}.
To simplify the notations, denote $\pr\left\{ N(V;\theta_{0})\right\} =\tau_{1}^{\text{\jtr}}$,
where 
\begin{align*}
N(V;\theta_{0})= & \left\{ \frac{A}{e(X)}-\frac{1-A}{1-e(X)}\right\} R_{1}\left\{ Y_{1}-\mu_{1}^{0}(X)\right\} -\frac{A-e(X)}{e(X)}\pi_{1}(1,X)\left\{ \mu_{1}^{1}(X)-\mu_{1}^{0}(X)\right\} .
\end{align*}
Then $\pr\left\{ N(V;\theta^{*})\right\} =\pr\left\{ N(V;\theta_{0})\right\} =\tau_{1}^{\text{\jtr}}$.
Consider the decomposition 
\[
\hat{\tau}_{\text{ps-rpom}}-\tau_{1}^{\text{\jtr}}=\left(\pr_{n}-\pr\right)N(V;\hat{\theta})-\pr\left\{ N(V;\hat{\theta})-N(V;\theta^{*})\right\} .
\]
Using empirical process theory, if the nuisance functions take values
in Donsker classes, and satisfy the positivity assumption, i.e., there
exists $\varepsilon>0$, such that $\varepsilon<\{e(X),\pi_{1}(a,X)\}<1-\varepsilon$
for all $X$, then $N(V;\hat{\theta})$ takes values in Donsker classes,
and the first term can be written as 
\[
\left(\pr_{n}-\pr\right)N(V;\hat{\theta})=\left(\pr_{n}-\pr\right)N(V;\theta_{0})+o_{\pr}(n^{-\frac{1}{2}}).
\]
}

{For the second term $\pr\left\{ N(V;\hat{\theta})-N(V;\theta^{*})\right\} $,
by computing the expectations, we have
\begin{align*}
\pr\left\{ N(V;\hat{\theta})-N(V;\theta^{*})\right\}  & =\pr\left(\left\{ \frac{e(X)}{\hat{e}(X)}-1\right\} \left[\pi_{1}(1,X)\left\{ \mu_{1}^{1}(X)-\mu_{1}^{0}(X)\right\} -\hat{\pi}_{1}(1,X)\big\{\hat{\mu}_{1}^{1}(X)-\hat{\mu}_{1}^{0}(X)\big\}\right]\right)\\
 & +\pr\left[\left\{ \frac{e(X)}{\hat{e}(X)}-1\right\} \left\{ \mu_{1}^{0}(X)-\hat{\mu}_{1}^{0}(X)\right\} \right]-\pr\left[\left\{ \frac{1-e(X)}{1-\hat{e}(X)}-1\right\} \pi_{1}(1,X)\left\{ \mu_{1}^{0}(X)-\hat{\mu}_{1}^{0}(X)\right\} \right].
\end{align*}
Under the positivity assumptions, we apply Cauchy-Schwarz inequality
($\pr(fg)\leq\lVert f\rVert\lVert g\rVert$) and obtain a upper bound
for the second term as
\begin{align*}
\pr\left\{ N(V;\hat{\theta})-N(V;\theta^{*})\right\}  & \leq\Big\lVert\frac{e(X)}{\hat{e}(X)}-1\Big\rVert\cdot\Big\lVert\pi_{1}(1,X)\mu_{1}^{1}(X)-\hat{\pi}_{1}(1,X)\hat{\mu}_{1}^{1}(X)\Big\rVert\\
 & +\Big\lVert\frac{e(X)}{\hat{e}(X)}-1\Big\rVert\cdot\Big\lVert\pi_{1}(1,X)\mu_{1}^{0}(X)-\hat{\pi}_{1}(1,X)\hat{\mu}_{1}^{0}(X)\Big\rVert\\
 & +\Big\lVert\frac{e(X)}{\hat{e}(X)}-1\Big\rVert\cdot\Big\lVert\mu_{1}^{0}(X)-\hat{\mu}_{1}^{0}(X)\Big\rVert\\
 & +\Big\lVert\frac{1-e(X)}{1-\hat{e}(X)}-1\Big\rVert\cdot\Big\lVert\pi_{1}(1,X)\left\{ \mu_{1}^{0}(X)-\hat{\mu}_{1}^{0}(X)\right\} \Big\rVert\\
\text{} & \leq\Big\lVert\left\{ \frac{e(X)}{\hat{e}(X)}-1\right\} \left\{ \mu_{1}^{1}(X)-\hat{\mu}_{1}^{1}(X)\right\} \Big\rVert_{1}\cdot\Big\lVert\pi_{1}(1,X)\Big\rVert_{\infty}\\
 & +\Big\lVert\left\{ \frac{e(X)}{\hat{e}(X)}-1\right\} \left\{ \pi_{1}(1,X)-\hat{\pi}_{1}(1,X)\right\} \Big\rVert_{1}\cdot\Big\lVert\hat{\mu}_{1}^{1}(X)\Big\rVert_{\infty}\\
 & +\Big\lVert\left\{ \frac{e(X)}{\hat{e}(X)}-1\right\} \left\{ \mu_{1}^{0}(X)-\hat{\mu}_{1}^{0}(X)\right\} \Big\rVert_{1}\cdot\Big\lVert\pi_{1}(1,X)\Big\rVert_{\infty}\\
 & +\Big\lVert\left\{ \frac{e(X)}{\hat{e}(X)}-1\right\} \left\{ \pi_{1}(1,X)-\hat{\pi}_{1}(1,X)\right\} \Big\rVert_{1}\cdot\Big\lVert\hat{\mu}_{1}^{0}(X)\Big\rVert_{\infty}\\
 & +\Big\lVert\frac{e(X)}{\hat{e}(X)}-1\Big\rVert\cdot\Big\lVert\mu_{1}^{0}(X)-\hat{\mu}_{1}^{0}(X)\Big\rVert\\
 & +\Big\lVert\left\{ 1-\frac{1-e(X)}{1-\hat{e}(X)}\right\} \left\{ \mu_{1}^{0}(X)-\hat{\mu}_{1}^{0}(X)\right\} \Big\rVert_{1}\cdot\Big\lVert\pi_{1}(1,X)\Big\rVert_{\infty}\\
 & \leq M\Big\lVert\frac{e(X)}{\hat{e}(X)}-1\Big\rVert\cdot\left\{ \Big\lVert\mu_{1}^{1}(X)-\hat{\mu}_{1}^{1}(X)\Big\rVert+\Big\lVert\pi_{1}(1,X)-\hat{\pi}_{1}(1,X)\Big\lVert+\Big\lVert\mu_{1}^{0}(X)-\hat{\mu}_{1}^{0}(X)\Big\rVert\right\} \\
 & +M\Big\lVert\mu_{1}^{0}(X)-\hat{\mu}_{1}^{0}(X)\Big\rVert\cdot\Big\lVert1-\frac{1-e(X)}{1-\hat{e}(X)}\Big\rVert.
\end{align*}
The second inequality holds by the triangle inequality and Holder's
inequality, and the last inequality holds by Cauchy-Schwarz. We have
$\hat{\tau}_{\text{ps-rpom}}-\tau_{1}^{\text{\jtr}}=O_{\mathbb{P}}\left(n^{-1/2}+n^{-c}\right)$,
where $c=\min(c_{e}+c_{\mu},c_{e}+c_{\pi})$ }\end{proof}

The triply robust estimator $\hat{\tau}_{\text{tr}}$
consists of all the components in the AIPW estimator $\hat{\tau}_{\text{ps-rpom}}$,
while at the same time including extra augmented terms to guarantee
triple robustness in the sense that it achieves $n^{1/2}$-consistency
if any two of the three nuisance models are correct when using the
parametric modeling strategy or if the nuisance functions converge
at a rate no less than $n^{-1/4}$ when using the flexible modeling
strategy. Those additional augmented terms in the triply robust estimator
constitute one of the major contributions of the paper.

\section{Sensitivity analysis on the partial ignorability
of missingness assumption}\label{sec:supp_sen}
In the main text, we impose the partial ignorability
of missingness assumption on the missing components in the control
group for the treatment effect identification under J2R. While it
may not be realistic in practice, sensitivity analyses can be conducted
to assess the robustness of the ATE estimation against this assumption.
In this section, we provide a way to conduct the sensitivity analysis
against Assumption 3 under the PMM framework in
cross-sectional studies. Extending to longitudinal studies follows
the same logic. 

Using the idea of delta-adjustment \citep{mallinckrodt2016analyzing},
we modify the missingness ignorability assumption (Assumption 3)
by introducing a sensitivity parameter $\delta$ in the outcome mean
in the control group as Assumption \ref{assump:sen2-1time}. In this
way, the discrepancy in the outcome mean among the observed and missing
individuals indicates an MNAR pattern in the control group due to
the dependence between the response status and the outcome. With the
lack of MAR in the control group, the outcome mean $\E\{Y_{1}(0)\mid X\}$
in Assumption 4 cannot be identified solely based
on the observed individuals. Therefore, we replace it with $\E\{Y_{1}(0,1)\mid X\}$
by using the non-dropouts in the control group to characterize the
outcome mean of dropouts in the treated group and adjust the original
Assumptions 3 and 4 as follows. 

\begin{assumptionp}{3$'$}[Delta-adjustment in
the control group]\label{assump:sen2-1time}

{$\E\big\{ Y_{1}(0,0)\mid X\big\}=\E\{Y_{1}(0,1)\mid X\}+\delta$. }

\end{assumptionp}

Assumption \ref{assump:sen2-1time} depicts an MNAR
pattern for the missing components in the control group. The sensitivity
parameter $\delta$ controls the degree of the deviation from the
observed outcome mean, thus indicating a difference in outcome distributions
between the observed and missing individuals when $\delta\neq0$.
Compared with Assumption 3, where we directly
assume the conditional independence between the response status and
the outcome to characterize the MAR assumption under general CBI models,
Assumption \ref{assump:sen2-1time} only specifies the outcome mean
$\E\big\{ Y_{1}(0,0)\mid X\big\}$ that is needed for the ATE identification.
If other types of treatment effect estimands are considered, e.g.,
the risk difference or the quantile treatment effect, one can alternatively
use delta adjustment on the observed distribution $f\{Y_{1}(0,1)\mid X\}$
to describe the unobserved distribution $f\{Y_{1}(0,0)\mid X\}$ and
conduct sensitivity analyses.

\begin{assumptionp}{4$'$}[J2R for the outcome
mean in the treated group]\label{assump:sen4-1time}

{$\E\big\{ Y_{1}(1,0)\mid X,R_{1}(1)=0\big\}=\E\{Y_{1}(0,1)\mid X\}$. }

\end{assumptionp}

We replace the outcome mean $\E\{Y_{1}(0)\mid X\}$
in the original Assumption 4 with $\E\{Y_{1}(0,1)\mid X\}$
in Assumption \ref{assump:sen4-1time} for the sensitivity analysis,
since now the dropouts in the treated group are expected to share
the same outcome mean as the observed subjects in the control group
given the same history. 

{Note that when $\delta=0$, Assumptions \ref{assump:sen2-1time}
and \ref{assump:sen4-1time} do not correspond to Assumptions 3
and 4, as Assumption 3 imposes
a distributional assumption on the outcomes in the control group instead
of an outcome mean profile. Assumption 3 is created
to resemble the conventional MAR assumption, yet a relaxed version
with only the specification of the outcome mean can also result in
the same ATE identification and estimation. Under the sensitivity
analysis, we still use the ITT estimand and define the ATE as $\tau_{1}^{\jtr'}=\E[Y_{1}\{1,R_{1}(1)\}]-\E[Y_{1}\{0,R_{1}(0)\}]=\E\{Y_{1}(1)-Y_{1}(0)\}$.
Similar to Theorem 1 in the main paper, three
identification formulas of $\tau_{1}^{\jtr'}$ can be accomplished
in the following theorem. }

\begin{theorem}\label{thm:iden_1time-sen} 

Under Assumptions 1, 2,
\ref{assump:sen2-1time}, and \ref{assump:sen4-1time}, assume there
exists $\varepsilon>0,$ such that $\varepsilon<\big\{ e(X),\allowbreak\pi_{1}(a,X)\big\}<1-\varepsilon$
for all $X$ and $a$, the following identification formulas hold. 
\begin{enumerate}
\item {Based on the response probability and outcome mean,
\[
\tau_{1}^{\text{\jtr'}}=\mathbb{E}\left[\pi_{1}(1,X)\left\{ \mu_{1}^{1}(X)-\mu_{1}^{0}(X)\right\} -\left\{ 1-\pi_{1}(0,X)\right\} \delta\right].
\]
}
\item {Based on the propensity score and outcome mean, 
\[
\tau_{1}^{\text{\jtr'}}=\mathbb{E}\left[\frac{2A-1}{e(X)^{A}\{1-e(X)\}^{1-A}}\big\{ R_{1}Y_{1}+(1-R_{1})\mu_{1}^{0}(X)\big\}-\frac{1-A}{1-e(X)}(1-R_{1})\delta\right].
\]
}
\item {Based on the propensity score and response probability,
\[
\tau_{1}^{\jtr'}=\mathbb{E}\left[\frac{A}{e(X)}R_{1}Y_{1}-\frac{(1-A)R_{1}}{\{1-e(X)\}\pi_{1}(0,X)}\pi_{1}(1,X)Y_{1}-\left\{ 1-\pi_{1}(0,X)\right\} \delta\right].
\]
}
\end{enumerate}
\end{theorem}

\begin{proof} {We follow the same proof in \ref{subsec:supp_iden_1time}
to get the identification formulas for the ATE. Compared with Theorem
1, an additional term that involves the sensitivity
parameter $\delta$ is contained in each identification formula. The
identification of $\tau_{1,1}=\E[Y_{1}\{1,R_{1}(1)\}]$ remains unchanged
since the specification of the outcome mean $\E\big\{ Y_{1}(1,0)\mid X,R_{1}(1)=0\big\}$
stays the same by Assumption \ref{assump:sen4-1time}. Therefore,
we proceed to identify $\E[Y_{1}\{0,R_{1}(0)\}]$. Following the same
step of identifying $\E[Y_{1}\{0,R_{1}(0)\}]$ in \ref{subsec:supp_iden_1time},
we have
\begin{align*}
\tau_{0,1} & =\E\left[R_{1}(0)Y_{1}^{\text{}}(0,1)+\{1-R_{1}(0)\}Y_{1}(0,0)\right]\\
 & =\E\left[\E\left\{ R_{1}(0)\mid X\right\} \E\left\{ Y_{1}(0,1)\mid X,R_{1}(0)=1\right\} +\E\left\{ 1-R_{1}(0)\mid X\right\} \E\left\{ Y_{1}(0,0)\mid X,R_{1}(0)=0\right\} \right]\\
 & =\E\Big[\E\left(R_{1}\mid X,A=0\right)\E\left\{ Y_{1}(0,1)\mid X,R_{1}(0)=1,A=0\right\} \\
 & \qquad+\E\left(1-R_{1}\mid X,A=0\right)\E\left\{ Y_{1}(0,0)\mid X,R_{1}(0)=0,A=0\right\} \Big]\text{(By A1, A3)}\\
 & =\E\left[\pi_{1}(0,X)\E\left(Y_{1}\mid A=0,R_{1}=1,X\right)+\left\{ 1-\pi_{1}(0,X)\right\} \left\{ \E\left(Y_{1}\mid A=0,R_{1}=1,X\right)+\delta\right\} \right]\text{(By A2', A3, A4') }\\
 & =\E\left[\pi_{1}(0,X)\mu_{1}^{0}(X)+\left\{ 1-\pi_{1}(0,X)\right\} \left\{ \mu_{1}^{0}(X)+\delta\right\} \right]\\
 & =\E\left[\mu_{1}^{0}(X)+\left\{ 1-\pi_{1}(0,X)\right\} \delta\right].
\end{align*}
Therefore, the identification of $\tau$ corresponds to
\begin{align*}
\tau_{1}^{\text{\jtr'}} & =\tau_{1,1}-\tau_{0,1}=\E\left[\pi_{1}(1,X)\mu_{1}^{1}(X)+\{1-\pi_{1}(1,X)\}\mu_{1}^{0}(X)-\mu_{1}^{0}(X)-\left\{ 1-\pi_{1}(0,X)\right\} \delta\right]\\
 & =\E\left[\pi_{1}(1,X)\left\{ \mu_{1}^{1}(X)-\mu_{1}^{0}(X)\right\} -\left\{ 1-\pi_{1}(0,X)\right\} \delta\right],
\end{align*}
which matches the identification formula in Theorem \ref{thm:iden_1time-sen}
(a). }

{We then need to show $\E\left[\left\{ 1-\pi_{1}(0,X)\right\} \delta\right]=\E\left[(1-A)(1-R_{1})\delta/\{1-e(X)\}\right]$.
Note that
\begin{align*}
\E\left[\frac{1-A}{1-e(X)}(1-R_{1})\delta\right]= & \E\left[\E\left\{ \frac{1-A}{1-e(X)}(1-R_{1})\mid X,A\right\} \delta\right]\\
= & \E\left[\frac{1-A}{1-e(X)}\left\{ 1-\pi_{1}(0,X)\right\} \delta\right]\\
= & \E\left[\frac{1-\E(A\mid X)}{1-e(X)}\left\{ 1-\pi_{1}(0,X)\right\} \delta\right]\\
= & \E\left[\left\{ 1-\pi_{1}(0,X)\right\} \delta\right],
\end{align*}
which complete the proof. }\end{proof}

{When $\delta=0$, Theorem \ref{thm:iden_1time-sen}
degenerates to Theorem 1. One can plug in the
nuisance function estimators to get the conventional and stabilized
versions of the ATE estimators. Similarly, we derive the EIF under
the sensitivity analysis to motivate the EIF-based estimator as follows. }

\begin{theorem}\label{thm: eif_1time-sen}

{Under Assumptions 1, 2,
\ref{assump:sen2-1time}, and \ref{assump:sen4-1time}, suppose that
there exists $\varepsilon>0,$ such that $\varepsilon<\big\{ e(X),\pi_{1}(a,X)\big\}<1-\varepsilon$
for all $X$ and $a$, the EIF for $\tau_{1}^{\jtr'}$ is 
\begin{align*}
\varphi_{1}^{\jtr'}(V;\mathbb{P})= & \left\{ \frac{A}{e(X)}-\frac{1-A}{1-e(X)}\frac{\pi_{1}(1,X)}{\pi_{1}(0,X)}\right\} R_{1}\left\{ Y_{1}-\mu_{1}^{0}(X)\right\} -\frac{A-e(X)}{e(X)}\pi_{1}(1,X)\left\{ \mu_{1}^{1}(X)-\mu_{1}^{0}(X)\right\} \\
 & +\left[\{1-\pi_{1}(0,X)\}-\frac{1-A}{1-e(X)}\left\{ R_{1}-\pi_{1}(0,X)\right\} \right]\delta-\tau_{1}^{\jtr'}.
\end{align*}
}\end{theorem} 

{Based on the fact that the mean of the EIF is zero,
we can obtain another identification formula for the ATE under the
sensitivity analysis, which motivates the EIF-based estimator $\hat{\tau}_{\text{tr}}'$
as}{\small{}
\begingroup\makeatletter\def\f@size{9}\check@mathfonts
\begin{eqnarray*}
 & \hat{\tau}_{\text{tr}}'= & \mathbb{P}_{n}\Bigg[\left\{ \frac{A}{e(X;\hat{\alpha})}-\frac{1-A}{1-e(X;\hat{\alpha})}\frac{\pi_{1}(1,X;\hat{\gamma})}{\pi_{1}(0,X;\hat{\gamma})}\right\} R_{1}\left\{ Y_{1}-\mu_{1}^{0}(X;\hat{\beta})\right\} -\frac{A-e(X;\hat{\alpha})}{e(X;\hat{\alpha})}\pi_{1}(1,X;\hat{\gamma})\left\{ \mu_{1}^{1}(X;\hat{\beta})-\mu_{1}^{0}(X;\hat{\beta})\right\} \\
 &  & +\{1-\pi_{1}(0,X;\hat{\gamma})\}\frac{1-A}{1-e(X;\hat{\alpha})}\left\{ R_{1}-\pi_{1}(0,X;\hat{\gamma})\right\} \delta\Bigg].
\end{eqnarray*}
\endgroup
}{One can also apply normalization or calibration
to obtain more stabilized estimators. }

{Next, we investigate the asymptotic properties of
the EIF-based estimator $\hat{\tau}_{\text{tr}}'$. Theorem \ref{thm: tr_1time_cons-sen}
verifies the robustness when using parametric models to approximate
the nuisance functions.}

\begin{theorem}\label{thm: tr_1time_cons-sen}

{Under Assumptions 1, 2,
\ref{assump:sen2-1time}, and \ref{assump:sen4-1time}, suppose that
there exists $\varepsilon>0,$ such that{} $\varepsilon<\big\{ e(X;\alpha^{*}),\allowbreak e(X;\hat{\alpha}),\pi_{1}(a,X;\gamma^{*}),\pi_{1}(a,X;\hat{\gamma})\big\}<1-\varepsilon$
for all $X$ and $a$ almost surely, the estimator $\hat{\tau}_{\text{tr}}$
is triply robust in the sense that it is consistent for $\tau_{1}^{\jtr'}$
under $\mathcal{M}_{\text{rp+om}}\cup\mathcal{M}_{\text{ps+om}}\cup\mathcal{M}_{\text{ps+rp}}$.
Moreover, $\hat{\tau}_{\text{tr}}'$ achieves the semiparametric efficiency
bound under $\mathcal{M}_{\text{ps+rp+om}}$.}

\end{theorem}

\begin{proof} {Suppose the model estimators $\hat{\theta}=(\hat{\alpha},\hat{\beta},\hat{\gamma})^{\text{T}}$
converges to $\theta^{*}=(\alpha^{*},\beta^{*},\gamma^{*})^{\text{T}}$
in the sense that $\lVert\hat{\theta}-\theta^{*}\rVert=o_{p}(1)$,
where at least one component of $\hat{\theta}$ needs to converge
to the true value. As the sample size $n\rightarrow\infty$, we would
expect $\hat{\tau}_{\text{tr}}'$ converges to 
\begin{align}
 & \E\left[\big\{\frac{A}{e(X;\alpha^{*})}-\frac{1-A}{1-e(X;\alpha^{*})}\frac{\pi_{1}(1,X;\gamma^{*})}{\pi_{1}(0,X;\gamma^{*})}\big\} R_{1}\big\{ Y_{1}-\mu_{1}^{0}(X;\beta^{*})\big\}\right]\nonumber \\
- & \E\left[\frac{A-e(X;\alpha^{*})}{e(X;\alpha^{*})}\pi_{1}(1,X;\gamma^{*})\big\{\mu_{1}^{1}(X;\beta^{*})-\mu_{1}^{0}(X;\beta^{*})\big\}\right]\nonumber \\
+ & \E\left[\{1-\pi_{1}(0,X;\gamma^{*})\}-\frac{1-A}{1-e(X;\alpha^{*})}\left\{ R_{1}-\pi_{1}(0,X;\gamma^{*})\right\} \right]\delta\label{eq:tr-1time-sen}
\end{align}
}

{The first two terms are the same as formulas \eqref{eq:trpart1_1time}
and \eqref{eq:trpart2_1time} in \ref{subsec:supp_tr_1time}. Therefore,
we focus on formula \eqref{eq:tr-1time-sen} and rearrange the term
as
\begin{align*}
 & \E\left[\{1-\pi_{1}(0,X;\gamma^{*})\}-\frac{\E(1-A\mid X)}{1-e(X;\alpha^{*})}\left\{ \E(R_{1}\mid X,A=1)-\pi_{1}(0,X;\gamma^{*})\right\} \right]\delta\\
= & \E\left[\{1-\pi_{1}(0,X;\gamma^{*})\}-\frac{1-e(X)}{1-e(X;\alpha^{*})}\left\{ \pi_{1}(0,X)-\pi_{1}(0,X;\gamma^{*})\right\} \right]\delta.
\end{align*}
Combining the three parts together, \eqref{eq:trpart1_1time} + \eqref{eq:trpart2_1time}
+ \eqref{eq:tr-1time-sen}
\begin{align*}
= & \tau_{1}^{\text{\jtr}'}+\E\left[\left\{ \frac{e(X)}{e(X;\alpha^{*})}-1\right\} \left\{ \pi_{1}(1,X)\mu_{1}^{1}(X)-\pi_{1}(1,X;\gamma^{*})\mu_{1}^{1}(X;\beta^{*})\right\} \right]\\
 & +\E\left[\left\{ 1-\frac{1-e(X)}{1-e(X;\alpha^{*})}\frac{\pi_{1}(0,X)\pi_{1}(1,X;\gamma^{*})}{\pi_{1}(0,X;\gamma^{*})}\right\} \pi_{1}(1,X;\gamma^{*})\left\{ \mu_{1}^{0}(X)-\mu_{1}^{0}(X;\beta^{*})\right\} \right]\\
 & +\E\left[\left\{ \pi_{1}(1,X)-\pi_{1}(1,X;\gamma^{*})\right\} \left\{ \mu_{1}^{0}(X)-\frac{e(X)}{e(X;\alpha^{*})}\mu_{1}^{0}(X;\beta^{*})\right\} \right]\\
 & +\E\left[\left\{ 1-\frac{1-e(X)}{1-e(X;\alpha^{*})}\right\} \left\{ \pi_{1}(0,X)-\pi_{1}(0,X;\gamma^{*})\right\} \right]\delta.
\end{align*}
}

{{}Therefore, the bias of $\hat{\tau}_{\text{tr}}'$
converges to 
\begin{align}
 & \E\left[\left\{ \frac{e(X)}{e(X;\alpha^{*})}-1\right\} \left\{ \pi_{1}(1,X)\mu_{1}^{1}(X)-\pi_{1}(1,X;\gamma^{*})\mu_{1}^{1}(X;\beta^{*})\right\} \right]\label{eq:trbias1_1time-1}\\
+ & \E\left[\left\{ 1-\frac{1-e(X)}{1-e(X;\alpha^{*})}\frac{\pi_{1}(0,X)\pi_{1}(1,X;\gamma^{*})}{\pi_{1}(0,X;\gamma^{*})}\right\} \pi_{1}(1,X;\gamma^{*})\left\{ \mu_{1}^{0}(X)-\mu_{1}^{0}(X;\beta^{*})\right\} \right]\label{eq:trbias2_1time-1}\\
+ & \E\left[\left\{ \pi_{1}(1,X)-\pi_{1}(1,X;\gamma^{*})\right\} \left\{ \mu_{1}^{0}(X)-\frac{e(X)}{e(X;\alpha^{*})}\mu_{1}^{0}(X;\beta^{*})\right\} \right]\label{eq:trbias3_1time-1}\\
+ & \E\left[\left\{ 1-\frac{1-e(X)}{1-e(X;\alpha^{*})}\right\} \left\{ \pi_{1}(0,X)-\pi_{1}(0,X;\gamma^{*})\right\} \right]\delta\label{eq:trbias4_1time-sen}
\end{align}
}

{{}Note that \eqref{eq:trbias1_1time-1} $=0$ under
$\mathcal{M}_{\text{rp+om}}\cup\mathcal{M}_{\text{ps}}$, \eqref{eq:trbias2_1time-1}
$=0$ under $\mathcal{M}_{\text{ps+rp}}\cup\mathcal{M}_{\text{om}}$,
\eqref{eq:trbias3_1time-1} $=0$ under $\mathcal{M}_{\text{ps+om}}\cup\mathcal{M}_{\text{rp}}$,
and \eqref{eq:trbias4_1time-sen} $=0$ under $\mathcal{M}_{\text{ps}}\cup\mathcal{M}_{\text{rp}}$.
Thus, $\hat{\tau}_{\text{tr}}'$ is consistent for $\tau_{1}^{\text{\jtr'}}$
under $\mathcal{M}_{\text{rp+om}}\cup\mathcal{M}_{\text{ps+om}}\cup\mathcal{M}_{\text{ps+rp}}$.
The triple robustness holds.}\end{proof}

{When using flexible models to approximate nuisance
functions, Theorem \ref{thm: tr_1time_rate-sen} uncovers the asymptotic
property of the EIF-based estimator and invokes the triple robustness
in terms of rate convergence in Corollary \ref{cor: tr_1time_rate-sen}.}

\begin{theorem}\label{thm: tr_1time_rate-sen}

{Under Assumptions 1, 2,
\ref{assump:sen2-1time}, and \ref{assump:sen4-1time}, suppose that
there exists $\varepsilon>0,$ such that $\varepsilon<\big\{ e(X),\hat{e}(X),\allowbreak\pi_{1}(a,X),\hat{\pi}_{1}(a,X)\big\}<1-\varepsilon$
for all $X$ and $a$ almost surely, and the nuisance functions and
their estimators take value in Donsker classes. Assume $\lVert\varphi_{1}^{\jtr'}(V;\hat{\pr})-\varphi_{1}^{\jtr'}(V;\mathbb{P})\rVert=o_{\mathbb{P}}(1)$.
Then, $\hat{\tau}_{\text{tr}}'=\tau_{1}^{\jtr'}+n^{-1}\sum_{i=1}^{n}\varphi_{1}^{\jtr'}(V;\mathbb{P})+\text{Rem}(\hat{\mathbb{P}},\mathbb{P})+o_{\mathbb{P}}\left(n^{-1/2}\right)$,
where
\begin{align*}
\text{Rem}(\hat{\mathbb{P}},\mathbb{P}) & =\E\bigg[\left\{ \frac{e(X)}{\hat{e}(X)}-1\right\} \left\{ \pi_{1}(1,X)\mu_{1}^{1}(X)-\hat{\pi}_{1}(1,X)\hat{\mu}_{1}^{1}(X)\right\} +\left\{ 1-\frac{1-e(X)}{1-\hat{e}(X)}\frac{\pi_{1}(0,X)}{\hat{\pi}_{1}(0,X)}\right\} \hat{\pi}_{1}(1,X)\big\{\mu_{1}^{0}(X)\\
 & \quad-\hat{\mu}_{1}^{0}(X)\big\}+\left\{ \pi_{1}(1,X)-\hat{\pi}_{1}(1,X)\right\} \left\{ \mu_{1}^{0}(X)-\frac{e(X)}{\hat{e}(X)}\hat{\mu}_{1}^{0}(X)\right\} \\
 & \quad+\left\{ 1-\frac{1-e(X)}{1-\hat{e}(X)}\right\} \left\{ \pi_{1}(0,X)-\hat{\pi}_{1}(0,X)\right\} \delta\bigg].
\end{align*}
}

{If $\text{Rem}(\hat{\mathbb{P}},\mathbb{P})=o_{\mathbb{P}}(n^{-1/2})$,
then $n^{1/2}\left(\hat{\tau}_{\text{tr}}'-\tau_{1}^{\jtr'}\right)\xrightarrow{d}\mathcal{N}\left(0,\mathbb{V}\left\{ \varphi_{1}^{\jtr'}(V;\mathbb{P})\right\} \right)$,
where the asymptotic variance of $\hat{\tau}_{\text{tr}}'$ reaches
the semiparametric efficiency bound. }

\end{theorem}

\begin{corollary}\label{cor: tr_1time_rate-sen} {Under
the assumptions in Theorem \ref{thm: tr_1time_rate-sen}, suppose
$\lVert\varphi_{1}^{\text{\jtr}'}(V;\hat{\mathbb{P}})-\varphi_{1}^{\text{\jtr'}}(V;\mathbb{P})\rVert=o_{\mathbb{P}}(1)$,
and further suppose that there exists $0<M<\infty$, such that $\pr\bigg(\max\Big\{\big|\hat{\mu}_{1}^{0}(X)\big|,\big|\hat{\mu}_{1}^{1}(X)\big|,\allowbreak\Big|\{1-e(X)\}/\{1-\hat{e}(X)\}\Big|,\delta\Big\}\leq M\bigg)=1$,
then $\hat{\tau}_{\text{tr}}^{'}-\tau_{1}^{\text{\jtr'}}=O_{\mathbb{P}}\left(n^{-1/2}+n^{-c}\right)$,
where $c=\min(c_{e}+c_{\mu},c_{e}+c_{\pi},c_{\mu}+c_{\pi})$.}

\end{corollary} 

\begin{proof} {We again follow the proof in \citet{kennedy2016semiparametric}.
To simplify the notations, denote $\pr\left\{ N(V;\theta_{0})\right\} =\tau_{1}^{\text{\jtr'}}$,
where 
\begin{align*}
N(V;\theta_{0})= & \left\{ \frac{A}{e(X)}-\frac{1-A}{1-e(X)}\frac{\pi_{1}(1,X)}{\pi_{1}(0,X)}\right\} R_{1}\left\{ Y_{1}-\mu_{1}^{0}(X)\right\} -\frac{A-e(X)}{e(X)}\pi_{1}(1,X)\left\{ \mu_{1}^{1}(X)-\mu_{1}^{0}(X)\right\} \\
 & +\left[\{1-\pi_{1}(0,X)\}-\frac{1-A}{1-e(X)}\left\{ R_{1}-\pi_{1}(0,X)\right\} \right]\delta.
\end{align*}
Then $\pr\left\{ N(V;\theta^{*})\right\} =\pr\left\{ N(V;\theta_{0})\right\} =\tau_{1}^{\text{\jtr}'}$.
Consider the decomposition 
\begin{equation}
\hat{\tau}_{\text{tr}}^{'}-\tau_{1}^{\text{\jtr'}}=\left(\pr_{n}-\pr\right)N(V;\hat{\theta})-\pr\left\{ N(V;\hat{\theta})-N(V;\theta^{*})\right\} .\label{eq:tr_decomp_t1-1}
\end{equation}
Using empirical process theory, if the nuisance functions take values
in Donsker classes, and satisfy the positivity assumption, i.e., there
exists $\varepsilon>0$, such that $\varepsilon<e(X)<1-\varepsilon$
and $\pi_{1}(a,X)>\varepsilon$ for all $X$, then $N(V;\hat{\theta})$
takes values in Donsker classes, and the first term can be written
as 
\[
\left(\pr_{n}-\pr\right)N(V;\hat{\theta})=\left(\pr_{n}-\pr\right)N(V;\theta_{0})+o_{\pr}(n^{-\frac{1}{2}}).
\]
}

{{}For the second term $\pr\left\{ N(V;\hat{\theta})-N(V;\theta^{*})\right\} $,
by computing the expectations, we have
\begin{align*}
\pr\left\{ N(V;\hat{\theta})-N(V;\theta^{*})\right\}  & =\pr\left[\left\{ \frac{e(X)}{\hat{e}(X)}-1\right\} \left\{ \pi_{1}(1,X)\mu_{1}^{1}(X)-\hat{\pi}_{1}(1,X)\hat{\mu}_{1}^{1}(X)\right\} \right]\\
 & +\pr\left[\left\{ 1-\frac{1-e(X)}{1-\hat{e}(X)}\frac{\pi_{1}(0,X)}{\hat{\pi}_{1}(0,X)}\right\} \hat{\pi}_{1}(1,X)\left\{ \mu_{1}^{0}(X)-\hat{\mu}_{1}^{0}(X)\right\} \right]\\
 & +\pr\left[\left\{ \pi_{1}(1,X)-\hat{\pi}_{1}(1,X)\right\} \left\{ \mu_{1}^{0}(X)-\frac{e(X)}{\hat{e}(X)}\hat{\mu}_{1}^{0}(X)\right\} \right]\\
 & +\pr\left[\left\{ 1-\frac{1-e(X)}{1-\hat{e}(X)}\right\} \left\{ \pi_{1}(0,X)-\hat{\pi}_{1}(0,X)\right\} \right]\delta
\end{align*}
}

{{}Under the positivity assumptions, we apply Cauchy-Schwarz
inequality ($\pr(fg)\leq\lVert f\rVert\lVert g\rVert$) and obtain
a upper bound for the second term as
\begin{align*}
\pr\left\{ N(V;\hat{\theta})-N(V;\theta^{*})\right\}  & \leq\Big\lVert\frac{e(X)}{\hat{e}(X)}-1\Big\rVert\cdot\Big\lVert\pi_{1}(1,X)\mu_{1}^{1}(X)-\hat{\pi}_{1}(1,X)\hat{\mu}_{1}^{1}(X)\Big\rVert\\
 & +\Big\lVert1-\frac{1-e(X)}{1-\hat{e}(X)}\frac{\pi_{1}(0,X)}{\hat{\pi}_{1}(0,X)}\Big\rVert\cdot\Big\lVert\hat{\pi}_{1}(1,X)\left\{ \mu_{1}^{0}(X)-\hat{\mu}_{1}^{0}(X)\right\} \Big\rVert\\
 & +\Big\lVert\pi_{1}(1,X)-\hat{\pi}_{1}(1,X)\Big\rVert\cdot\Big\lVert\mu_{1}^{0}(X)-\frac{e(X)}{\hat{e}(X)}\hat{\mu}_{1}^{0}(X)\Big\rVert\\
 & +\Big\lVert1-\frac{1-e(X)}{1-\hat{e}(X)}\Big\rVert\cdot\Big\lVert\pi_{1}(0,X)-\hat{\pi}_{1}(0,X)\Big\rVert\\
\text{} & \leq\Big\lVert\left\{ \frac{e(X)}{\hat{e}(X)}-1\right\} \left\{ \mu_{1}^{1}(X)-\hat{\mu}_{1}^{1}(X)\right\} \Big\rVert_{1}\cdot\Big\lVert\pi_{1}(1,X)\Big\rVert_{\infty}\\
 & +\Big\lVert\left\{ \frac{e(X)}{\hat{e}(X)}-1\right\} \left\{ \pi_{1}(1,X)-\hat{\pi}_{1}(1,X)\right\} \Big\rVert_{1}\cdot\Big\lVert\hat{\mu}_{1}^{1}(X)\Big\rVert_{\infty}\\
 & +\Big\lVert\left\{ 1-\frac{1-e(X)}{1-\hat{e}(X)}\right\} \left\{ \mu_{1}^{0}(X)-\hat{\mu}_{1}^{0}(X)\right\} \Big\rVert_{1}\cdot\Big\lVert\hat{\pi}_{1}(1,X)\Big\rVert_{\infty}\\
 & +\Big\lVert\left\{ 1-\frac{\pi_{1}(0,X)}{\hat{\pi}_{1}(0,X)}\right\} \left\{ \mu_{1}^{0}(X)-\hat{\mu}_{1}^{0}(X)\right\} \Big\rVert_{1}\cdot\Big\lVert\frac{1-e(X)}{1-\hat{e}(X)}\Big\rVert_{\infty}\\
 & +\Big\lVert\pi_{1}(1,X)-\hat{\pi}_{1}(1,X)\Big\lVert\cdot\Big\lVert\mu_{1}^{0}(X)-\hat{\mu}_{1}^{0}(X)\Big\lVert\\
 & +\Big\lVert\left\{ \pi_{1}(1,X)-\hat{\pi}_{1}(1,X)\right\} \left\{ \frac{e(X)}{\hat{e}(X)}-1\right\} \Big\rVert_{1}\cdot\Big\lVert\hat{\mu}_{1}^{0}(X)\Big\rVert_{\infty}\\
 & +\Big\lVert1-\frac{1-e(X)}{1-\hat{e}(X)}\Big\lVert\cdot\Big\lVert\pi_{1}(0,X)-\hat{\pi}_{1}(0,X)\Big\lVert|\delta|\\
 & \leq M\Big\lVert\frac{e(X)}{\hat{e}(X)}-1\Big\rVert\cdot\left\{ \Big\lVert\mu_{1}^{1}(X)-\hat{\mu}_{1}^{1}(X)\Big\rVert+\Big\lVert\pi_{1}(1,X)-\hat{\pi}_{1}(1,X)\Big\lVert\right\} \\
 & +M\Big\lVert\mu_{1}^{0}(X)-\hat{\mu}_{1}^{0}(X)\Big\rVert\cdot\left\{ \Big\lVert1-\frac{1-e(X)}{1-\hat{e}(X)}\Big\rVert+\Big\lVert1-\frac{\pi_{1}(0,X)}{\hat{\pi}_{1}(0,X)}\Big\lVert\right\} \\
 & +M\Big\lVert\pi_{1}(1,X)-\hat{\pi}_{1}(1,X)\Big\rVert\cdot\left\{ \Big\lVert\frac{e(X)}{\hat{e}(X)}-1\Big\rVert+\Big\lVert\mu_{1}^{0}(X)-\hat{\mu}_{1}^{0}(X)\Big\lVert\right\} \\
 & +M\Big\lVert1-\frac{1-e(X)}{1-\hat{e}(X)}\Big\lVert\cdot\Big\lVert\pi_{1}(0,X)-\hat{\pi}_{1}(0,X)\Big\lVert.
\end{align*}
}

{{}The second inequality holds by the triangle inequality
and Holder's inequality, and the last inequality holds by Cauchy-Schwarz.
Under $\mathcal{M}_{\text{ps+rp+om}}$, we would expect $\pr\left\{ N(V;\hat{\theta})-N(V;\theta^{*})\right\} =O_{\pr}(n^{-1/2})\cdot o_{\pr}(1)=o_{\pr}(n^{-1/2})$.
Therefore, the EIF-based estimator $\hat{\tau}_{\text{tr}}^{'}$ satisfies
$\hat{\tau}_{\text{tr}}^{'}-\tau_{1}^{\jtr'}=\left(\pr_{n}-\pr\right)N(V;\theta_{0})+o_{\pr}(n^{-\frac{1}{2}})$
and its influence function $N(V;\theta_{0})+\tau_{1}^{\text{\jtr}'}$,
which is the same as the EIF in Theorem \ref{thm: eif_1time-sen}
and completes the proof of Theorem \ref{thm: tr_1time_rate-sen} and
Corollary \ref{cor: tr_1time_rate-sen}.}\end{proof}
\endgroup

\section{Additional results from simulation \label{sec:supp_simu}}

\subsection{Cross-sectional setting}

{}Web Table \ref{tab:supp_sim_1time} shows the simulation results
of the eight estimators for single-time-point outcomes under 8 different
model specifications in terms of the bias and the Monte Carlo standard
deviation (denoted as SD) based on 1000 simulated datasets. The proposed
triply robust estimators are unbiased if any two of the three models
are correct. The calibration-based estimator has the smallest variation
among the three triply robust estimators. Under the correct specification
of all the models, the calibration-based triply robust estimator has
a comparable SD compared to $\hat{\tau}_{\text{ps-om}}$ and $\hat{\tau}_{\text{rp-om}}$.

{} 
\begin{table}[ht]
{} 
\global\long\def\arraystretch{0.75}%
{} \caption{Point estimation in the cross-sectional setting under 8 different
model specifications. \label{tab:supp_sim_1time}}

\noindent {}\resizebox{\textwidth}{!}{%
\begin{tabular}{>{\centering}p{0.04\columnwidth}>{\centering}p{0.04\columnwidth}>{\centering}p{0.04\columnwidth}>{\centering}p{0.1\columnwidth}>{\centering}p{0.07\columnwidth}>{\centering}p{0.07\columnwidth}>{\centering}p{0.07\columnwidth}>{\centering}p{0.07\columnwidth}>{\centering}p{0.07\columnwidth}>{\centering}p{0.07\columnwidth}>{\centering}p{0.07\columnwidth}>{\centering}p{0.07\columnwidth}}
\hline 
\multicolumn{3}{c}{Correct specification} &  &  &  & \multicolumn{4}{c}{Estimators} &  & \tabularnewline
\hline 
PS  & RP  & OM  &  & $\hat{\tau}_{\text{tr}}$  & $\hat{\tau}_{\text{tr-N}}$  & $\hat{\tau}_{\text{tr-C}}$  & $\hat{\tau}_{\text{ps-rp}}$  & $\hat{\tau}_{\text{ps-rp-N}}$  & $\hat{\tau}_{\text{ps-om}}$  & $\hat{\tau}_{\text{ps-om-N}}$  & $\hat{\tau}_{\text{rp-om}}$\tabularnewline
\hline 
yes  & yes  & yes  & Bias (\%)  & -0.04  & -0.05  & \textbf{-0.15}  & -0.21  & 0.00  & \textbf{-0.10}  & \textbf{-0.09}  & \textbf{-0.15}\tabularnewline
 &  &  & SD (\%)  & 7.40  & 7.33  & \textbf{7.10}  & 11.68  & 10.38  & \textbf{7.10 }  & \textbf{7.09}  & \textbf{7.03}\tabularnewline
yes  & yes  & no  & Bias (\%)  & 0.59  & 0.67  & \textbf{-0.22}  & -0.21  & 0.00  & 9.29  & 9.29  & 15.60\tabularnewline
 &  &  & SD (\%)  & 9.53  & 9.22  & \textbf{8.59}  & 11.68  & 10.38  & 8.09  & 8.09  & 8.70\tabularnewline
yes  & no  & yes  & Bias (\%)  & -0.11  & -0.11  & \textbf{-0.32}  & 9.23  & 9.22  & \textbf{-0.10}  & \textbf{-0.09}  & -2.28\tabularnewline
 &  &  & SD (\%)  & 7.24  & 7.24  & \textbf{7.08}  & 8.70  & 8.65  & \textbf{7.10}  & \textbf{7.09}  & 6.91\tabularnewline
no  & yes  & yes  & Bias (\%)  & -0.08  & -0.09  & \textbf{-0.15}  & 7.85  & 7.89  & 12.75  & 12.72  & \textbf{-0.15}\tabularnewline
 &  &  & SD (\%)  & 7.25  & 7.24  & \textbf{7.10}  & 10.34  & 10.01  & 9.85  & 9.84  & \textbf{7.03}\tabularnewline
yes  & no  & no  & Bias (\%)  & 8.16  & 8.11  & 3.05  & 9.23  & 9.22  & 9.29  & 9.29  & 16.35\tabularnewline
 &  &  & SD (\%)  & 8.16  & 8.14  & 8.43  & 8.70  & 8.65  & 8.09  & 8.09  & 8.65\tabularnewline
no  & yes  & no  & Bias (\%)  & 8.30  & 8.31  & -0.22  & 7.85  & 7.89  & 16.56  & 16.56  & 15.60\tabularnewline
 &  &  & SD (\%)  & 9.18  & 9.10  & 8.59  & 10.34  & 10.01  & 8.71  & 8.72  & 8.70\tabularnewline
no  & no  & yes  & Bias (\%)  & 0.06  & 0.05  & -0.32  & 16.33  & 16.31  & 12.75  & 12.72  & -2.28\tabularnewline
 &  &  & SD (\%)  & 7.39  & 6.39  & 7.08  & 9.34  & 9.32  & 9.85  & 9.84  & 6.91\tabularnewline
no  & no  & no  & Bias (\%)  & 14.87  & 14.86  & 3.05  & 16.33  & 16.31  & 16.56  & 16.56  & 16.35\tabularnewline
 &  &  & SD (\%)  & 8.83  & 8.83  & 8.43  & 9.34  & 9.32  & 8.71  & 8.72  & 8.65\tabularnewline
\hline 
\end{tabular}}

\end{table}

{}We compare three types of CIs, including the Wald-type CI with
the variance estimated by nonparametric bootstrap, the Wald-type CI
with the variance estimated by the asymptotic theory as $\hat{\mathbb{V}}(\hat{\tau})=n^{-2}\sum_{i=1}^{n}\left\{ \varphi_{1}^{\text{\jtr}}(V_{i};\hat{\mathbb{P}})-\hat{\tau}\right\} ^{2}$,
and the symmetric t bootstrap CI as $(\hat{\tau}-c^{*}\hat{\mathbb{V}}^{1/2}(\hat{\tau}),\hat{\tau}+c^{*}\hat{\mathbb{V}}^{1/2}(\hat{\tau}))$,
with $c^{*}$ as the $95\%$ quantile of $\{|(\hat{\tau}^{(b)}-\hat{\tau})/\hat{\mathbb{V}}^{1/2}(\hat{\tau}^{(b)})|:b=1,\cdots,B\}$.
Note that the CI comparison is only conducted for the three EIF-based
estimators $\hat{\tau}_{\text{tr}}$, $\hat{\tau}_{\text{tr-N}}$,
and $\hat{\tau}_{\text{tr-C}}$ under the scenario where all the three
models are correctly specified, since Theorem 3
entails that the EIF-based estimators achieve the semiparametric efficiency
bound under $\mathcal{M}_{\text{ps+rp+om}}$. Given that bootstrap
is now used to obtain the CIs, we set the number of bootstrap replicates
to $B=500$. Web Table \ref{tab:supp_sim1time_ci} presents the coverage
rate and the mean CI length for the three types of CIs. The Wald-type
CI with the variance estimated by the asymptotic theory produces an
anti-conservative coverage rate and the smallest mean CI length, while
the Wald-type CI with the variance estimated by nonparametric bootstrap
and the symmetric t bootstrap CI produce comparable coverage rates
and mean CI lengths for each EIF-based estimator. As obtaining the
Wald-type CI with the nonparametric bootstrap variance estimator does
not involve the calculation of the bootstrap CI, which saves computation
time, we recommend using it in the cross-sectional setting.

{} 
\begin{table}
{}\caption{Comparison among the three types of CIs of the EIF-based estimators
in the cross-sectional setting under $\mathcal{M}_{\text{ps+rp+om}}$.\label{tab:supp_sim1time_ci}}
 
\global\long\def\arraystretch{0.75}%
{} 

{}\resizebox{\textwidth}{!}{{\large{}{}}%
\begin{tabular}{c>{\centering}p{0.2\columnwidth}>{\centering}p{0.22\columnwidth}>{\centering}p{0.01\columnwidth}>{\centering}p{0.2\columnwidth}>{\centering}p{0.22\columnwidth}>{\centering}p{0.01\columnwidth}>{\centering}p{0.2\columnwidth}>{\centering}p{0.22\columnwidth}}
\toprule 
 & \multicolumn{2}{c}{{}Wald-type CI by nonparametric bootstrap} &  & \multicolumn{2}{c}{{}Wald-type CI by asymptotic theory} &  & \multicolumn{2}{c}{{}Symmetric t bootstrap CI}\tabularnewline
\midrule 
{}Estimator  & {}Coverage rate ($\%$)  & {}Mean CI length ($\%$)  &  & {}Coverage rate ($\%$)  & {}Mean CI length ($\%$)  &  & {}Coverage rate ($\%$)  & {}Mean CI length ($\%$)\tabularnewline
\midrule 
{}$\hat{\tau}_{\text{tr}}$  & {}95.2  & {}29.82  &  & {}94.2  & {}27.80  &  & {}94.9  & {}29.88\tabularnewline
{}$\hat{\tau}_{\text{tr-N}}$  & {}95.2  & {}29.11  &  & {}93.9  & {}27.72  &  & {}95.0  & {}29.44\tabularnewline
{}$\hat{\tau}_{\text{tr-C}}$  & {}95.0  & {}28.53  &  & {}93.0  & {}26.04  &  & {}95.2  & {}28.28\tabularnewline
\bottomrule
\end{tabular}} 
\end{table}

{To explore the effect of calibration on the proposed
estimators, we additionally incorporate two simple estimators $\hat{\tau}_{\text{ps-rp-C}}$
and $\hat{\tau}_{\text{ps-om-C}}$, where we use calibration to obtain
the propensity score and response probability weights . Web Figure
\ref{fig:point_1time-addcal} and Web Table \ref{tab:sim_1time-addcal}
present the corresponding simulation results. While calibration fails
to improve the performance of $\hat{\tau}_{\text{ps-om-C}}$ as the
true propensity score weights are not extreme under this simulation
setting, it reveals a significant improvement in the estimators $\hat{\tau}_{\text{ps-rp-C}}$
and $\hat{\tau}_{\text{tr-C}}$, since combining the propensity score
and response probability weights together is more likely to generate
extreme values. Among the three calibration-based estimators, the
EIF-based estimator $\hat{\tau}_{\text{tr-C}}$ has the most satisfying
performance with the greatest precision and robustness.}

\begin{figure}
\centering{}\caption{Performance of the estimators in the cross-sectional setting under
8 different model specifications, where ps, rp, and om are shorthands
for the propensity score, response probability, and outcome mean.
In the x-axis, tr, tr-N, and tr-C denote the three EIF-based estimators
$\hat{\tau}_{\text{tr}}$, $\hat{\tau}_{\text{tr-N}}$, and $\hat{\tau}_{\text{tr-C}}$;
psrp, psrp-N, {and psrp-C} denote the estimators $\hat{\tau}_{\text{ps-rp}}$,
$\hat{\tau}_{\text{ps-rp-N}}$, {and $\hat{\tau}_{\text{ps-rp-C}}$};
psom, psom-N, {and psom-C} denote the estimators $\hat{\tau}_{\text{ps-om}}$,
$\hat{\tau}_{\text{ps-om-N}}$, {and $\hat{\tau}_{\text{ps-om-C}}$};
and rpom denotes the estimator $\hat{\tau}_{\text{rp-om}}$ in Example
1. \label{fig:point_1time-addcal}}
\includegraphics[scale=0.5]{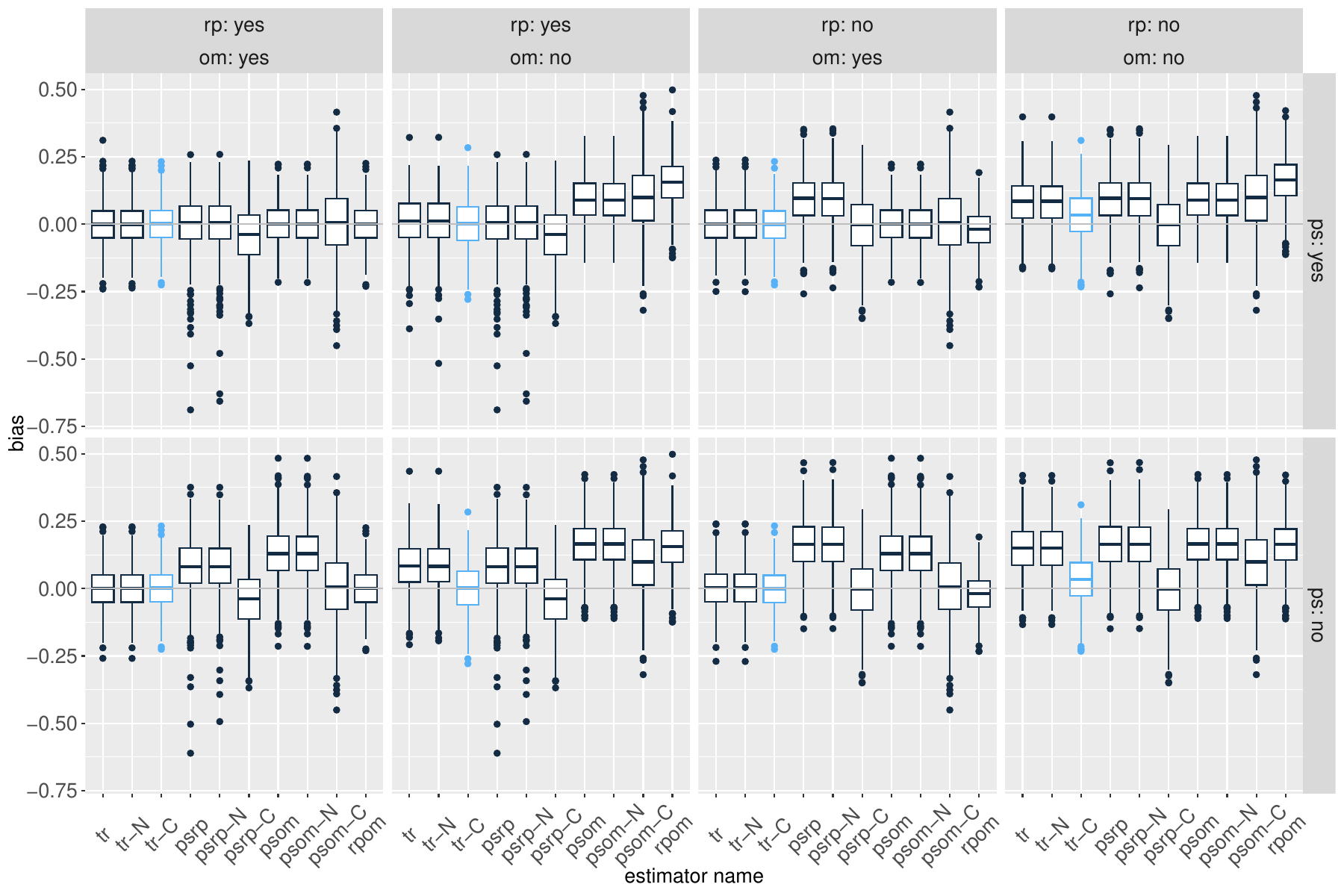}
\end{figure}

\begin{table}[ht]
\global\long\def\arraystretch{0.75}%
\caption{Coverage rates and mean CI lengths in the cross-sectional setting
under 8 different model specifications, where PS, RP, and OM are shorthands
for the propensity score, response probability, and outcome mean.\label{tab:sim_1time-addcal}}

\noindent \resizebox{\textwidth}{!}{%
\begin{tabular}{>{\centering}m{0.05\columnwidth}>{\centering}m{0.05\columnwidth}>{\centering}m{0.05\columnwidth}>{\centering}p{0.01\columnwidth}>{\centering}p{0.07\columnwidth}>{\centering}p{0.07\columnwidth}>{\centering}p{0.07\columnwidth}>{\centering}p{0.07\columnwidth}>{\centering}p{0.07\columnwidth}>{\centering}p{0.07\columnwidth}>{\centering}p{0.07\columnwidth}>{\centering}p{0.07\columnwidth}>{\centering}p{0.07\columnwidth}>{\centering}p{0.07\columnwidth}}
\hline 
\multicolumn{3}{c}{Model specification} &  & \multicolumn{10}{c}{Coverage rate (\%)}\tabularnewline
 &  &  &  & \multicolumn{10}{c}{(Mean CI length, \%)}\tabularnewline
\hline 
PS & RP & OM &  & $\hat{\tau}_{\text{tr}}$ & $\hat{\tau}_{\text{tr-N}}$ & $\hat{\tau}_{\text{tr-C}}$ & $\hat{\tau}_{\text{ps-rp}}$ & $\hat{\tau}_{\text{ps-rp-N}}$ & {$\hat{\tau}_{\text{ps-rp-C}}$} & $\hat{\tau}_{\text{ps-om}}$ & $\hat{\tau}_{\text{ps-om-N}}$ & {$\hat{\tau}_{\text{ps-om-C}}$} & $\hat{\tau}_{\text{rp-om}}$\tabularnewline
\hline 
yes & yes & yes &  & 94.7 & 94.7 & \textbf{94.4} & 95.7 & 95.5 & {93.4} & 94.9 & 94.9 & {95.0} & \textbf{94.3}\tabularnewline
 &  &  &  & (30.9) & (29.5) & \textbf{(28.5)} & (59.9) & (41.8) & {(39.9)} & (29.1) & (29.0) & {(52.3)} & \textbf{(28.2)}\tabularnewline
yes & yes & no &  & 95.3 & 94.8 & \textbf{94.3} & 95.7 & 95.5 & {93.4} & 80.6 & 80.6 & {81.4} & 57.6\tabularnewline
 &  &  &  & (41.8) & (36.1) & \textbf{(33.7)} & (59.9) & (41.8) & {(39.9)} & (33.1) & (33.1) & {(49.3)} & (34.0)\tabularnewline
yes & no & yes &  & 94.1 & 94.1 & \textbf{94.2} & 79.7 & 80.0 & {93.8} & 94.9 & 94.9 & {95.0} & 93.5\tabularnewline
 &  &  &  & (28.8) & (28.3) & \textbf{(28.2)} & (36.7) & (35.3) & {(41.4)} & (29.1) & (29.0) & {(52.3)} & (27.7)\tabularnewline
no & yes & yes &  & 94.4 & 94.4 & \textbf{94.4} & 85.8 & 86.0 & {93.4} & 72.8 & 72.9 & {95.0} & \textbf{94.3}\tabularnewline
 &  &  &  & (29.5) & (29.1) & \textbf{(28.5)} & (45.7) & (40.7) & {(39.9)} & (37.9) & (37.9) & {(52.3)} & \textbf{(28.2)}\tabularnewline
yes & no & no &  & 83.0 & 82.9 & 93.1 & 79.7 & 80.0 & {93.8} & 80.6 & 80.6 & {81.4} & 53.4\tabularnewline
 &  &  &  & (32.7) & (32.3) & (33.8) & (36.7) & (35.3) & {(41.4)} & (33.1) & (33.1) & {(49.3)} & (34.1)\tabularnewline
no & yes & no &  & 84.1 & 83.9 & 94.3 & 85.8 & 86.0 & {93.4} & 53.8 & 53.8 & {81.4} & 57.6\tabularnewline
 &  &  &  & (37.4) & (35.9) & (33.7) & (45.7) & (40.7) & {(39.9)} & (34.6) & (34.7) & {(49.3)} & (34.0)\tabularnewline
no & no & yes &  & 94.6 & 94.6 & 94.2 & 56.1 & 56.1 & {93.8} & 72.8 & 72.9 & {95.0} & 93.5\tabularnewline
 &  &  &  & (29.2) & (29.2) & (28.2) & (38.0) & (37.4) & {(41.4)} & (37.9) & (37.9) & {(52.3)} & (27.7)\tabularnewline
no & no & no &  & 61.3 & 61.3 & 93.1 & 56.1 & 56.1 & {93.8} & 53.8 & 53.8 & {81.4} & 53.4\tabularnewline
 &  &  &  & (35.1) & (34.9) & (33.8) & (38.0) & (37.4) & {(41.4)} & (34.7) & (34.7) & {(49.3)} & (34.1)\tabularnewline
\hline 
\end{tabular}}
\end{table}

\subsection{Longitudinal setting \label{subsec:supp_sim2time}}

{}We use the original covariates $X_{1},\cdots,X_{5}$ in GAM to
approximate each nuisance function separately in each group. {}For
calibration, we incorporate the first two moments of the transformed
covariates $Z$ and all the interactions to calibrate the propensity
score weights, and use the first two moments of the historical information
and all the interactions to calibrate the response probability weights
sequentially. {}

{}Web Table \ref{tab:supp_sim_2time} shows the simulation results
of the eight estimators for longitudinal outcomes under J2R in detail.
The SD in the table refers to the Monte Carlo standard deviation.
{}From the table, the multiply robust estimators are unbiased, while
other estimators suffer from larger deviations from the true value.
Applying calibration tends to improve efficiency, as we observe a
smaller Monte Carlo variation compared to the other two multiply robust
estimators.{}

{} 
\begin{table}
\centering{}{}\caption{Simulation results of the estimators in the longitudinal setting.\label{tab:supp_sim_2time}}
 
\global\long\def\arraystretch{0.75}%
{} \centering{}%
\begin{tabular}{c>{\centering}p{0.14\columnwidth}>{\centering}p{0.14\columnwidth}>{\centering}p{0.14\columnwidth}>{\centering}p{0.14\columnwidth}}
\toprule 
Estimator   & Bias ($\%$)   & SD ($\%$)   & Coverage rate ($\%$)   & Mean CI length ($\%$)\tabularnewline
\midrule 
$\hat{\tau}_{\text{mr}}$   & 4.54   & 10.35   & 95.40   & 43.76 \tabularnewline
$\hat{\tau}_{\text{mr-N}}$   & 4.59   & 10.37   & 95.20   & 43.73 \tabularnewline
$\hat{\tau}_{\text{mr-C}}$   & 3.36   & 9.94   & 96.60   & 42.77 \tabularnewline
$\hat{\tau}_{\text{ps-rp}}$   & 44.55   & 15.18   & 26.70   & 72.04 \tabularnewline
$\hat{\tau}_{\text{ps-rp-N}}$   & 44.56   & 15.26   & 27.10   & 72.13 \tabularnewline
$\hat{\tau}_{\text{ps-om}}$   & 17.31   & 12.03   & 93.10   & 51.47 \tabularnewline
$\hat{\tau}_{\text{ps-om-N}}$   & 17.56   & 12.08   & 92.50   & 52.02 \tabularnewline
$\hat{\tau}_{\text{rp-pm}}$   & -13.14   & 8.57   & 77.20   & 39.50 \tabularnewline
\bottomrule
\end{tabular}
\end{table}

Similar to the cross-sectional setting, we compare {three}
types of CIs of the EIF-based estimators, including the {Wald-type
CI with the variance estimated by nonparametric bootstrap}, the Wald-type
CI with the variance estimated by asymptotic theory, and the symmetric
t bootstrap CI in Web Table \ref{tab:supp_simlongi_ci}, with the
number of bootstrap replicates $B=500$. {While applying
nonparametric bootstrap produces a slightly conservative Wald-type
CI with a wider CI length, the anti-conservative issue of Wald-type
CI with the variance estimated by the asymptotic theory is more pronounced
in the longitudinal setting, resulting in low coverage rates and smaller
mean CI lengths. Using symmetric t bootstrap CI eases those issues
and leads to satisfying coverage rates and mean CI lengths.} Therefore,
we recommend the use of symmetric t bootstrap CI in the longitudinal
setting to obtain reasonable CIs for the multiply robust estimators.

\begin{table}
{}\caption{Comparison among the three types of CIs of the EIF-based estimators
in the longitudinal setting.\label{tab:supp_simlongi_ci}}
 
\global\long\def\arraystretch{0.75}%

{}\resizebox{\textwidth}{!}{{\large{}{}}%
\begin{tabular}{c>{\centering}p{0.14\columnwidth}>{\centering}p{0.14\columnwidth}>{\centering}p{0.14\columnwidth}>{\centering}p{0.14\columnwidth}>{\centering}p{0.01\columnwidth}>{\centering}p{0.14\columnwidth}>{\centering}p{0.14\columnwidth}}
\toprule 
 & \multicolumn{2}{c}{{Wald-type CI by nonparametric bootstrap}} & \multicolumn{2}{c}{Wald-type CI by asymptotic theory} &  & \multicolumn{2}{c}{Symmetric t bootstrap CI}\tabularnewline
\midrule 
Estimator  & {Coverage rate ($\%$) } & {Mean CI length ($\%$) } & Coverage rate ($\%$)  & Mean CI length ($\%$)  &  & Coverage rate ($\%$)  & Mean CI length ($\%$)\tabularnewline
\midrule 
$\hat{\tau}_{\text{mr}}$  & {96.3 } & {45.76 } & 83.7  & 31.08   &  & 95.4 & 43.76 \tabularnewline
$\hat{\tau}_{\text{mr-N}}$  & {96.2 } & {45.90 } & 84.1  & 31.61   &  & 95.2 & 43.73 \tabularnewline
$\hat{\tau}_{\text{mr-C}}$  & {98.3 } & {48.70 } & 82.9  & 28.36   &  & 96.6 & 42.77 \tabularnewline
\bottomrule
\end{tabular}}
\end{table}

\section{Additional results from application\label{sec:supp_app}}

{}The antidepressant clinical trial data is available on \url{https://www.lshtm.ac.uk/research/centres-projects-groups/missing-data#dia-missing-data}
prepared by \citet{mallinckrodt2014recent}. The longitudinal outcomes
in the data suffer from missingness at weeks 2, 4, 6, and 8. All the
missingness in the control group follows a monotone missingness pattern,
while 1 participant in the treatment group has intermittent missing
data. We first delete three individuals with the unobserved investigation
site numbers, and one individual with intermittent missing data for
simplicity, since our proposed framework is only valid under a monotone
missingness pattern. After data preprocessing, 39 participants in
the control group and 30 participants in the treatment group suffered
from monotone missingness. We fit models of the propensity score,
response probability and outcome mean sequentially in backward order,
starting from the last time point. For outcome mean models, we regress
the observed outcome $Y_{4}$ at the last time point on the historical
information $H_{3}$ in the group with $A=a$ to get $\hat{\mu}_{4}^{a}(H_{3})$,
and then regress the predicted value $\hat{\mu}_{4}^{a}(H_{s})$ at
time $s$ on the historical information $H_{s-1}$ using the subset
of the data with $\left(R_{s-1}=1,A=a\right)$ to get $\mu_{4}^{a}(H_{s-1})$
for $s=1,\cdots,3$, recursively. For response probability, we fit
the observed indicator $R_{s}$ with the incorporation of the historical
information $H_{s-1}$ on the data with $\left(R_{s-1}=1,A=a\right)$
to get $\hat{\pi}_{s}(a,H_{s-1})$ for $s=1,\cdots,4$ sequentially.
For propensity score models, the treatment indicator $A$ is regressed
on $H_{s-1}$ using the subset of the data with $R_{s-1}=1$ to get
$\hat{e}(H_{s-1})$. For the pattern mean models ${\color{black}\big\{ g_{s+1}^{1}(H_{l-1}):l=1,\cdots,s\text{ and }s=1,\cdots,4\big\}}$
that rely on both the response probability and outcome mean models,
we regress the predicted value on the historical information $H_{s-1}$
on the subset of the data with $\left(R_{s-1}=1,A=1\right)$.

{}The distributions of the normalized estimated weights involved
in the multiply robust estimators are visualized in Web Figure \ref{fig:weight_app}
(type = ``original''). The weights that correspond to weeks 4 ($A=0$
and $R_{2}=1$), 6 ($A=0$ and $R_{3}=1$) and 8 ($A=0$ and $R_{4}=1$)
suffer from extreme outliers. The existence of outliers explains a
distinct difference in the point estimation of $\hat{\tau}_{\text{ps-rp}}$
and $\hat{\tau}_{\text{ps-rp-N}}$ in Table 3 in the main text. Therefore,
we consider using calibration to mitigate the impact. The distributions
of calibrated weights are also presented in Figure \ref{fig:weight_app}.
As shown by the figure, calibration tends to scatter the concentrated
estimated weights when no outstanding outliers exist in the original
weights, for weights when $A=1$ and $\left(A=0,R_{1}=1\right)$.
However, it stabilizes the extreme weights at weeks 4, 6, and 8, which
explains the narrower CI produced by $\hat{\tau}_{\text{mr-C}}$ compared
to the other two multiply robust estimators.

\begin{figure}
\centering{}{}\caption{Weight distributions of the HAMD-17 data\label{fig:weight_app}}
\includegraphics[scale=0.8]{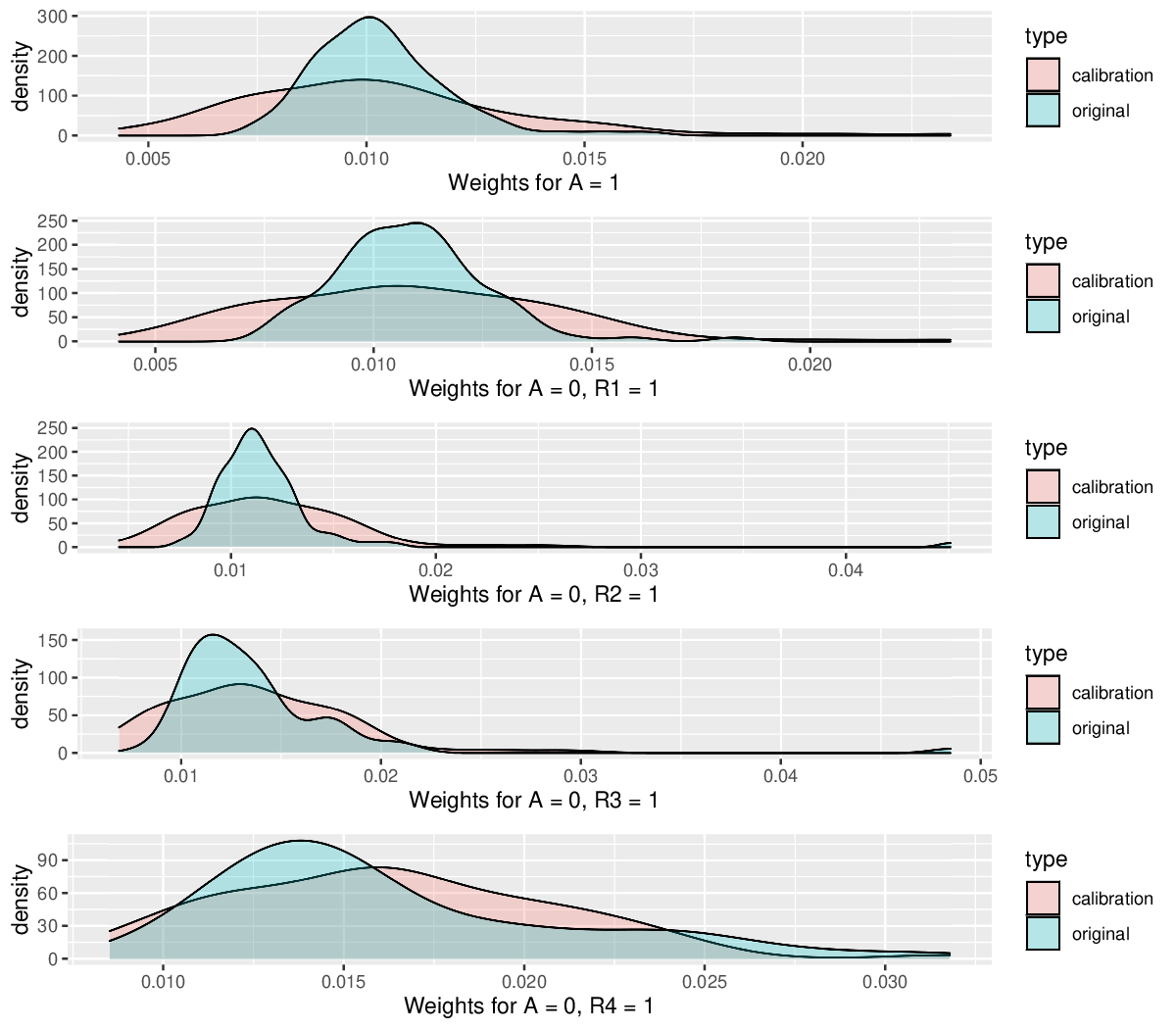}  
\end{figure}

\end{document}